\documentclass{article}[11pt]
\usepackage[utf8]{inputenc}
\usepackage[margin=1in]{geometry}

\usepackage{setspace}
\usepackage[usenames,dvipsnames,svgnames,table]{xcolor}
\usepackage[colorlinks=true,linkcolor=blue,citecolor=ForestGreen,filecolor=magenta,urlcolor=cyan]{hyperref}     
\usepackage{bbm}
\usepackage{amsmath,amsthm,amssymb}
\usepackage{thmtools}
\usepackage{thm-restate}
\usepackage{makecell}
\definecolor{Greenish}{rgb}{0.8,1,.7}
\DeclareRobustCommand{\hlgreenish}[1]{{\sethlcolor{Greenish}\hl{#1}}}
\usepackage{algorithm}
 \usepackage{xspace} 
 \usepackage{enumerate}
 
\allowdisplaybreaks
\usepackage{multirow}
\usepackage{dsfont}
\usepackage{url}  
\usepackage{booktabs}       
\usepackage{amsfonts}       
\usepackage{nicefrac}       
\usepackage{microtype}      
\usepackage{csquotes}
\usepackage{aligned-overset}
 \usepackage{orcidlink}
\usepackage{wrapfig}
\usepackage{diagbox}
\usepackage[toc,page,header]{appendix}
\usepackage{minitoc}
\usepackage{bm}
\usepackage{caption}
\usepackage{subcaption}

\usepackage[noend]{algpseudocode}
\usepackage{algorithm}

\usepackage{mathtools}
\usepackage{tabularx}
\usepackage{amssymb}

\usepackage{soul}
\usepackage[capitalize]{cleveref}

\newcommand{\Pro}[1]{\mathbf{P} \left[\,#1\,\right]}
\renewcommand{\Pr}[2]{\mathbf{P}_{#1} \left[\,#2\,\right]}

\newcommand{\E}[1]{\mathbf{E}[\,#1\,]}
\newcommand{\Var}[1]{\mathbf{Var}\left[\,#1\,\right]}
\newcommand{\VAR}[2]{\mathbf{Var}_{#1}\left[\,#2\,\right]}
\newcommand{\Ex}[1]{\mathbf{E} \left[\,#1\,\right]}
\newcommand{\EX}[2]{\mathbf{E}_{#1} \left[\,#2\,\right]}
\newcommand{\Cov}[3]{\mathbf{Cov}_{#1}\left[\,#2,#3\,\right]}

\newcommand{\Condtwo}[2]{#1~\Big|~#2}
\newcommand{\ind}{\mathds{1}}
\newcommand{\vol}[1]{\operatorname{vol}(#1)}
\newcommand{\push}{{\rm{\texttt{PUSH}}}\xspace}
\newcommand{\pull}{{\rm{\texttt{PULL}}}\xspace}
\newcommand{\pullindex}{\mbox{\rm{\tiny{\texttt{PULL}}}}\xspace}
\newcommand{\pushindex}{\mbox{\rm{\tiny{\texttt{PUSH}}}}\xspace}

\newcommand{\pp}{{\rm{\texttt{PUSH-PULL}}}\xspace}

\newcommand{\Cgrow}{C_{\mathrm{grow}}\xspace}
\newcommand{\Cshrink}{C_{\mathrm{shrink}}\xspace}
\newcommand{\Psigrow}{\Psi\xspace}
\newcommand{\Phigrow}{\Phi\xspace}

\renewcommand{\tilde}{\widetilde}
\renewcommand{\hat}{\widehat}
\renewcommand{\epsilon}{\varepsilon}
\newcommand{\eps}{\varepsilon}

\newtheorem{theorem}{Theorem}[section]  

\newtheorem{lemma}[theorem]{Lemma}

\newtheorem{corollary}[theorem]{Corollary}
\newtheorem{remark}[theorem]{Remark}

\newtheorem{claim}[theorem]{Claim}
\newtheorem{definition}[theorem]{Definition}

\numberwithin{equation}{section}

\newcommand{\NC}{{\hyperlink{p3}{\ensuremath{\mathcal{P}_1}}}\xspace}
\newcommand{\Mono}{{\hyperlink{p1}{\ensuremath{\mathcal{P}_2}}}\xspace}
\newcommand{\MonoT}{{\hyperlink{pt1}{\ensuremath{\tilde{\mathcal{P}}_2}}}\xspace}
\newcommand{\BEG}{{\hyperlink{p2}{\ensuremath{\mathcal{P}_3}}}\xspace}
\newcommand{\BEGT}{{\hyperlink{pt2}{\ensuremath{\tilde{\mathcal{P}}_3}}}\xspace}
\newcommand{\PFour}{{\hyperlink{p4}{\ensuremath{\mathcal{P}_4}}}\xspace}

\title{Rumors with Changing Credibility}

\author{Charlotte Out\thanks{Department of Computer Science \& Technology, University of Cambridge, UK, \texttt{ceo33@cam.ac.uk}, \orcidlink{0000-0003-1316-6336}} \and Nicol\'{a}s Rivera\thanks{Universidad de Valpara\'iso, Vapara\'iso, Chile, \texttt{nicolas.rivera@uv.cl}, \orcidlink{0000-0003-3368-9708}} 	\and
Thomas Sauerwald\thanks{Department of Computer Science \& Technology, University of Cambridge, UK, \texttt{thomas.sauerwald@cl.cam.ac.uk}, \orcidlink{0000-0002-0882-283X}} \and    John Sylvester\thanks{Department of Computer Science, University of Liverpool, UK, \texttt{john.sylvester@liverpool.ac.uk}, \orcidlink{0000-0002-6543-2934}}
}

\date{}
 \begin{document}

\maketitle
 
\begin{abstract} 

Randomized rumor spreading processes diffuse information on an undirected graph and have been widely studied. In this work, we present a generic framework for analyzing a broad class of such processes on regular graphs. Our analysis is protocol-agnostic, as it only requires the expected proportion of newly informed vertices in each round to be bounded, and a natural negative correlation property.

This framework allows us to analyze various protocols, including \texttt{PUSH}, \texttt{PULL}, and \texttt{PUSH-PULL}, thereby extending prior research. Unlike previous work, our framework accommodates message failures at any time $t\geq 0$ with a probability of $1-q(t)$, where the \textit{credibility} $q(t)$  is any function of time. This enables us to model real-world scenarios in which the transmissibility of rumors may fluctuate, as seen in the spread of ``fake news'' and viruses.  Additionally, our framework is sufficiently broad to cover dynamic graphs. 
 
\end{abstract}

\section{Introduction}

The rise of online social networks has facilitated a way for network users to rapidly obtain information, express their opinion, and stay in touch with friends and family. However, at the same time the large scale information cascades enabled by these new social technologies provide fertile ground for the spread of misinformation, rumors and hoaxes. This in turn can have severe consequences such as public panic, growing polarization, the manipulation of political events, and also economic damage. For instance, in 2013 a rumor that President Obama was injured in two explosions at the White House led to $\$90$ billion USD being temporarily wiped off the value of United States stock market~\cite{peter2013bogus}. In the same year the World Economic Forum report \cite{howell2013digital} listed ``massive digital misinformation'' as one of the main risks for the modern society. More recently we have seen the spread of misinformation surrounding the Covid-19 pandemic \cite{Burns}. Consequently, there has been a growing body of work aiming to gain insights into the rumor spreading dynamics \cite{Fakenews3,Fakenews1,Fakenews2,ZannettouSBK19}. 

For a long time, randomized rumor spreading protocols such as the \push, \pull and \pp protocols have been used to model the dissemination of information on graphs, e.g., \cite{boyd2006randomized,demers1987epidemic,karp2000randomized}. Both by mathematical analysis on ``scale free'' graphs in addition to experimental results on real-world social networks, it has been demonstrated that these protocols (in particular, \pp) spread a rumor to a large fraction of vertices in a very short time (e.g.,~\cite{doerr2012rumors}).

However, one shortcoming of the previous works that analyze these protocols is the assumption that the probability with which an individual believes the rumor, when receiving it, is constant over time -- in fact, in many studies it is assumed that this \emph{credibility} is equal to one in all rounds. In real world settings, one can imagine that the occurrence of emergent events (such as an earthquake or a new possibly lethal decease) can intensify the formation and propagation of rumors due to their suddenness and urgency, followed by a decrease in credibility once more information has become available. A related example is the spread of viruses, where counter-measures such as vaccination or social distancing, but also seasonal effects may affect the transmissibility over time, potentially even periodically/non-monotonically. 

Moreover, it is often assumed that the graph is fixed throughout the execution of randomized rumor spreading protocols, which is rather restrictive since many networks, e.g., social networks, P2P networks or communication networks, are subject to frequent changes.

To address these issues, we introduce a new methodology for analyzing randomized rumor spreading protocols that allows us to study \push, \pull, and \pp processes under the presence of a time-changing credibility (or transmissibility) function $q(t)$ and dynamic graphs $(G_t)_{t\geq 0}$. However, our method is more general and allows us to study a broader class of spreading processes on dynamic graphs. To show the effectiveness of our analysis, we recover known results for the \push, \pull, and \pp protocols in the context of a constant credibility function $q$, and provide analysis for specific time-dependent credibility functions $q(t)$.

\subsection{Our Contribution}

In this work, we present a general framework for analyzing a large class of randomized rumor spreading models. Our main results give concentration for the number of vertices informed after a certain stopping time. These results are very general however we show in detail how they can be applied to several models.  
\begin{itemize}
    \item \textbf{Broad Class of Spreading Processes.} {Instead of using protocol specific characteristics, our framework only requires some mild conditions on the spreading process (i.e., bounded expected growth and a natural negative correlation property; see~\cref{def:processes})}.  
    This allows our setting to cover many models of randomized rumor spreading, beyond the standard \pull and \push models (see \cref{lem:protocol_growth}, the final bullet point below, and \cref{sec:examples}).
    \item \textbf{Credibility Function $q(t)$.} Our model allows for a time-dependent credibility function $q(t)\in[0,1]$, which specifies how transmissible the rumor is in each step. This can be seen as a major generalization of the prevalent notion of ``robustness'' in the literature, which usually refers to the uniform fault model with $q$ fixed over $t$. Unlike in previous models, our credibility functions can be arbitrary, in particular they do not need to be monotone.
    \item \textbf{Stopping time Criterion.} 
    We introduce a new technical tool based on a stopping time criterion. Roughly, for some desired number of vertices $B$ to be informed, the stopping time triggers when a sum of expected growth factors of the process exceeds a threshold depending on $B$. The aforementioned growth factors are conditional expectations of the proportion of new vertices informed in the next step. We show that if this stopping criterion is met, then $B$ vertices are informed with high probability (see \cref{thm:AzumaGrowing}). This is complemented by \cref{thm:shrinkingphase} with a dual statement on the shrinking of the uninformed vertices.
    Both results are significantly more general than previous analyses, which usually rely on a growth factor ``target'' that is independent of $t$ and the set of informed vertices.
    \item \textbf{Dynamic Graphs.} Due to the general nature of our framework and stopping criteria, our analysis ``abstracts away'' the graph and the specific spreading process. Hence, we can cover sequences of dynamic regular graphs $(G_t)_{t \geq 0}$ instead of a fixed graph $G$. This flexibility comes from the fact that the connectivity of each $G_t$ is captured by the growth factor of the process at round $t$, which in turn determines the stopping criterion. In particular, we do not require the graph to be connected at each step (see \cref{rem:con}).  
    \item \textbf{Applications.} We prove several new results for general and specific credibility functions.
    First, for general credibility functions, we combine our stopping time criterion with a simple lower bound based on sub-martingales. Together, they reveal a threshold phenomenon, very roughly saying that for expander graphs the quantity $\sum_{k=0}^{t} \log(1+q(k))$  approximates $\log(|I_t|)$, where $I_t$ is the set of vertices informed by time $t$ (see \cref{sec:applicationArbitrary}). 
    
    After that, we turn to some specific credibility functions, including additive, multiplicative and Power-Law
    (see \cref{sec:additive_models,sec:applicationMultiplicative,sec:applicationPowerLaw} for the respective definitions and results). There, we prove several dichotomies in terms of the decay of $q(t)$.
    
    Despite the generality and abstract nature of our main results, 
    we also recover some previous results for \emph{static} graphs (and time-invariant $q(t)$) as a special case; however, our results for \push, \pull and \pp additionally apply to \emph{dynamic} graphs (see, e.g., the results in \cref{sec:fixed_credibility}). 
     
\end{itemize}

\subsection{Related Work}

\paragraph{Classical Protocols and Robustness.}Given a rumor spreading process on an $n$-vertex graph, define the spreading time by $T(n)$ as the first time all vertices are informed. The spreading time of \push was first investigated on complete graphs by Frieze and Grimmett~\cite{frieze1985shortest}. Pittel~\cite{pittel1987spreading} improved on this, showing that for \push on the complete graph, the spreading time is given by $T(n) = \log_2(n)+\log(n) \pm f(n)$ with probability (w.p.) $1-o(1)$, for any $f(n)=\omega(1)$. Karp, Schindelhauer, Schenker and V\"ocking~\cite{karp2000randomized} investigated the \pp model (and variants) with a focus on the total number of messages sent. In particular, they exploit the phenomenon that once a constant fraction of vertices are informed, \pull manages to inform all vertices in just $O(\log \log n)$ rounds.

Doerr and Kostrygin~\cite{doerr2017randomized} derived a bound on the expected spreading time $\Ex{T(n)}$ of \push, replicating the bound from \cite{pittel1987spreading} but only with an additive $O(1)$ error instead of $f(n)$.  
Furthermore, \cite{doerr2017randomized} also considered \pull and \pp on complete graphs, and determined these spreading times up to and additive $O(1)$ error. They also presented a more general result for the uniform fault model, where the leading factors are delicate functions of the (time-invariant) credibility $q \in (0,1]$. We are able to recover a with high probability version of the upper bounds from \cite{doerr2017randomized} for \push, \pull and \pp (see \cref{sec:fixed_credibility}).
 
Fountoulakis, Huber and Pangiotou~\cite{fountoulakis2010reliable} considered the uniform fault setting of \push on random graphs with $n$ vertices where each edge is present w.p. $p= \omega(\log n/n)$. They proved that, up to lower-order terms, the same bound as for the complete graph holds. For the model without faults, Fountoulakis and Panagiotou~\cite{FP13} presented a tight analysis for \push on random $d$-regular graph for any constant $d \geq 3$. Panagiotou, Perez-Gimenez, Sauerwald and Sun \cite{PPSS15} analyzed \push on almost-regular strong expanders, recovering the runtime bound for complete graphs up to low order terms (see \cref{eq:strongexpander} for the definition of strong expander for regular graphs). 

Finally, Daknama, Panagiotou and Reisser~\cite{daknama_panagiotou_reisser_2021} greatly extended and unified these lines of works in terms of the graph classes considered, and the uniform fault model. Among other results, they proved that the aforementioned results from \cite{doerr2017randomized} (for \push, \pull and \pp) also hold for almost-regular strong expanders, without any change in the leading factor. Our framework allows us to recover the upper bounds in \cite{daknama_panagiotou_reisser_2021} for regular graphs as well as dynamic sequences of regular graphs (see \cref{sec:fixed_credibility}).

For general graphs (including highly non-regular ones), Chierichetti, Giakkoupis, Lattanzi and Panconesi~\cite{GiakkoupisConductance} proved an upper bound of $O(\log n / \varphi)$ on the time to inform all vertices for \pp, where $\varphi$ is the conductance of the graph. A similar, but more complicated bound was shown by Giakkoupis \cite{G14} for the \pp model, where the conductance is replaced by the vertex expansion. The results of both works also extend to \push and \pull, if the graph is (approximately) regular.

\paragraph{Dynamic Graphs.} 
Extending the aforementioned bounds for conductance and vertex expansion, Giakkoupis, Sauerwald and Stauffer \cite{giakkoupis2014randomized} proved similar bounds for dynamic graphs in the \pp model, where each graph $G_{t\geq 0}=(V,E_{t\geq 0})$ must be $d_t$-regular. In particular, they proved that if the sum of the conductances over rounds $0,1,\ldots,T$ is $\Omega(\log n)$, then by round $T$ all vertices are informed. Pourmiri and Mans \cite{pourmiri2020tight} analyzed an asynchronous version of \pp. While some of their positive results are similar to the ones in \cite{giakkoupis2014randomized}, they also established dichomoties between the synchronous and asynchronous version on dynamic graphs. Our approach can be seen as a refinement and generalization of the methods employed in these two works, since our stopping time aggregates over the (random) conductances of the sets $I_t$, for $t=0,1,\ldots,T$, and it works for arbitrary, so-called $\Cgrow$-growing and $\Cshrink$-shrinking processes.

Finally, Clementi, Crescenzi, C.~Doerr, Fraigniaud, Pasquale and Silvestri~\cite{CDF16} 
analyzed \push on a random dynamic graph model called Edge Markovian Evolving Graph, and proved a runtime bound of $O(\log n)$ for certain parameter ranges of their model. Ideas and techniques related to rumor spreading have also been employed in the analysis of components in a temporal random graph model \cite{BeckerCCKRRZ23,CasteigtsRRZ21}.

\paragraph{Other Models with Time Dependent Credibility Functions.} The inclusion of a local time dependent forgetting rate in the SIR model \cite{kermack1927contribution} was empirically investigated by Zhao, Xie, Gao, Qiu, Wang, and Zhang \cite{zhao2013rumor}, leading to $q(t) := \mu - e^{\beta \cdot t}$, for $0\leq \mu - e^{\beta \cdot t}\leq 1 $, for $\mu$ and $\beta$ parameters indicating the initial credibility and the speed with which the credibility decreases. Very recently, Zehmakan, Out and Khelejan \cite{zehmakan2023rumors} studied a version of the Independent Cascade model \cite{kempe2003maximizing} where $q(t)$ is a variant of the multiplicative credibility function (with $\alpha = 1/2$, see \cref{def:multcred}), but additionally is edge dependent (i.e.\ a function $q(t,uv)$, $uv\in E(G)$) and depends on the Jaccard similarity between two vertices $u$ and $v$.

\section{Models and Notation}
\label{sec:models_notation}
We will cover some basic notation before introducing the models studied in this paper. 
\subsection{Notation}

\paragraph{Graph Notation.}
Throughout this paper, all considered graphs $G=(V,E)$ will be simple  and undirected. We denote $n:=|V|$ and $m:=|E|$.  For a node $v\in V$, $N(v):=\{w\in V: \{w,v\} \in E\}$ is the \emph{neighborhood} of $v$, and $\deg(v):= |N(v)|$ is called the \emph{degree} of $v$. We say a graph is \textit{regular} if every vertex has the same degree. For $U \subseteq V$ we let $N_U(v) := \{w\in U: \{v,w\}\in E\}=N(v) \cap U$, and denote $\deg_{U}(v) := |N_U(v)|$. We will also consider \emph{dynamic graphs}, which can be thought of as a sequence of graphs $(G_t)_{t\geq 0}$ where each graph $G_t=(V,E_t)$ is on the same vertex set, however the edge sets $E_t$ may change over time. 

For any two sets $U,W \subseteq V$, we let $e(U,W):= |\{\{u,w\}\in E: u \in U, w\in W\}|$ denote the number of edges between $U$ and $W$. The \emph{volume} of a set $U \subseteq V$ is the sum of the degrees of the vertices in $U$, $\vol{U}:=\sum_{u \in U} \deg(u)$. We let $A$ be the adjacency matrix of $G$ and denote the degree matrix by $D := \operatorname{diag}(\mathbf{d})$, where $\mathbf{d}(u) = \deg(u)$, which is the matrix with the degrees of the vertices on the diagonal and the rest of the entries equal to $0$. Lastly, we let $1=\lambda_1 \geq \lambda_2 \geq \dots \geq \lambda_n$ be the eigenvalues of the normalized adjacency matrix $D^{-1/2}AD^{-1/2}$ and let $\lambda := \max\{|\lambda_2|,|\lambda_3|,\dots ,|\lambda_n|\} \geq 0$.  

  We say that a regular graph $G$ of degree $d$ is a \textit{strong expander} if,
 \begin{equation}\label{eq:strongexpander}
     \lim_{n\to \infty} \lambda \to 0.
 \end{equation} 
 Note that a necessary requirement for that is $d \rightarrow \infty$. As noted in other works on rumor spreading, the class of random $d$-regular graphs with $d=\omega(1)$  forms an example of strong expander graphs with w.p. $1-o(1)$ \cite{BFSU98,sarid2022spectral}. We refer to \cite{daknama_panagiotou_reisser_2021,PPSS15} for the exact definition of strong expander graphs when $G$ is almost-regular.
 
 The \textit{conductance} \cite{jerrum1989approximating} of any vertex set $\emptyset \subsetneq S \subsetneq V$ in a graph $G=(V,E)$ is \begin{equation*}
    \varphi_G(S) := \frac{e(S, V\setminus S)}{\min\left(\vol{S}, \vol{V \setminus S}\right)}. 
\end{equation*}
If the graph $G$ or graph sequence $(G_t)_{t\geq 0}$ is clear from the context, we drop the subscript. The conductance of $G$ is in turn defined as,
\begin{equation*}
    \varphi(G):= \min_{\emptyset \subsetneq S \subsetneq V} \frac{e(S, V\setminus S)}{\min\left(\vol{S}, \vol{V\setminus S}\right)}.
\end{equation*} 
 
\paragraph{Model Notation.}
As mentioned, we will consider random processes on a sequence of $d_t$-regular graphs, $(G_t)_{t\geq 0}$ where each $G_t$ has a common vertex set $V$. We always assume that $d_t >0$ (i.e., we do not consider the empty graph). These processes produce a sequence of sets $(I_t)_{t\geq 0}$ where $I_t$ is the set of \textit{informed} vertices at time $t$ (i.e., after $t$ rounds are completed) and $I_{t} \subseteq I_{t+1}\subseteq V$ for all $t\geq 0$. Similarly, we let $U_t := V\setminus I_t$ denote the set of \emph{uninformed} vertices at time $t\geq 0$. Lastly, we define $\Delta_{t} := I_{t}\setminus I_{t-1}$ to be the set of vertices that get informed in round $t$. Further notation relating to such process is given in \cref{sec:ourproc}.
 
 \paragraph{Mathematical Notation and Assumptions.} We use  asymptotic notation $\mathcal{O}(\cdot), o(\cdot),\Omega(\cdot),\omega(\cdot),\Theta(\cdot),\dots$ throughout, this is always defined relative to the number of vertices $n$.  All logarithms are to base $e$, unless indicated otherwise. We let $n$ tend to infinity and say an event $\mathcal{E}$ happens \textit{with high probability} (w.h.p.) if it occurs w.p. $1-o(1)$. For $f: X\to \mathbb{R}$ a non-negative real-valued function with domain $X$, we let $\operatorname{Supp}(f):= \{x \in X : f(x) \neq 0\}$. We define $\mathfrak{F}^{t}$ to be the filtration corresponding to the first $t$ rounds of the process, in particular $\mathfrak{F}^{t}$ reveals $I_{0}, I_{1}, \dots, I_{t}$. For brevity, we set
\[ \Pr{t}{\cdot} := \Pro{\cdot \mid \mathfrak{F}^t} , \qquad \EX{t}{\cdot} := \Ex{\cdot\mid \mathfrak{F}^t}, \quad\text{and}\quad  \VAR{t}{\cdot } := \Ex{ \left( \cdot - \Ex{ \cdot \mid \mathfrak{F}^t} \right)^2 \mid \mathfrak{F}^t}. \]

 \subsection{Standard Rumor Spreading Protocols and Credibility Function \texorpdfstring{$q(t)$}{q(t)}}\label{sec:standard_rumor}

Given any graph sequence, $G_{t\geq 0}=(V,E_{t\geq 0})$ initially one node $v^*$ in graph $G_0$ is informed of the rumor, i.e., $I_0=\{v^{*}\}$. We recall the definition of the \pull, \push, and \pp protocols \cite{frieze1985shortest, karp2000randomized}. In the \pull model, in every round $t=0,1,\ldots$, every \emph{uninformed} vertex $v$ chooses a neighbor $u$ uniformly and independently at random. If $u$ is informed, then as a response $u$ transmits the rumor to $v$, so $v$  becomes informed. In the \push protocol, in each round, every \emph{informed} node $v$ chooses a neighbor $u$ uniformly at random, and transmits the rumor to $u$. Lastly, \pp is the combination of both strategies: In each round, if the node knows the rumor, it chooses a random neighbor to send the rumor to. Otherwise, it chooses a random neighbor to request the rumor from.

We can extend the \pull, \push and \pp models by including a credibility function $q(t)$ for $q(t) : \mathbb{N} \cup \{0\} \to [0,1]$ and $t\geq 0$. In the \pull, \push and \pp \emph{with credibility $q(t)$} models, at the beginning of each round $t=0,1,\ldots,$  for any uninformed node $ v \not\in I_{t-1}$ and for each transmission of the rumor to $v$  (regardless of whether that was due to a \push or \pull transmission), it becomes informed with w.p. $q(t)$ independently, and remains uninformed otherwise\footnote{Hence if in a round, an uninformed vertex receives $k$ transmissions (regardless of whether these are \pull or \push transmissions), then the probability it gets informed is $1- (1-q(t))^k$, i.e.\ each transmission is independent.}. This is depicted for the \pp model in \cref{alg:pull_model}. Notice that $q(t)$ may be time-dependent, and also that when $q(t)= q =1$ we return to the standard \pull, \push, and \pp models, whereas with $q(t) = q$ being a constant in $(0,1)$ we recover the ``uniform failure'' model studied in \cite{daknama_panagiotou_reisser_2021,doerr2017randomized}.

\begin{algorithm}
   \caption{Round $t \in \mathbb{N} \cup \{0\}$ of \pp with credibility function $q(t)$}
  \label{alg:pull_model}
   \begin{algorithmic}[1] 
       \State \textbf{Input}: $G_{t}, I_{t}, q(t)$
       \State \textbf{Initialize}: $\Delta_{t+1} \gets \emptyset$
                   \For{each $v \in I_{t}$}       \Comment{\push}
           \State Sample a neighbor $v' \in N_{G_{t}}(v)$ chosen uniformly at random.
          \If{$v' \not\in \Delta_{t+1}$} 
           \State With probability $q(t)$, $\Delta_{t+1} \gets \Delta_{t+1} \cup \{v'\}$ 
\EndIf
       \EndFor
       \For{each $v \in V \setminus I_{t}$}  \Comment{\pull}
           \State Sample a neighbor $v' \in N_{G_{t}}(v)$ chosen uniformly at random.
          \If{$v' \in I_{t}$} 
           \State With probability $q(t)$, $\Delta_{t+1} \gets \Delta_{t+1} \cup \{v\}$ 
\EndIf
       \EndFor
       \State $I_{t+1} \gets I_t \cup \Delta_{t+1}$
   \end{algorithmic}
\end{algorithm}
 
\subsection{Our Class of Spreading Processes}\label{sec:ourproc}

We now introduce two general spreading processes, that are crucial to our framework. This is an abstraction of the aforementioned examples of \push, \pull and \pp with credibility function $q(t)$, since we are now only considering the expected growth (or shrinking) factors. We point out that these may depend on several quantities such as the conductance of the informed set $I_t$ (or uninformed set $U_t$, respectively), and $q(t)$ of course.
 
\begin{definition}[Growing and Shrinking Processes]\label{def:processes}
Let $(G_t)_{t\geq 0}$ be a sequence of graphs. Let $\mathcal{P}$ be a stochastic process on $(G_t)_{t\geq 0}$ with a sequence of informed vertices $(I_t)_{t\geq 0} \subseteq V(G_t)$ and uninformed vertices  $U_i=V(G_t)\setminus I_t$ for all $t\geq 0$. We begin by defining the following property of such a process

\begin{itemize}
    \item \hypertarget{p3}{$\mathcal{P}_1$} (\textbf{Negative Correlation}): For any round $t \geq 0$ and any subset $S \subseteq U_t$,
    \[
      \Pr{t}{ \bigcap_{u \in S} \{ u \in I_{t+1} \} } \leq \prod_{u \in S} \Pr{t}{ u \in I_{t+1} }.
    \]
\end{itemize}For some time-independent value $\Cgrow > 0$ we say that  $\mathcal{P}$ is a $\Cgrow$-growing process if it satisfies \NC and 
\begin{itemize}
    \item \hypertarget{p1}{$\mathcal{P}_2$}  (\textbf{Monotonicity}): For any round $t \geq 0$, it holds deterministically that $I_t \subseteq I_{t+1}$ (and $|I_0|\geq 1$),
    \item \hypertarget{p2}{$\mathcal{P}_3$} (\textbf{Bounded Expected Growth}): For any round $t \geq 0$ the \emph{expected growth factor satisfies},
    \[
    \EX{t}{ \frac{|\Delta_{t+1}|}{|I_t|}  }\leq \Cgrow.
    \] 
\end{itemize}
Similarly, for some time-independent $\Cshrink < 1$,  $\mathcal{P}$ is a $\Cshrink$-shrinking process if it satisfies \NC and 
\begin{itemize}
\item \hypertarget{pt1}{$\widetilde{\mathcal{P}}_2$}  (\textbf{Monotonicity}): For any round $t \geq 0$, it holds deterministically that $U_t \supseteq U_{t+1}$ (and $|U_0|\leq n/2$),

    \item \hypertarget{pt2}{$\widetilde{\mathcal{P}}_3$} (\textbf{Bounded Expected Shrinking}): For any round $t \geq 0$ the \emph{expected shrinking factor satisfies},
   
     \[
   \EX{t}{ \frac{|\Delta_{t+1}|}{|U_t|} ~  } \leq \Cshrink.
    \] 
   \end{itemize}
\end{definition}
For convenience, we also define for all rounds $t \geq 0$ a ``combined'' growth/shrinking factor as
\[
 \delta_t := \EX{t}{\frac{|\Delta_{t+1}|}{\min\left(|I_t|,|U_t|\right)}} = \max\left( \EX{t}{\frac{|\Delta_{t+1}|}{|I_t|}}, \EX{t}{\frac{|\Delta_{t+1}|}{|U_t|}}\right).
\]
 We now prove that the negative correlation property immediately implies a strong upper bound on the variance of the growth (shrinking) factor. The same result was derived in~\cite{daknama_panagiotou_reisser_2021} for \push, \pull and \pp using the concept of self-bounding functions. 

 \begin{restatable}{lemma}{boundedvariance}\label{lem:bounded_variance_property}
Consider any stochastic process with sequence of informed vertices $(I_t)_{t \geq 0}$ satisfying $\NC$ . Then, also the following property also holds:
\begin{itemize}
    \item \hypertarget{p4}{$\mathcal{P}_4$} (\textbf{Bounded Variance}): For any round $t \geq 0$,
    \[
     \VAR{t}{|\Delta_{t+1}|} \leq \EX{t}{|\Delta_{t+1}|}.
    \] 
\end{itemize}
\end{restatable}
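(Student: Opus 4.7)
The plan is to decompose $|\Delta_{t+1}|$ as a sum of Bernoulli indicators over the uninformed vertices and then exploit the negative correlation property \NC applied to two-element subsets. Conditional on $\mathfrak{F}^t$, the set $U_t$ is determined, and since $\Delta_{t+1} = I_{t+1} \setminus I_t$, we can write
\[
|\Delta_{t+1}| = \sum_{u \in U_t} X_u, \qquad \text{where } X_u := \mathbf{1}\{u \in I_{t+1}\},
\]
with conditional success probabilities $p_u := \Pr{t}{u \in I_{t+1}}$.

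First, I would expand the conditional variance as the sum of marginal variances plus pairwise covariances:
\[
\VAR{t}{|\Delta_{t+1}|} = \sum_{u \in U_t} \VAR{t}{X_u} + \sum_{\substack{u,v \in U_t \\ u \neq v}} \Cov{t}{X_u}{X_v}.
\]
For the marginal terms, since each $X_u$ is a conditional Bernoulli, $\VAR{t}{X_u} = p_u(1-p_u) \leq p_u$, so $\sum_u \VAR{t}{X_u} \leq \sum_u p_u = \EX{t}{|\Delta_{t+1}|}$.

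Next, for each pair $\{u,v\} \subseteq U_t$ with $u \neq v$, I would apply \NC with $S = \{u,v\}$, which yields
\[
\Pr{t}{X_u = 1,\, X_v = 1} = \Pr{t}{\{u \in I_{t+1}\} \cap \{v \in I_{t+1}\}} \leq p_u p_v,
\]
so $\Cov{t}{X_u}{X_v} = \Pr{t}{X_u = 1, X_v = 1} - p_u p_v \leq 0$. Consequently, every pairwise covariance term is non-positive, and combining the two bounds gives $\VAR{t}{|\Delta_{t+1}|} \leq \EX{t}{|\Delta_{t+1}|}$.

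There is essentially no main obstacle here: the only point to be careful about is that $\{X_u\}_{u \in U_t}$ are honest conditional Bernoulli variables (which is immediate since $X_u \in \{0,1\}$ and $U_t$ is $\mathfrak{F}^t$-measurable), and that \NC applied with $|S|=2$ gives pairwise negative correlation of these indicators — the higher-order inclusion-exclusion terms that appeared in \cite{daknama_panagiotou_reisser_2021} via self-bounding functions are not needed because the variance only involves pairwise interactions.
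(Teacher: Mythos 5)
Your proposal is correct and follows essentially the same argument as the paper: decompose $|\Delta_{t+1}|$ into indicators over $U_t$, apply the variance--covariance expansion, bound each marginal variance of a Bernoulli by its mean, and use $\NC$ with $|S|=2$ to show every pairwise covariance is non-positive. No gaps.
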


\begin{proof}
Let $X_u$ be the indicator which is $1$ if node $u$ gets informed through \push in round $t+1$, and $0$ otherwise. By the variance-covariance formula,
\begin{equation}
    \VAR{t}{|\Delta_{t+1}|} = \VAR{t}{\sum_{u \in U_t}X_u} = \sum_{u \in U_t}\VAR{t}{X_u} + \sum_{u \neq z \colon u,z\in U_t}\Cov{t}{X_u}{X_z}\label{eq:varpushdeltatp1}
\end{equation}
Let us start by analyzing $\Cov{t}{X_u}{X_z}$. We note that,
\begin{align*}
    \Cov{t}{X_u}{X_z} &= 
   \EX{t}{X_u\cdot X_z} 
    - \EX{t}{X_u}\cdot \EX{t}{X_z}\\
    &= \Pr{t}{ X_u = 1 \cap X_z = 1 } 
     - \Pr{t}{X_u = 1 } \cdot \Pr{t}{X_z=1} \\ 
         &\stackrel{(a)}{\leq} \Pr{t}{ X_u = 1} \cdot \Pr{t}{ X_z = 1 } 
     - \Pr{t}{X_u = 1 } \cdot \Pr{t}{X_z=1} \\
     &= 0, 
\end{align*}
where $(a)$ uses the negative correlation property (\NC).
Moreover, since $X_u$ is a Bernoulli random variable, we get that $\VAR{t}{X_u} = \Pr{t}{X_u =1}\cdot \Pr{t}{X_u=0} \leq \Pr{t}{X_u=1}.$
Returning to \cref{eq:varpushdeltatp1}, we obtain
\begin{equation*}
    \VAR{t}{|\Delta_{t+1}|} \leq \sum_{u\in U_t} \VAR{t}{X_u} \leq \sum_{u\in U_t} \Pr{t}{X_u=1} = \EX{t}{\Delta_{t+1}}.\qedhere 
\end{equation*} 
\end{proof}

\subsection{Specific Protocols and Growth Factors}\label{sec:growth}

In this subsection, we analyze specific protocols (in particular, \push, \pull and \pp with credibility function $q(t)$) and verify that they are $\Cgrow$-growing and $\Cshrink$-shrinking processes in the sense of \cref{def:processes}.

Let $(G_t)_{t\geq 0}$ be a sequence of regular graphs. Recall that in our setting $|I_0|=1$ and $\Delta_{t+1} = I_{t+1} \setminus I_t$. In order to capture the progress of the rumor spreading process between the rounds $t_1$ and $t_2$, we observe the following identities,
\begin{align*}
    & \frac{ |I_{t_2}| }{ |I_{t_1}|} = \prod_{t=t_1}^{t_2-1} \frac{| I_{t+1}|}{|I_t| } = \prod_{t=t_1}^{t_2-1} \frac{|I_t| +|\Delta_{t+1}|}{|I_t| } = \prod_{t=t_1}^{t_2-1} \left(1 + \frac{|\Delta_{t+1}|}{|I_t|}\right)\\
    & \frac{ |U_{t_2}| }{ |U_{t_1}|} = \prod_{t=t_1}^{t_2-1}\frac{|U_{t+1}|}{|U_t|} = \prod_{t=t_1}^{t_2-1}\frac{|U_t| + |\Delta_{t+1}|}{|U_t|}
 = \prod_{t=t_1}^{t_2-1} \left(1 - \frac{|\Delta_{t+1}|}{|U_t|}\right).
\end{align*}
As such, we prove upper and lower bounds on the expectation of the growth factor, $\frac{|\Delta_{t+1}|}{\min\left(|I_t|,|U_t|\right)}$ of the \push, \pull and \pp protocols.

\begin{lemma}\label{lem:PPP-P}
Let $t \geq 0$ be any round, $G_t$ a $d_t$-regular graph with $n$ vertices and $d_t\geq 1$, and $q(t)$ an arbitrary credibility. Then,
 \begin{enumerate}[(i)]\itemsep1pt
 	\item\label{itm:Push}   for the \push protocol,    \[
 			q(t) \cdot \left(1-\frac{q(t)}{2} \right) \cdot \varphi_t(I_t) \leq \EX{t}{\frac{|\Delta_{t+1}|}{\min( |I_t|,|U_t|)}}         \leq q(t) \cdot \varphi(I_t),
 			\]
 		\item\label{itm:Pull} for the \pull protocol,\begin{equation*}
 			\EX{t}{\frac{|\Delta_{t+1}|}{\min\left(|I_t|, |U_t|\right)}}  =  q(t) \cdot \varphi(I_t),
 		\end{equation*}  
	\item\label{itm:PP}  and for the \pp protocol,	\begin{equation*}
	  \frac{3}{2} \cdot q(t) \cdot \left(1- \frac{q(t)}{2}\right) \cdot \varphi(I_t)\leq     \EX{t}{\frac{|\Delta_{t+1}|}{\min\left(|I_t|, |U_t|\right)}}  \leq 2 \cdot q(t) \cdot  \varphi(I_t).
	\end{equation*}
\end{enumerate}

\end{lemma}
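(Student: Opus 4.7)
My plan is to handle each protocol by first computing, for an arbitrary uninformed vertex $u \in U_t$, the conditional probability $\Pr{t}{u \in I_{t+1}}$, then summing over $u \in U_t$ to get $\EX{t}{|\Delta_{t+1}|}$, and finally dividing by $\min(|I_t|,|U_t|)$ using the identity
\[
e(I_t,U_t)\;=\;d_t\cdot \varphi_t(I_t)\cdot \min(|I_t|,|U_t|),
\]
which holds because $G_t$ is $d_t$-regular, so $\vol{S}=d_t|S|$ for every $S\subseteq V$.

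For part (\ref{itm:Push}), the contributions of the $\deg_{I_t}(u)$ informed neighbors of $u$ are mutually independent: each such neighbor picks $u$ with probability $1/d_t$ and then the transmission is believed with probability $q(t)$, giving a per-neighbor informing probability of $q(t)/d_t$. Therefore
\[
\Pr{t}{u\in I_{t+1}}\;=\;1-\bigl(1-q(t)/d_t\bigr)^{\deg_{I_t}(u)}.
\]
The upper bound $q(t)\varphi_t(I_t)$ is obtained by Bernoulli's inequality $1-(1-x)^k\leq kx$; the lower bound uses the quadratic refinement $1-(1-x)^k\geq kx-\binom{k}{2}x^2=kx\bigl(1-(k-1)x/2\bigr)$ together with $(k-1)(q(t)/d_t)\leq q(t)$, which produces the factor $1-q(t)/2$. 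Summing and dividing by $\min(|I_t|,|U_t|)$ yields the stated inequalities.

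For part (\ref{itm:Pull}), the analysis is exact: $u$ picks an informed neighbor with probability $\deg_{I_t}(u)/d_t$ and the transmission succeeds with probability $q(t)$, so $\Pr{t}{u\in I_{t+1}}=q(t)\deg_{I_t}(u)/d_t$. Summation and division give the equality directly.

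For part (\ref{itm:PP}), the PUSH and PULL attempts at $u$ use disjoint sources of randomness (the choices made by informed neighbors of $u$ versus the choice made by $u$ itself) and are therefore independent. Setting $A_u:=1-(1-q(t)/d_t)^{\deg_{I_t}(u)}$ and $B_u:=q(t)\deg_{I_t}(u)/d_t$, this gives $\Pr{t}{u\in I_{t+1}}=A_u+B_u-A_uB_u$. The upper bound is immediate from the union bound $A_u+B_u\leq 2q(t)\deg_{I_t}(u)/d_t$. For the lower bound I plan to exploit the inequality
\[
A+B-AB\;\geq\;\tfrac{1}{2}(A+B)(2-q(t)),\qquad A,B\in[0,q(t)],
\]
proved by averaging the two estimates $A+B-AB\geq A(1-q(t))+B$ and $A+B-AB\geq A+B(1-q(t))$, and then plug in the bounds from (i) and (ii); the resulting algebra reduces to $(2-q(t))(2-q(t)/2)/2\geq 3(2-q(t))/4$, i.e.\ $q(t)\leq 1$, which always holds. (Alternatively one can case-split at $q(t)=2/3$, using the combined bound for small $q(t)$ and the PULL-only bound for large $q(t)$.)

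The main obstacle is the lower bound in part (\ref{itm:PP}): the naive uses of PULL alone or PUSH alone both fall short of the target constant $3/2$ for intermediate $q(t)$, so the independence of the PUSH and PULL attempts has to be exploited quantitatively (for instance through the averaging trick above) before combining with the bounds of parts (\ref{itm:Push}) and (\ref{itm:Pull}).
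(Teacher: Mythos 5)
Your proof is correct, and its overall skeleton (per-vertex informing probability, sum over $U_t$, divide by $\min(|I_t|,|U_t|)$ via $e(I_t,U_t)=d_t\varphi(I_t)\min(|I_t|,|U_t|)$) matches the paper's exactly; parts (i) and (ii) differ only in cosmetic choices, since the paper obtains the factor $1-q(t)/2$ for \push via the chain $1-x\le e^{-x}\le 1-x+x^2/2$ while you use the Bonferroni truncation $1-(1-x)^k\ge kx-\binom{k}{2}x^2$ together with $(k-1)q(t)/d_t\le q(t)$ — both valid and equivalent in effect. The genuine divergence is the \pp lower bound. The paper writes $\Pr{t}{u\in U_{t+1}}=(1-q(t)\deg_{I_t}(u)/d_t)\cdot(1-q(t)/d_t)^{\deg_{I_t}(u)}$, expands the product into the cubic $1-2z+\tfrac32 z^2-\tfrac12 z^3$ in $z=q(t)\deg_{I_t}(u)/d_t$, and verifies the polynomial inequality $1-2z+\tfrac32 z^2-\tfrac12 z^3\le 1-\tfrac32 z(1-z/2)$ on $[0,1]$. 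You instead work with the complementary events, set $A_u,B_u\in[0,q(t)]$ for the \push and \pull success probabilities, and use the averaging inequality $A+B-AB\ge\tfrac12(A+B)(2-q(t))$, after which plugging in the bounds from (i) and (ii) reduces to $2-q(t)/2\ge 3/2$, i.e.\ $q(t)\le 1$. Your route is arguably cleaner: it isolates where the independence of the two mechanisms is used and replaces a cubic-discriminant check with a one-line convex combination; the paper's route has the minor advantage of staying entirely within the ``bound $\EX{t}{|U_{t+1}|}$ from above'' template it reuses in Lemmas 2.3 and 2.6. Either argument is acceptable.
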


 \begin{proof}
For Item \eqref{itm:Push} we start with the lower bound. For any vertex $u \in U_t$, we note that, 
    \begin{equation}\label{eq:deltafrom1minus}
        \EX{t}{|\Delta_{t+1}|} = \sum_{u \in U_t}\Pr{t}{u \in I_{t+1}} = \sum_{u \in U_t} \left(1 - \Pr{t}{u \in U_{t+1}}\right).
    \end{equation}
    Further, as $1-x\leq e^{-x}$ for all $x \in \mathbb{R}  $, we have
    \begin{align}
        \Pr{t}{u \in U_{t+1}}& = 
        \prod_{v \in N(u)\cap I_t} \left(1 - \frac{q(t)}{d_t}   \right)  
        = \left(1 - \frac{q(t)}{d_t}\right)^{\deg_{I_t}(u)} \notag  \leq  \exp\left(-\frac{ q(t) \cdot \deg_{I_t}(u)}{ d_t } \right). \intertext{Since $\exp(x) \leq 1 +x+\frac{1}{2} x^2$ for any $x \in [-1,0]$ (which can be applied since $\frac{q(t) \cdot \deg_{I_t}(u)}{d_t} \leq 1$) we have} 
             \Pr{t}{u \in U_{t+1}}    &\leq  1 - \frac{ q(t) \cdot \deg_{I_t}(u)}{ d_t } + \frac{1}{2} \cdot \left( \frac{ q(t) \cdot \deg_{I_t}(u)}{ d_t } \right)^2\leq 1- \frac{  q(t) \cdot \left(1-\frac{q(t)}{2} \right) \cdot \deg_{I_t}(u)}{ d_t } ,\label{eq:push_grow_lower}
    \end{align}
      where the second inequality follows since  
      $\frac{\deg_{I_t}(u)}{d_t} \leq 1.$  Thus, by \eqref{eq:deltafrom1minus}
    \begin{align*}
     \EX{t}{|\Delta_{t+1}|} &\geq \sum_{u \in U_t} \frac{  q(t) \cdot \left(1-\frac{q(t)}{2} \right) \cdot \deg_{I_t}(u)}{ d_t } \\
    &= \frac{1}{d_t} \cdot q(t) \cdot \left(1-\frac{q(t)}{2} \right) \cdot e(I_t,U_t) \\
    &= q(t) \cdot \left(1-\frac{q(t)}{2} \right) \cdot \varphi(I_t) \cdot \min(|I_t|,|U_t|).
    \end{align*}

To prove the upper bound on $\Ex{\frac{|\Delta_{t+1}|}{\min(|I_t|,|U_t|)}}$ we will bound $|\Delta_{t+1}|$ from above. First, we note that
\begin{align*}
 \Pr{t}{u \in U_{t+1}} = \prod_{v \in N(u)\cap I_t} \left(1 - \frac{q(t)}{d_t}   \right)  &= \left(1 - \frac{q(t)}{d_t}\right)^{\deg_{I_t}(u)} \geq 1 - \deg_{I_t}(u) \cdot \frac{q(t)}{d_t}, 
        \end{align*}
where the last step follows by Bernoulli's inequality. Therefore,
 by \eqref{eq:deltafrom1minus} \begin{align*}
    \EX{t}{|\Delta_{t+1}|} \leq \sum_{u \in U_t} \deg_{I_t}(u) \cdot \frac{q(t)}{d_t} = \frac{1}{d_t}\cdot e(I_t,U_t) \cdot q(t) = q(t)\cdot \varphi_t(I_t) \cdot \min(|I_t|,|U_t|).
\end{align*}

  \noindent\textit{Proof of \eqref{itm:Pull}:}  Let $u \in U_t$ and let $X_u$ be the indicator variable which is $1$ if $u$ gets informed in round $t+1$ through \texttt{PULL}, and $0$ otherwise. We note that $\Pr{t}{X_u=1}= \frac{\deg_{I_t}(u)}{d_t} \cdot q(t)$. Hence,
    \begin{align*}
        \EX{t}{|\Delta_{t+1}|}  
        & = \sum_{u \in U_t} \Pr{t}{X_u=1} 
          = \sum_{u \in U_t} q(t) \cdot \frac{\deg_{I_t}(u)}{d_t}
         = q(t)\cdot \frac{1}{d_t}\cdot e(I_t, U_t) = q(t) \cdot \varphi(I_t) \cdot \min\left(|I_t|, |U_t|\right).
    \end{align*} 
\medskip 

\noindent\textit{Proof of \eqref{itm:PP}:}  The upper bound  follows by adding up the two upper bounds on $|\Delta_{t+1}|$ for \push and \pull.
 Observe that $ \Pr{t}{u \in U_{t+1}} = \left(1 - \frac{q(t) \cdot \deg_{I_t}{(u)}}{d_t} \right)\cdot         \prod_{v \in N(u)\cap I_t} \left(1 - \frac{q(t)}{d_t}   \right)  $. Thus, similarly to \cref{eq:push_grow_lower},
\begin{align*}
      \EX{t}{|U_{t+1}|} 
        &\leq \sum_{u \in U_t}\left(1 - \frac{q(t) \cdot \deg_{I_t}{(u)}}{d_t} \right)\cdot  \left( 1 - \frac{ q(t) \cdot \deg_{I_t}(u)}{ d_t } + \frac{1}{2} \cdot \left( \frac{ q(t) \cdot \deg_{I_t}(u)}{ d_t } \right)^2 \right)   
\\
&= \sum_{u \in U_t} \left(1 - \frac{2 \cdot q(t) \cdot \deg_{I_t}(u)}{d_t} + \frac{3}{2} \cdot \left( \frac{q(t) \cdot \deg_{I_t}(u)}{d_t} \right)^2 - \frac{1}{2} \cdot \left( \frac{q(t) \cdot \deg_{I_t}(u)}{d_t} \right)^3 \right).
\intertext{
Since for $z \in [0,1]$, $1-2 \cdot z + \frac{3}{2} \cdot z^2 - \frac{1}{2} \cdot z^3 \leq 1 - \frac{3}{2} \cdot z\left(1-\frac{z}{2} \right)$, it follows that}
 \EX{t}{|U_{t+1}|} &\stackrel{(a)}{\leq} \sum_{u \in U_t} \left(1 - \frac{\frac{3}{2}  \cdot q(t) \cdot  \left(1- \frac{q(t)}{2}\right) \cdot \deg_{I_t}(u)}{d_t} \right) \\
&\stackrel{(b)}{=} |U_t| - \frac{3}{2} \cdot q(t) \cdot \left(1-\frac{q(t)}{2} \right) \cdot \varphi_t(I_t) \cdot \min\left(|I_t|, |U_t|\right) ,
\end{align*}
where $(a)$ used that $\frac{\deg_{I_t}(u)}{d_t} \leq 1$ and $(b)$ that $e(I_t,U_t) = d_t \cdot \varphi_t(I_t) \cdot \min\left(|I_t|, |U_t|\right)$.
\end{proof}

\renewcommand{\arraystretch}{2}
\begin{table}[t]
\begin{center}
\begin{tabular}{|c|c|c|}
\hline
\multirow{2}{*}{} & \multicolumn{2}{c|}{$\delta_t$}\\ \cline{2-3}
                 & Lower Bound & Upper Bound           \\ \hline\hline
\pull               & \multicolumn{2}{c|}{$q(t) \cdot \varphi(I_t)$}             \\ \hline
\push              & $q(t) \cdot \left(1-\frac{q(t)}{2} \right) \cdot \varphi_t(I_t)$ & $q(t) \cdot \varphi(I_t)$  \\ \hline
\pp               & $\frac{3}{2} \cdot q(t) \cdot \left(1- \frac{1}{2} q(t) \right) \cdot \varphi_t(I_t)$  & $2 \cdot q(t) \cdot \varphi(I_t)$ \\ \hline
\end{tabular}
\caption{Basic lower and upper bounds on the expected growth factor $\delta_t$ for \push, \pull and \pp in terms of $q(t)$ and the conductance $\varphi(I_t)$ on regular graphs.}
\label{tab:GrowthFactors}
\end{center}
\end{table}

Next we prove tighter bounds for the \push and \pp protocol if the graph is a strong expander.
\begin{lemma}\label{lem:expushSTRONG}
Consider the \push protocol, and let $t \geq 0$ be any round where with $|I_t|\leq n/2$ and  $G_t$ a $d_t$-regular graph with $n$ vertices. Then, for $q(t)$ an arbitrary credibility and $\beta := \lambda + \frac{|I_t|}{n}$,
\[
 \EX{t}{ \frac{|\Delta_{t+1}|}{|I_t|} } \geq q(t) \cdot \left(1 - 7 \sqrt{\beta} \right).
\]
For the same setting in the \pp protocol,
\[
 \EX{t}{ \frac{|\Delta_{t+1}|}{|I_t|} } \geq q(t) \cdot \left(2 - 12 \sqrt{\beta} \right). 
\]
\end{lemma}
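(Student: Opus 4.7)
The plan is to strengthen the analysis of \cref{lem:PPP-P} by replacing the conductance-based lower bound on $e(I_t,U_t)$ with the expander mixing lemma (which is nearly tight on strong expanders), and then to control the ``collision'' error via a second spectral estimate on the second moment of $\deg_{I_t}(\cdot)$. For the first-moment step, applying the expander mixing lemma on the $d_t$-regular graph $G_t$ with both arguments equal to $I_t$ gives $\sum_{v \in I_t} \deg_{I_t}(v) \leq d_t |I_t|^2/n + \lambda d_t |I_t|$, so
\[
\sum_{u \in U_t} \deg_{I_t}(u) \;=\; d_t|I_t| - \sum_{v \in I_t} \deg_{I_t}(v) \;\geq\; d_t|I_t|(1-\beta).
\]

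For each $u \in U_t$, the convexity inequality $(1-x)^k \leq 1 - kx + \binom{k}{2}x^2$ (valid for $x\in[0,1]$ and every integer $k \geq 0$, as a direct convexity/induction check) applied with $x = q(t)/d_t$ and $k = \deg_{I_t}(u)$ produces the Taylor-style estimate
\[
\Pr{t}{u \in I_{t+1}} \;=\; 1 - \left(1-\tfrac{q(t)}{d_t}\right)^{\deg_{I_t}(u)} \;\geq\; \frac{q(t)\deg_{I_t}(u)}{d_t} - \frac{q(t)^2}{2}\left(\frac{\deg_{I_t}(u)}{d_t}\right)^{\!2}.
\]
Summing over $u \in U_t$ and plugging in the first-moment bound reduces the problem to controlling $\sum_{u\in U_t}\deg_{I_t}(u)^2$. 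For this I will use a trace-style argument: $\sum_u \deg_{I_t}(u)^2 = \|A_t \mathbf{v}\|_2^2$, where $\mathbf{v}$ is the $0/1$-indicator vector of $I_t$. Decomposing $\mathbf{v}$ in the eigenbasis of $A_t/d_t$ (whose top eigenvector $\mathbf{1}/\sqrt{n}$ contributes $|I_t|^2/n$ and whose remaining eigenvalues are bounded in absolute value by $\lambda$) gives $\sum_u \deg_{I_t}(u)^2 \leq d_t^2(|I_t|^2/n + \lambda^2 |I_t|) \leq d_t^2 |I_t| \beta$, where the last step uses $\lambda^2 \leq \lambda$.

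Combining these three ingredients gives $\EX{t}{|\Delta_{t+1}|}/|I_t| \geq q(t)(1-\beta) - (q(t)^2/2)\beta \geq q(t)(1-2\beta)$; since $\beta \leq \lambda + 1/2 \leq 3/2$, the elementary inequality $2\beta \leq 7\sqrt{\beta}$ concludes the \push bound. For \pp, I will redo the per-vertex step starting from $\Pr{t}{u\in I_{t+1}} = 1 - (1 - q(t)\deg_{I_t}(u)/d_t)(1-q(t)/d_t)^{\deg_{I_t}(u)}$; expanding the product and bounding the \push factor as above yields $\Pr{t}{u\in I_{t+1}} \geq 2q(t)\deg_{I_t}(u)/d_t - 2q(t)^2(\deg_{I_t}(u)/d_t)^2$ after a short piece of algebra, and reusing the first- and second-moment estimates gives $\EX{t}{|\Delta_{t+1}|}/|I_t| \geq 2q(t)(1 - 2\beta) = q(t)(2 - 4\beta) \geq q(t)(2-12\sqrt{\beta})$.

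The main obstacle is the second-moment spectral bound on $\sum_u \deg_{I_t}(u)^2$, which requires the short eigendecomposition above; the rest is routine Taylor expansion plus the expander mixing lemma. The analysis in fact yields the stronger linear error $O(\beta)$ rather than $O(\sqrt{\beta})$, and I will translate it to the claimed form using only that $\beta \leq 3/2$, which absorbs the multiplicative constants $7$ and $12$.
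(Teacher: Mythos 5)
Your proposal is correct, and it takes a genuinely different route from the paper's for the key step of controlling the quadratic error term. The paper partitions $U_t$ into the set $A$ of vertices with $\deg_{I_t}(u) \geq 2d_t\sqrt{\beta}$ and its complement $B$: it discards $A$ entirely after showing $|A| \leq \beta |I_t|$ via the mixing lemma, and on $B$ it uses the \emph{pointwise} bound $\deg_{I_t}(u)/d_t \leq 2\sqrt{\beta}$ inside the Taylor estimate, which is exactly where the $\sqrt{\beta}$ loss in the lemma statement originates. You instead bound the \emph{aggregate} second moment $\sum_{u}\deg_{I_t}(u)^2 = \|A_t \mathbf{1}_{I_t}\|_2^2 \leq d_t^2|I_t|\bigl(|I_t|/n + \lambda^2\bigr) \leq d_t^2 |I_t|\beta$ by the eigendecomposition, which avoids the dichotomy altogether and yields the stronger linear error $q(t)(1-O(\beta))$; the stated $1-7\sqrt{\beta}$ then follows trivially from $\beta \leq 3/2$. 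Your per-vertex inequality $(1-x)^k \leq 1 - kx + \binom{k}{2}x^2$ is the Bonferroni truncation and matches what the paper gets via $1-x \leq e^{-x}$ and $e^{-y}\leq 1-y+y^2/2$, and your \pp expansion $(1-z)(1-z+z^2/2) \leq 1-2z+2z^2$ agrees with the paper's. One small point of hygiene: in your first-moment step, applying the mixing lemma with $S=T=I_t$ requires the ordered-pair convention $e(S,S)=\sum_{v\in S}\deg_S(v)$; with the unordered convention you pick up a factor $2$ on the $\lambda$ and $|I_t|/n$ terms, which still gives $e(I_t,U_t)\geq(1-2\beta)d_t|I_t|$ and does not break the argument, but it is cleaner to simply invoke \cref{cor:exmixingcor}, which directly gives $e(I_t,U_t) \geq (1-\lambda)\frac{d_t}{n}|I_t|\,|U_t| \geq (1-\beta)\,d_t|I_t|$ — the same bound the paper itself uses.
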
 
\begin{proof}
Let us first prove the result for the \push protocol.
We follow a similar approach to \cite[Proof of Lemma 2.5]{PPSS15}. 
We define
\[
A:=\left\{ u \in U_t \colon \deg_{I_t}(u) \geq 2d_t \cdot \sqrt{\beta} \right\}.
\]
By definition of $A$ and \cref{lem:weakexpandermixinglemma},
\begin{equation}\label{eq:A_bound}
 |A| \cdot 2d_t \sqrt{\beta} \leq e(A,I_t) \leq \frac{d_t |A| \cdot |I_t|}{n} + \lambda d_t \sqrt{|A| \cdot |I_t|} \leq \frac{d_t |A| \cdot |I_t|}{n} + d_t \beta \sqrt{|A| \cdot |I_t|}.
\end{equation}
This implies
\[
 |A| \cdot 2 \sqrt{\beta} \leq  \frac{ |A| \cdot |I_t|}{n} + \beta \sqrt{|A| \cdot |I_t|}.
\]
Dividing by $|A|$ and rearranging this, we can upper bound $|A|$ as follows
\begin{align*}
|A| &\leq \beta^2 \cdot |I_t| \cdot \left( \frac{1}{2\sqrt{\beta} - \frac{|I_t|}{n}} \right)^2.
\end{align*}
We assume that $\beta \leq 1/49$ (as otherwise the lower bound would trivial). Therefore, we have $2 \sqrt{\beta} \geq 2 \beta \geq 2 \cdot \frac{|I_t|}{n}$, since $\beta := \lambda + \frac{|I_t|}{n}$ and $\lambda \geq 0$. Thus, the denominator is bounded from below by $\sqrt{\beta}$, which yields
\begin{equation}
    |A| \leq \beta \cdot |I_t|. \label{eq:UBA}
\end{equation}
Next define $B:= U_t \setminus A$. By a corollary of the strong expander mixing lemma (\cref{cor:exmixingcor}) and \eqref{eq:A_bound},
\begin{align}
    e(B,I_t) &= e(U_t,I_t) - e(A,I_t) \notag \\
    &\stackrel{(a)}{\geq} (1-\lambda) \cdot d_t \cdot \frac{|U_t| \cdot |I_t|}{n} - \frac{d_t |A| |I_t|}{n} - d_t \cdot \beta \sqrt{|A| |I_t|} \notag  \\
   &\stackrel{(b)}{\geq} \left(1 - \lambda - \frac{|A|}{|U_t|} \right) \cdot d_t \cdot \frac{|U_t| \cdot |I_t|}{n} - d_t \cdot \beta^{3/2} \cdot |I_t| \notag  \\
   &\stackrel{(c)}{\geq} \left(1 - 2 \left(\lambda + \frac{|I_t|}{n}\right) - 2 \beta^{3/2} \right) \cdot d_t \cdot \frac{|U_t| \cdot |I_t|}{n} \notag  \\
   & \stackrel{(d)}{\geq} (1 - 4 \beta) \cdot d_t \cdot \frac{(n-|I_t|) \cdot |I_t|}{n} \notag \\
   &\geq (1 - 5 \beta) \cdot d_t \cdot |I_t|, \label{eq:first-eq}
\end{align}
where $(a)$ holds by applying \cref{lem:strongexpandermixinglemma} to $e(U_t,I_t)$ and \cref{eq:A_bound}, $(b)$ by \cref{eq:UBA}, $(c)$ since $|U_t|\geq n/2$ and $(d)$ by the definition of $\beta$ and as $\beta \leq 1/49 \leq 1$ (and thus $\beta^{3/2} \leq \beta$).
Therefore, by \cref{eq:push_grow_lower}
\begin{align*}
    \EX{t}{|U_{t+1} \cap B|} &\leq \sum_{u \in U_t \cap B} \left(1 - \frac{ q(t) \cdot \deg_{I_t}(u)}{ d_t } + \frac{1}{2} \cdot \left( \frac{ q(t) \cdot \deg_{I_t}(u)}{ d_t } \right)^2 \right) \\
    &\leq \sum_{u \in U_t \cap B} \left( 1 - \frac{ q(t) \cdot \deg_{I_t}(u)}{ d_t } \cdot \left(1 - \frac{ q(t) \cdot \deg_{I_t}(u)}{ d_t }\right) \right) \\
    &\stackrel{(a)}{\leq} \sum_{u \in U_t \cap B} \left( 1 - \frac{ q(t) \cdot \deg_{I_t}(u)}{ d_t } \cdot \left(1 - 2 \sqrt{\beta} \right) \right) \\
    &= |U_{t} \cap B| - \left(1 - 2 \sqrt{\beta} \right)  \cdot \frac{q(t)}{d_t} \cdot e(B,I_t) \\
    &\stackrel{(b)}{\leq} |U_{t} \cap B| - \left(1 - 2 \sqrt{\beta} \right)  \cdot \left(1 - 5 \beta \right) \cdot q(t) \cdot |I_t|,
\end{align*}
where $(a)$ used the definition of $B$ and $q(t) \leq 1$, and $(b)$ used \cref{eq:first-eq}. Rearranging and using that $\beta \leq 1$ yields
\begin{align*}
    \EX{t}{ |\Delta_{t+1}|} \geq \left(1 - 7 \sqrt{\beta} \right)   \cdot q(t) \cdot |I_t|.
\end{align*}
This concludes the proof for \push. 

We now turn to the \pp protocol, where the derivation is almost identical to the one of \push. Here we use that in the \pp protocol, a node remains uniformed if and only if it does not get informed by a \push transmission \emph{and} if it does not get informed by a \pull call. Since these two events are independent, we conclude
\begin{align*}
        \Pr{t}{u \in U_{t+1}} &= \left(1 - \frac{q(t)}{d_t} \right)^{\deg_{I_t}(u)} \cdot \left(1 - \frac{q(t) \cdot \deg_{I_t}(u)}{d_t} \right) \\
  &\leq \left(1 - \frac{q(t)}{d_t}\cdot \deg_{I_t}(u) + \frac{1}{2}\cdot \frac{q(t)}{d_t}\deg_{I_t}(v)\right)\cdot \left(1 - \frac{q(t)\cdot \deg_{I_t}(v)}{d_t}\right)\\
  &\leq 1 - 2 \cdot \frac{q(t) \cdot \deg_{I_t}(u)}{d_t} + \left( \frac{q(t) \cdot \deg_{I_t}(u)}{d_t} \right)^2.
\end{align*}
Now the analogous derivation as for \push yields the claim.
\end{proof}

The next lemma improves over the lower and upper bound in \cref{lem:PPP-P} \eqref{itm:Push} if $|I_t| \geq n/2$. 
Concerning the lower bound, we have $q(t) \cdot (1- \frac{q(t)}{2}) \leq 1-e^{-q(t)}$ since $e^{-z} \leq 1- z+ \frac{1}{2} z^2 = 1- z \cdot \left(1-\frac{z}{2} \right)$ for $z \in [0,1]$. Further, if $d_t=\omega(1)$ and $q(t) \cdot \varphi_t(I_t)$ is bounded below by a constant, then the upper bound below is tighter as $1-\exp(-x) \leq x$ for any $x \in \mathbb{R}$.
\begin{lemma}\label{lem:expushUT}
 
Consider the \push protocol, and let $t \geq 0$ be any round, $G_t$ is a $d_t$-regular graph with $n$ vertices and $q(t)$ an arbitrary credibility. Then,
 \begin{enumerate}[(i)]\itemsep1pt
 	\item\label{itm:push1}  $\EX{t}{ \frac{|\Delta_{t+1}|}{|U_t|}  } \geq  \left(1 - e^{-q(t)} \right) \cdot \varphi(I_t).$
 		\item\label{itm:push2connected} If $G_t$ is connected, then,
\begin{align*}
\EX{t}{ \frac{|\Delta_{t+1}|}{|U_t|} } &\leq 1 - e^{- \varphi(I_t) \cdot q(t)  } \cdot 
     \left(1 - \frac{\varphi(I_t) \cdot (q(t))^2}{d_t} \right).
\end{align*}
	
\end{enumerate}

\end{lemma}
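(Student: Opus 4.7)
My plan is to reduce both parts to elementary estimates on the map $k\mapsto(1-q(t)/d_t)^k$, exploiting that under \push on a $d_t$-regular graph we have $\Pr{t}{u\in U_{t+1}}=(1-q(t)/d_t)^{\deg_{I_t}(u)}$ for each $u\in U_t$ (this is the same per-vertex computation that drives \cref{lem:PPP-P}). Hence $\EX{t}{|\Delta_{t+1}|}=\sum_{u\in U_t}\bigl(1-(1-q(t)/d_t)^{\deg_{I_t}(u)}\bigr)$ and $\EX{t}{|U_{t+1}|}=\sum_{u\in U_t}(1-q(t)/d_t)^{\deg_{I_t}(u)}$. The statement is most interesting in the regime $|U_t|\leq|I_t|$ (where the lemma improves on \cref{lem:PPP-P}), in which $\varphi(I_t)\cdot d_t=e(I_t,U_t)/|U_t|$; I treat this as the working assumption.

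For part~(i): the map $g(k):=1-(1-q(t)/d_t)^k$ is concave in $k\geq0$ with $g(0)=0$ (since $g''(k)<0$), so the chord inequality yields $g(k)\geq(k/d_t)\cdot g(d_t)\geq(k/d_t)(1-e^{-q(t)})$, where the last step uses the standard bound $(1-q(t)/d_t)^{d_t}\leq e^{-q(t)}$. Summing over $u\in U_t$ with $k=\deg_{I_t}(u)$ and dividing by $|U_t|$ yields
\[
\EX{t}{|\Delta_{t+1}|/|U_t|}\geq(1-e^{-q(t)})\cdot\frac{e(I_t,U_t)}{d_t|U_t|}=(1-e^{-q(t)})\,\varphi(I_t).
\]

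For part~(ii): applying Jensen's inequality to the convex function $f(k)=(1-q(t)/d_t)^k$ under the uniform distribution on $U_t$ gives $\EX{t}{|U_{t+1}|}/|U_t|\geq(1-q(t)/d_t)^{\varphi(I_t)\cdot d_t}$, and it remains to show
\[
(1-q(t)/d_t)^{\varphi(I_t)\cdot d_t}\geq e^{-\varphi(I_t)\cdot q(t)}\bigl(1-\varphi(I_t)\cdot q(t)^2/d_t\bigr).
\]
When $d_t\geq2$, since $q(t)\in[0,1]$ we have $q(t)/d_t\leq1/2$, and an elementary Taylor argument gives $\log(1-x)\geq-x-x^2$ on $[0,1/2]$; hence $d_t\log(1-q(t)/d_t)\geq-q(t)-q(t)^2/d_t$, so raising to the $\varphi(I_t)$-th power yields $(1-q(t)/d_t)^{\varphi(I_t) d_t}\geq e^{-\varphi(I_t) q(t)-\varphi(I_t)q(t)^2/d_t}$, and one application of $e^{-y}\geq1-y$ closes the gap. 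This is where connectivity is used: a connected $d_t$-regular graph either has $d_t\geq2$ or $n=2$, and in the latter case $|I_t|=|U_t|=1$, $\varphi(I_t)=1$, so the required inequality reduces to $1-q(t)\geq e^{-q(t)}(1-q(t)^2)$, which after dividing by $1-q(t)$ becomes the standard $e^{q(t)}\geq1+q(t)$.

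The main obstacle is that the Taylor inequality $\log(1-x)\geq-x-x^2$ breaks down as $x\to 1$, so the Jensen-then-Taylor chain does not directly yield the stated refinement when $q(t)/d_t>1/2$. The connectivity hypothesis sidesteps this by forcing $d_t\geq 2$ whenever $n\geq3$, confining the exceptional regime to the two-vertex graph, which is handled by the explicit calculation above.
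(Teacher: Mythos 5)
Your proof is correct, and while it lands on the same two extremal configurations as the paper, it gets there by a more elementary route. For the lower bound the paper relaxes the degree profile to $|U_t|\cdot d_t$ real variables, proves the resulting symmetric sum is Schur-convex, and maximizes it by pushing all variables to the endpoints $\{0,d_t\}$; your chord inequality $1-(1-q(t)/d_t)^k\geq \tfrac{k}{d_t}\bigl(1-(1-q(t)/d_t)^{d_t}\bigr)$ delivers exactly that endpoint bound pointwise in $u$, with no relaxation or global extremal argument. For the upper bound the paper again invokes Schur-convexity and minimizes at the all-equal point; this is precisely Jensen applied to the convex map $k\mapsto(1-q(t)/d_t)^k$ at the average degree $e(I_t,U_t)/|U_t|=\varphi(I_t)\, d_t$, as you do, and no discretization issue arises because that map is defined for all real $k$. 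The closing analytic estimates ($(1-q(t)/d_t)^{d_t}\leq e^{-q(t)}$, $1-x\geq e^{-x-x^2}$ on $[0,1/2]$, $e^{-y}\geq 1-y$) coincide with the paper's. Your treatment of connectivity is in fact slightly more careful: the paper asserts that connectedness forces $d_t\geq 2$, which fails for the single edge on two vertices with $q(t)<1$, a case your explicit $n=2$ computation covers. One presentational caveat for both of you: the identity $e(I_t,U_t)=d_t\,\varphi(I_t)\,|U_t|$ requires $\vol{U_t}\leq\vol{I_t}$; you flag this as a working assumption, and it matches the regime $|I_t|\geq n/2$ in which the lemma is intended to be applied.
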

\begin{proof}
We proceed by case distinction. We first consider the case where $d_t=1$ and $q(t)=1$. We note that for $d_t=1$ the graph $G_t$ is a perfect matching. We note that then,
\begin{equation*}
    \EX{t}{|U_{t+1}|} = |U_{t}| - |\Delta_{t+1}| = |U_t| - e(U_t,I_t).
\end{equation*}
Thus,
\begin{align*}
    \EX{t}{\frac{|\Delta_{t+1}|}{|U_t|}} = \frac{e(U_t,I_t)}{|U_t|} = \frac{\varphi(I_t)\cdot |U_t|}{|U_t|} = \varphi(I_t).
\end{align*}
Clearly, $\varphi(I_t)\geq (1 - e^{-1})\cdot \varphi(I_t)$, proving statement $(i)$. Moreover, $1 - e^{-\varphi(I_t)} \cdot (1 - \varphi(I_t))\geq 1 - (1 - \frac{\varphi(I_t)}{2})\cdot(1 - \varphi(I_t)) \geq \varphi(I_t)$ and thus statement $(ii)$ holds as well for this case. Now, we consider the scenario where $d_t > 1$ or  ($q_t < 1$ and $d_t \geq 1$). For the \push protocol,
    \begin{align}
        \EX{t}{ |U_{t+1}| } &= \sum_{u \in U_t} \left(1 - \frac{q(t)}{d_t} \right)^{\deg_{I_t}(u)} =: g.\label{eq:2.5pushexpressionG}
        \end{align}
        For fixed $\sum_{u \in U_t} \deg_{I_t}(u) = e(I_t,U_t)$, our goal is to estimate the last expression, viewed as a function over $(\deg_{I_t}(u))_{u \in U_t} \in \{0,1,\ldots,d_t\}$, using a Schur-convexity argument. To avoid discretization issues, we will first define a more ``generous'' function
         which has $|U_t| \cdot d_t$ real-valued variables $(z_k(u))_{u \in U_t, 1 \leq k \leq d_t} \in [0,d_t]$:
    \begin{align}
        h:= \frac{1}{d_t} \cdot \sum_{u \in U_t} \sum_{k=1}^{d_t} \left(1 - \frac{q(t)}{d_t} \right)^{z_k(u)},
        \label{eq:discretisation}
    \end{align}
    where $\sum_{u \in U_t} \sum_{k=1}^{d_t} z_k(u) =  e(I_t,U_t) \cdot d_t$.  Clearly, the maximum of $g$ is upper bounded by the maximum of $h$ (by choosing for all $u \in V$ and $1 \leq k \leq d_t$, $z_k(u) = \deg_{I_t}(u)$). Let us consider the function $f(x,z):= (1 - x)^z$ where $x$ corresponds to $\frac{q(t)}{d_t}$ in $h$ and $z$ to $z_k(u)$. 
    We have that $\frac{\partial f(x,z)}{\partial z}= (1 -x)^z \cdot \log\left(1 - x\right)$ and $\frac{\partial^2 f(x,z)}{\partial^2 z} = (1 - x)^z \cdot  \left(\log\left(1 - x\right)\right)^2 $. We note that the second derivative is greater or equal to $0$ for any $x \leq 1$. This proves that $f(x,z)$ is convex in $z$. Since $h$ (as a function in $(z_k)_{u \in U_t, 1 \leq k \leq d_t}$) is a sum of convex functions, $h$ itself is also convex. Further, as $h$ is symmetric in $(z_k)_{u \in U_t, 1 \leq k \leq d_t}$, it follows that $h$ is Schur-convex.
    \medskip
    
    \noindent\textit{Proof of \eqref{itm:push1} for $d_t >1$ or $(q_t < 1$ and $d_t \geq 1)$}: Since $z_k(u) \in [0,d_t]$, the function $h$ is maximized if all $z_k(u) \in \{0,d_t\}$, and therefore
        \begin{align*}
         \EX{t}{ |U_{t+1}| } &\leq \max_{(z_k(u))_{u \in U_t, 1 \leq k \leq d_t}} h \\ &\leq \frac{1}{d_t} \cdot \left( \frac{e(I_t,U_t) \cdot d_t}{d_t}  \cdot \left(1 - \frac{q(t)}{d_t} \right)^{d_t}  + \left(|U_t| \cdot d_t -  \frac{e(I_t,U_t) \cdot d_t}{d_t}  \right) \cdot \left(1 - \frac{q(t)}{d_t} \right)^{0} \right) \\
      &\leq  \frac{e(I_t, U_t)}{d_t}  \cdot \exp\left( -q(t) \right)  + \left(|U_t|  -  \frac{e(I_t,U_t)}{d_t}  \right) \cdot 1 \\ 
      &= |U_t| - \frac{e(I_t,U_t)}{d_t} \cdot \left(1 - e^{-q(t)} \right).
    \end{align*}
   Since $e(I_t,U_t) = d_t \cdot \varphi(I_t) \cdot |U_t|$,
        \begin{align*}
            \EX{t}{|\Delta_{t+1}|} &\geq \left(1 - e^{-q(t)} \right) \cdot \varphi(I_t) \cdot |U_t|.
        \end{align*}

    \noindent\textit{Proof of \eqref{itm:push2connected} for $d_t >1$ or $(q_t < 1$ and $d_t \geq 1)$}: Now, we assume that $G_t$ is connected, and thus $d_t \geq 2$. We note that the Schur-convex function $h$ is minimized if all arguments are equal, and therefore
\begin{align}
     \EX{t}{ |U_{t+1}| } &\geq \min_{(z_k(u))_{u \in U_t, 1 \leq k \leq d_t}} h \notag \\ 
     &\geq \frac{1}{d_t} \sum_{u \in U_t} \sum_{k=1}^{d_t} \left(1 - \frac{q(t)}{d_t} \right)^{ \frac{e(U_t,I_t)}{|U_t|}} \notag \\
     &\stackrel{(a)}{\geq} |U_t| \cdot \exp\left(- \frac{q(t)}{d_t} \cdot \frac{e(I_t,U_t)}{|U_t|} \cdot \left(1 + \frac{q(t)}{d_t} \right)  \right) \notag \\
     &\stackrel{(b)}{=} |U_t| \cdot \exp\left(- \varphi(I_t) \cdot q(t)  \right) \cdot 
     \exp\left(- \frac{\varphi(I_t) \cdot (q(t))^2}{d_t} \right) \notag \\
     &\stackrel{(c)}{\geq} |U_t| \cdot \exp\left(- \varphi(I_t) \cdot q(t)  \right) \cdot 
     \left(1 - \frac{\varphi(I_t) \cdot (q(t))^2}{d_t} \right) \label{eq:push_derivation},
\end{align}
having used in $(a)$ that $1-x \geq e^{-x-x^2}$ for $x \in [0,0.5]$ (here we used that $d_t \geq 2$, since $G_t$ is assumed to be a regular and connected graph); in $(b)$ that $e(I_t,U_t) = d_t \cdot \varphi(I_t) \cdot |I_t|$; and in $(c)$ that $\exp(-x) \geq 1-x$ for any $x \in \mathbb{R}$.
Therefore,
        \begin{align*}
            \EX{t}{|\Delta_{t+1}|} &\leq \left(1 - e^{- \varphi(I_t) \cdot q(t) } \cdot 
     \left(1 - \frac{\varphi(I_t) \cdot (q(t))^2}{d_t} \right)\right)\cdot |U_t|,
        \end{align*}as claimed.
\end{proof}
We now turn to the \pp protocol, and the next lemma improves over \cref{lem:PPP-P} \eqref{itm:PP}. 
\begin{lemma}\label{lem:LBexPPT}
Consider the \pp protocol, and let $t \geq 0$ be any round, $G_t$ is a $d_t$-regular graph with $n$ vertices and $q(t)$ an arbitrary credibility. Then,
     \begin{enumerate}[(i)]\itemsep1pt
 	\item\label{itm:ppPPP1}  $ \EX{t}{ \frac{|\Delta_{t+1}|}{|U_t|} } \geq   \Bigl( 1 - e^{-q(t)} \cdot \left(1 - q(t) \right) \Bigr) \cdot \varphi(I_t).$
 	\item\label{itm:ppPPP2}  $ \EX{t}{ \frac{|\Delta_{t+1}|}{|U_t|}  } \leq 1 - \left(1 - q(t) \right)^{  \varphi(I_t)} \cdot \left(1 - q(t) \cdot \varphi(I_t) \right).$	
\end{enumerate}
\end{lemma}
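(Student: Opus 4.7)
The plan is to mirror the structure of the proof of \cref{lem:expushUT} for \push, with a modified per-vertex survival probability. Under \pp, a vertex $u\in U_t$ remains uninformed in round $t+1$ iff (a) no informed neighbor of $u$ successfully pushes the rumor to $u$, and (b) $u$'s own pull call either chooses an uninformed neighbor or fails the credibility check. Events (a) and (b) depend on disjoint sets of coin flips (those of $I_t\cap N(u)$ versus those of $u$ itself) and are therefore independent, giving
\[
\Pr{t}{u\in U_{t+1}} \;=\; \Bigl(1-\tfrac{q(t)}{d_t}\Bigr)^{\deg_{I_t}(u)}\cdot\Bigl(1-\tfrac{q(t)\,\deg_{I_t}(u)}{d_t}\Bigr).
\]
Writing $f(z):=(1-q(t)/d_t)^z(1-q(t)z/d_t)$ on $[0,d_t]$, the first step is to show $f$ is \emph{convex}: setting $a:=1-q(t)/d_t\in(0,1]$, a direct computation gives
\[
f''(z)\;=\;a^{z}\Bigl[(\ln a)^2\bigl(1-\tfrac{q(t)z}{d_t}\bigr)\;-\;2\ln(a)\cdot\tfrac{q(t)}{d_t}\Bigr],
\]
and both bracketed terms are nonnegative since $\ln a\le 0$ and $q(t)z/d_t\le 1$ on the domain.

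\textbf{Part (i).} Invoking the standard chord bound for convex functions, $f(z)\le f(0)+(z/d_t)(f(d_t)-f(0))$ for $z\in[0,d_t]$, and summing over $u\in U_t$, I obtain $\EX{t}{|U_{t+1}|}\le |U_t|\cdot f(0)+(e(I_t,U_t)/d_t)\cdot(f(d_t)-f(0))$. Using $e(I_t,U_t)=d_t\varphi(I_t)|U_t|$ (valid in the shrinking regime $|U_t|\le n/2$, the same implicit assumption underlying \cref{lem:expushUT}), together with $f(0)=1$ and $f(d_t)=(1-q(t)/d_t)^{d_t}(1-q(t))\le e^{-q(t)}(1-q(t))$, this rearranges to
\[
\EX{t}{|U_{t+1}|}\;\le\;|U_t|\cdot\bigl(1-\varphi(I_t)\bigl(1-e^{-q(t)}(1-q(t))\bigr)\bigr),
\]
which dividing by $|U_t|$ yields the claimed lower bound on $\EX{t}{|\Delta_{t+1}|/|U_t|}$.

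\textbf{Part (ii).} For the reverse direction, convexity of $f$ allows me to apply Jensen's inequality: $\sum_u f(\deg_{I_t}(u))\ge |U_t|\cdot f(\bar z)$ where $\bar z:=e(I_t,U_t)/|U_t|=d_t\varphi(I_t)$. This gives
\[
\EX{t}{|U_{t+1}|}\;\ge\;|U_t|\cdot\Bigl(1-\tfrac{q(t)}{d_t}\Bigr)^{d_t\varphi(I_t)}\bigl(1-q(t)\varphi(I_t)\bigr).
\]
To replace the base $(1-q(t)/d_t)^{d_t}$ with the cleaner $1-q(t)$, I apply Bernoulli's inequality $(1-q(t)/d_t)^{d_t}\ge 1-q(t)$ and then raise both (nonnegative) sides to the power $\varphi(I_t)\in[0,1]$, giving $(1-q(t)/d_t)^{d_t\varphi(I_t)}\ge (1-q(t))^{\varphi(I_t)}$; subtracting from $1$ produces the claimed upper bound.

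\textbf{Main obstacle.} The only technical step is verifying convexity of $f$, which is a two-line computation; once that is in place, both inequalities flow from the \emph{same} property via the chord bound (for part (i)) and Jensen (for part (ii)). Unlike the \push analysis, there is no need for the auxiliary relaxation $h$ with $|U_t|\cdot d_t$ variables, since here the per-vertex survival probability is already a function of the single scalar $\deg_{I_t}(u)$, so the Schur-convexity machinery collapses to ordinary one-variable convexity.
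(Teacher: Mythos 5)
Your argument is correct in substance and is in fact a streamlined version of the paper's own proof. The paper establishes the same two bounds by introducing a relaxed auxiliary function $h$ in $|U_t|\cdot d_t$ real variables and invoking Schur-convexity: the maximum of $h$ is attained with all arguments in $\{0,d_t\}$ (giving part (i)) and the minimum with all arguments equal (giving part (ii)). Your observation that the per-vertex survival probability is already a convex function of the single scalar $\deg_{I_t}(u)$ collapses this to the chord bound and Jensen's inequality, which yield exactly the same two estimates; your second-derivative computation is the same as the paper's check that $\partial^2 f/\partial z^2\ge 0$, and your use of Bernoulli plus monotonicity of $x\mapsto x^{\varphi}$ replaces the paper's monotonicity-in-$d_t$ argument for $(1-q/d_t)^{d_t/q}$. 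Your reliance on $e(I_t,U_t)=d_t\varphi(I_t)|U_t|$, i.e.\ on $|U_t|\le n/2$, matches what the paper's proof also silently assumes.

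The one genuine (though small and easily patched) omission is the degenerate case $q(t)=1$ and $d_t=1$, which the lemma statement permits. There $a:=1-q(t)/d_t=0$, so your claim $a\in(0,1]$ fails and the expression $f''(z)=a^z\bigl[(\ln a)^2(1-q(t)z/d_t)-2\ln(a)\,q(t)/d_t\bigr]$ is meaningless; the paper treats this case separately at the outset ($G_t$ is then a perfect matching, $\EX{t}{|\Delta_{t+1}|/|U_t|}=\varphi(I_t)$, and both bounds are verified directly). You should either add that two-line case distinction or restrict the differentiation argument to $d_t>1$ or $q(t)<1$, as the paper does.
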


\begin{proof}
Let us first consider the case where $q(t)=1$ and $d_t=1$. Similarly as in the proof of \cref{lem:expushUT}, we note that for $d_t=1$ the graph $G_t$ is a perfect matching. We note that,
\begin{equation*}
    \EX{t}{|U_{t+1}|} = |U_{t}| - |\Delta_{t+1}| = |U_t| - e(U_t,I_t).
\end{equation*}
Thus,
\begin{align*}
    \EX{t}{\frac{|\Delta_{t+1}|}{|U_t|}} = \frac{e(U_t,I_t)}{|U_t|} = \frac{\varphi(I_t)\cdot |U_t|}{|U_t|} = \varphi(I_t).
\end{align*}
Clearly, $\varphi(I_t)\geq \Bigl( 1 - e^{-q(t)} \cdot \left(1 - q(t) \right) \Bigr) \cdot \varphi(I_t)$ and since $1 - \left(1 - q(t)\right)^{\varphi(I_t)}\cdot \left(1 - q(t)\cdot \varphi(I_t)\right) \geq 1 - \left(1 - q(t) \cdot \varphi(I_t)\right)\left(1 - q(t)\cdot \varphi(I_t)\right)\geq 1 - (1 - q(t)\cdot \varphi(I_t)) = \varphi(I_t)$, where the first inequality holds by Bernoulli's inequality (since $\varphi(I_t) \in [0,1]$), and in the last equality we use that $q(t)=1$. Hence, we have also proven statement $(ii)$ for this case. Let us now consider the scenario where $d_t > 1$ or ($d_t\geq 1$ and $q(t) < 1$). We note that for the \pp protocol, a node remains uniformed if and only if it does not get informed by a \push transmission \emph{and} if it does not get informed by a \pull call. Since those two events are independent, we conclude
\begin{align*}
        \Pr{t}{u \in U_{t+1}} &= \left(1 - \frac{q(t)}{d_t} \right)^{\deg_{I_t}(u)} \cdot \left(1 - \frac{q(t) \cdot \deg_{I_t}(u)}{d_t} \right).
        \end{align*}
        Hence,
        \[
 \EX{t}{|U_{t+1}|} = \sum_{u \in U_t}  \left(1 - \frac{q(t)}{d_t} \right)^{\deg_{I_t}(u)} \cdot \left(1 - \frac{q(t) \cdot \deg_{I_t}(u)}{d_t} \right) =: g.
        \]
        Similarly, as in the proof of \cref{lem:expushUT} let us define the function $h$ on $|U_t|\cdot d_t$ real-valued variables $(z_k(u))_{u \in U_t, 1 \leq k \leq d_t} \in [0,d_t]$ to avoid discretization issues.  
        \begin{equation*}
            h:= \frac{1}{d_t} \cdot \sum_{u \in U_t}\sum_{k=1}^{d_t} \left(1 - \frac{q(t)}{d_t}\right)^{z_k(u)}\cdot \left(1 - \frac{q(t) \cdot z_k(u)}{d_t}\right),
        \end{equation*}
        with the additional constraint that $\sum_{u \in V} \sum_{k=1}^{d_t} z_k(u) = \sum_{u \in U_t} \deg_{I_t}(u) \cdot d_t$. We note that the maximum of $g$ is upper bounded by the maximum of $h$, and similarly, the minimum of $g$ is lower bounded by the minimum of $h$. 
        In order to show that $h$ is Schur-convex, we define the function
        \[
f(x,z):= (1-x)^z \cdot (1- x \cdot z),
        \]
        where $x$ corresponds to $\frac{q(t)}{d_t}$ and $z$ corresponds to $z_k(u)$ for some $1 \leq k \leq d_t$ and some $u \in V$.
        The  first derivative is 
        \[
        \frac{\partial f(x,z)}{\partial z} = -(1-x)^z \cdot \left( (xz-1) \log(1-x) + x \right), \]
        and the second derivative is 
        \[
        \frac{\partial^2 f(x,z)}{\partial^2 z} := -(1-x)^z \cdot \log(1-x) \cdot \left[ (zx-1) \log(1-x) + 2 x \right].
        \]
        We claim that $\frac{\partial^2 f(x,z)}{\partial^2 z} \geq 0$.
        Note that $0 \leq x = \frac{q(t)}{d_t} \leq \frac{1}{d_t}$ and $0 \leq z = z_k(u) \leq d_t$. Hence, we always have
        \[
 -(1-x)^{z} \cdot \log(1-x) > 0,
        \]
        since both factors are strictly negative. Therefore, it remains to verify that
        \[
 (zx-1) \log(1-x) + 2 x \geq 0.
        \]
        However, this holds since $zx-1 \leq d_t \cdot \frac{1}{d_t} - 1 \leq 0$ and $\log(1-x) < 0$, as well as $x \geq 0$.
        This proves that $f(x,z)$ is convex in $z$. Since $h$ is a sum of convex functions, $h$ (as a function in $(z_k)_{u \in U_t, 1 \leq k \leq d_t}$) itself is also convex. Further, as $h$ is symmetric in $(z_k)_{u \in U_t, 1 \leq k \leq d_t}$, it follows that $h$ is Schur-convex.
        \medskip
        
    \noindent\textit{Proof of \eqref{itm:ppPPP1} for $d_t >1$ or $(q_t < 1$ and $d_t \geq 1)$}: Let us now prove the first statement. We have,
    \begin{align*}
             \EX{t}{|U_{t+1}|} &\leq  \max_{(z_k(u))_{u \in U_t, 1 \leq k \leq d_t}} h\\
             &\leq \frac{1}{d_t}\cdot \left(\frac{e(I_t,U_t) \cdot d_t}{d_t}\cdot \left(1-\frac{q(t)}{d_t}\right)^{d_t} \cdot \left(1 - q(t)\right) + \left(|U_t| \cdot d_t - \frac{e(I_t,U_t) \cdot d_t}{d_t}\right)\cdot \left(1 - \frac{q(t)}{d_t}\right)^0 \cdot 1\right)\\
             &\leq \frac{e(I_t,U_t)}{d_t}\cdot \exp\left(-q(t)\right)\cdot \left(1 - q(t)\right) + \left(|U_t| - \frac{e(I_t,U_t)}{d_t}\right)\cdot 1\\
             &= |U_t| - \frac{e(I_t,U_t)}{d_t}\left(1 - \exp\left(-q(t)\right)\cdot\left(1 -q(t)\right)\right).
        \end{align*}
        Since $e(I_t,U_t)=\varphi(I_t) \cdot |U_t| \cdot d_t$, it follows that
        \[
 \EX{t}{|\Delta_{t+1}|} \geq \varphi(I_t) \cdot \Bigl( 1 - e^{- q(t)}\cdot \left(1 - q(t)\right)\Bigr)\cdot |U_t|.
        \]
        
        \noindent\textit{Proof of \eqref{itm:ppPPP2} for $d_t >1$ or $(q_t < 1$ and $d_t \geq 1)$}: For the upper bound on $\EX{t}{\frac{|\Delta_{t+1}|}{|U_t|} ~\Big|~ |I_t| \geq n/2}$, the Schur convex function $h$ is minimized if all arguments are equal, and therefore,
        \begin{align*}
            \EX{t}{|U_{t+1}|} &\geq \min_{(z_k(u))_{u \in U_t, 1 \leq k \leq d_t}} h\\
            & \geq \frac{1}{d_t} \sum_{u \in U_t}\sum_{k=1}^{d_t} \left(1 - \frac{q(t)}{d_t}\right)^{\frac{e(U_t,I_t)}{|U_t|}}\cdot \left(1 - \frac{q(t) \cdot e(U_t,I_t)}{d_t \cdot |U_t|}\right)\\
            & \stackrel{(a)}{=} |U_t| \cdot \left(1 - \frac{q(t)}{d_t}\right)^{\frac{e(U_t, I_t)}{|U_t|}}\cdot \left(1 - q(t) \cdot \varphi(I_t) \right)\\
            & = |U_t| \cdot \left(\underbrace{\left(1 - \frac{q(t)}{d_t}\right)^{\frac{d_t}{q(t)}}}_{:=A}\right)^{\frac{q(t)}{d_t}\cdot \frac{e(U_t, I_t)}{|U_t|}}\cdot \left(1 - q(t) \cdot \varphi(I_t) \right)\\
            & \stackrel{(b)}{\geq} |U_t| \cdot \left((1 - q(t))^{\frac{1}{q(t)}}\right)^{\frac{q(t)}{d_t}\cdot \frac{e(U_t, I_t)}{|U_t|}} \cdot \left(1 - q(t) \cdot \varphi(I_t) \right)\\
            & \stackrel{(c)}{=} |U_t| \cdot \left(1 - q(t) \right)^{  \varphi(I_t)} \cdot \left(1 - q(t) \cdot \varphi(I_t) \right),
        \end{align*}
        where in $(a)$ and $(c)$ we used that  $e(I_t,U_t) = d_t \cdot \varphi_t(I_t) \cdot |U_t|$ and $(b)$ holds since $A$, for fixed $q(t)$, is non-decreasing in $d_t$ and thus minimized for $d_t = 1$ (as we assume that $d_t\geq 1$). Thus,
        \begin{equation*}
            \EX{t}{\frac{|\Delta_{t+1}|}{|U_t|}} \leq 1 - \left(1 - q(t) \right)^{  \varphi(I_t)} \cdot \left(1 - q(t) \cdot \varphi(I_t) \right).\qedhere
        \end{equation*}
     
\end{proof}

A summary of these tighter bounds for \push, \pull and \pp is given in \cref{tab:GrowthFactorsTight}, and the more simple bounds are summarized in \cref{tab:GrowthFactors}. For strong expanders, similar bounds have been derived in \cite{daknama_panagiotou_reisser_2021,PPS15}.  

\renewcommand{\arraystretch}{2}
\begin{table}[ht]
\begin{center}
\begin{tabular}{|c|c|c|c|}
\hline
\multirow{2}{*}{} & \multicolumn{1}{c|}{$\delta_t$, $1 \leq |I_t| \leq n/2$} & \multicolumn{2}{c|}{$\delta_t$, $n/2 \leq |I_t| \leq n$}\\ \cline{2-4}
                 & Lower Bound  & Lower Bound & Upper Bound           \\ \hline \hline
\pull            & $q(t) \cdot \varphi(I_t)$ & \multicolumn{2}{c|}{$q(t) \cdot \varphi(I_t)$}\\ \hline
\push              & \small{$q(t) \cdot \Bigl(1 - 7 \sqrt{ \lambda + \frac{|I_t|}{n} } \Bigr)$} & \small{$ \Bigl(1 -  e^{-q(t)}  \Bigr) \cdot \varphi(I_t) $} & \small{$q(t) \cdot \varphi(I_t)$} \\ \hline
\texttt{P-P}               & \small{$ q(t) \cdot \Bigl( 2 - 12 \sqrt{ \lambda + \frac{|I_t|}{n} } \Bigr)$}  & \footnotesize{$\Bigl( 1 - e^{- q(t)} \cdot   \left(1 - q(t) \right)  \Bigr)\cdot \varphi(I_t)$} &\footnotesize{  $1 - \left(1 - q(t) \right)^{  \varphi(I_t)} \cdot \left(1 - q(t) \cdot \varphi(I_t) \right)$}      \\ \hline
\end{tabular}
\caption{Refined bounds in terms of $q(t)$ and the spectral expansion $\lambda$ on the expected growth factors of \push and \pp on regular graphs. These bounds are tighter than the more basic ones (see~\cref{tab:GrowthFactors}), whenever $\lambda=o(d_t)$ (which also implies $\varphi(I_t)=1-o(1)$ if $|I_t|=o(n)$ as well as $\varphi(I_t)=1-o(1)$ if $|U_t|=o(n)$). The $1-o(1)$ terms in the two upper bounds go to $1$ if $d_t \rightarrow \infty$ or $\varphi \rightarrow 0$ or $q(t) \rightarrow 0$ for all $t\geq 0$. }
\label{tab:GrowthFactorsTight}
\end{center}
\end{table}

Next, we state a simple but crucial fact:
\begin{restatable}{lemma}{negcorrelation}\label{lem:neg_correlation}
Let $(G_t)_{t\geq 0}$ be a sequence of $d_t$-regular graphs with $n$ vertices and let $q(t)$ be an arbitrary credibility function . Then, the \push, \pull and \pp protocol satisfy the negative correlation property (see \cref{def:processes}).  
\end{restatable}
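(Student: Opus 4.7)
The plan is to handle the three protocols separately. The \pull case is nearly immediate: in round $t$, each uninformed $u \in U_t$ independently samples a neighbor and applies the $q(t)$-credibility coin, so the events $\{u \in I_{t+1}\}_{u \in U_t}$ are mutually independent conditional on $\mathfrak{F}^t$, and \NC in fact holds with equality.

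For \push, I encode the round's randomness as independent random variables $(\xi_v)_{v \in I_t}$, where $\Pr{t}{\xi_v = u} = q(t)/d_t$ for each $u \in N(v)$ and $\Pr{t}{\xi_v = \bot} = 1 - q(t)$ (a symbol representing no successful transmission from $v$). This faithfully reproduces the \push-with-credibility dynamics since $\Pr{t}{\xi_v \neq u} = 1 - q(t)/d_t$, matching the marginals computed earlier in the paper. For any $u \in U_t$ we have $\{u \in I_{t+1}\} = \bigcup_{v \in I_t \cap N(u)}\{\xi_v = u\}$. The key observation is that for each fixed $v$ the $0/1$-valued indicator vector $\bigl(\mathbf{1}[\xi_v = u]\bigr)_{u \in N(v)}$ has at most one coordinate equal to $1$, which is a canonical example of a \emph{negatively associated} (NA) random vector. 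Taking the union over independent $v \in I_t$ produces a jointly NA collection. For any $S \subseteq U_t$, the indicator $\mathbf{1}[u \in I_{t+1}]$ is a non-decreasing function of the sub-collection $\{\mathbf{1}[\xi_v = u] : v \in I_t \cap N(u)\}$, and crucially these sub-collections are pairwise disjoint across $u \in S$ because no pair $(v,u)$ is shared (the second index differs). The standard closure of NA under non-decreasing functions of disjoint blocks then yields
\[
\Pr{t}{\bigcap_{u \in S}\{u \in I_{t+1}\}} \leq \prod_{u \in S}\Pr{t}{u \in I_{t+1}},
\]
which is exactly \NC.

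For \pp, the push- and pull-randomness within a single round are mutually independent, and an uninformed $u$ enters $I_{t+1}$ iff either (i) some push $v \to u$ with $v \in I_t \cap N(u)$ succeeds, or (ii) $u$'s own pull call draws an informed neighbor and the associated credibility coin succeeds. The pull-success indicators are mutually independent across $u \in U_t$ and independent of the \push randomness, so they can be appended to the NA collection built for the push phase without breaking NA; each event $\{u \in I_{t+1}\}$ remains a non-decreasing function of a block of variables disjoint from those used by other $u' \neq u$, and the same NA argument closes the proof. The main obstacle lies in the \push case: one must package the ``at most one push per informed vertex'' constraint into the NA property of the indicator vector $(\mathbf{1}[\xi_v = u])_{u \in N(v)}$, and then identify the disjoint blocks by indexing through the \emph{recipients} $u$ rather than the pushers $v$. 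Once this framing is in place, the rest is a routine application of the NA calculus.
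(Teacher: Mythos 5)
Your proof is correct, but it takes a genuinely different route from the paper. The paper's argument for \push (and \pp) is a short conditional-probability chain: it writes $\Pr{t}{\cap_{u\in S}\{u\in I_{t+1}\}}$ as a telescoping product of conditional probabilities and asserts that conditioning on other vertices of $S$ having been informed can only decrease the probability that $u_i$ is informed by a push (the pull contribution being unaffected by the conditioning). That monotonicity step is stated as an intuition rather than derived formally. You instead invoke the negative-association calculus: you encode each pusher's randomness as a single variable $\xi_v$ (folding the neighbor choice and the credibility coin together, which correctly reproduces the marginal $\Pr{t}{u\in U_{t+1}}=\prod_{v\in N(u)\cap I_t}(1-q(t)/d_t)$), observe that the indicator vector $(\mathbf{1}[\xi_v=u])_{u\in N(v)}$ has at most one nonzero coordinate and is therefore NA by the zero--one lemma, take the independent union over $v\in I_t$, and then apply closure of NA under non-decreasing functions of disjoint blocks, indexing the blocks by recipients. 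Appending the independent pull coins handles \pp. What your approach buys is rigor: the paper's "may decrease the probability" step is exactly the content that the NA machinery makes precise, and your disjoint-blocks framing is the right way to see why it works. What the paper's approach buys is brevity and self-containedness, at the cost of leaving the key inequality unjustified in detail. Both establish \NC for all three protocols.
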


\begin{proof}
For the \pull protocol, the property clearly holds (even with equality). Consider now the \push protocol (the case of \pp is analogous). Let $S=\{u_1,u_2,\ldots,u_s\} \subseteq U_t$ with $s=|S|$ . Then, expressing the left-hand side via conditional probabilities,
\begin{align*}
     \Pr{t}{ \bigcap_{u \in S} \left\{ X_u = 1 \right\} } &= 
     \prod_{i=1}^s \Pr{t}{ X_{u_i} = 1 ~\Bigg|~ \bigcap_{j=1}^{i-1} \{ X_{u_j} = 1 \}  } \leq
     \prod_{i=1}^s \Pr{t}{ X_{u_i} = 1  },
\end{align*}
where the last inequality holds, by noting that conditioning on some other vertices $u_1,\ldots,u_{i-1}$ getting informed may decrease the probability of $u_{i}$ getting informed by \push (for $u_i$ getting informed by \pull, the probability is always the same, regardless of the conditioning).
\end{proof}

Finally, we close this section by verifying that \push, \pull and \pp satisfy the condition in \cref{def:processes} for certain $\Cgrow$ and $\Cshrink$. Note that even for static graphs, \pull and \pp require a restriction on $q(t)$; this is since if $q(t)=1$, then on certain graphs (like the complete graph), \pull and \pp would only need $O(\log \log n)$ steps in the shrinking phase. However, for dynamic graphs, even for \push we require a restriction on $q(t)$; this is because otherwise $G_t$ could be a $1$-regular graph, i.e., a perfect matching so that each vertex in $U_t$ is matched to a vertex in $I_t$.

 \begin{restatable}{lemma}{protocolgrowth}\label{lem:protocol_growth}
 
   Let $(G_t)_{t \geq 0}$ be any sequence of $d_t$-regular graphs and let $q(t)$ be an arbitrary credibility function.
    \begin{enumerate}[(i)]\itemsep1pt
    \item\label{itm:PROTGROWTH_push}  
         The \push protocol is a $1$-growing process. Furthermore, if $q(t) \leq 1 - \epsilon$, for $\epsilon > 0$ (not necessarily constant), then the \push protocol is a $(1-\epsilon)$-shrinking process. Also, if all graphs in the sequence $(G_t)_{t \geq 0}$ are connected, then the
         \push protocol is a $(1-e^{-1} \cdot \frac{1}{2}$)-shrinking process.
      
        \item\label{itm:PROTGROWTH_pull}
        The \pull protocol is a $1$-growing process. Furthermore, if $q(t) \leq 1 - \epsilon$, for $\epsilon > 0$ (not necessarily constant), then the \pull protocol is a $(1-\epsilon)$-shrinking process.
    
        \item\label{itm:PROTGROWTH_pp} The \pp protocol is a $2$-growing process. Furthermore, if $q(t) \leq 1 -\epsilon$, for $\epsilon > 0$ (not necessarily constant), then the \pp protocol is a $(1-\epsilon^2)$-shrinking process. 
  
    \end{enumerate}
\end{restatable}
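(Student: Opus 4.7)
The plan is to verify, for each of \push, \pull, and \pp, the three defining axioms of \cref{def:processes}: monotonicity, negative correlation, and the relevant expected growth/shrinking bound. Monotonicity is immediate from the protocol description (once a vertex is informed, it remains so, giving $I_t \subseteq I_{t+1}$ and equivalently $U_{t+1} \subseteq U_t$). Negative correlation (\NC) is handled by the already-established \cref{lem:neg_correlation}. So all the work reduces to extracting the correct numerical bounds on $\Cgrow$ and $\Cshrink$ from \cref{lem:PPP-P,lem:expushUT,lem:LBexPPT}.

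For the growing claims, I first note that $|I_t| \geq \min(|I_t|,|U_t|)$ always, hence $\EX{t}{|\Delta_{t+1}|/|I_t|} \leq \EX{t}{|\Delta_{t+1}|/\min(|I_t|,|U_t|)}$. Plugging in the upper bounds from \cref{lem:PPP-P}, this last quantity is at most $q(t)\cdot\varphi(I_t) \leq 1$ for \push and \pull (yielding $\Cgrow = 1$), and at most $2\,q(t)\cdot\varphi(I_t) \leq 2$ for \pp (yielding $\Cgrow = 2$), because conductances and credibilities lie in $[0,1]$.

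For the shrinking claims, I use that $|U_0| \leq n/2$ together with monotonicity to get $|U_t| \leq n/2 \leq |I_t|$ for every $t \geq 0$, so $\min(|I_t|,|U_t|) = |U_t|$; the shrinking factor therefore coincides with the $\min$-normalised quantity in \cref{lem:PPP-P,lem:expushUT,lem:LBexPPT}. Under the hypothesis $q(t) \leq 1-\epsilon$, \cref{lem:PPP-P}(i)--(ii) immediately give expected shrinking factor at most $q(t)\cdot\varphi(I_t) \leq 1-\epsilon$ for both \push and \pull. For \pp under the same hypothesis, I invoke \cref{lem:LBexPPT}(ii), which yields the bound $1 - (1-q(t))^{\varphi(I_t)}(1 - q(t)\varphi(I_t))$; since $\varphi(I_t) \in [0,1]$ and $1-q(t) \in [0,1]$, I have $(1-q(t))^{\varphi(I_t)} \geq 1-q(t) \geq \epsilon$ (this is Bernoulli-type monotonicity of $x \mapsto x^{\alpha}$ for $\alpha\leq 1$ and $x \in [0,1]$) and similarly $1-q(t)\varphi(I_t) \geq 1-q(t) \geq \epsilon$, so the product exceeds $\epsilon^2$ and the shrinking factor is at most $1-\epsilon^2$.

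The one genuinely delicate sub-case is the connected-graph claim for \push, where no restriction on $q(t)$ is imposed yet we still need a constant strictly less than $1$. For this I apply \cref{lem:expushUT}(ii), which gives the upper bound $1 - e^{-\varphi(I_t)q(t)}\bigl(1 - \varphi(I_t)q(t)^2/d_t\bigr)$. Since $G_t$ is connected and regular with $n\geq 3$ (the only pathological case is the single edge, which is degenerate), $d_t \geq 2$, so $\varphi(I_t)q(t)^2/d_t \leq 1/2$ giving $1-\varphi(I_t)q(t)^2/d_t \geq 1/2$; together with $\varphi(I_t)q(t) \leq 1$ and hence $e^{-\varphi(I_t)q(t)} \geq e^{-1}$, the expected shrinking factor is at most $1 - e^{-1}\cdot \tfrac{1}{2}$, as desired. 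The main obstacle here is exactly this uniform lower bound on the product $e^{-\varphi(I_t)q(t)}(1-\varphi(I_t)q(t)^2/d_t)$ across all admissible $(\varphi(I_t), q(t), d_t)$; the connectivity assumption is used precisely to ensure $d_t \geq 2$, which is what prevents the second factor from collapsing.
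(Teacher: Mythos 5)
Your proposal is correct and follows essentially the same route as the paper: monotonicity is immediate, negative correlation is delegated to \cref{lem:neg_correlation}, and the growth/shrinking constants are read off from the upper bounds in \cref{lem:PPP-P}, \cref{lem:expushUT}(ii) and \cref{lem:LBexPPT}(ii), with the connected-\push case handled identically via $d_t\geq 2$. The only (harmless) deviation is in the \pp shrinking step, where you lower-bound $(1-q(t))^{\varphi(I_t)}\cdot(1-q(t)\varphi(I_t))$ by $(1-q(t))^2\geq\epsilon^2$ directly; this is a clean and valid way to conclude from the same lemma the paper invokes.
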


\begin{proof}
Firstly, we note that since in \pull, \push, and \pp, no informed node can become uninformed, they all satisfy (\Mono) (the monotonicity property). Moreover, (\NC) (the negative correlation property) holds for these three protocols by  \cref{lem:neg_correlation}. What is left to prove is (\BEG) (and $\BEGT$) for \pull, \push and \pp. 

 For \push,
\begin{align*}
    \EX{t}{\frac{|\Delta_{t+1}|}{|I_t|}} \leq \EX{t}{\frac{|\Delta_{t+1}|}{\min(|I_t|, |U_t|)}} \stackrel{(a)}{\leq} q(t) \cdot \varphi(I_t) \leq q(t),
\end{align*}
where $(a)$ holds by \cref{lem:PPP-P} \eqref{itm:Push}, and hence \push is a $1$-growing process. Furthermore, if $q(t) \leq 1 -\epsilon$, then by \cref{lem:PPP-P} $(i)$, we have for any $t \geq 0$,
\begin{align*}
    \EX{t}{ \frac{|\Delta_{t+1}|}{|U_t|} } \leq q(t)\cdot \varphi(I_t) \leq 1 - \epsilon,
\end{align*}
having used in the last step that $\varphi(I_t) \leq 1$. Finally, if we assume that all graphs of $(G_t)_{t \geq 0}$ are connected, then by \cref{lem:expushUT} $(ii)$, we have for any $t \geq 0$, 
\begin{equation*}
\EX{t}{ \frac{|\Delta_{t+1}|}{|U_t|} } \leq 1 - e^{- \varphi(I_t) \cdot q(t)  } \cdot 
     \left(1 - \frac{\varphi(I_t) \cdot (q(t))^2}{d_t} \right) \stackrel{(a)}{\leq} 1 - e^{- 1  } \cdot  \frac{1}{2},
\end{equation*}
where in $(a)$ we used $q(t) \leq 1$, $\varphi(I_t) \leq 1$ and the fact that if $(G_t)$ is connected and regular, then $d_t \geq 2$. This completes the proof of $(i)$.

\noindent\textit{Proof of \eqref{itm:PROTGROWTH_pull}}: For \pull,
\begin{align*}
   \EX{t}{\frac{|\Delta_{t+1}|}{|I_t|}}  \leq \EX{t}{\frac{|\Delta_{t+1}|}{\min(|I_t|, |U_t|)}} \stackrel{(a)}{=}  q(t) \cdot \varphi(I_t) \leq q(t),
\end{align*}
where $(a)$ holds by \cref{lem:PPP-P} \eqref{itm:Pull}. Furthermore, if $q(t) \leq 1 - \epsilon$, then by \cref{lem:PPP-P} \eqref{itm:Pull},
\begin{align*}
   \EX{t}{\frac{|\Delta_{t+1}|}{|I_t|}}  \leq \EX{t}{\frac{|\Delta_{t+1}|}{\min(|I_t|, |U_t|)}} = q(t) \cdot \varphi(I_t) \leq q(t) \leq 1 -\epsilon,
   \end{align*}
which completes the proof of $(ii)$.

\noindent\textit{Proof of \eqref{itm:PROTGROWTH_pp}}: Lastly, for \pp, for $|I_t| \leq n/2$ by \cref{lem:PPP-P} \eqref{itm:PP},
\begin{align*}
\EX{t}{\frac{|\Delta_{t+1}|}{|I_t|}} \leq 2 \cdot q(t) \cdot \varphi(I_t) \leq 2 \cdot q(t).
\end{align*}
Furthermore, if $q(t) \leq 1 - \epsilon$, then by \cref{lem:LBexPPT} $(ii)$,
\begin{align*}
    \EX{t}{\frac{|\Delta_{t+1}|}{|U_t|}} \leq 1 - \left( 1 - q(t) \right)^{\varphi(I_t)} \cdot \left(1 - q(t) \cdot \varphi(I_t) \right) \leq 1 - \epsilon^{\varphi(I_t)} \cdot \left( \epsilon \cdot \varphi(I_t) \right) \leq 1 - \epsilon^2,
\end{align*}
which completes the proof of $(iii)$.\end{proof}

\subsection{Other Examples}\label{sec:examples} We will briefly outline some other examples of $(\Cgrow,\Cshrink)$-spreading processes. We will not study these processes further in this paper, so for the sake of space we omit the proofs of membership.  
\begin{itemize}
    \item Variants of \push, \pull and \pp where vertices accept all incoming messages w.p. $q(t)$, independent of the number of messages received, otherwise reject all. This is an alternative interpretation of the \textit{credibility function} as being ``belief-based'', i.e.\ whenever a vertex receives at least one transmission (regardless of whether they are \push or \pull), it \emph{believes} in the rumor w.p. $q(t)$. Hence, the ``believed'' versions of \push, \pull and \pp are slower siblings of the ``transmission-based'' versions of \push, \pull and \pp as defined in \cref{sec:standard_rumor}.
    \item A variant of \push where \text{all} vertices transmit to a random neighbor in each step (uninformed vertices transmit an ``empty'' message, informed vertices transmit the rumor). Each uninformed vertex chooses at most one received message (chosen uniformly at random from all received messages, ignoring all others). If they receive a message with the rumor they are informed; otherwise they are not. This process was introduced by Daum, Kuhn and Maus \cite{DaumKM20}.
    \item The \emph{multiple call} model, where each vertex pushes the opinion to $k$ of random neighbors \cite{PPS15}, for constant $k$ (one could even consider $k$ to be dependent on the node as in \cite{PPS15}, or on the round $t$). This model can also be extended by using credibility functions.
    \item For any constant $\alpha\in [0,1]$, in each round $t\geq 0$, each node performs a pull with w.p. $\alpha$ and a push w.p. $1-\alpha$. This model can also support a credibility function.
    \item Variants of Broadcasting or Flooding models \cite{clementi2012information} where in each round each informed node sends the information to all its neighbors, however, edges may independently fail to transmit the message with some probability depending only on the edge.
\end{itemize}

\section{Lower Bounding the Number of Informed vertices}\label{sec:lower_bound}
 Our analysis will be split into two phases, a ``growing'' phase where $|I_t|\leq n/2$, and a ``shrinking'' phase where $|I_t|\geq n/2$.
\subsection{Growing phase: \texorpdfstring{$I_t \in [A,B]$}{Informed in [A,B]}} 
In this section, we prove a lower bound on the number of informed vertices after a stopping time $\tau_2$, which aggregates over the expected growth factors between round $1$ and $\tau_2-1$. In the following theorem (and throughout the rest of this paper) we use the convention that $\min\left\{\emptyset\right\} = \infty$. 

\begin{theorem}\label{thm:AzumaGrowing}
	Let $(G_t)_{t\geq 0}$ be any sequence of $d_t$-regular $n$-vertex graphs and consider a $\Cgrow$-growing process $\mathcal{P}$ with expected growth factors $\delta_t$. Let $t_1 \geq 0$ be any round, and let $A,B$ be thresholds satisfying $1 \leq A < B \leq n/2$ and $\xi:=10^{-30}$. Define the stopping time $\tau_2$ as
	\begin{equation}
		\tau_2 := \min \left( s \geq t_1 \colon \sum^{s-1}_{t=t_1} \log \left(1 + \delta_t \right) \geq \frac{ \log\left( \frac{B}{A} \right) + \left( \log\left( \frac{B}{A} \right) +  \log(1+\Cgrow) + 1 \right)^{2/3}}{\left(1-(1-\xi)\cdot |I_t|^{-\xi} \right)^2 }\right).\label{eq:prod_pre_grow}  
	\end{equation} 
	Then there is a constant $C_2 > 0$ such that
	\begin{equation*}
		\Pr{t_1}{ |I_{\tau_2}| < B  ~\Big|~ |I_{t_1}| \geq A } \leq \exp\left( - C_2 \cdot \left( \log\left( \frac{B}{A} \right) \right)^{1/3}  \right)+\Pr{t_1}{\tau_2 = \infty ~\Big|~ |I_{t_1}| \geq A}.
	\end{equation*}    
\end{theorem}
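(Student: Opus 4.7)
The plan is to exploit the telescoping identity
\[
\log\!\left(\frac{|I_{\tau_2}|}{|I_{t_1}|}\right) \;=\; \sum_{t=t_1}^{\tau_2-1}\log\!\left(1+\frac{|\Delta_{t+1}|}{|I_t|}\right),
\]
valid on the event $\{\tau_2<\infty\}\cap\{|I_{t_1}|\geq 1\}$. Thus, to obtain $|I_{\tau_2}|\geq B$ from $|I_{t_1}|\geq A$, it suffices to show with the desired probability that the right-hand side exceeds $\log(B/A)$. Setting $Y_t := |\Delta_{t+1}|/|I_t|$ and $X_t := \log(1+Y_t)$, I decompose into a predictable part and a martingale,
\[
\sum_{t=t_1}^{\tau_2-1} X_t \;=\; \sum_{t=t_1}^{\tau_2-1} \EX{t}{X_t} \;+\; \sum_{t=t_1}^{\tau_2-1} D_t, \qquad D_t := X_t - \EX{t}{X_t},
\]
and lower-bound each piece separately, comparing the first with the stopping threshold involving $\sum_t \log(1+\delta_t)$, and showing that the second is not too negative.

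For the predictable part, concavity of $y\mapsto \log(1+y)$ gives the reverse-Jensen bound
\[
\EX{t}{X_t} \;\geq\; \log(1+\delta_t) - \tfrac{1}{2}\,\VAR{t}{Y_t} \;\geq\; \log(1+\delta_t) - \tfrac{\delta_t}{2|I_t|},
\]
where the second derivative of $\log(1+y)$ is bounded in absolute value by $1$ on $y\geq 0$ and $\VAR{t}{|\Delta_{t+1}|}\leq \EX{t}{|\Delta_{t+1}|}$ follows from the Bounded Variance property~(\PFour) of \cref{lem:bounded_variance_property}. Summing, the accumulated Jensen gap is at most $\tfrac12\sum_{t} \delta_t/|I_t|$, and using monotonicity (\Mono) together with $\delta_t\leq(1+\Cgrow)\log(1+\delta_t)$, this is controlled by $\Cgrow \sum_t \log(1+\delta_t)/|I_t|$ and is absorbed into the inflation factor $1/\bigl(1-(1-\xi)|I_t|^{-\xi}\bigr)^2$ built into the stopping threshold~\eqref{eq:prod_pre_grow}.

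For the martingale, I apply Freedman's inequality evaluated at the stopping time $\tau_2$. Using $0\leq X_t\leq Y_t$ and (\PFour), the conditional variances satisfy
\[
\sum_{t=t_1}^{\tau_2-1}\EX{t}{D_t^2}\;\leq\;\sum_{t=t_1}^{\tau_2-1}\EX{t}{Y_t^2}\;\leq\;\sum_{t=t_1}^{\tau_2-1}\Bigl(\frac{\delta_t}{|I_t|}+\delta_t^2\Bigr).
\]
Since $\delta_t\leq\Cgrow$ forces $\delta_t^2\leq \Cgrow(1+\Cgrow)\log(1+\delta_t)$, the stopping criterion ensures this total variance is $O(\log(B/A)+\log(1+\Cgrow)+1)$. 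Choosing the deviation $\lambda := (\log(B/A)+\log(1+\Cgrow)+1)^{2/3}$, which exactly matches the slack sitting inside the threshold of~\eqref{eq:prod_pre_grow}, Freedman's bound yields a failure probability of $\exp\bigl(-\Omega((\log(B/A))^{1/3})\bigr)$, giving the stated tail.

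Combining, on the high-probability event neither bad event occurs, the Jensen correction and the martingale deviation together subtract at most the $(\log(B/A)+\log(1+\Cgrow)+1)^{2/3}$ slack, so $\sum_{t<\tau_2} X_t\geq \log(B/A)$ and hence $|I_{\tau_2}|\geq B$. The main obstacle I expect is the a-priori unboundedness of the individual martingale increments $|D_t|$, which can be as large as $\log(n/A)$ and would weaken the Freedman bound into a useless Poisson-regime tail. The likely fix is to truncate $Y_t$ at a suitable constant, using Markov on $Y_t$ (whose mean is $\delta_t\leq\Cgrow$ and whose variance is $\leq \delta_t/|I_t|$ by (\PFour)) to control the contribution of the truncated mass; the role of the $|I_t|^{-\xi}$ factor in the threshold's denominator appears to be precisely to absorb this truncation error uniformly in $t$.
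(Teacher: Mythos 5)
Your overall architecture is the same as the paper's: telescope $\log(|I_{\tau_2}|/|I_{t_1}|)=\sum_t X_t$, split into a predictable part plus a martingale, lower-bound the predictable part by a reverse Jensen inequality, and control the martingale with a variance-sensitive inequality (Freedman here, the equivalent Chung--Lu bound, \cref{lem:azuma_variance}, in the paper) at deviation $h=(\log(B/A)+\log(1+\Cgrow)+1)^{2/3}$. The genuine difference is your reverse Jensen: a second-order Taylor expansion giving the \emph{additive} bound $\EX{t}{X_t}\geq\log(1+\delta_t)-\tfrac12\VAR{t}{Y_t}\geq\log(1+\delta_t)-\delta_t/(2|I_t|)$, which is clean and correct, versus the paper's \emph{multiplicative} \cref{lem:log_exp-JOHN}, $\EX{t}{X_t}\geq(1-(1-\xi)|I_t|^{-\xi})^2\log(1+\delta_t)$, proved by a lengthy Chebyshev/Chernoff case analysis. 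The gap is in the absorption step. When $A$ is small the accumulated correction $\tfrac12\sum_t\delta_t/|I_t|$ is of order $\tfrac{1}{A}\sum_t\log(1+\delta_t)=\Theta(\tfrac{1}{A\xi^2}\log(B/A))$, i.e.\ of the order of the main term, not of the $(\cdot)^{2/3}$ slack — so your closing sentence, that the Jensen correction and the martingale deviation ``together subtract at most the $(\cdot)^{2/3}$ slack,'' is false; the correction can only be absorbed into the multiplicative inflation $1/(1-(1-\xi)|I_t|^{-\xi})^2$. That absorption requires the per-step ratio $\frac{\delta_t/(2|I_t|)}{\log(1+\delta_t)}$ to stay bounded away from $1$, and at $|I_t|=1$ this ratio equals $\frac{\delta_t/2}{\log(1+\delta_t)}$, which exceeds $1$ once $\delta_t\gtrsim 2.51$: for $\Cgrow$ above that value and $A=1$ your lower bound on $\EX{t}{X_t}$ is negative and the argument collapses, while the theorem is stated for arbitrary $\Cgrow>0$. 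For $\Cgrow\leq 2$ (the cases used in the applications) your route can be salvaged, but the verification is delicate exactly at small $|I_t|$ and is not the one-line chain $\delta_t\leq(1+\Cgrow)\log(1+\delta_t)$ you give; this is precisely the regime the paper's multiplicative lemma is engineered to handle.

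On the martingale side, your worry about unbounded increments is unfounded, and the truncation you propose is unnecessary: for a \emph{lower}-tail bound on $\sum_t D_t$ one only needs the one-sided increment bound $\EX{t}{X_t}-X_t\leq M$, which follows from $X_t\geq 0$ and Jensen as $\EX{t}{X_t}\leq\log(1+\delta_t)\leq\log(1+\Cgrow)=:M$; this is exactly \cref{eqn:diffmartingaleY1} in the paper, and no control of the upper tail of $X_t$ is required. Your variance bound $\EX{t}{D_t^2}\leq\delta_t/|I_t|+\delta_t^2$ is weaker than the paper's $\VAR{t}{X_t}\leq\delta_t/|I_t|$ (obtained from the contraction \cref{lem:variance}) but still suffices for the stated tail. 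Two technicalities you should still add: the sums must be evaluated at the random time $\tau_2$, which requires working with the stopped martingale and a deterministic pathwise bound on $\sum_{t<\tau_2}\delta_t$ (\cref{claim:UBpreconGROWING}); and $\delta_t$ equals $\EX{t}{|\Delta_{t+1}|/|I_t|}$ only while $|I_t|\leq n/2$, so one must also stop at $\tau=\min\{t:|I_t|\geq n/2\}$ as the paper does.
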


Recall that the growth factors $\delta_t$ are conditional expectations  given by  $\delta_t=\EX{t}{\frac{|\Delta_t|}{|I_t|}}$ in the growing phase, where $|I_t|\leq n/2$. Intuitively, the stopping time $\tau_2$ in \cref{cor:AzumaGrowing} can be viewed as a partial observer who does not know the sequence $I_t$, but only gets to know the expected growth factors in each round.  
\begin{remark}\label{rem:con} At first it might look challenging to apply \cref{thm:AzumaGrowing}, as one would need to control the probability that the stopping time is unbounded. However, in most applications we have a deterministic lower bound on the expected growth in each step and then, provided this bound is sufficiently large, this probability vanishes. See \cref{cor:AzumaGrowing} for a weaker but easier to apply variant of \cref{thm:AzumaGrowing} which leverages this idea. The use of this stopping time also allows \cref{thm:AzumaGrowing} to be very general. For instance, notice that $G_t$ is not required to always be connected; this gives flexibility when handling dynamic graphs. This works because if it is not possible to spread the rumor (say due to connectivity issues), then this implies $\tau_2$ is unbounded. 
\end{remark}

We will now give a brief overview of the proof of \cref{thm:AzumaGrowing}, followed by some helper lemmas and claims, and then complete the proof. The starting point is to analyze the growth rate of the number of informed vertices. To this end, we recall the following formula involving growth factors:
\begin{align}
 \frac{ |I_{\tau_2}| }{ |I_{t_1}|} &= \prod_{t=t_1}^{\tau_2-1} \frac{| I_{t+1}|}{|I_t| } = \prod_{t=t_1}^{\tau_2-1} \left(1 + \frac{|\Delta_{t+1}|}{|I_t|}\right). \label{eq:log_taking}
\end{align}
In order to transform this product into a sum of random variables, we first define for any $t\geq 0$, 
\begin{equation*}
     X_{t} := \log \left(1 + \frac{|\Delta_{t+1}|}{|I_t|}\right).
\end{equation*}
Then, by taking logarithms in \cref{eq:log_taking} we obtain that
\[
  \log\left( \frac{ |I_{\tau_2}| }{ |I_{t_1}|} \right) = \sum_{t=t_1}^{\tau_2-1} X_t.
\]
Our approach will be to lower bound the sum of these $X_t$'s. Therefore, we will consider the expected (logarithmic) growth in each step (i.e.\ $\EX{t}{X_t}$) (note that due to the dependence on $\mathcal{F}_t$ it is also a random variable). We then show that $\sum_{t=t_1}^{\tau_2-1}X_t$ is tightly concentrated around $\sum_{t=t_1}^{\tau_2-1}\EX{t}{X_t}$, using a variant of Azuma's concentration inequality (\cref{lem:azuma_variance}). In doing so, we face the following difficulty of relating the expectation of $X_t$ to the expected growth factor $\delta_t$. Specifically, we would like to apply the following approximation:
\begin{equation*}
    \EX{t}{ \log \left(1 + \frac{|\Delta_{t+1}|}{|I_t|}\right) } \approx \log\left(1 + \EX{t}{\frac{|\Delta_{t+1}|}{|I_t|}} \right) = \log\left(1 + \delta_t\right).
\end{equation*}
One direction in this approximation is immediate; since $\log(\cdot)$ is concave, Jensen's inequality gives us 
\[
	\EX{t}{ \log \left( 1 +  \frac{|\Delta_{t+1}|}{|I_t|} \right) } \leq \log \left(1 + \EX{t}{ \frac{|\Delta_{t+1}|}{|I_t|}} \right).
	\] 
It thus remains to bound the other direction, which amounts to proving an ``approximate reverse version'' of Jensen's inequality. This is fairly involved, but we manage to establish the following general lemma:

\begin{restatable}{lemma}{ReversedJensen}\label{lem:log_exp-JOHN}
For a fixed round $t\geq 0$, let $G_t$ be a regular $n$-vertex graph and consider a $\Cgrow$-growing process $\mathcal{P}$.  
If $|I_t| \in [A,n/2]$, then, for $\xi:=10^{-30}$, we have 
	\[
	\EX{t}{ \log \left( 1 +  \frac{|\Delta_{t+1}|}{|I_t|} \right) } \geq   \left(1-(1-\xi)\cdot |I_t|^{-\xi} \right)^2 \cdot \log \left( 1 + \EX{t}{ \frac{|\Delta_{t+1}|}{|I_t|}} \right).
	\] 
\end{restatable}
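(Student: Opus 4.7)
The plan is to let $Y := |\Delta_{t+1}|/|I_t|$ and $\mu := \EX{t}{Y}$, so the target becomes
\[
\EX{t}{\log(1+Y)} \geq (1-\epsilon)^2 \log(1+\mu), \quad \text{where } \epsilon := (1-\xi)|I_t|^{-\xi}.
\]
The key input, beyond the $\Cgrow$-growing property (which also gives $\mu \leq \Cgrow$), is the variance bound from \cref{lem:bounded_variance_property}: $\VAR{t}{|\Delta_{t+1}|} \leq \EX{t}{|\Delta_{t+1}|}$, and hence $\VAR{t}{Y} \leq \mu/|I_t|$.

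The backbone is an elementary inequality: for all $\alpha \in [0,1]$ and $\mu \geq 0$,
\[
\log(1+(1-\alpha)\mu) \geq (1-\alpha)\log(1+\mu),
\]
which I would prove by showing that $g(\alpha) := \log(1+(1-\alpha)\mu) - (1-\alpha)\log(1+\mu)$ is concave in $\alpha$ (its second derivative equals $-\mu^2/(1+(1-\alpha)\mu)^2 \leq 0$) with $g(0)=g(1)=0$. Given this, for any event $E \subseteq \{Y \geq (1-\alpha)\mu\}$, $\EX{t}{\log(1+Y)} \geq \Pr{t}{E}\cdot(1-\alpha)\log(1+\mu)$, since $\log(1+Y)\geq 0$ holds trivially on $E^c$.

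The proof then splits at the threshold $\mu_0 := |I_t|^{3\xi-1}/(1-\xi)^3$. In the \emph{concentration regime} $\mu \geq \mu_0$, I would take $\alpha = \epsilon$ and apply Chebyshev:
\[
\Pr{t}{Y < (1-\epsilon)\mu} \leq \VAR{t}{Y}/(\epsilon\mu)^2 \leq 1/(\epsilon^2\mu|I_t|).
\]
The threshold $\mu_0$ is calibrated precisely so that $\epsilon^2 \mu |I_t| \geq 1/\epsilon$ (plug in $\epsilon^3 = (1-\xi)^3 |I_t|^{-3\xi}$), making this probability at most $\epsilon$. Combined with the backbone inequality, this gives $\EX{t}{\log(1+Y)} \geq (1-\epsilon)\cdot(1-\epsilon)\log(1+\mu) = (1-\epsilon)^2 \log(1+\mu)$.

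In the \emph{linear regime} $\mu < \mu_0$, I would use $\log(1+y) \geq y - y^2/2$ for $y\geq 0$ (verified by differentiating) together with $\EX{t}{Y^2} \leq \mu/|I_t| + \mu^2$ to obtain $\EX{t}{\log(1+Y)} \geq \mu(1 - 1/(2|I_t|) - \mu/2)$, which (since $\log(1+\mu)\leq \mu$) reduces the target to verifying $1/(2|I_t|) + \mu/2 \leq 2\epsilon - \epsilon^2$. The main obstacle sits here: for large $|I_t|$ the case condition $\mu < \mu_0$ with $3\xi - 1 < -\xi$ easily provides enough slack, but for small $|I_t|$ the Taylor bound can degrade to triviality while simultaneously $(1-\epsilon)^2$ shrinks to $\xi^2$. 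In that subregime, I would supplement Taylor with a Cauchy--Schwarz bound $\Pr{t}{|\Delta_{t+1}|\geq 1} \geq (\EX{t}{|\Delta_{t+1}|})^2/\EX{t}{|\Delta_{t+1}|^2} \geq \mu|I_t|/(1+\mu|I_t|)$, yielding $\EX{t}{\log(1+Y)} \geq (\mu|I_t|/(1+\mu|I_t|))\log(1+1/|I_t|)$, which comfortably dominates $\xi^2\log(1+\mu)$ once one invokes $\mu \leq \Cgrow$. The extreme smallness of $\xi = 10^{-30}$ is precisely tailored so that these three sub-arguments seamlessly cover every $|I_t| \geq 1$ with no gap between the regimes.
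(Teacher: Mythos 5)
Your proposal is correct, and while its first half mirrors the paper, its second half takes a genuinely different and arguably cleaner route. Both proofs rest on the same two ingredients: the variance bound \PFour from \cref{lem:bounded_variance_property}, and the concavity inequality $\log(1+(1-a)z)\geq(1-a)\log(1+z)$ (the paper's \cref{clm:log_claim}, which you reprove identically). Both also split on the size of $\mu=\delta_t$ and handle the large-$\mu$ regime by Chebyshev plus that inequality; the difference is only in the calibration (the paper splits at $\delta_t \approx 8|I_t|^{-1/4}$ with deviation $(|I_t|\delta_t)^{-1/3}$, you split at $\mu_0\approx|I_t|^{3\xi-1}$ with deviation $\epsilon=(1-\xi)|I_t|^{-\xi}$ and accept failure probability $\epsilon$). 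The real divergence is in the small-$\mu$ regime: the paper needs a Chernoff tail bound for $|\Delta_{t+1}|$, a truncation onto the grid $\mathcal{I}$ of multiples of $|I_t|^{-1}$, a geometric-series estimate for the tail mass, and a three-way case analysis to compare $\log(1+x)$ with $x$. Because your split point is much lower, you can instead get away with the elementary bound $\log(1+y)\geq y-y^2/2$ together with $\EX{t}{Y^2}\leq\mu/|I_t|+\mu^2$, reducing everything to the scalar inequality $1/(2|I_t|)+\mu/2\leq 2\epsilon-\epsilon^2$; this holds for all $|I_t|\geq 2$ (the left side decays like $|I_t|^{3\xi-1}$, far faster than the right side's $|I_t|^{-\xi}$), and the single exceptional case $|I_t|=1$ is absorbed by your Paley--Zygmund bound, where the target factor is only $\xi^2=10^{-60}$ and $\mu<\mu_0\approx 1$ in that regime makes $\tfrac{\mu}{1+\mu}\log 2$ comfortably sufficient without even needing $\mu\leq\Cgrow$. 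The trade-off is that your argument buys simplicity (no Chernoff machinery, no discretization) at the cost of a more delicate placement of the case boundary; the paper's version keeps the two cases overlapping generously and pays for it with a heavier tail estimate.
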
 
Note that the first factor on the right-hand side of the inequality above is $(1-o(1))$ in the case $|I_t| = \omega(1)$ (i.e., a super-constant number of vertices are informed).  

\begin{proof}
For brevity, we will denote
 \[
 Z:= \frac{|\Delta_{t+1}|}{|I_t|}.
 \]
  We proceed by a case distinction dependent on the size of $\EX{t}{Z}$. Recall that as $\mathcal{P}$ is a $\Cgrow$-growing process, by \BEG and \PFour,
	\begin{align}
	 &\EX{t}{Z} =: \delta_t \leq \Cgrow, \notag\\      
      &\VAR{t}{Z} = \frac{1}{|I_t|^2} \cdot \VAR{t}{|\Delta_{t+1}|} \leq \frac{1}{|I_t|^2} \cdot \EX{t}{|\Delta_{t+1}|} \leq \frac{\delta_t}{|I_t|}. \label{eq:mom} \end{align}

	\medskip 
	
	\noindent \textbf{Case 1:} Let us assume $\delta_t  \geq 8 \cdot |I_t|^{-1/4}$. 
	We will first establish a concentration inequality, stating that $Z$ is not much smaller than its expectation.
	By Chebyshev's inequality,
	\begin{align}
		\Pr{t}{ \left|Z - \EX{t}{Z} \right| \geq \left(|I_t|\cdot \delta_t \right)^{-1/3} \cdot \EX{t}{Z} } &\leq \frac{\VAR{t}{Z}}{ \left(|I_t|\cdot \delta_t \right)^{-2/3}\cdot  (\EX{t}{Z})^2 } \notag \\
		&\stackrel{(a)}{\leq} \frac{  \delta_t /|I_t|}{ \left(|I_t|\cdot \delta_t \right)^{-2/3} \cdot  (\delta_t)^2} \notag \\ 	&\stackrel{}{=} \left(|I_t|\cdot \delta_t \right)^{-1/3},  \label{eq:cheby} 
	\end{align}
	where $(a)$ used the upper bound on the variance of $Z$, and the identity for $\EX{t}{Z}$ from \cref{eq:mom}. Observe that 
	\begin{equation}\label{eq:subbedin}  \left(|I_t|\cdot \delta_t \right)^{-1/3} \leq \left(|I_t|\cdot   8  |I_t|^{ -1/4}\right)^{-1/3}   = \tfrac{1}{2}\cdot |I_t|^{-1/4}, \end{equation}by the condition for Case 1. Therefore, as $Z \geq 0$ and $\log(1+Z) \geq 0$,
	\begin{align} 
		 \EX{t}{ \log ( 1 + Z ) }   & \geq \Pr{t}{ Z \geq \left(1-  \tfrac{1}{2}\cdot |I_t|^{-1/4}   \right) \cdot \EX{t}{Z} } \cdot \log\left(1 + \Bigl(1-  \tfrac{1}{2}\cdot |I_t|^{-1/4} \Bigr) \cdot \EX{t}{Z} \right) \notag \\
		& \stackrel{(a)}{\geq} \left(1-    \tfrac{1}{2}\cdot |I_t|^{-1/4}  \right) \cdot \left(1- \tfrac{1}{2}\cdot |I_t|^{-1/4}  \right) \cdot \log(1 + \EX{t}{Z}) \notag \\
		& \stackrel{}{=}  \left(1-  \tfrac{1}{2} \cdot |I_t|^{-1/4}  \right)^2 \cdot \log(1 + \EX{t}{Z}) \label{eq:case_one_conclusion}, 
	\end{align}
	where $(a)$ used \cref{eq:cheby}, \cref{eq:subbedin}, and \cref{clm:log_claim} with $z:= \EX{t}{Z} $ and $a:=\tfrac{1}{2} \cdot |I_t|^{-1/4} $. This completes Case 1.

	\medskip

\noindent\textbf{Case 2:} The other case is $\delta_t  \leq 8 \cdot |I_t|^{ -1/4}$.
	In this case $Z$ may not be concentrated (it was in Case 1) . However, $\EX{t}{Z}$ is so small that  $\EX{t}{\log(1+Z)}$ is close to $\log(1+\EX{t}{Z})$ as even ``outliers'' cannot contribute too much to $
	\log(1+Z)$.
Recall that $\EX{t}{|\Delta_{t+1}|} = \delta_t  \cdot |I_t|$. Thus, for any $\eta \geq 0$,
	\[
	\Pr{t}{ Z  \geq (1+\eta) \cdot \delta_t }
	= \Pr{t}{ |\Delta_{t+1}| \geq (1+\eta) \cdot \delta_t \cdot |I_{t}|  } = \Pr{t}{ |\Delta_{t+1}| \geq (1+\eta) \cdot \EX{t}{ \Delta_{t+1} } }.
	\] 
	Applying the general form of the Chernoff bound (\cref{lem:chernoff} statement (iv)),
	\begin{equation} \label{eq:chernoffZ}
			\Pr{t}{ Z  \geq (1+\eta) \cdot \delta_t } \leq \left( \frac{e}{1+\eta} \right)^{(1+\eta) \cdot \EX{t}{ |\Delta_{t+1}|}} = \left( \frac{e}{1+\eta} \right)^{(1+\eta)\cdot \delta_t \cdot |I_t|}.
	\end{equation}
	Set $\eta$ satisfying $1+\eta = \zeta \cdot ( | I_t | \cdot  \delta_t )^{-1}$ for any $\zeta \geq 128 |I_t|^{3/4}$, and observe that  
	\begin{equation}
		1+\eta \geq 128 |I_t|^{3/4} \cdot  ( | I_t | \cdot  \delta_t )^{-1} = 128 \cdot |I_t|^{-1/4} \cdot \delta_t^{-1} \geq 16 \geq e^{5/2}\label{eq:LBa},
	\end{equation}
Now returning to the bound from \cref{eq:chernoffZ},
	\begin{equation*}
		\Pr{t}{ Z \geq \zeta \cdot |I_t|^{-1}} \leq \left(\frac{e}{1+\eta}\right)^{\zeta}  \stackrel{(a)}{\leq} \left(\frac{1}{1+\eta}\right)^{3\zeta/5} =\left(\frac{1}{1+\eta}\right)^{ \zeta/10} \cdot \left(\frac{1}{1+\eta}\right)^{\zeta/2} \stackrel{(b)}{\leq} \frac{1}{1+\eta}  \cdot \left(\frac{1}{1+\eta}\right)^{\zeta/2}  , \notag 	\end{equation*}	where $(a)$ holds by \cref{eq:LBa} and $(b)$ since $\zeta/10 \geq 1$ and $1+\eta \geq 1$.  Inserting $1+\eta = \zeta \cdot (|I_t|  \cdot \delta_t)^{-1}$ gives
	\begin{equation}\label{eq:crucial_chernoff}
		\Pr{t}{ Z \geq \zeta \cdot |I_t|^{-1}}  
		\stackrel{}{\leq}   \frac{|I_t|  \cdot \delta_t }{\zeta  }   \cdot \left( \frac{|I_t|  \cdot \delta_t }{\zeta} \right)^{ \zeta/2}. 
	\end{equation} 
 We now define the set\begin{equation}\label{eq:defI}
	\mathcal{I} :=\left\{ \frac{0}{|I_t|},\frac{1  }{|I_t|},\frac{2  }{|I_t|},\ldots,   \frac{\lfloor 128|I_t|^{ 3/4}\rfloor}{|I_t|}   \right\}\subseteq \operatorname{Supp}(Z).
	\end{equation}
	Then, by the definition of expectation,
	\begin{equation}\label{eq:Iexp}
		\EX{t}{ \log (1+Z) } 
		= \sum_{  x \in \operatorname{Supp}(Z)} \Pr{t}{ Z = x} \cdot \log \left( 1+ x \right) 
		\geq \sum_{ x \in \mathcal{I}} \Pr{t}{ Z = x} \cdot \log \left( 1+ x \right),
	\end{equation}
 In the following, we would like to estimate $\log(1+x)$ by $x$, for each $x\in \mathcal{I}$. To that end, we do a case distinction: 
	\begin{itemize}
		\item \textbf{Case 2a: $x=0$}. In this case $\log(1+x)=\log(1)=0$ and $x=0$, so we have 
		\[
		\log(1+x)=x \cdot 1.
		\]
		\item \textbf{Case 2b: $ |I_t|^{ -1} \leq x \leq 128 |I_t|^{ -1/4} $}. For any $z \geq 0$, $\log(1+z) \geq z - z^2/2
		= z \cdot (1-z/2)$.
		Hence for any $x >  0$,  
		\[
		\log(1 + x) \geq x \cdot \left(1 - 64 |I_t|^{-1/4}  \right).
		\]
		\item \textbf{Case 2c: $ |I_t|^{ -1} \leq x \leq 128 |I_t|^{ -1/4} $ 
			and $128 |I_t|^{-1/4} \geq 1$}.
		Rearranging the second precondition implies  
		\begin{equation}\label{eq:case2c}
		x \geq |I_t|^{-1} \geq  128^{-4}.
		\end{equation} Now, we have
		\begin{align*}
			\log(1 + x) &=  x \cdot \frac{\log(1+x)}{x} \stackrel{(a)}{\geq} x \cdot \frac{\log(1 +  128 ^{-4})}{128 |I_t|^{-1/4}}\stackrel{(b)}{\geq} x \cdot \frac{\log(1 + 128 ^{-4})}{128} \stackrel{(c)}{:=} x \cdot C_1,
		\end{align*}
		where $(a)$ used the upper bound on $x$ from Case $2c$ in the denominator and \cref{eq:case2c} within the $\log(\cdot)$ in the numerator, and $(b)$ used that $|I_t|^{-1/4} \leq 1$, and $(c)$ defines the constant $C_1:=\frac{\log(1 + 128 ^{-4})}{128} \in (0,1/1000)$.
	\end{itemize}
	 
	Since when $128|I_t|^{-1/4}\leq 1$ holds, the estimate from Case 2b is tighter than the estimate we would get from Case 2c (if we were allowed to apply it), and the estimate in Case 2a is always tighter than the one from Case 2b and 2c, we conclude that for any $x\in \mathcal{I}$ 
	\[
	\log(1+x) \geq x \cdot \left(1 - \min \left( 64 |I_t|^{ -1/4}, 1 - C_1 \right) \right).
	\]
	Hence, by \cref{eq:Iexp} we have
	\begin{align}
 \EX{t}{ \log (1+Z) }  &\geq \sum_{x\in \mathcal{I} } \Pr{t}{ Z = x} \cdot x \cdot  \left(1 - \min \left( 64 |I_t|^{-1/4}, 1 - C_1 \right) \right) \notag \\
	 	&\geq \left(1 - \min \left( 64 |I_t|^{ -1/4}, 1 - C_1 \right) \right)  \cdot \left( \sum_{x\in \operatorname{Supp}(Z)} \Pr{t}{ Z = x} \cdot x - \sum_{x\in \operatorname{Supp}(Z) \setminus \mathcal{I}} \Pr{t}{ Z = x} \cdot x \right)\notag\\ 
		&=\left(1 - \min \left( 64 |I_t|^{-1/4}, 1 - C_1 \right)\right)  \cdot \left( \EX{t}{Z} - \sum_{x\in \operatorname{Supp}(Z) \setminus \mathcal{I}} \Pr{t}{ Z = x} \cdot x \right). \label{eq:first_exp_est}
	\end{align}
We proceed to bound the sum in the second factor on the right-hand side of \cref{eq:first_exp_est}, as follows
	\begin{align} 	  \sum_{x\in \operatorname{Supp}(Z) \setminus \mathcal{I}} \Pr{t}{ Z = x} \cdot x   
		&  \leq    \sum_{x\in \operatorname{Supp}(Z) \setminus \mathcal{I}} \Pr{t}{ Z \geq x} \cdot x \notag \\
	&  \stackrel{(a) }{=}    \sum_{\zeta=\lfloor 128 |I_t|^{3/4}\rfloor +1}^{\max(\operatorname{Supp}(Z) \setminus \mathcal{I} )\cdot |I_t|}  \Pr{t}{ Z \geq \frac{\zeta}{|I_t|}} \cdot \frac{\zeta}{|I_t|} \notag \\
	&  \stackrel{(b)}{\leq}     \sum_{\zeta=\lfloor 128 |I_t|^{3/4}\rfloor +1 }^{\infty}\frac{|I_t|  \cdot \delta_t }{\zeta  }   \cdot \left( \frac{|I_t|  \cdot \delta_t }{\zeta  } \right)^{ \zeta/2}  \cdot \frac{\zeta}{|I_t|} \notag \\
	&  \stackrel{}{=}   \delta_t\cdot   \sum_{\zeta=\lfloor 128 |I_t|^{3/4}\rfloor +1 }^{\infty} \left( \frac{|I_t|  \cdot \delta_t }{\zeta  } \right)^{ \zeta/2}   \notag \\
		& \stackrel{(c)}{\leq}   \delta_t\cdot   \sum_{\zeta=\lfloor 128 |I_t|^{3/4}\rfloor +1 }^{\infty} \left( \frac{|I_t|  \cdot 8|I_t|^{ -1/4} }{128 |I_t|^{3/4} } \right)^{ \zeta/2} \notag   \\
			& \stackrel{}{=}   \delta_t\cdot   \sum_{\zeta=\lfloor  128 |I_t|^{3/4}\rfloor +1 }^{\infty} \left( \frac{1}{16 } \right)^{ \zeta/2}  \notag  \\
		&\stackrel{(d)}{\leq } \delta_t \cdot  2^{-   128 |I_t|^{3/4} } \label{eq:second_exp_est}
\end{align}where $(a)$ holds by the definition of $\mathcal{I}$ from \cref{eq:defI}, $(b)$ holds by \cref{eq:crucial_chernoff}, $(c)$ holds since $\gamma \geq 128 |I_t|^{3/4}$, and $\delta_t \leq 8|I_t|^{ -1/4}$ as we are in Case 2, and finally $(d)$ holds by comparison with a geometric series. Recall that $\delta_t:=\EX{t}{Z}$, and so combining \cref{eq:first_exp_est} 	and \cref{eq:second_exp_est} gives
	\begin{align}
		\EX{t}{ \log (1+Z) } &\geq   \left(1 - \min \left( 64 |I_t|^{ -1/4}, 1 - C_1 \right) \right)  \cdot \left( 1 - 2^{-   128 |I_t|^{3/4} } 
		\right) 
		\cdot  \EX{t}{Z} \notag \\
		&\stackrel{(a)}{\geq}  \left(1 - \min \left( 64 |I_t|^{-1/4}, 1 - C_1 \right) \right)^2  \cdot  \EX{t}{Z} \notag\intertext{where $(a)$ holds since $2^{-   128 |I_t|^{3/4} }\leq 2^{-   128 }\leq 1-C_1$ as $C_1<1/1000$, and $2^{-   128 |I_t|^{3/4} } \leq 64|I_t|^{-1/4}$. Now,}
  \EX{t}{ \log (1+Z) }&\stackrel{(b)}{\geq} \left(1-(1-C_1^2)\cdot |I_t|^{-C_1^2} \right)^2 \stackrel{(c)}{\geq} \left(1-(1-10^{-30})\cdot |I_t|^{-10^{-30}} \right)^2 
		\label{eq:case_two_conclusion},
	\end{align}
	where  $(b)$ holds due to \cref{clm:tidyup} as $C_1\in (0,1/1000)$, and $(c)$ since $C_1:=\frac{\log(1 + 128 ^{-4})}{128} \geq 10^{-15}$.

	Now taking the worst-case lower bound over \cref{eq:case_one_conclusion} and \cref{eq:case_two_conclusion}, as well as noting that $|I_t| \geq A$ and $\EX{t}{Z} \geq \log(1+\EX{t}{Z})$, completes the proof.
\end{proof}

Lastly, before beginning the proof of \cref{thm:AzumaGrowing}, we first state the following helper claim.

\begin{restatable}{claim}{UBpreconGROWING}\label{claim:UBpreconGROWING}
For $\tau_2$ and $\xi:=10^{-30}$ as in \cref{thm:AzumaGrowing} and $1 \leq A \leq B \leq n/2$, the following holds,
\begin{equation*}
    \sum_{t=t_1}^{\tau_2-1} \delta_t \leq \frac{4}{\xi^2} \cdot \left( \log\left( \frac{B}{A} \right) +  \log(1 + \Cgrow) + 1 \right).
\end{equation*}
\end{restatable}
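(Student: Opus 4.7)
The plan is to exploit the stopping time's definition: by minimality of $\tau_2$, the partial sum $\sum_{t=t_1}^{\tau_2-2}\log(1+\delta_t)$ has not yet crossed the threshold, and the single extra step contributes at most $\log(1+\delta_{\tau_2-1}) \leq \log(1+\Cgrow)$ by the bounded expected growth property \BEG. I would then convert this bound on $\sum \log(1+\delta_t)$ into a bound on $\sum \delta_t$ via a pointwise elementary inequality.

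For the first step, since $|I_t| \geq A \geq 1$ we have $|I_t|^{-\xi} \leq 1$, hence $1-(1-\xi)|I_t|^{-\xi} \geq \xi$, so the denominator in the threshold satisfies $(1-(1-\xi)|I_t|^{-\xi})^2 \geq \xi^2$. Combined with $y^{2/3} \leq y$ for $y \geq 1$ applied to $y := \log(B/A)+\log(1+\Cgrow)+1$, this bounds the threshold by
\[
\frac{1}{\xi^2}\bigl(\log(B/A) + (\log(B/A)+\log(1+\Cgrow)+1)\bigr) \leq \frac{2}{\xi^2}\bigl(\log(B/A) + \log(1+\Cgrow) + 1\bigr).
\]
Adding the final term $\log(1+\Cgrow)$ then yields $\sum_{t=t_1}^{\tau_2-1}\log(1+\delta_t) \leq \frac{3}{\xi^2}\bigl(\log(B/A) + \log(1+\Cgrow) + 1\bigr)$.

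For the conversion from $\log(1+\delta_t)$ back to $\delta_t$, the standard inequality $\log(1+x) \geq x/(1+x)$ rearranges to $x \leq (1+x)\log(1+x)$. To match the claimed constant $4/\xi^2$ rather than something proportional to $(1+\Cgrow)$, I would split the sum according to whether $\delta_t \leq 1$ (where concavity gives $\log(1+\delta_t) \geq \delta_t\log 2$, so $\sum_{\delta_t \leq 1}\delta_t \leq \frac{1}{\log 2}\sum \log(1+\delta_t)$) or $\delta_t > 1$ (where each such round contributes at least $\log 2$ to $\sum \log(1+\delta_t)$, bounding their count by $\frac{1}{\log 2}\sum \log(1+\delta_t)$, and each such $\delta_t$ is at most $\Cgrow$). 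The main obstacle is to absorb the resulting factor $\Cgrow$ into the $\log(1+\Cgrow)$ term appearing in the claim; this works cleanly when $\Cgrow$ is bounded by an absolute constant (as in the applications in the paper, where $\Cgrow \leq 2$ for \push, \pull, and \pp), since then $\Cgrow/\log(1+\Cgrow)$ is itself a constant that can be absorbed into the $4/\xi^2$ prefactor together with the earlier slack.
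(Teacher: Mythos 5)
Your proposal is correct and follows essentially the same route as the paper: minimality of $\tau_2$ plus the \BEG bound $\log(1+\delta_{\tau_2-1})\leq\log(1+\Cgrow)$ for the overshoot, the lower bound $(1-(1-\xi)A^{-\xi})^2\geq\xi^2$ for the denominator, $y^{2/3}\leq y$ to collapse the threshold, and finally an elementary inequality to pass from $\sum\log(1+\delta_t)$ to $\sum\delta_t$. The only divergence is that last conversion: the paper simply invokes $\log(1+z)\geq z/2$ for $z\in[0,1]$ (which in fact holds up to $z\approx 2.5$, so it covers $\Cgrow\leq 2$ as needed for \push, \pull and \pp), whereas you split on $\delta_t\gtrless 1$; your observation that the constant degrades when $\Cgrow$ is unbounded is accurate, but it applies equally to the paper's own one-line version of this step, so it is a caveat on the claim rather than a gap in your argument.
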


\begin{proof}
As the round $\tau_2$ was chosen to be minimal,  $A\geq 1$ and $\xi:=10^{-30}$, we have,
\begin{align*}
 \sum_{t=t_1}^{\tau_2-1} \log(1+\delta_t) &\leq \frac{ \log( \frac{B}{A} ) +\left( \log\left( \frac{B}{A} \right) + \log(1 + \Cgrow) + 1 \right)^{2/3} }{\left(1-(1-\xi)\cdot A^{-\xi} \right)^2} + \log(1+\delta_{\tau_2-1}) \notag\\
 &\leq \frac{ \log( \frac{B}{A} ) + \left( \log\left( \frac{B}{A} \right) + \log(1 + \Cgrow) + 1 \right)^{2/3} }{\xi^2} + \log(1 + \Cgrow)\\
 &\leq \frac{2}{\xi^2} \cdot \left( \log\left( \frac{B}{A} \right) +  \log(1 + \Cgrow) + 1 \right).
\end{align*}
Lastly, as $\log(1+z) \geq z/2$ for any $z \in [0,1]$ we get that
\begin{equation*}
    \sum_{t=t_1}^{\tau_2-1} \delta_t \leq \frac{4}{\xi^2} \cdot \left( \log\left( \frac{B}{A} \right) +  \log(1+\Cgrow) + 1 \right),
\end{equation*}as claimed.
\end{proof}
 
We are now ready to prove our lower bound on the informed set during the growing phase.   

\begin{proof}[Proof of \cref{thm:AzumaGrowing}]
Recall that, 
\begin{equation*}
     X_{t} := \log \left(1 + \frac{|\Delta_{t+1}|}{|I_t|}\right).
\end{equation*}
and if $|I_t| \leq n/2$  
\begin{equation*}
    \EX{t}{\frac{|\Delta_{t+1}|}{|I_t|}}:= \delta_t.
\end{equation*}
Moreover, let us define
    \begin{equation*}
        Y_t := \sum_{s=t_1}^{t-1} \left( X_s - \EX{s}{X_s} \right). 
    \end{equation*}
By construction, $(Y_t)_{t=t_1}^{\tau_2-1}$ is a zero-mean martingale with respect to $I_{t_1},I_{t_1+1},\ldots, I_{\tau_2-1}$. To apply concentration inequalities, we need to provide a bound ($M$) on $Y_t - Y_{t+1}$ when $|I_t|\leq n/2$. In this case, 
\begin{align}
   Y_t - Y_{t+1} &= \sum^{t-1}_{s=t_1}\left(X_s - \EX{s}{X_s}\right) - \sum^{t}_{s=t_1}\left(X_s - \EX{s}{X_s}\right) =  -\left(X_t - \EX{t}{X_t}\right).\nonumber\intertext{Now, using in $(a)$ that $X_t \geq 0$ deterministically, Jensen's inequality in $(b)$, and in $(c)$ the fact that $\mathcal{P}$ is a $\Cgrow$-growing process, we obtain}
  Y_t - Y_{t+1}   &\stackrel{(a)}{\leq} \EX{t}{\log\left(1 + \frac{|\Delta_{t+1}|}{|I_t|}\right)} \stackrel{(b)}{\leq} \log\left(1 + \EX{t}{\frac{|\Delta_{t+1}|}{|I_t|}}\right) \stackrel{(c)}{\leq} \log\left(1 + \Cgrow\right):= M \label{eqn:diffmartingaleY1}.
\end{align}

 We seek concentration for $Y_{\tau_2}$, however $\tau_2$ may be very large (even unbounded). Thus, we cannot use a standard version of Azuma's inequality, and we need to additionally consider the conditional variances, $\VAR{t}{X_t}$. To this end, we bound the variance for any round $t$ with $|I_t| \leq n/2$, by using \cref{lem:variance} in $(a)$, 
\[ 
\VAR{t}{ X_t }  = \VAR{t}{ \log \left(1 + \frac{|\Delta_{t+1}|}{|I_t|}\right)} \stackrel{(a)}{\leq} \VAR{t}{ \frac{|\Delta_{t+1}|}{|I_t|}}= \frac{1}{|I_t|^2} \cdot \VAR{t}{|\Delta_{t+1}|}.\] By Using \cref{lem:bounded_variance_property} and by recalling the definition $\delta_t = \EX{t}{\frac{|\Delta_{t+1}|}{|I_t|}}$, assuming $|I_t|\leq n/2$,  we get
\begin{equation}
	\VAR{t}{ X_t } \leq \frac{1}{|I_t|} \cdot \EX{t}{\frac{|\Delta_{t+1}|}{|I_t|}} =   \frac{1}{|I_t|} \cdot \delta_t. \label{eqn:vardiffmartingaleY1}
\end{equation} Note that by \cref{claim:UBpreconGROWING}, and using that $|I_t| \geq A$ for all $t \geq t_1$,
\begin{equation*}
    \sum_{t=t_1}^{\tau_2-1} \frac{1}{|I_t|} \cdot \delta_t \leq \frac{1}{A}\cdot \frac{4}{\xi^2}\left(\log\left(\frac{B}{A}\right)+\log\left(1 + \Cgrow\right) + 1 \right).
\end{equation*}
We are almost in a position to apply the concentration inequality (\cref{lem:azuma_variance}) to $Y_{\tau_2}$. The only slight tweak is that we will work with a martingale also stopped by $\tau := \min\{t \geq t_1 : |I_t| \geq n/2\}$, namely
    \begin{equation*}
        {\widehat Y}_{t}:= Y_{t \wedge \tau_2\wedge \tau},
    \end{equation*}
which is also a zero-mean martingale that satisfies \cref{eqn:diffmartingaleY1,eqn:vardiffmartingaleY1}. The reason for this is that the bound \eqref{eqn:vardiffmartingaleY1} assumes the inequality $|I_t|\leq n/2$ holds;  we loose nothing doing this because $|B|\leq n/2$.

Now, applying \cref{lem:azuma_variance} to $\widehat Y_t$ yields that for any $h>0$ and $t \geq t_1$
\begin{equation}
    \Pr{t}{\widehat Y_t< - h}< \exp\left(- \frac{h^2}{2 \cdot \left(\frac{1}{A}\cdot \frac{4}{\xi^2}\left(\log\left(\frac{B}{A}\right)+\log\left(1+\Cgrow\right) + 1 \right)  + \frac{1}{3}\left(\log\left(1 + \Cgrow\right)\cdot h\right)\right)}\right) . \label{eq:def_p}
\end{equation}
Let us set,
\begin{equation}\label{eq:defofh}
    h:=  \left(  \log\left( \frac{B}{A} \right) + \log(1 +\Cgrow) + 1 \right)^{2/3} 
  \geq 1. 
\end{equation}
Thus, for this $h$ and any round $t\geq t_1 $,
\begin{align}
    \Pr{t_1}{\hat{Y}_{t}<- h} &= \exp\left(-\frac{h^2}{2\left( \frac{4h^{3/2}}{A\cdot \xi^2}  + \frac{\log(1 + \Cgrow) h}{3}\right) }\right)\notag \\
    &\leq \exp\left(-\frac{h^2}{2\left( \frac{4h^{3/2}}{A\cdot \xi^2}  + \frac{\log(1 + \Cgrow) h^{3/2}}{3}\right) }\right) \notag\\
    &  =\exp\left(-C_2\cdot h^{1/2} \right),\label{eq:azumaupper}
\end{align}
where $C_2$ is given by $\left(\frac{8}{A \cdot \xi^2} + \frac{2}{3}\cdot \log\left(1 + \Cgrow\right)\right)^{-1}>0$. Observe that the right-hand side of \eqref{eq:azumaupper} is independent of $t$, this will be important later. However, at this point we must make the following claim:
\begin{equation}\label{eq:claim3.1}
\text{Conditional on $|I_{t_1}|\geq A$, we have } \{Y_{\tau_2 \wedge \tau} \geq - h\}\cap 	\{\tau_2\wedge\tau<\infty\} \subseteq  \{ |I_{\tau_2\wedge \tau }|\geq B\}\cap \{\tau_2\wedge\tau<\infty\}.
\end{equation} We prove this later, first we show how this, together with our earlier estimates, will establish the theorem. 

Returning to the proof, by \eqref{eq:azumaupper}, we have that for any integer $t\geq 0$,
\[ \Pr{t_1}{Y_{\tau_2 \wedge \tau}< - h,\;  \tau_2\wedge \tau\leq t} \leq \exp\left(-C_2\cdot h^{1/2} \right).\] 
Since the above bound holds for any integer $t\geq 0$, it follows that 
\begin{equation}\label{eq:probfinite} \Pr{t_1}{Y_{\tau_2 \wedge \tau}< - h,\;  \tau_2\wedge \tau <\infty} \leq \exp\left(-C_2\cdot h^{1/2} \right).\end{equation}
Observe  that  $|I_{\tau_2\wedge \tau }|\leq |I_{\tau_2}| $ by monotonicity (\Mono). Using this fact, then \eqref{eq:claim3.1}, and finally \eqref{eq:probfinite}, we have
\begin{align}
	\lefteqn{ \Pr{t_1}{ |I_{\tau_2}| < B  ~\Big|~ |I_{t_1}| \geq A }} \\ &\leq 	\Pr{t_1}{ |I_{\tau_2\wedge \tau}| < B  ~\Big|~ |I_{t_1}| \geq A } \notag \\
	&=   \Pr{t_1}{ |I_{\tau_2\wedge \tau}| < B,  \tau_2\wedge \tau <\infty ~\Big|~ |I_{t_1}| \geq A } + \Pr{t_1}{ |I_{\tau_2\wedge \tau}| < B, \tau_2\wedge\tau =\infty ~\Big|~ |I_{t_1}| \geq A }\notag \\
	&\leq   \Pr{t_1}{ Y_{\tau_2 \wedge \tau} < - h,  \tau_2\wedge \tau <\infty ~\Big|~ |I_{t_1}| \geq A } + \Pr{t_1}{  \tau_2\wedge\tau =\infty ~\Big|~ |I_{t_1}| \geq A }\nonumber\\
	&\leq \exp\left(-C_2\cdot h^{1/2} \right)+ \Pr{t_1}{  \tau_2  =\infty ~\Big|~ |I_{t_1}| \geq A }\notag, 
\end{align}
 which, recalling the definition \eqref{eq:defofh} of $h$, gives the bound in the statement. 
 
It remains to prove the claimed containment in \eqref{eq:claim3.1}. For that we analyze the behavior of $|I_{\tau_2\wedge \tau }|$ when the event $\{Y_{\tau_2 \wedge \tau} \geq - h\}\cap 	\{\tau_2\wedge\tau<\infty\}$ holds. We will split into two cases. 

\medskip 

\noindent \textbf{In the first case} $\{Y_{\tau_2 \wedge \tau} \geq - h\}\cap 	\{\tau<\infty, \tau \leq  \tau_2\}$. Hence, $|I_{\tau_2\wedge \tau }| = |I_{\tau}| \geq n/2 \geq B$. 

\medskip
 
\noindent \textbf{In the second case} $\{Y_{\tau_2 \wedge \tau} \geq - h\}\cap 	\{\tau_2<\infty, \tau_2 <  \tau\}$. Thus, $Y_{\tau_2\wedge \tau}  = Y_{\tau_2} $, and deterministically we have,
\[
Y_{\tau_2} = \sum_{t=t_1}^{\tau_2-1} \left(X_t - \EX{t}{X_t} \right) \geq - h.
\]
Rearranging this, we get that,
\begin{align*}
 \sum_{t=t_1}^{\tau_2-1} X_t &\geq \sum_{t=t_1}^{\tau_2-1} \EX{t}{X_t} - h = \sum^{\tau_2 -1}_{t=t_1}\EX{t}{\log\left(1 + \frac{|\Delta_{t+1}|}{|I_t|}\right)} - h \geq \gamma \cdot \sum_{t=t_1}^{\tau_2-1} \log \left( 1 + \EX{t}{\frac{|\Delta_{t+1}|}{|I_t|}} \right) - h,
\end{align*}
where the last inequality follows from \cref{lem:log_exp-JOHN}, and $\gamma:=\left(1-(1-\xi)\cdot |I_t|^{-\xi} \right)^2 $ for $\xi:=10^{-30}$. Since $\EX{t}{\frac{|\Delta_{t+1}|}{|I_t|}} = \delta_t$ for rounds $t$ with $|I_t| \leq n/2$, we conclude that
\begin{align*}
\sum_{t=t_1}^{\tau_2-1} X_t &\geq \gamma \cdot \sum_{t=t_1}^{\tau_2-1} \log \left( 1 + \delta_t \right) - h = \gamma \cdot \sum_{t=t_1}^{\tau_2-1} \log \left( 1 + \delta_t \right) -  \left( \log\left( \frac{B}{A}\right) +  \log(1+ \Cgrow) + 1 \right)^{2/3}. 
\end{align*}
Finally, by using that 
\begin{align*}
 \sum_{t=t_1}^{\tau_2-1} \log\left( 1 + \delta_t  \right) \geq \frac{ \log( \frac{B}{A} ) +  \left( \log\left( \frac{B}{A} \right)  +  \log(1 + \Cgrow) + 1 \right)^{2/3}}{\gamma},
\end{align*}
we conclude  that  $\sum_{t=t_1}^{\tau_2-1} X_t \geq \log(\frac{B}{A})$, i.e.\ that $|I_{\tau_2}|-|I_{t_1}|\geq B-A$, and thus $|I_{\tau_2}|\geq B$. \end{proof}

\subsection{Shrinking phase: \texorpdfstring{$|U_t| \in [C,D]$}{Uninformed in [C,D]}}\label{sec:shrinking}
In this section we consider the shrinking of the number of informed vertices. We prove an upper bound on the number of uninformed vertices after a stopping time $\tau_3 \geq t_2$, which now aggregates over the expected shrinking factors between round $t_2$ and $\tau_3-1$. 

\begin{restatable}{theorem}{shrinkingphase}
\label{thm:shrinkingphase}
Let $(G_t)_{t\geq 0}$ be any sequence of $d_t$-regular $n$-vertex graphs and consider a $\Cshrink$-shrinking process $\mathcal{P}$ with expected shrinking factors $\delta_t$. Let $C,D$ be thresholds satisfying $n/2 \geq C\geq D\geq \frac{3}{4}$ and $t_2 \geq 0$ be a round such that $|U_{t_2}|\leq C$. We define a stopping time $\tau_3 \in \mathbb{N}$ as
\begin{equation}
   \tau_3 := \min \left\{ s \geq t_2 \colon \sum^{\tau_3 -1}_{t=t_2}\log\left(1 -\delta_t\right) \leq  - \frac{1}{\gamma} \cdot \left(\log\left(\frac{C}{D}\right) + \left(\log\left(\frac{C}{D}\right) - \log\left(1 - \Cshrink\right) + 1\right)^{2/3}\right)\right\}, \label{eq:product_precondition_LB_phaseshrinking}
\end{equation}
where 
\begin{equation*}
\gamma := \left(1 - \min\left( \frac{1}{2 (1-\Cshrink) \cdot D} ,  \frac{1}{2} \right) \right).
\end{equation*}Then there is a constant $C_2 > 0$ such that
\begin{equation*}
    \Pr{t_2}{ |U_{\tau_3}| > D  ~\Big|~ |U_{t_2}| \leq C } \leq \exp\left(- C_2 \cdot \left(\log\left(\frac{C}{D}\right)\right)^{1/3}\right) + \Pr{t_2}{\tau_3 =\infty  ~\Big|~ |U_{t_2}| \leq C}.
\end{equation*}
\end{restatable}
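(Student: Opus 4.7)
The approach mirrors the proof of \cref{thm:AzumaGrowing}, with the role of the growing informed set replaced by the shrinking uninformed set. Define $X_t := \log(1 - |\Delta_{t+1}|/|U_t|) \leq 0$. By telescoping,
\[
\log \frac{|U_{\tau_3}|}{|U_{t_2}|} = \sum_{t=t_2}^{\tau_3-1} X_t,
\]
so the target $|U_{\tau_3}| \leq D$ is equivalent to $\sum X_t \leq -\log(C/D)$. Since $\log(1-\cdot)$ is concave, Jensen's inequality immediately gives $\EX{t}{X_t} \leq \log(1-\delta_t)$, and the definition of $\tau_3$ then yields
\[
\sum_{t=t_2}^{\tau_3-1} \EX{t}{X_t} \leq -\frac{1}{\gamma}\left(\log\frac{C}{D} + \left(\log\frac{C}{D} - \log(1-\Cshrink) + 1\right)^{2/3}\right).
\]
In contrast to the growing phase, Jensen here already points in the useful direction, so no reverse-Jensen lemma is needed for the expectation itself; the factor $1/\gamma > 1$ in the stopping criterion instead plays the role of a buffer for the concentration step.

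Next, set up the zero-mean martingale $Y_t := \sum_{s=t_2}^{t-1}(X_s - \EX{s}{X_s})$. Because $X_t$ is unbounded below when $|\Delta_{t+1}| \to |U_t|$, I would additionally stop it at $\tau := \min\{t \geq t_2 : |U_t| < D\}$ and work with $\widehat Y_t := Y_{t \wedge \tau_3 \wedge \tau}$. While $t < \tau$ we have $|U_t| \geq D \geq 3/4$ and $\delta_t \leq \Cshrink$, so the derivative $|\log'(1-z)| = 1/(1-z)$ is controlled on the effective range; combined with \hyperlink{p4}{$\mathcal{P}_4$}, this gives $\VAR{t}{X_t} \leq C'(\Cshrink, D) \cdot \delta_t/|U_t|$. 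For the upside of the martingale increment one has the deterministic bound $X_t - \EX{t}{X_t} \leq -\EX{t}{X_t} \leq -\log(1-\Cshrink)$ (saturated at $|\Delta_{t+1}| = 0$ using \BEGT and Jensen), mirroring the role of $M = \log(1+\Cgrow)$ in the growing phase. Summing the per-round variances, using $-\log(1-x) \geq x$ and the stopping criterion gives an analog of \cref{claim:UBpreconGROWING} (in particular $\sum \delta_t = O((1/\gamma)\log(C/D))$), and applying \cref{lem:azuma_variance} with deviation $h := (\log(C/D) - \log(1-\Cshrink) + 1)^{2/3}$ yields the tail $\exp(-C_2 (\log(C/D))^{1/3})$.

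Finally, combine the pieces as in the end of the proof of \cref{thm:AzumaGrowing}. On $\{\widehat Y_{\tau_3 \wedge \tau} \leq h\} \cap \{\tau_3 \wedge \tau < \infty\}$, split into cases. If $\tau_3 \leq \tau$, then by Jensen and the stopping criterion,
\[
\sum_{t=t_2}^{\tau_3-1} X_t \leq \sum \log(1-\delta_t) + h \leq -\frac{1}{\gamma}\left(\log\frac{C}{D} + h\right) + h \leq -\log\frac{C}{D},
\]
where the last step uses $(1/\gamma - 1)(\log(C/D) + h) \geq 0$. If $\tau < \tau_3$, monotonicity \MonoT gives $|U_{\tau_3}| \leq |U_\tau| < D$ directly. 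A union bound with $\{\tau_3 = \infty\}$ then completes the proof. The main obstacle will be the variance (and bounded-difference) control of $X_t$: because $\log(1-z)$ is singular at $z=1$ while \hyperlink{p4}{$\mathcal{P}_4$} only bounds the variance of $|\Delta_{t+1}|$ itself, both the truncation at $|U_t| \geq D$ and the strict hypothesis $\Cshrink < 1$ are essential, and the precise form $\gamma = 1 - \min(1/(2(1-\Cshrink)D), 1/2)$ should emerge from this variance estimate, tending to $1$ when $D$ is large and $\Cshrink$ is bounded away from $1$, and shrinking toward $1/2$ when either is adversarial.
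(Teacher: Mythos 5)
Your high-level architecture matches the paper's: a martingale on the logarithmic shrinking factors, an Azuma-type bound with conditional variances, and a final case split on whether $\tau_3$ or the auxiliary stopping time fires first. But there is a genuine gap in how you handle the singularity of $\log(1-z)$ at $z=1$, and it propagates into two incorrect claims. With your uncapped $X_t = \log(1-|\Delta_{t+1}|/|U_t|)$, the event $|\Delta_{t+1}| = |U_t|$ (all remaining uninformed vertices informed in one round) has positive probability for a general $\Cshrink$-shrinking process, so $X_t = -\infty$ with positive probability. Stopping at $\tau := \min\{t : |U_t| < D\}$ does not fix this: the offending increment is $X_{\tau-1}$ itself, which is precisely the last term included in the stopped sum $Y_{\tau}$. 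Consequently $\EX{t}{X_t}$ and $\VAR{t}{X_t}$ can both be $-\infty$/$+\infty$, your variance bound $\VAR{t}{X_t} \leq C'\delta_t/|U_t|$ fails, and your increment bound $X_t - \EX{t}{X_t} \leq -\EX{t}{X_t} \leq -\log(1-\Cshrink)$ is unjustified: Jensen gives $\EX{t}{X_t} \leq \log(1-\delta_t)$, which is an upper bound, whereas bounding the martingale increment from above requires a \emph{lower} bound $\EX{t}{X_t} \geq \kappa$ — the opposite direction, which no form of Jensen provides.

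The paper resolves this by capping the random variable itself, $\tilde{\Delta}_{t+1} := \min(|\Delta_{t+1}|, |U_t| - \tfrac12)$, so that $X_t \geq \log(1/(2|U_t|))$ deterministically. This creates exactly the two pieces of work you claim are unnecessary. First, because $\EX{t}{\tilde{\Delta}_{t+1}} \leq \delta_t|U_t|$, plain Jensen on the capped variable no longer yields $\EX{t}{X_t} \leq \log(1-\delta_t)$; one must show the capping loses at most a multiplicative factor, and this reverse-Jensen-flavoured step (\cref{lem:simple_reversed_jensen_two}) is precisely where $\gamma = 1 - \min\bigl(\tfrac{1}{2(1-\Cshrink)D}, \tfrac12\bigr)$ originates — not, as you suggest, from the variance estimate, and the $1/\gamma$ in the stopping criterion is there to cancel this loss exactly, not as a concentration buffer (that role is played by the $(\cdot)^{2/3}$ term). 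Second, the constant lower bound $\EX{t}{X_t} \geq \kappa$ and the variance bound $\VAR{t}{X_t} \leq \nu\delta_t$ for the capped variable (\cref{lem:pseudo_capping}) are not trivial even after capping, since the worst value $\log(\alpha/|U_t|)$ grows in magnitude with $|U_t|$; the paper needs Chernoff-type tail estimates of the form $\Pr{t}{X_t \leq \chi} \leq (\delta_t)^{c_1|U_t|}$ to beat that logarithm. Your proposal identifies the obstacle in its closing sentence but the mechanism you offer (truncation via the stopping time, plus the hypothesis $\Cshrink<1$) does not overcome it.
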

 The proof of \cref{thm:shrinkingphase} follows a similar flow as the proof of \cref{thm:AzumaGrowing}. We note that, just like in the growing phase,
\begin{equation*}
    \frac{|U_{\tau_3}|}{|U_{t_2}|} = \prod^{\tau_3 - 1}_{t=t_2}\frac{|U_{t+1}|}{|U_{t_2}|}= \prod^{\tau_3 -1}_{t=t_2}\left(1 - \frac{|\Delta_{t+1}|}{|U_t|}\right).
\end{equation*}However, unlike in the growing phase, we cannot take the logarithm here, as both the numerator $|U_{t+1}|$ and the denominator $|U_t|$ may be zero. To avoid this, we define a different sequence with an artificial offset $\alpha$ should $|U_t|$ become zero:
\[
 \tilde{U}_{t} := \max\left(|U_{t}|, \alpha\right),
\]
where we choose $\alpha:=\frac{1}{2}$. We also define
\begin{equation}\label{eq:deltatilde}
 \tilde{\Delta}_{t+1} := \min\left( |\Delta_{t+1}|, |U_t| - \alpha \right).   
\end{equation}
Similarly to the growing phase, we will now introduce the logarithmic shrinking factor, defined as
    \begin{equation}\label{eq:X_t}
        X_t:= \log \left(  1 - \frac{\tilde{\Delta}_{t+1}}{|U_t|} \right)\leq 0.
    \end{equation}
and set,
\[
 \widehat{U}_{t} := \tilde{U}_{t \wedge \tau},
\]
where $\tau:= \min \{ t \geq t_2 \colon |U_t| \leq D \}$.  Now taking the logarithm yields, as long as $\tau_3 < \tau$,
\begin{align}
 \log \left( \frac{ \widehat{U}_{\tau_3} }{ U_{t_2} }  \right) 
 =  \sum_{t=t_2}^{\tau_3-1} \log \left(  1 - \frac{\tilde{\Delta}_{t+1}}{|U_t|} \right) = \sum_{t=t_2}^{\tau_3-1} X_t \label{eq:log_equation}
\end{align}
Recall that $|U_{t_2}| \leq C$, hence rearranging \cref{eq:log_equation} yields
\[
 \log\left( \widehat{U}_{\tau_3} \right) = \log(|U_{t_2}|) + \sum_{t=t_2}^{\tau_3-1} X_t 
 \leq \log(C) + \sum_{t=t_2}^{\tau_3-1} X_t.
\]
Furthermore if $\sum_{t=t_2}^{\tau_3-1} X_t < -\log(C) + \log(D)$, then it follows that
\[
 \widehat{U}_{\tau_3} \leq D  \quad \Rightarrow \quad |U_{\tau_3}| \leq \widehat{U}_{\tau_3}  \leq D.
\]
Hence our goal is to prove that
\begin{align}
 \sum_{t=t_2}^{\tau_3-1} X_t = \sum_{t=t_2}^{\tau_3-1} \log\left( 1 -  \frac{\tilde{\Delta}_{t+1}}{|U_t|} \right) \leq \log\left( \frac{D}{C} \right), \label{eq:goal}
\end{align}
with $D < C$. Note that in the last expression, both sides are negative.

Before we can start with the proof of \cref{thm:shrinkingphase}, we will state one crucial lemma that lower bounds the expectation and upper bounds the variance of the $X_t$, and subsequently, additional helper results.

\begin{restatable}{lemma}{pseudocapping}
\label{lem:pseudo_capping}
Consider a regular graph $G_t$ and a $\Cshrink$-shrinking process $\mathcal{P}$ with expected shrinking factors $\delta_t$ and round $t\geq 0$ with $1 \leq |U_t| \leq \frac{n}{2}$. Recall that $X_t= \log\left(1 - \frac{\tilde{\Delta}_{t+1}}{|U_t|} \right)$. Then, there is a constant $\kappa < 0$ such that,
\[
 \EX{t}{ X_t} \geq \kappa.
\]
Furthermore, there is a constant $\nu > 0$ such that, 
\[
\VAR{t}{ X_t} \leq \nu \cdot \delta_t = \nu \cdot \EX{t}{ \frac{|\Delta_{t+1}|}{|U_t|} }
\]
\end{restatable}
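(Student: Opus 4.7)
The plan is to prove both bounds by a case analysis on a threshold $c \in (\Cshrink, 1)$; for concreteness I will take $c := (1+\Cshrink)/2$. In the \emph{good} event $\mathcal{G} := \{|\Delta_{t+1}| \leq c\,|U_t|\}$ the truncation inside $\tilde{\Delta}_{t+1}$ is inactive and $|X_t| \leq -\log(1-c)$, a constant depending only on $\Cshrink$; in the complementary \emph{bad} event $\mathcal{B}$ the only available bound is the deterministic one $|X_t| \leq \log(2|U_t|)$, which follows from $\tilde{\Delta}_{t+1} \leq |U_t| - 1/2$. Since by \NC the indicators $(\ind_{\{u \in I_{t+1}\}})_{u \in U_t}$ are negatively correlated with sum of means at most $\Cshrink|U_t|$, the usual MGF argument goes through (expanding $\prod e^{\lambda X_u}$ as a sum over subsets and applying \NC termwise gives $\EX{t}{e^{\lambda\sum X_u}} \leq \prod\EX{t}{e^{\lambda X_u}}$), and the multiplicative Chernoff estimate $\exp(-\eta^2\mu/(2+\eta))$ with $(1+\eta)\mu = c|U_t|$ produces
\[
\Pr{t}{\mathcal{B}} \leq \exp(-C_1 |U_t|), \qquad C_1 := \frac{(c-\Cshrink)^2}{c+\Cshrink} > 0.
\]

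For the first conclusion I would simply combine the two cases: $\EX{t}{X_t} \geq \log(1-c) - \log(2|U_t|)\exp(-C_1|U_t|)$. Since $x \mapsto \log(2x)\exp(-C_1 x)$ is bounded on $[1,\infty)$ by a constant $C_2=C_2(\Cshrink)$, this yields $\EX{t}{X_t} \geq \log(1-c) - C_2 =: \kappa$, a strictly negative constant depending only on $\Cshrink$.

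For the variance I will use $\VAR{t}{X_t} \leq \EX{t}{X_t^2}$ and split with the same threshold. On $\mathcal{G}$, monotonicity of $x \mapsto -\log(1-x)/x$ on $[0,1)$ gives $|X_t| \leq C(c)\,|\Delta_{t+1}|/|U_t|$ with $C(c) := -\log(1-c)/c$; squaring and using $(|\Delta_{t+1}|/|U_t|)^2 \leq |\Delta_{t+1}|/|U_t|$ gives $\EX{t}{X_t^2\,\ind_{\mathcal{G}}} \leq C(c)^2 \delta_t$. The contribution from $\mathcal{B}$ is at most $\log^2(2|U_t|)\Pr{t}{\mathcal{B}}$, and the crux will be to show this is also $O(\delta_t)$. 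My plan is to split further on the size of $\delta_t$. If $\delta_t > c/e^2$, then $\delta_t$ is bounded below by a positive constant and the Chernoff estimate $\log^2(2|U_t|)\exp(-C_1|U_t|) \leq K$ (uniformly on $|U_t|\geq 1$) is trivially $\leq (Ke^2/c)\,\delta_t$. If instead $\delta_t \leq c/e^2$, I will replace Chernoff by the sharper \NC-based subset union bound
\[
\Pr{t}{|\Delta_{t+1}| \geq k} \leq \binom{|U_t|}{k}\delta_t^k \leq (e\delta_t/c)^{c|U_t|},
\]
obtained by combining \NC with Maclaurin's inequality for the elementary symmetric polynomial $e_k$ of the per-vertex informing probabilities (whose arithmetic mean is $\delta_t$). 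Factoring one copy of $\delta_t$ out of $(e\delta_t/c)^{c|U_t|}$ leaves $(e\delta_t/c)^{c|U_t|-1} \leq e^{-(c|U_t|-1)}$ under the assumption $\delta_t \leq c/e^2$, and multiplying by $\log^2(2|U_t|)$ gives again a uniformly bounded expression, so the bad contribution is $\leq K'\delta_t$ in this regime as well. Combining yields $\VAR{t}{X_t} \leq \nu\,\delta_t$ with $\nu := C(c)^2 + \max(Ke^2/c,\,K')$, depending only on $\Cshrink$.

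The main obstacle is precisely the bad-event contribution to $\EX{t}{X_t^2}$: Chernoff alone gives a tail estimate independent of $\delta_t$ and therefore only a $\delta_t$-independent (i.e.\ merely \emph{constant}) bound on this contribution, which is inadequate to close with a factor of $\delta_t$ in the regime $\delta_t \to 0$. The resolution is the $\delta_t$-polynomial subset bound coming from \NC and Maclaurin, which carries an explicit $\delta_t^{c|U_t|}$ factor and so permits extracting one copy of $\delta_t$ while leaving a $|U_t|$-uniformly summable remainder; the two regimes are then glued at the crossover $\delta_t \sim c/e^2$, where each of the two tail estimates is strong enough on its own side.
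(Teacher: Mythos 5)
Your proposal is correct and follows essentially the same route as the paper's proof: a good/bad split at a constant threshold strictly between $\Cshrink$ and $1$, a linear bound $|X_t|\le C\cdot|\Delta_{t+1}|/|U_t|$ on the good event, the deterministic cap $|X_t|\le\log(2|U_t|)$ on the bad event, and — crucially, as you correctly identify — a tail estimate of the form $\delta_t^{\Theta(|U_t|)}$ so that one factor of $\delta_t$ can be extracted from the bad-event contribution; the paper obtains this last estimate from the multiplicative Chernoff bound of \cref{lem:chernoff}~(iv) (valid under \NC by the Panconesi--Srinivasan remark), which with $(1+\eta)\Ex{|\Delta_{t+1}|}=c|U_t|$ yields exactly your $(e\delta_t/c)^{c|U_t|}$, so your \NC-plus-Maclaurin derivation simply re-proves that same bound. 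The only other differences are cosmetic: you bound the variance by the second moment where the paper uses the pair identity $\Var{Y}=\tfrac12\Ex{(Y_1-Y_2)^2}$, and your case split on $\delta_t$ at the constant $c/e^2$ mirrors the paper's split at $\left((1-e^\chi)/e\right)^2$.
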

\begin{proof}[Proof of \cref{lem:pseudo_capping}]
Recall that $\alpha=1/2$.
In order to lower bound the expectation, note that for any $\chi \leq 0$, 
\begin{align}
    \EX{t}{ X_t } 
    &\stackrel{(a)}{\geq} \Pr{t}{ X_t > \chi } \cdot \chi + \Pr{t}{ X_t \leq \chi } \cdot \log\left( \frac{\alpha}{|U_t|} \right) \notag \\
    &\geq 1\cdot \chi + \Pr{t}{ X_t \leq \chi } \cdot \log\left( \frac{\alpha}{|U_t|} \right), \label{eq:exp_bound}
\end{align}
where $(a)$ holds since $\log\left( \frac{\alpha}{|U_t|} \right)
$ is the smallest (i.e., most negative) value the random variable 
$X_t$ can attain, and in the last inequality we use that $\chi \leq 0$. Therefore, it remains to find a suitable constant $\chi \leq 0$ such that $\Pr{t}{X_t \leq \chi}$ is of the order $-1/\left(\log\left( \frac{\alpha}{|U_t|} \right)\right)$ for any $|U_t| \geq 1$. Now, for any $\chi \leq  0$ we have,
\begin{align*}
    \Pr{t}{ X_t \leq \chi} &= \Pr{t}{ \log\left( 1 - \frac{ \widetilde{\Delta}_{t+1} }{|U_t|} \right)  \leq \chi } 
        = \Pr{t}{ 1 - \frac{ \widetilde{\Delta}_{t+1} }{|U_t|} \leq e^{\chi} } 
              = \Pr{t}{   \widetilde{\Delta}_{t+1}  \geq (1-e^{\chi}) \cdot |U_t| }.
\end{align*}
Further, due to the capping, $\EX{t}{\tilde{\Delta}_{t+1}} \leq \EX{t}{|\Delta_{t+1}|}=\delta_t \cdot |U_t|$. We will assume (with foresight) that from now on that the constant $\chi$  satisfies 

\begin{equation}\label{eq:thecforme}
    \chi\leq \min \left(2 \log(1 - \Cshrink), -1\right). 
\end{equation} To show that such a $\chi$ satisfies the desired properties above, we continue with a case distinction:
\begin{enumerate}
    \item[] \textbf{Case 1:} $\delta_t \leq \left(  \frac{1-e^\chi}{e} \right)^2$.
Choosing $\eta:= \frac{1-e^\chi}{\delta_t}-1$,  
the general Chernoff bound (\cref{lem:chernoff} (iv)) implies
\begin{align*}
     \Pro{  \tilde{\Delta}_{t+1} \geq (1-e^{\chi}) \cdot |U_t| } &\leq \Pro{  |\Delta_{t+1}| \geq  (1+\eta) \cdot \EX{t}{|\Delta_{t+1}|} } \\
    &\leq \left( \frac{e}{1+\eta} \right)^{(1+\eta) \cdot \EX{t}{ |\Delta_{t+1}|}} \\
    &\leq \left( \delta_t \cdot \frac{e}{1-e^{\chi}} \right)^{ (1-e^{-\chi}) \cdot |U_t| } \\
    &\stackrel{(a)}{\leq} \left( \delta_t \right)^{\frac{1}{2} \cdot (1-e^{-\chi}) \cdot |U_t| } \\
    &\stackrel{(b)}{\leq} \left( \delta_t \right)^{\frac{1}{4} \cdot |U_t| },
\end{align*}
where $(a)$ used that by the condition of Case $1$,
$\frac{e}{1-e^{\chi}} \leq (\delta_t)^{-1/2}$,
and $(b)$ used that for $\chi \leq -1$, $1-e^{\chi} \geq 1/2$. 
\item[] \textbf{Case 2:} $\delta_t \geq\left( \frac{1-e^\chi}{e} \right)^2 $.
We choose again $\eta := \frac{1-e^\chi}{\delta_t} -1$. Since $|U_t| \leq n/2$ by assumption, and $\mathcal{P}$ is a $\Cshrink$-shrinking process, by \cref{def:processes} (\BEGT), $\delta_t \leq \Cshrink$. Moreover, for $\chi \leq  2\log(1 - \Cshrink)$ by \cref{eq:thecforme}, we have
\[
 \frac{1-e^{\chi}}{\delta_t} \geq \frac{2\Cshrink - \Cshrink^2}{\Cshrink} = 2-\Cshrink,
\]
and consequently, $\eta\geq 1 - \Cshrink$. 

Here, by another version of the Chernoff bound (\cref{lem:chernoff}, $(iii)$).
\begin{align}
    \Pro{  \tilde{\Delta}_{t+1} \geq (1-e^{\chi}) \cdot |U_t| }
    &\leq \Pro{  |\Delta_{t+1}| \geq  (1+\eta) \cdot \EX{t}{|\Delta_{t+1}|} } \notag \\ &\leq \exp\left(- \frac{\eta^2}{2 + \eta} \cdot \EX{t}{|\Delta_{t+1}|}\right) \notag \\
    &\stackrel{(a)}{\leq} \exp\left(- \frac{\eta^2}{3\eta/(1 - \Cshrink)} \cdot \EX{t}{|\Delta_{t+1}|}\right) \notag \\
    &\stackrel{(b)}{\leq} \exp\left(- \frac{1 - \Cshrink}{3/(1 -\Cshrink)} \cdot \delta_t \cdot |U_{t}| \right) \notag \\
    &\stackrel{(c)}{=} \exp\left(- \frac{\left(1 - \Cshrink\right)^2}{3} \cdot  \delta_t \cdot |U_{t}| \right) \notag \\
    &= \left( \delta_t \right)^{ -1/\log( \delta_t) \cdot  \frac{\left(1 - \Cshrink\right)^2}{3} \cdot \delta_t \cdot |U_{t}| }, \label{eq:resume_chernoff}
    \end{align}
    where $(a)$ used that since $\eta \geq 1 - \Cshrink$ (and $1 - \Cshrink < 1$), we have $2+\eta \leq 2 \eta/(1-\Cshrink) + \eta \leq 3 \eta/(1-\Cshrink)$. In $(b)$ we used again that $\eta \geq 1 - \Cshrink$ as well as the fact that $\EX{t}{|\Delta_{t+1}|} =\delta_t \cdot |U_t|$, since we assumed that $|U_t|\leq n/2$.
    Next, in order to lower bound $\frac{\delta_t}{\log(\delta_t)}$, define the function 
    \[
  f(z) := \frac{z}{\log(z)},
    \]
    for the range $z \in [ \left( \frac{1-e^{\chi}}{e} \right)^2,\Cshrink]$. This function is non-decreasing in $z$, and therefore is minimized for $z=\left(\frac{1-e^{\chi}}{e}\right)^2 \geq \left(\frac{1-e^{-1}}{e}\right)^2 \geq  e^{-4}$, having used $\chi \leq -1$. Applying this to \cref{eq:resume_chernoff},
    \begin{align*}
        \Pro{  \tilde{\Delta}_{t+1} \geq (1-e^{\chi}) \cdot |U_t| }
    &\leq \left( \delta_t \right)^{ -\frac{1}{\log \left( e^{-4} \right)} \cdot  \frac{\left(1 - \Cshrink\right)^2}{3} \cdot e^{-4} \cdot |U_{t}| } = \left( \delta_t \right)^{ \frac{\left(1 - \Cshrink\right)^2}{12 e^{4}} \cdot |U_{t}| }.
    \end{align*}
\end{enumerate}

\noindent Combining the two cases, it follows that there is a constant $c_1 = c_1(\Cshrink) := \frac{\left(1 - \Cshrink\right)^2}{12 e^{4}} > 0$,
\begin{align}
    \Pro{ X_t \leq \chi} &= \Pro{  \tilde{\Delta}_{t+1}  \geq (1-e^{\chi}) \cdot |U_t| }  \leq \left( \delta_t \right)^{c_1 \cdot |U_t|}. \label{eq:upper_tail}
\end{align}
 
\noindent Therefore, returning to the crucial term in \cref{eq:exp_bound},
\begin{align}
\Pr{t}{ X_t \leq \chi} \cdot \log\left( \frac{\alpha}{|U_t|} \right) 
&\geq \left( \delta_t \right)^{c_1 \cdot |U_t|} \cdot \log\left( \frac{\alpha}{|U_t|} \right) \notag \\
&= - \left( \delta_t \right)^{c_1 \cdot |U_t|} \cdot \log\left( \frac{|U_t|}{\alpha} \right) \notag \\
&\stackrel{(a)}{\geq} - \left( \delta_t \right)^{c_1 \cdot |U_t|/2} 
\cdot \left(\Cshrink\right)^{c_1\cdot |U_t|/2 } \cdot \log\left( 2 \cdot |U_t| \right)  \notag \\
&\stackrel{}{=} - \left(\delta_t\right)^{c_1\cdot |U_t|/2} \cdot \exp\left( \log(\Cshrink) c_1\cdot |U_t|/2 + \log \log \left(2 \cdot |U_t|\right)  \right)\notag \\
&\stackrel{(b)}{\geq} -\left( 1 \right)^{c_1 \cdot |U_t|/2}  \cdot c_2 = -c_2, \label{eq:paramount}
\end{align}
where $(a)$ holds as $\delta_t \leq \Cshrink$, $\alpha=1/2$, and $(b)$ as $\delta_t \leq 1$ and $\exp\left(- \log\left( \frac{1}{\Cshrink} \right) c_1\cdot |U_t|/2 + \log \log \left(2 \cdot |U_t|\right)  \right)$ can be upper bounded by some constant $c_2 > 0$. Using this in \cref{eq:exp_bound} yields the first statement,
\begin{align*}
     \EX{t}{ X_t } 
    &\geq \chi - c_2 =: \kappa.
\end{align*}

For the upper bound on the variance, recall that the definition $\tilde{\Delta}_{t+1}:= \min\left( |\Delta_t|, |U_t| - \alpha \right)$. Let $X_1,X_2 \sim \tilde{\Delta}_{t+1}$ be two independent random variables with the same distribution as $\tilde{\Delta}_{t+1}$. Then,
\begin{align*}
     \lefteqn{ \VAR{t}{\log\left( 1 - \frac{\tilde{\Delta}_{t+1}}{|U_t|}\right)}} \\
      &\stackrel{(a)}{=} \frac{1}{2} \cdot \Ex{ \left( \log\left( 1 - \frac{X_1}{|U_t|}\right)-\log\left( 1 - \frac{X_2}{|U_t|}\right) \right)^2 } \\
      &\stackrel{(b)}{=} \frac{1}{2} \cdot \sum_{k_1,k_2 \in \mathrm{Supp}(\tilde{\Delta}_{t+1}) } \Pr{t}{ \tilde{\Delta}_{t+1} = k_1} \cdot \Pr{t}{ \tilde{\Delta}_{t+1} = k_2 } \cdot \left( \log\left(1 - \frac{k_1}{|U_t|}\right) - \log\left(1 - \frac{k_2}{|U_t|}\right) \right)^2 \\
      &\leq \underbrace { \frac{1}{2}  \left(\cdot \sum_{0 \leq k_1, k_2 \leq \min\left((1-e^{\chi}) \cdot |U_t|,|U_t|-\alpha\right)} \Pr{t}{ |\Delta_{t+1}| = k_1} \cdot \Pr{t}{ |\Delta_{t+1}| = k_2 } \cdot \left( \log\left(1 - \frac{k_1}{|U_t|}\right) - \log\left(1 - \frac{k_2}{|U_t|} \right) \right)^2\right) }_{:=A} \\
     & \quad \underbrace{ + 1 \cdot \Pr{t}{ \tilde{\Delta}_{t+1} \geq \min\left((1-e^{\chi}) \cdot |U_t|,|U_t|-\alpha\right) } \cdot \left( \log\left( \frac{|U_t|}{\alpha} \right) \right)^2 }_{:=B}, 
\end{align*}
where $(a)$ follows from \cref{lem:variance_basic},
and $(b)$ follows from the definition of expectation, and the fact that $X_1,X_2 \sim \tilde{\Delta}_{t+1}$ are independent.
We will now first bound $A$, and then $B$.
Regarding $A$,
Since $\log(.)$ is concave, we have for any $k_1 \geq k_2$,
\[
 \log\left(1 - \frac{k_2}{|U_t|} \right) \leq \log\left(1 - \frac{k_1}{|U_t|} \right) + \left( \frac{k_1-k_2}{|U_t|} \right) \cdot \frac{1}{1-\frac{k_1}{|U_t|}} \leq \log\left(1 - \frac{k_1}{|U_t|} \right) + \left( \frac{k_1-k_2}{|U_t|} \right) \cdot \frac{1}{e^\chi},
\]
where the last inequality used that $k_1 \leq (1-e^{\chi}) \cdot |U_t| $. Therefore, 
\begin{align*}
    A
    &\leq \frac{1}{2 \cdot e^{2\chi}} 
    \cdot \sum_{0 \leq k_1, k_2 \leq \min\left((1-e^{\chi}) \cdot |U_t|,|U_t|-\alpha\right)} 
      \Pr{t}{ |\Delta_{t+1}| = k_1} \cdot \Pr{t}{ |\Delta_{t+1}| = k_2 } \cdot \left( \frac{k_1}{|U_t|} - \frac{k_2}{|U_t|} \right)^2 \\
      &\leq \frac{1}{2 \cdot e^{2\chi}} \cdot 
      \sum_{0 \leq k_1,k_2 \leq |U_t|}  \Pr{t}{ |\Delta_{t+1}| = k_1} \cdot \Pr{t}{ |\Delta_{t+1}| = k_2 } \cdot \left( \frac{k_1}{|U_t|} - \frac{k_2}{|U_t|} \right)^2. \intertext{For the next step we use \cref{lem:variance_basic} in $(a)$ and the bounded variance property (\cref{lem:bounded_variance_property}) in $(b)$, giving}
     A &\leq \frac{1}{2 \cdot e^{2\chi}} \cdot \EX{t}{\frac{\left(X_1 - X_2\right)^2}{|U_t|}}
       \stackrel{(a)}{=} \frac{1}{ e^{2\chi}} \cdot \VAR{t}{ \frac{|\Delta_{t+1}|}{|U_t|} } \stackrel{(b)}{\leq} \frac{1}{ e^{2\chi}} \cdot \frac{1}{|U_{t}|} \cdot \EX{t}{|\Delta_{t+1}|}= \frac{1}{ e^{2\chi}} \cdot \delta_t.
\end{align*}
In order to upper bound $B$, we perform a case distinction. 

\textbf{First}, if $(1-e^{\chi}) \cdot |U_t| < |U_t| - \alpha$ (which by rearranging, yields $|U_t| > \frac{1}{2 e^{\chi}}$), then
  \cref{eq:upper_tail} gives
\begin{align}
    B &= \Pr{t}{\tilde{\Delta}_{t+1}  \geq (1-e^\chi) \cdot |U_t|} \cdot \left( \log\left( \frac{|U_t|}{\alpha} \right) \right)^2 \notag \\ 
    &\leq \left( \delta_t \right)^{c_1 \cdot |U_t|/2} 
\cdot \left( \delta_t \right)^{c_1 \cdot |U_t|/2} \cdot \left( \log\left(2 \cdot |U_t| \right) \right)^2 \notag \\
 &\stackrel{(a)}{\leq} \delta_t\cdot \left( \delta_t \right)^{c_1 \cdot |U_t|/2} \cdot \left( \log\left(2 \cdot |U_t| \right) \right)^2 \notag \\
&\stackrel{(b)}{\leq} \delta_t \cdot \left(\Cshrink \right)^{c_1 \cdot |U_t|/2 } \cdot \left( \log\left( 2 |U_t| \right) \right)^2 \notag \\
&= \delta_t \cdot \exp\left(\log\left(\Cshrink\right)\cdot c_1/2\cdot |U_t| + \log \log \left(2 \cdot |U_t|\right)\right) \notag \\
&\stackrel{(c)}{\leq} \delta_t \cdot c_3,\label{eq:eqdefofc3}
\end{align} 
where in $(a)$ we used that $c_1\cdot |U_t|/2\geq \frac{c_1}{4e^c} \geq 1  $ as  $c_1 := \frac{\left(1 - \Cshrink\right)^2}{12 e^{4}}$ , in $(b)$ that $\delta_t \leq \Cshrink$ since $\mathcal{P}$ is a $\Cshrink$-shrinking process and $|U_t| \leq n/2$. Lastly, in  $(c)$ we used that $\exp\left(- \log\left( \frac{1}{\Cshrink} \right) c_1/2\cdot |U_t| + \log \log \left(2 \cdot |U_t|\right)  \right)$ can be upper bounded by some constant $c_3 > 0$.

\textbf{Secondly}, if $(1-e^{\chi}) \cdot |U_t| \geq |U_t| - \alpha = |U_t| - \frac{1}{2}$ holds (i.e., $|U_t| \leq \frac{1}{2 e^{\chi}}$), then by Markov's inequality.
\begin{align*}
    \Pro{  \tilde{\Delta}_{t+1} \geq |U_t| - \alpha }
    &= \Pro{  \Delta_{t+1} \geq |U_t| } \leq \frac{ \Ex{ \Delta_{t+1} } }{|U_t|} = \delta_t,
\end{align*}
where the last equality holds since we assume that $|U_t|\leq n/2$ . Therefore, in the second case,
\begin{align*}
    B \leq \delta_t \cdot \left( \log ( 2 \cdot |U_t|) \right)^2 \leq \delta_t \cdot \frac{1}{\chi^2}.
\end{align*}
Combining our bounds on $A+B$ we finally conclude,
\begin{align*}
    \VAR{t}{\log\left( 1 - \frac{\tilde{\Delta}_{t+1}}{|U_t|}\right)} &\leq A+B \leq \frac{1}{e^{2\chi}} \cdot \frac{1}{|I_t|} \cdot \delta_t + 2 \cdot \delta_t \cdot \max\left\{ c_3, \frac{1}{\chi^2} \right\},
\end{align*}
for the constant $c_3 >0$ given implicitly  by \cref{eq:eqdefofc3} and any constant $\chi>0$ satisfying \cref{eq:thecforme}, for $\nu = \frac{1}{e^{2\chi}} + 2 \cdot \max \left\{c_3, \frac{1}{\chi^2}\right\}$, the second bound in the statement follows. 
\end{proof}

\begin{restatable}{claim}{UBpreconSHRINKING}\label{claim:UBpreconSHRINKING}
    Consider the setting of \cref{thm:shrinkingphase}. Let $\tau_3$ be as in \cref{thm:shrinkingphase} and $n/2 \geq C\geq D \geq \frac{3}{4}$. Then, the following holds,
    \begin{align*}
        \sum^{\tau_3-1}_{t=t_2} \delta_t \leq
    4 \cdot \log\left(\frac{C}{D}\right)  - 3 \cdot \log(1-\Cshrink) +2
    \end{align*}
 
\end{restatable}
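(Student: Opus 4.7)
The plan is to mirror the proof of \cref{claim:UBpreconGROWING}, leveraging the minimality of $\tau_3$ together with the shrinking property \BEGT to convert the stopping criterion into an explicit upper bound on $-\sum_{t=t_2}^{\tau_3-1}\log(1-\delta_t)$, and then pass to $\sum \delta_t$ via an elementary inequality. The case $\tau_3 = t_2$ gives an empty sum, so assume $\tau_3 > t_2$. By minimality, the defining condition in \eqref{eq:product_precondition_LB_phaseshrinking} fails at $s = \tau_3 - 1$, yielding
\[
\sum_{t=t_2}^{\tau_3-2} \log(1-\delta_t) > -\frac{1}{\gamma}\left(\log(C/D) + \bigl(\log(C/D) - \log(1-\Cshrink) + 1\bigr)^{2/3}\right).
\]
Adding the single term $\log(1-\delta_{\tau_3-1}) \geq \log(1-\Cshrink)$ (where the inequality uses \BEGT applied to round $\tau_3 - 1$, at which $|U_{\tau_3-1}| \leq n/2$ by monotonicity \MonoT) gives
\[
-\sum_{t=t_2}^{\tau_3-1}\log(1-\delta_t) \leq \frac{1}{\gamma}\left(\log(C/D) + \bigl(\log(C/D) - \log(1-\Cshrink) + 1\bigr)^{2/3}\right) - \log(1-\Cshrink).
\]
The case $\tau_3 = \infty$ is handled by taking $s \to \infty$ in the failed criterion (the limit of the partial sums exists since they are monotone decreasing and must be bounded below, or else some finite $s$ triggers the criterion), producing the same bound without the $-\log(1-\Cshrink)$ correction.

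Next I would simplify the right-hand side using two elementary observations. First, $\gamma = 1 - \min\bigl(\tfrac{1}{2(1-\Cshrink)D},\tfrac{1}{2}\bigr) \geq \tfrac{1}{2}$, so $1/\gamma \leq 2$. Second, the quantity $y := \log(C/D) - \log(1-\Cshrink) + 1$ satisfies $y \geq 1$, since $C \geq D$ forces $\log(C/D) \geq 0$ and $\Cshrink \in [0,1)$ forces $-\log(1-\Cshrink) \geq 0$; hence $y^{2/3} \leq y$. Inserting both bounds into the previous inequality and collecting terms gives
\[
-\sum_{t=t_2}^{\tau_3-1}\log(1-\delta_t) \leq 2\log(C/D) + 2y - \log(1-\Cshrink) = 4\log(C/D) - 3\log(1-\Cshrink) + 2.
\]

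Finally, the termwise inequality $-\log(1-z) \geq z$ for $z \in [0,1)$ (applicable since $\delta_t \leq \Cshrink < 1$ by \BEGT) yields $\sum_{t=t_2}^{\tau_3-1}\delta_t \leq -\sum_{t=t_2}^{\tau_3-1}\log(1-\delta_t)$, matching the claimed bound. The argument is a near-verbatim shrinking-phase analogue of \cref{claim:UBpreconGROWING}; no step is technically challenging, and the only real care required is propagating constants through the $1/\gamma$ factor and the $(\cdot)^{2/3}$ slack so that they land precisely on the coefficients $4$, $-3$, and $+2$ in the stated bound.
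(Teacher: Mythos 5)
Your proposal is correct and follows essentially the same route as the paper's proof: minimality of $\tau_3$ plus $\log(1-\delta_{\tau_3-1}) \geq \log(1-\Cshrink)$ from \BEGT, then $1/\gamma \leq 2$ and $y^{2/3} \leq y$ to land on the coefficients $4$, $-3$, $+2$, and finally $-\log(1-z)\geq z$ termwise. The only (harmless) difference is that you explicitly treat the $\tau_3 = t_2$ and $\tau_3 = \infty$ edge cases, which the paper leaves implicit.
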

\begin{proof}
    As the round $\tau_3$ was chosen to be minimal in \cref{eq:product_precondition_LB_phaseshrinking}
    \begin{align}
        \sum^{\tau_3-1}_{t=t_2}\log\left(1 - \delta_{t}\right) &\geq -\frac{1}{\gamma}\cdot \left(\log\left(\frac{C}{D}\right) + \left(\log\left(\frac{C}{D}\right) - \log\left(1 - \Cshrink\right) + 1\right)^{2/3}\right) + \log\left(1 - \delta_{\tau_3-1}\right) \notag\\
        &\stackrel{(a)}{\geq} -\frac{1}{\gamma} \cdot \left(\log\left(\frac{C}{D}\right) + \left(\log\left(\frac{C}{D}\right) - \log\left(1 - \Cshrink\right) + 1\right)^{2/3}\right) +  \log\left(1 - \Cshrink\right) \label{eq:RHSshrink}
    \end{align}
    where $(a)$ follows from $\mathcal{P}$ being a $\Cshrink$-shrinking process and $|U_t|\leq C \leq n/2$, and thus by (\BEGT), $\delta_{\tau_3 -1} \leq \Cshrink$. We note that the right-hand side of \cref{eq:RHSshrink} is minimized (i.e., the absolute value is maximized) if we use the bound $\gamma \geq 1/2$ (which is the minimum value $\gamma$ can take), then we conclude that
    \begin{align*}
    \sum^{\tau_3-1}_{t=t_2}\log\left(1 - \delta_{t}\right) &\geq -2 \cdot \left(\log\left(\frac{C}{D}\right) + \left(\log\left(\frac{C}{D}\right) - \log\left(1 - \Cshrink\right) + 1\right)^{2/3}\right) + \log\left(1 - \Cshrink\right)\\
    &\stackrel{(a)}{\geq} -2 \cdot \left(\log\left(\frac{C}{D}\right) + \left(\log\left(\frac{C}{D}\right) - \log\left(1 - \Cshrink\right) + 1\right) \right) + \log\left(1 - \Cshrink\right) \\
    &= -4 \cdot \log\left(\frac{C}{D}\right) + 3\cdot \log(1-\Cshrink) -2,
    \end{align*}
    where $(a)$ used that $\log\left(\frac{C}{D}\right) > 0$ and $-\log(1- \Cshrink) > 0$.  As $-x \geq \log(1 - x)$ for any $x > 0 $, we have
    \begin{align*}
         - \sum^{\tau_3-1}_{t=t_2} \delta_t \geq \sum^{\tau_3 -1}_{t=t_2}\log\left(1 - \delta_t\right) \geq -4 \cdot \log\left(\frac{C}{D}\right) + 3 \cdot \log(1-\Cshrink) -2,
    \end{align*}
    and multiplying both sides by $-1$ completes the proof.
\end{proof}

The next lemma is based on Jensen's inequality and the fact that $|\Delta_{t+1}|$ and $\tilde{\Delta}_{t+1}$ are identical except when the process informs all vertices.

\begin{restatable}{lemma}{simprevjensentwo}\label{lem:simple_reversed_jensen_two}
	Consider the setting of \cref{thm:shrinkingphase}. For any round $t \geq 0$ with $n/2 \leq |I_t| \leq n-1$, 
	\[
	\EX{t}{ X_t }   
	\leq \left(1 - \min\left( \frac{1}{2 \left(1 - \Cshrink\right) \cdot |U_t|} , \frac{1}{2} \right) \right)\cdot \log\left(1 - \delta_t \right), 
	\]   
\end{restatable}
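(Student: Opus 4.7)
The plan is to apply Jensen's inequality to a bounded random variable and then quantify the small discrepancy caused by the truncation that defines $\tilde\Delta_{t+1}$.

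To begin, I would write $X_t=\log(1-Z)$ where $Z:=\tilde\Delta_{t+1}/|U_t|=\min(\Delta_{t+1}/|U_t|,\,1-1/(2|U_t|))\in[0,\,1-1/(2|U_t|)]$. Since $z\mapsto\log(1-z)$ is concave on this interval, Jensen's inequality yields
\[
\EX{t}{X_t}\le \log\bigl(1-\EX{t}{Z}\bigr).
\]
The random variable $Z$ coincides with $\Delta_{t+1}/|U_t|$ except on the event $\{\Delta_{t+1}=|U_t|\}$, on which the two differ by exactly $1/(2|U_t|)$. Setting $p:=\Pr{t}{\Delta_{t+1}=|U_t|}$, this gives the identity $\EX{t}{Z}=\delta_t-p/(2|U_t|)$, while Markov's inequality applied to $\Delta_{t+1}\ge 0$ yields $p\le \EX{t}{\Delta_{t+1}}/|U_t|=\delta_t\le\Cshrink$.

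The bulk of the work is then to show that the shift from $\delta_t$ to $\delta_t-p/(2|U_t|)$ inside $\log(1-\cdot)$ costs only a multiplicative factor of $1-1/(2|U_t|(1-\Cshrink))$ on $\log(1-\delta_t)$. Using the elementary inequalities $\log(1+y)\le y$ for $y\ge 0$ and $-\log(1-\delta_t)\ge \delta_t$, I get
\[
\log\!\Bigl(1-\delta_t+\tfrac{p}{2|U_t|}\Bigr)-\log(1-\delta_t)
= \log\!\Bigl(1+\tfrac{p/(2|U_t|)}{1-\delta_t}\Bigr)
\le \tfrac{p}{2|U_t|(1-\Cshrink)}
\le \tfrac{-\log(1-\delta_t)}{2|U_t|(1-\Cshrink)}.
\]
Rearranging produces $\EX{t}{X_t}\le\bigl(1-1/(2|U_t|(1-\Cshrink))\bigr)\log(1-\delta_t)$, which is exactly the first argument of the $\min$ in the statement.

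The main obstacle will be accommodating the $1/2$ cap inside the $\min$: when $|U_t|<1/(1-\Cshrink)$ the factor $1-1/(2|U_t|(1-\Cshrink))$ drops below $1/2$ (and can even become negative), so the chain of estimates above is too weak. In that regime I would refine the argument by decomposing $\EX{t}{X_t}=(1-p)\,\EX{t}{\log(1-\Delta_{t+1}/|U_t|)\mid\Delta_{t+1}<|U_t|}+p\log(1/(2|U_t|))$, applying Jensen only to the conditional expectation on the complement of the bad event, and then bounding the truncation term $p\log(1/(2|U_t|))$ against $\log(1-\delta_t)$ directly using $p\le\delta_t$ to conclude $\EX{t}{X_t}\le\tfrac{1}{2}\log(1-\delta_t)$. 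Taking the minimum of the two bounds completes the proof.
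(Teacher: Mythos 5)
Your treatment of the first branch of the minimum is correct and is essentially the paper's own argument: Jensen applied to $\log\bigl(1-\tilde{\Delta}_{t+1}/|U_t|\bigr)$, the observation that $\tilde{\Delta}_{t+1}$ and $|\Delta_{t+1}|$ differ only on $\{|\Delta_{t+1}|=|U_t|\}$ together with Markov's inequality to get $p\le\delta_t$, and then an elementary logarithmic estimate converting the additive shift $p/(2|U_t|)$ into the multiplicative factor $1-\tfrac{1}{2(1-\Cshrink)|U_t|}$. The only difference is cosmetic: you do the last step by hand via $\log(1+y)\le y$ and $\delta_t\le-\log(1-\delta_t)$, whereas the paper routes it through \cref{clm:log_claim_new}.

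The gap is in the second branch. Your plan of conditioning on the bad event, applying Jensen on its complement, and ``bounding $p\log(1/(2|U_t|))$ against $\log(1-\delta_t)$ using $p\le\delta_t$'' is only a sketch, and no elaboration of it can deliver $\EX{t}{X_t}\le\tfrac{1}{2}\log(1-\delta_t)$ in the stated generality: take $|U_t|=1$ and $|\Delta_{t+1}|$ Bernoulli with parameter $\delta_t=\Cshrink$ close to $1$; then $\EX{t}{X_t}=-\delta_t\log 2$ stays bounded while $\tfrac{1}{2}\log(1-\delta_t)\to-\infty$, so the target inequality fails. What your estimates ($p\le\delta_t$, hence $\EX{t}{\tilde{\Delta}_{t+1}}\ge\tfrac{1}{2}\delta_t|U_t|$) actually yield is $\EX{t}{X_t}\le\log(1-\delta_t/2)\le-\delta_t/2$, and since $\log(1-\delta_t)\ge-\delta_t/(1-\delta_t)$ this converts into a coefficient of order $(1-\Cshrink)/2$, not $\tfrac{1}{2}$. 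I should note that the paper's proof of this branch follows exactly this route and then asserts $-\delta_t/2\le\tfrac{1}{2}\log(1-\delta_t)$, which is the reverse of the true inequality (by concavity $\log(1-\delta_t/2)\ge\tfrac12\log(1-\delta_t)$), so this comparison is the delicate point there as well; your sketch does not resolve it, and to close it you would need either $\delta_t$ bounded away from $1$ or a $\Cshrink$-dependent constant in place of $\tfrac{1}{2}$.
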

\begin{proof}
	Recall that $\alpha=1/2$ and $\tilde{\Delta}_{t+1} := \min\left( |\Delta_{t+1}|, |U_t| - \alpha \right)$ from \eqref{eq:deltatilde}.	First, by Jensen's inequality,
	\begin{align}
		\EX{t}{ X_t } &= \EX{t}{ \log\left( 1 - \frac{\tilde{\Delta}_{t+1}}{|U_t|} \right) } \leq \log\left( 1 - \EX{t}{\frac{\tilde{\Delta}_{t+1}}{|U_t|}} \right). \label{eq:erste}
	\end{align}
	Note that the random variables $\tilde{\Delta}_{t+1} $ and $|\Delta_{t+1}|$ agree unless $|\Delta_{t+1}|=|U_{t}|$, i.e., all vertices get informed. For that the probability can be bounded by Markov's inequality,
	\begin{equation}\label{eq:deltabound}
	\Pro{ |\Delta_{t+1}|=|U_{t}| } \leq \delta_t,
	\end{equation}
	where we used the assumption that $|U_t|\leq n/2$. Hence,
	\[ 
		\EX{t}{\tilde{\Delta}_{t+1}}  = \sum_{i=1}^{|U_t|-1}\Pr{t}{|\Delta_{t+1}|=i}\cdot i +  \Pr{t}{|\Delta_{t+1}| = |U_t|} \cdot \left(|U_t|-\frac{1}{2}\right) =\EX{t}{|\Delta_{t+1}|} - \frac{1}{2}\cdot \Pr{t}{|\Delta_{t+1}| = |U_t|} . \] Now by \eqref{eq:deltabound}, and using $|U_t|\leq n/2$ in $(a)$, we have\begin{equation}
\EX{t}{\tilde{\Delta}_{t+1}}		\geq \EX{t}{|\Delta_{t+1}|} - \delta_t \cdot \frac{1}{2} \stackrel{(a)}{=} \delta_t \cdot |U_t| - \delta_t \cdot \frac{1}{2} = \delta_t \cdot |U_t| \cdot \left(1 - \frac{1}{2\cdot |U_t|}\right), \label{eq:upperboundExlem3.9}
\end{equation}
  Using this in \cref{eq:erste}
	we have, 
	\begin{align*}
		\EX{t}{X_t} &\leq \log\left( 1 - \frac{1}{|U_t|} \cdot \EX{t}{  \tilde{\Delta}_{t+1}}  \right) \\
		&\leq \log\left(1 - \left(1 - \frac{1}{2 |U_t|} \right) \cdot \delta_t \right) \\
		&\stackrel{(a)}{\leq} \left(1 - \frac{1}{2(1-\Cshrink) \cdot |U_t|} \right) \cdot \log\left(1 - \delta_t \right),
	\end{align*}
	where $(a)$ is by \cref{clm:log_claim_new}, having used the fact that we are working with a $\Cshrink$-shrinking process, and thus $\delta_t \leq \Cshrink < 1$ by (\BEGT). For the second factor in the minimum expression, note that by \cref{eq:upperboundExlem3.9}
	\[
	\EX{t}{ \tilde{\Delta}_{t+1} } \geq \frac{1}{2} \cdot \delta_t \cdot |U_t|,
	\]
	as $|U_t| \geq 1$ by assumption on $t$.
	Then, 
	\begin{equation*}
		\EX{t}{X_t} \leq \log\left( 1 - \frac{1}{|U_t|} \cdot \EX{t}{  \tilde{\Delta}_{t+1}}  \right)  \leq  \log\left( 1 - \frac{1}{2} \cdot \delta_t  \right). 
	\end{equation*}
Now, using the fact that $-2x\leq \log(1-x)\leq -x$ when $x\in [0,1/2]$, and noting that $ 0\leq \frac{1}{2}\cdot \delta_t \leq \frac{1}{2}\cdot < \frac{1}{2}$ by (\BEGT), we obtain
\begin{equation*}
		\EX{t}{X_t}\leq  \log\left( 1 - \frac{1}{2} \cdot \delta_t  \right)\leq - \frac{1}{2} \cdot \delta_t\leq \frac{1}{2} \cdot\log\left( 1 -  \delta_t  \right).
	\end{equation*}The result follows from taking the minimum of both cases. 
\end{proof}
After these preparations, we are now ready to prove \cref{thm:shrinkingphase}.
\begin{proof}[Proof of \cref{thm:shrinkingphase}]
In our proof it is convenient to assume that $t_2 = 0$, i.e.\ initially $|U_0 |\leq D \leq n/2$.  Let $\tau = \min\{t\geq 0: U_t = 0\}$, and as usual $\min \emptyset = 0$. Recall that, for $t< \tau \wedge \tau_3$
\begin{equation*}
        X_t:= \log \left(  1 - \frac{\tilde{\Delta}_{t+1}}{|U_t|} \right). 
    \end{equation*}
For technical reasons, we need to extend the definition of $X_t$ for $t\geq \tau \wedge \tau_3$, so we set $X_t:= 0$ for $t\geq \tau \wedge \tau_3$ but this does not affect the result, as we will not look further than $\tau_3$ and if we reach $\tau$ then the process is over.  Similarly to the growing phase, let us define for $t <\tau \wedge \tau_3$,
\begin{equation*}
    Y_t:= \sum^{t-1}_{t=0}\left(X_{i+1} - \EX{i}{X_{i}}\right)
\end{equation*}
with $Y_0 := 0$. By construction, $\left(Y_{t\wedge \tau_3 \wedge \tau} \right)_{t\geq 0}$ is a zero-mean martingale with respect to $U_{t_2},U_{t_2+1},\dots, U_{\tau_3-1}$. We would like to apply \cref{lem:azuma_variance_UT}, to upper bound the number of uninformed vertices. To this end, we need to provide a bound on $Y_{t+1} - Y_{t}$ when $t <\tau \wedge \tau_3$. We note that,
\begin{align*}
   Y_{t+1} - Y_{t} &= \sum^{t}_{i=0}\left(X_i - \EX{i}{X_i}\right) - \sum^{t-1}_{i=0}\left(X_i - \EX{i}{X_i}\right)=  X_t - \EX{t}{X_t}.
\end{align*}
Now, using that $X_t = \ind_{\{t<\tau\wedge \tau_3\}}\log\left(1 - \frac{\tilde{\Delta}_{t+1}}{|U_t|}\right) \leq 0$, and then the first statement of \cref{lem:pseudo_capping} (first statement) we get for $t <\tau\wedge \tau_3$
   \begin{align*}
 Y_{t+1} - Y_{t}   &\leq 0 - \EX{t}{\log\left(1 - \frac{\tilde{\Delta}_{t+1}}{|U_t|}\right)}\leq \kappa:= M,
\end{align*}
for some $\kappa>0$. Further, applying \cref{lem:pseudo_capping} (second statement) yields that for any $t\geq 0$,
\begin{align*}
    \VAR{t}{X_t}    \leq \nu \cdot \delta_t\ind_{\{t<\tau \wedge \tau_3\}} ,
\end{align*}
for some constant $\nu > 0$. Next, by \cref{claim:UBpreconSHRINKING} it follows that for any $T\geq 0$ that
\begin{equation*}
   \sum^{T-1}_{t=0} \VAR{t}{X_t} \leq \nu \sum^{T}_{t=0} \ind_{\{t<\tau \wedge \tau_3\}}\delta_t \leq \nu \sum^{\tau_3}_{t=0} \delta_t \leq \nu \cdot 4 \left(\log\left(\frac{C}{D}\right) - \log\left(1 - \Cshrink\right) + 1\right).
\end{equation*}
Now, we can apply the martingale concentration inequality of \cref{lem:azuma_variance_UT} to the stopped martingale $\hat Y_{t} = Y_{t\wedge \tau \wedge \tau_3}$ to show that for any $t\geq 0$ and any $h>0$, 

\begin{equation}\label{eq:preazum}
    \Pr{t_2}{\hat Y_{t} > h}\leq \exp\left(- \frac{h^2}{2 \cdot \left(\nu \cdot4 \left(\log\left(\frac{C}{D}\right) - \log\left(1 - \Cshrink\right) + 1\right) + \frac{1}{3}\cdot h\cdot \kappa\right)}\right). 
\end{equation}
  Let us set \begin{equation} \label{eq:defofhU} h:= \left(\log\left(\frac{C}{D}\right) - \log(1 - \Cshrink) + 1\right)^{2/3} \geq 1.\end{equation}
  Then, by \cref{eq:preazum}, we have

  \begin{equation}\label{eq:unvisitedazuma}
  	\Pr{t_2}{\hat Y_{t} >  h}  \leq \exp\left(-\frac{h^2}{2 \cdot \left(\nu \cdot 4\cdot h^{3/2} + \frac{1}{3}\cdot h \cdot \kappa\right)}\right) \leq  \exp\left(-C_2 \cdot h^{1/2}\right),
  \end{equation}
 where $C_2:= (8\cdot \nu + 2\kappa/3)^{-1}$ is a  constant.  We now claim:
 \begin{equation}\label{eq:claim3.6}
 	\text{Conditional on $|U_{t_2}|\leq C$, we have } \{Y_{\tau_3 \wedge \tau} \leq   h\}\cap 	\{\tau_3\wedge\tau<\infty\} \subseteq  \{ |U_{\tau_3\wedge \tau }|\leq D\}\cap \{\tau_2\wedge\tau<\infty\}.
 \end{equation} We prove this later, first we show how this establishes the theorem.  By \eqref{eq:unvisitedazuma}, we have 
 \begin{equation}\label{eq:probfiniteU} \Pr{t_2}{Y_{\tau_3 \wedge \tau}>  h,\;  \tau_2\wedge \tau <\infty} \leq \exp\left(-C_2\cdot h^{1/2} \right).\end{equation}
 Observe  that  $|U_{\tau_3}| \leq |U_{\tau_3\wedge \tau }|  $ by monotonicity (\MonoT). Using this fact, then \eqref{eq:claim3.6}, and finally \eqref{eq:probfiniteU}, we have
 \begin{align}
 	\Pr{t_2}{ |U_{\tau_3}| >D  ~\Big|~ |U_{t_2}| \leq B }&\leq 	\Pr{t_2}{ |U_{\tau_3\wedge\tau_3}| >D   ~\Big|~ |U_{t_2}| \leq B} \notag \\
 	&=   \Pr{t_2}{ |U_{\tau_3\wedge\tau_3}| >D,  \tau_2\wedge \tau <\infty ~\Big|~ |U_{t_2}| \leq B}\notag \\
 	&\qquad+ \Pr{t_2}{|U_{\tau_3\wedge\tau_3}| >D, \tau_2\wedge\tau =\infty ~\Big|~ |U_{t_2}| \leq B }\nonumber\\
 	&\leq   \Pr{t_2}{ Y_{\tau_3 \wedge \tau}> h,  \tau_2\wedge \tau <\infty ~\Big|~ |U_{t_2}| \leq B }\notag \\
 	&\qquad+ \Pr{t_2}{  \tau_2\wedge\tau =\infty ~\Big|~ |U_{t_2}| \leq B }\nonumber\\
 	&\leq \exp\left(-C_2\cdot h^{1/2} \right)+ \Pr{t_2}{  \tau_2  =\infty ~\Big|~ |U_{t_2}| \leq B }\notag, 
 \end{align}
 which, recalling the definition \eqref{eq:defofhU} of $h$, gives the bound in the statement. 
 
 It remains to prove the claimed containment in \eqref{eq:claim3.6}. For that we analyze the behavior of $|U_{\tau_3\wedge \tau }|$ when the event $ \{Y_{\tau_3 \wedge \tau} \leq   h\}\cap 	\{\tau_3\wedge\tau<\infty\}$ holds. We will split into two cases. 
 
 \medskip 
 
 \noindent In the first case $\{Y_{\tau_3 \wedge \tau} \leq  h\}\cap 	\{\tau<\infty, \tau \leq  \tau_3\}$. Hence, $|U_{\tau_3\wedge \tau }| = |U_{\tau}| =0\leq D$. 
 
 \medskip
 
 \noindent In the second case $\{Y_{\tau_3 \wedge \tau} \leq  h\}\cap 	\{\tau_3<\infty, \tau_3 <  \tau\}$. Thus, $Y_{\tau_3\wedge \tau}  = Y_{\tau_3} $, and so (deterministically) we have
\begin{equation*}
    Y_{\tau_3} = \sum^{\tau_3-1}_{t=0}\left(X_t - \EX{t}{X_t}\right) \leq h.
\end{equation*}
Rearranging this and applying \cref{lem:simple_reversed_jensen_two}, we get that,
\begin{equation*}
    \sum^{\tau_3 -1}_{t=0}X_t  \leq \sum^{\tau_3-1}_{t=0}\EX{t}{X_t} + h \stackrel{(a)}{\leq} \gamma\cdot \sum^{\tau_3-1}_{t=0}\log\left(1 - \delta_t\right)+h, \end{equation*} where $\gamma := \left(1 - \min\left( \frac{1}{2 (1-\Cshrink) \cdot D} ,  \frac{1}{2} \right) \right).$ Now, by our choice of $h$ from \cref{eq:defofhU},
    \begin{equation*}  \sum^{\tau_3 -1}_{t=0}X_t  \leq \gamma\cdot\sum^{\tau_3-1}_{t=0}\log\left(1 - \delta_t\right) + \left(\log\left(\frac{C}{D}\right) - \log\left(1 - \Cshrink\right) + 1\right)^{2/3} =: K.
\end{equation*}
To determine a sufficient precondition on the sum over the $X_t$'s, recall that we want to achieve the conclusion that $\sum^{\tau_3-1}_{t=0}X_t\leq \log\left(\frac{D}{C}\right)$ (see \cref{eq:goal}), which is implied by $K \leq \log\left(\frac{D}{C}\right)$. Rearranging that, we conclude
\begin{align*}
    \sum^{\tau_3-1}_{t=0}\log(1 - \delta_t) \leq -\frac{1}{\gamma}\cdot \left(\log\left(\frac{C}{D}\right) + \left(\log\left(\frac{C}{D}\right) - \log\left(1 - \Cshrink\right) + 1\right)^{2/3}\right),
\end{align*}
which is the condition on $\tau_3$ stated in the theorem.  
\end{proof}

\section{Applications}\label{sec:applications}

In this section we will apply our general results to more concrete credibility functions, protocols and graph classes. We do not give an exhaustive list of all results that could be derived from our analysis framework, but instead choose to analyze some natural models with decaying credibility, and show that despite the flexible and abstract nature of the framework, we can recover some known results. Roughly speaking, in this section we will first present results that are very general but not necessarily tight, followed by more specific results that are asymptotically tight up to lower order terms.
 
We will now outline the general approach followed in this section. To control the growth of $|I_t|$ we break the process into $j$ phases defined by time steps $[t_i,t_{i+1})$ for $1 \leq i \leq j$. With each phase $i$ we associate two values $A_i$ and $B_i$, where $A_i<B_i$, such that at the beginning of the $i$-th phase the informed set has size at least $A_i$ and w.h.p.\ when the phase ends the informed set has size at least $B_i$. We use the size of the informed set at the end of the previous phase as a lower bound on the size of the informed set throughout the current phase (i.e.\ $B_{i-1} = A_{i})$. The w.h.p. guarantees on the length and growth of phases are provided by \cref{cor:AzumaGrowing} and \cref{cor:shrinkingphase} (which are direct consequences of \cref{thm:AzumaGrowing} and \cref{thm:shrinkingphase} respectively). These results also give us expressions for the time to finish the phase i.e.\ $t_{i+1}-t_i$.

\begin{definition}\label{def:deltatI}
For a round $t \geq 0$ and any subset $I \subseteq V$ with $1 \leq |I| \leq n-1$, let
    \begin{equation}\label{eq:deltat}\delta_t(I) := \EX{t}{ \delta_t \, \mid \, I_t = I } = \frac{1}{\min(|I_t|,|U_t|)} \cdot \EX{t}{ |\Delta_{t+1}| \, \mid \, I_t = I },\end{equation}
    be the expected growth factor, conditional on $I_t=I$ (this is in fact, a deterministic quantity). 
    Further, for a fixed range of $[A,B]$, we define a worst-case lower bound on the expected growth factor (which only depends on $t$) by 
  \begin{equation}\label{eq:deltparam}
        \delta_t^{[A,B]} := \min_{I \subseteq V \colon A \leq |I| \leq B} \delta_t(I).
  \end{equation}
\end{definition}
Note that $\delta_t(I)$ depends on the structure of the set $I$ (e.g., the conductance), as well as on $q(t)$. However, for the more coarse quantity $\delta_t^{[A,B]}$, we only need $A \leq |I| \leq B$. In order to separate these two factors, we also define the following deterministic quantities, 

 \begin{equation}
     \Phigrow(t) := \min_{\substack{I \subseteq V \colon \\ 1 \leq |I| \leq n-1}} \frac{\delta_t(I)}{q(t)} \qquad \text{and}\qquad\Psigrow(t) := \max_{\substack{I \subseteq V \colon \\ 1 \leq |I| \leq  n-1}} \frac{\delta_t(I)}{q(t)}. \label{eq:phigrowtrivial}
 \end{equation}
 Moreover, we define $\Phigrow := \min_{t \geq 0} \Phigrow(t)$ and $\Psigrow := \max_{t \geq 0} \Psigrow(t)$.

\begin{definition}
    For any subset $I \subseteq V$ with $1 \leq |I| \leq k \leq n-1$, 
    \begin{equation*}
        \phi_k := \min_{1 \leq |I| \leq k}\varphi(I).
    \end{equation*}
\end{definition}

The following corollary is a direct consequence of \cref{thm:AzumaGrowing}.

 \begin{corollary}\label{cor:AzumaGrowing}
  Let $(G_t)_{t\geq 0}$ be any sequence of regular $n$-vertex graphs and consider a $\Cgrow$-growing process $\mathcal{P}$. Let $A,B$ be thresholds satisfying $1 \leq A \leq B \leq n/2$. Moreover, let $\nu_t^{[A,B]}$ be deterministic quantities such that $\nu_t^{[A,B]} \leq \delta_t^{[A,B]}$ for all $t \geq 0$. Let $t' \geq 0$ be any round such that $|I_{t'}|\geq A$, and define $t^* \in \mathbb{N}$ as
\begin{equation}
   t^* := \min \left\{ s \geq t' \colon \sum^{s-1}_{t=t_1} \log \left(1 +\nu_t^{[A,B]} \right) \geq \frac{ \log\left( \frac{B}{A} \right) + \left( \log\left( \frac{B}{A} \right) +  \log(1+\Cgrow) + 1 \right)^{2/3}}{\left(1-(1-\xi)\cdot A^{-\xi} \right)^2}\right\}, \label{eq:cor_prod_pre_grow}  
\end{equation}
where, $\xi:=10^{-30}$. Assume that $t^* < \infty$, then there is a constant $C_2 > 0$ such that
\begin{equation*}
    \Pr{t'}{ |I_{t^*}| \geq B  ~\Big|~ |I_{t'}| \geq A } \geq 1 - \exp\left( - C_2 \cdot \left( \log\left( \frac{B}{A} \right) \right)^{1/3}  \right).
\end{equation*}   
 \end{corollary}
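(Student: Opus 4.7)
The plan is to reduce to \cref{thm:AzumaGrowing} applied with $t_1:=t'$ and thresholds $A, B$, letting $\tau_2$ denote the random stopping time produced there. The core claim is the pathwise inclusion
\[
\left\{|I_{t'}|\geq A,\; |I_{t^*}| < B\right\}\;\subseteq\; \left\{|I_{t'}|\geq A,\; \tau_2 \leq t^*,\; |I_{\tau_2}| < B\right\}.
\]
Once this is established, the exponential bound on the right-hand side is supplied by the martingale-concentration step inside the proof of \cref{thm:AzumaGrowing}, which yields $\Pr{t'}{|I_{\tau_2\wedge\tau}|<B,\;\tau_2\wedge\tau<\infty\mid |I_{t'}|\geq A} \leq \exp(-C_2(\log(B/A))^{1/3})$ with $\tau:=\min\{t:|I_t|\geq n/2\}$; the event on the right-hand side of the inclusion is contained in this, so the additive ``$\Pr{t'}{\tau_2=\infty\mid \cdots}$'' term that appears in \cref{thm:AzumaGrowing}'s statement does not show up here.

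To prove the inclusion, fix a sample path with $|I_{t'}|\geq A$ and $|I_{t^*}|<B$. By (\Mono), $|I_s|$ is non-decreasing, so $A\leq |I_s|<B$ for every $s\in[t',t^*]$. First, $|I_s|^{-\xi}\leq A^{-\xi}$ gives $\bigl(1-(1-\xi)|I_s|^{-\xi}\bigr)^2\geq\bigl(1-(1-\xi)A^{-\xi}\bigr)^2$, so the random threshold in the definition of $\tau_2$ is pointwise no larger than the deterministic threshold in the definition of $t^*$. Second, by \eqref{eq:deltat} and \eqref{eq:deltparam} together with the hypothesis $\nu_s^{[A,B]}\leq\delta_s^{[A,B]}$, we have $\delta_s\geq\nu_s^{[A,B]}$ for each $s\in[t',t^*-1]$. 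Summing and using the defining property of $t^*$,
\[
\sum_{t=t'}^{t^*-1}\log(1+\delta_t)\;\geq\;\sum_{t=t'}^{t^*-1}\log(1+\nu_t^{[A,B]})\;\geq\;\frac{\log(B/A)+(\log(B/A)+\log(1+\Cgrow)+1)^{2/3}}{(1-(1-\xi)A^{-\xi})^2}.
\]
Combined with the threshold comparison above, this shows \cref{thm:AzumaGrowing}'s stopping criterion is met at time $t^*$, giving $\tau_2\leq t^*$. Monotonicity then yields $|I_{\tau_2}|\leq|I_{t^*}|<B$, completing the inclusion.

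The main technical subtlety is simply keeping the inequality directions consistent: both substitutions ($\nu_t^{[A,B]}\leq\delta_t$ on the left, $|I_t|^{-\xi}\leq A^{-\xi}$ on the right) push the Corollary's stopping criterion in the more conservative direction, so the Theorem's criterion fires no later than $t^*$. No new probabilistic ingredient beyond \cref{thm:AzumaGrowing} (or equivalently, a rerun of its Azuma argument with the deterministic $t^*$ in place of $\tau_2$) is required, and the assumption $t^*<\infty$ is what frees us from the $\Pr{t'}{\tau_2=\infty\mid\cdots}$ contribution in that theorem's statement.
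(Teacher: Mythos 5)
Your proposal is correct and is exactly the reduction the paper intends (the paper states the corollary as a ``direct consequence'' of \cref{thm:AzumaGrowing} without writing it out): you correctly observe that on the bad event $A\leq|I_s|<B$ throughout, so both substitutions ($\nu_t^{[A,B]}\leq\delta_t$ and $|I_t|^{-\xi}\leq A^{-\xi}$) force $\tau_2\leq t^*$, and you rightly invoke the internal bound \eqref{eq:probfinite} rather than the theorem's final statement so that the $\Pr{t'}{\tau_2=\infty\mid\cdot}$ term genuinely disappears. No gaps.
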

The following corollary is a direct consequence of \cref{thm:shrinkingphase}.
 \begin{corollary}\label{cor:shrinkingphase}
  Let $(G_t)_{t\geq 0}$ be any sequence of regular $n$-vertex graphs and consider a $\Cshrink$-shrinking process $\mathcal{P}$. Let $C,D$ be thresholds that satisfy $n/2 \geq C\geq D\geq \frac{3}{4}$. Moreover, let $\nu_t^{[C,D]}$ be deterministic quantities such that $\nu_t^{[C,D]} \leq \delta_t^{[C,D]}$ for all $t\geq 0$. Let $t'\geq 0$ be a round such that $|U_{t'}|\leq C$. We define $\hat{t} \in \mathbb{N}$ as
\begin{equation}
   \hat{t} := \min \left\{ s \geq t' \colon \sum^{s -1}_{t=t_2}\log\left(1 - \nu_t^{[C,D]} \right) \leq  - \frac{1}{\tilde{\gamma}} \cdot \left(\log\left(\frac{C}{D}\right) + \left(\log\left(\frac{C}{D}\right) - \log\left(1 - \Cshrink\right) + 1\right)^{2/3}\right)\right\}, \label{eq:cor_product_precondition_LB_phaseshrinking}
\end{equation}
where
\begin{equation*}
\tilde{\gamma} := \left(1 - \min\left( \frac{1}{2 (1-\Cshrink) \cdot D} , \frac{1}{2} \right) \right).
\end{equation*}
Assume that $\hat{t} < \infty$, then there is a constant $C_2 >0$ such that
\begin{equation*}
    \Pr{t'}{ |U_{\hat{t}}| \leq D  ~\Big|~ |U_{t'}| \leq C } \geq 1 - \exp\left(- C_2 \cdot \left(\log\left(\frac{C}{D}\right)\right)^{1/3}\right).
\end{equation*}   
 \end{corollary}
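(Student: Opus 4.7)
The plan is to derive the corollary as a direct consequence of \cref{thm:shrinkingphase} by showing that the deterministic stopping time $\hat{t}$ dominates the (random) stopping time $\tau_3$ on the relevant event. The key observation is that, by \cref{def:deltatI}, whenever $|U_t| \in [D,C]$ (equivalently, $|I_t|$ in the corresponding range), we have $\delta_t \geq \delta_t^{[C,D]} \geq \nu_t^{[C,D]}$ deterministically, which means $\log(1-\delta_t) \leq \log(1-\nu_t^{[C,D]})$. Since $|U_{t'}|\leq C$ and the shrinking process is monotone (\MonoT), $|U_t|\leq C$ holds automatically for all $t\geq t'$, so the only case where the inequality above fails is when $|U_t|$ has already dropped below $D$.

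First, I would define the auxiliary hitting time $\sigma := \min\{s \geq t' : |U_s| \leq D\}$ (possibly infinite). By (\MonoT), if $\sigma \leq \hat{t}$, then $|U_{\hat{t}}|\leq|U_\sigma|\leq D$ deterministically, and the conclusion follows with probability $1$ on that event. So it remains to handle the case $\sigma > \hat{t}$. In this case, $|U_t|>D$ for all $t \in [t',\hat{t}]$, so $\delta_t \geq \nu_t^{[C,D]}$ throughout, and therefore for every $s \in [t',\hat{t}]$,
\[
\sum_{t=t'}^{s-1}\log(1-\delta_t) \leq \sum_{t=t'}^{s-1}\log(1-\nu_t^{[C,D]}).
\]
By the minimality in the definition of $\hat{t}$ in \cref{eq:cor_product_precondition_LB_phaseshrinking}, at $s = \hat{t}$ the right-hand side falls below the threshold in \cref{eq:product_precondition_LB_phaseshrinking}, hence so does the left-hand side, and thus $\tau_3 \leq \hat{t}$ deterministically on the event $\{\sigma > \hat{t}\}$.

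Combining the two cases: on the event $\{\sigma \leq \hat{t}\}$ monotonicity gives $|U_{\hat{t}}|\leq D$ outright, while on the event $\{\sigma > \hat{t}\}$ we have $\tau_3 \leq \hat{t} < \infty$, so in particular $\tau_3 < \infty$, and monotonicity gives $|U_{\hat{t}}|\leq|U_{\tau_3}|$. Thus, conditional on $|U_{t'}|\leq C$,
\[
\{|U_{\hat{t}}|>D\} \subseteq \{\sigma > \hat{t}\}\cap\{|U_{\tau_3}|>D\} \cup \{\tau_3 = \infty,\,\sigma>\hat{t}\}.
\]
But on $\{\sigma > \hat{t}\}$ we showed $\tau_3 \leq \hat{t} < \infty$, so the second event is empty. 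Applying \cref{thm:shrinkingphase} with $t_2 = t'$ yields the claimed probability bound, and the $\Pr{t_2}{\tau_3 = \infty}$ term vanishes because whenever we use the theorem's conclusion we are on the event where $\tau_3 \leq \hat{t} < \infty$.

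The main (minor) obstacle is the careful case split on whether $|U_t|$ reaches $D$ before or after the deterministic time $\hat{t}$, since outside the range $[D,C]$ the bound $\nu_t^{[C,D]}\leq \delta_t$ is not guaranteed; monotonicity (\MonoT) cleanly resolves this by making the ``already shrunk'' case trivial. The analogous derivation of \cref{cor:AzumaGrowing} from \cref{thm:AzumaGrowing} would proceed along identical lines (with the roles of growing/shrinking swapped and using (\Mono)).
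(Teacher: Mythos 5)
Your approach is the intended one: since the paper does not write out a proof of \cref{cor:shrinkingphase} (calling it ``a direct consequence of \cref{thm:shrinkingphase}''), the argument you give -- introducing the auxiliary hitting time $\sigma:=\min\{s\geq t':|U_s|\leq D\}$, noting via (\MonoT) that $|U_t|\leq C$ automatically, showing $\delta_t\geq\nu_t^{[C,D]}$ on $[t',\hat t]$ when $\sigma>\hat t$, and concluding $\tau_3\leq\hat t$ on that event -- is exactly what is needed. The case split on $\sigma$ is the correct and clean way to handle the fact that the deterministic lower bound $\nu_t^{[C,D]}\leq\delta_t$ is only guaranteed while $|U_t|$ stays in the range $[D,C]$, and the containment $\{|U_{\hat t}|>D\}\subseteq\{\sigma>\hat t\}\cap\{|U_{\tau_3}|>D\}\cap\{\tau_3<\infty\}$ is verified correctly.

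The one place where you are slightly imprecise is the final sentence. You claim that ``the $\Pr{t_2}{\tau_3 = \infty}$ term vanishes because whenever we use the theorem's conclusion we are on the event where $\tau_3\leq\hat t<\infty$.'' Taken literally, the statement of \cref{thm:shrinkingphase} only gives
$\Pr{t'}{|U_{\tau_3}|>D\mid\cdot}\leq\exp(\cdot)+\Pr{t'}{\tau_3=\infty\mid\cdot}$, and from this alone one cannot deduce $\Pr{t'}{|U_{\tau_3}|>D,\tau_3<\infty\mid\cdot}\leq\exp(\cdot)$, since $\Pr{t'}{\tau_3=\infty\mid\cdot}$ need not be zero (e.g.\ $|U_t|$ may reach $D$ or $0$ early without the aggregate of $\log(1-\delta_t)$ ever hitting the threshold, in which case $\sigma\leq\hat t$ but $\tau_3=\infty$). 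What actually closes the argument is the slightly stronger intermediate fact established inside the proof of \cref{thm:shrinkingphase} -- namely \eqref{eq:probfiniteU} together with the containment \eqref{eq:claim3.6} -- which gives $\Pr{t'}{|U_{\tau_3\wedge\tau}|>D,\;\tau_3\wedge\tau<\infty\mid\cdot}\leq\exp(-C_2 h^{1/2})$ directly. Combined with your containment $\{|U_{\hat t}|>D\}\subseteq\{|U_{\tau_3}|>D,\tau_3<\infty\}\subseteq\{|U_{\tau_3\wedge\tau}|>D,\tau_3\wedge\tau<\infty\}$, this yields the corollary with no leftover term. So the idea is right, but the justification should cite this intermediate inequality (or the theorem statement should be rephrased as a bound on the restricted event) rather than asserting that the extra term ``vanishes.''
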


\subsection{Arbitrary Credibility}\label{sec:applicationArbitrary}
 
 \begin{restatable}{theorem}{generallower}\label{thm:general_lower} 
 Let $(G_t)_{t\geq 0}$ be any sequence of regular $n$-vertex graphs, and $q(t)$ be an arbitrary credibility function. Let $T\geq 1$ be a deterministic number of rounds such that for some small $\rho \in (0,1)$ (not necessarily constant) such that,
 \[
  \sum_{t=0}^{T-1} \log\Bigl(1 +  \Psigrow(t) \cdot q(t)\Bigr) 
  \leq  \log n + \log \rho .
 \]
 Then,
 $
  \E{ |I_T| } \leq \rho \cdot n,
 $
 and hence by Markov's inequality, for any $\eta > 0$ (not necessarily constant),
 \[
  \Pro{ |I_T| \leq \rho \cdot n^{1+\eta} } \leq n^{-\eta}.
 \]
\end{restatable}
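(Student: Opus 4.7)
The plan is to prove the in-expectation bound $\E{|I_T|}\leq \rho\cdot n$ directly via a one-step supermartingale estimate, and then apply Markov's inequality to obtain the tail bound. No concentration inequalities or reverse-Jensen type estimates are needed.

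First, for any round $t\geq 0$, I would combine the definition of $\delta_t(I)$ from \eqref{eq:deltat} and of $\Psigrow(t)$ from \eqref{eq:phigrowtrivial} with the trivial inequality $\min(|I_t|,|U_t|)\leq |I_t|$ to obtain
\[
\EX{t}{|\Delta_{t+1}|} \;=\; \delta_t(I_t)\cdot\min\bigl(|I_t|,|U_t|\bigr) \;\leq\; \Psigrow(t)\cdot q(t)\cdot |I_t|.
\]
Combining this with monotonicity $I_{t+1}\supseteq I_t$ yields the conditional one-step bound
\[
\EX{t}{|I_{t+1}|} \;=\; |I_t| + \EX{t}{|\Delta_{t+1}|}\; \leq\; |I_t|\cdot\bigl(1+\Psigrow(t)\cdot q(t)\bigr).
\]

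Second, I would iterate this using the tower property. Since the multipliers $1+\Psigrow(t)q(t)$ are deterministic, taking expectations telescopes to
\[
\E{|I_T|} \;\leq\; |I_0|\cdot\prod_{t=0}^{T-1}\bigl(1+\Psigrow(t)\cdot q(t)\bigr) \;=\; \exp\!\left(\sum_{t=0}^{T-1}\log\bigl(1+\Psigrow(t)\cdot q(t)\bigr)\right) \;\leq\; \rho\cdot n,
\]
where the equality uses $|I_0|=1$ (the process starts with a single informed vertex) and the last inequality is the hypothesis of the theorem. The tail bound then follows immediately from Markov's inequality applied to the non-negative random variable $|I_T|$: for any $\eta>0$,
\[
\Pro{|I_T|\geq \rho\cdot n^{1+\eta}}\; \leq\; \frac{\E{|I_T|}}{\rho\cdot n^{1+\eta}}\; \leq\; \frac{\rho\cdot n}{\rho\cdot n^{1+\eta}}\;=\; n^{-\eta}.
\]

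There is no real obstacle here: the argument is substantially simpler than the proof of \cref{thm:AzumaGrowing} because only a first-moment bound is required, so neither martingale concentration nor the reverse-Jensen estimate \cref{lem:log_exp-JOHN} is needed. The only mild subtlety is that $\Psigrow(t)$ is defined as a maximum of $\delta_t(I)/q(t)$ over \emph{all} non-trivial subsets $I\subseteq V$, which is exactly what makes the one-step bound apply uniformly across all rounds $t\in[0,T-1]$, irrespective of whether $|I_t|$ happens to be below or above $n/2$.
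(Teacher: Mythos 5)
Your proposal is correct and is essentially the paper's own argument: the same one-step bound $\EX{t}{|I_{t+1}|}\leq |I_t|\cdot(1+\Psigrow(t)\cdot q(t))$ followed by telescoping and Markov. The paper phrases the telescoping as a bounded super-martingale $Y_T=|I_T|/\prod_{t<T}(1+\Psigrow(t)q(t))$ plus the optional stopping theorem, but since $T$ is deterministic this is exactly your iterated tower-property computation.
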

\begin{proof}
    Observe that, for any $T \geq 1$, by the definition of $\Psigrow(t)$ from \eqref{eq:phigrowtrivial} we have
     \begin{equation*}  \EX{T-1}{|\Delta_{T}|} \leq   |I_{T-1}|\cdot \max_{I \subseteq V \colon 1 \leq |I| \leq n-1}   \frac{ \EX{T-1}{ |\Delta_T| \, \big| \, I_{T-1}=I } }{ \min(|I_{T-1}|,|U_{T-1}|) } 
      =|I_{T-1}|\cdot\Psigrow(T-1) \cdot q(T-1)  .  \end{equation*}
  Consequently, for any $T\geq 1$, the above bound gives us 
     \begin{equation}\label{eq:Deltabound}
     \EX{T-1}{|I_T|} = |I_{T-1}| + \EX{T-1}{|\Delta_{T}|} \leq |I_{T-1}|\cdot(1+\Psigrow(T-1) \cdot q(T-1)).  
 \end{equation} 
  Now, for $T=0$, we set $Y_{T}=Y_{0}:=1$, and for $T\geq 1$ we define the random variable 
  \[Y_{T} := \frac{|I_T|}{ \prod_{t=0}^{T-1} \left(1 +  \Psigrow(t) \cdot q(t)\right) }.\] We will verify that $Y_T$ is a super-martingale. Indeed, for any $T \geq 1$, \cref{eq:Deltabound} gives us 
  \[\EX{T-1}{Y_T}  = \frac{\EX{T-1}{|I_T|}}{\prod^{T-1}_{t=0}\left(1 + \Psigrow(t) \cdot q(t)\right)} \leq \frac{|I_{T-1}|\cdot(1+\Psigrow(T-1) \cdot q(T-1))}{\prod^{T-1}_{t=0}\left(1 + \Psigrow(t) \cdot q(t)\right)} = \frac{|I_{T-1}|}{\prod^{T-2}_{t=0}\left(1 + \Psigrow(t) \cdot q(t)\right)}=Y_{T-1}.  \]
 Additionally, we note that $Y_{T}$ is bounded, as $Y_{T} \leq |I_T|\leq n$. Therefore, by the optional stopping theorem (\cref{lem:OST}),
 \[
 \E{ Y_T } \leq Y_0 = 1.
 \]
 Rearranging,
 \begin{align*}
     \E{ |I_T|} &\leq \prod_{t=0}^{T-1} \left(1 + \Psigrow(t) \cdot q(t) \right) = \exp\left( \sum_{t=0}^{T-1} \log \bigl(  1 + \Psigrow(t) \cdot q(t)  \bigr) \right) \leq \rho \cdot n,
 \end{align*}
 where the last inequality follows by the precondition.  
 \end{proof}
Next, we state two central results lower bounding the number of informed vertices, which both hold for arbitrary credibility functions.  
We start with a rather crude bound, which is simple to prove. 
 
\begin{restatable}{theorem}{general}\label{thm:general}Let $(G_t)_{t\geq 0}$ be any sequence of regular $n$-vertex graphs and $\kappa>0$ be any constant. Consider a process $\mathcal{P}$ which is both a $\Cgrow$-growing process and a $\Cshrink$-shrinking process, where $\Cshrink \leq 1-n^{-\kappa}$, with an arbitrary credibility function $q(t)$. If $T$ is a number of rounds satisfying,
\[
 \sum_{t=0}^{T-1} \log \left( 1 +  \delta_t^{[1,n-1]}\right) \geq   (2/\xi + \kappa)\cdot \log n,
\]
where $\xi :=10^{-30}$ then, we have
\[
 \Pro{|I_T| = n } \geq 1-o(1).
\]
\end{restatable}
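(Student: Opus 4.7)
The plan is to split the process at the point where $|I_t|$ first reaches $n/2$, apply \cref{cor:AzumaGrowing} to the growing phase and \cref{cor:shrinkingphase} to the shrinking phase, and finally combine the two via a union bound.

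First, let $T_1$ denote the first round at which $|I_{T_1}| \geq n/2$. I would apply \cref{cor:AzumaGrowing} with $A=1$, $B=\lceil n/2\rceil$, and $\nu_t^{[A,B]} := \delta_t^{[1,n-1]}$, which is a valid deterministic lower bound on $\delta_t^{[1,\lceil n/2\rceil]}$ since minimizing over the larger family $[1,n-1]$ gives the smaller value. The corollary then yields $|I_{T_1}| \geq n/2$ with probability $1-o(1)$ once the accumulated sum $\sum_{t=0}^{T_1-1}\log(1+\delta_t^{[1,n-1]})$ meets the threshold in \eqref{eq:cor_prod_pre_grow}, which (using $A=1$) reduces to a constant multiple of $\log n$.

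Next, starting from $T_1$, I would apply \cref{cor:shrinkingphase} with $C=\lceil n/2\rceil$, $D=3/4$, and $\nu_t^{[C,D]} := \delta_t^{[1,n-1]}$. Since $\Cshrink \leq 1-n^{-\kappa}$ and $D=3/4$, a quick check gives $\tfrac{1}{2(1-\Cshrink)D} \geq \tfrac{2}{3}\,n^{\kappa} \geq \tfrac{1}{2}$, so $\tilde\gamma=1/2$ in \cref{cor:shrinkingphase}. Reaching $|U_{\hat t}|\leq 3/4$ forces $|U_{\hat t}|=0$ by integrality, hence $|I_{\hat t}|=n$. Via the elementary inequality $-\log(1-x) \geq \log(1+x)$ valid on $[0,1)$, the corollary's precondition on $\sum\log(1-\delta_t)$ is implied by a corresponding lower bound on $\sum\log(1+\delta_t^{[1,n-1]})$ of order $2\log n$ plus additive $((1+\kappa)\log n)^{2/3}$ lower-order terms, where the $\kappa\log n$ contribution comes from $-\log(1-\Cshrink)=\kappa\log n$.

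To complete the proof, I would allocate the assumed budget $\sum_{t=0}^{T-1}\log(1+\delta_t^{[1,n-1]}) \geq (2/\xi+\kappa)\log n$ between the two phases, showing that it comfortably covers both thresholds, and then finish with a union bound over the two failure events, each of probability $o(1)$ by the respective corollary, to conclude $\Pro{|I_T|=n} \geq 1-o(1)$. I expect the main technical obstacle to be the bookkeeping of the $(\log n)^{2/3}$ additive corrections appearing in both phase preconditions and ensuring that the $\kappa\log n$ term from the shrinking phase is absorbed within the $(2/\xi+\kappa)\log n$ budget; one also has to confirm that \cref{cor:AzumaGrowing,cor:shrinkingphase} are applicable, i.e.\ that the stopping times $t^\ast$ and $\hat t$ are finite under the assumption $\sum_{t=0}^{T-1}\log(1+\delta_t^{[1,n-1]}) \geq (2/\xi+\kappa)\log n$.
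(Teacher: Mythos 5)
Your proposal follows essentially the same route as the paper: split into a growing phase ($|I_t|$ from $1$ to $n/2$, handled by \cref{cor:AzumaGrowing}) and a shrinking phase ($|U_t|$ from $n/2$ down past $3/4$, handled by \cref{cor:shrinkingphase}), convert the shrinking-phase condition to the $\log(1+\delta_t)$ scale via $-\log(1-x)\geq\log(1+x)$, absorb the $-\log(1-\Cshrink)\leq \kappa\log n$ term into the budget, and conclude with a union bound. The only cosmetic difference is that you feed the single lower bound $\delta_t^{[1,n-1]}$ into both corollaries, whereas the paper works with the phase-specific quantities $\delta_t^{[1,n/2-1]}$ and $\delta_t^{[n/2,n-1]}$ and only passes to $\delta_t^{[1,n-1]}=\min\bigl(\delta_t^{[1,n/2-1]},\delta_t^{[n/2,n-1]}\bigr)$ in the final bookkeeping; both versions are valid.
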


\begin{proof}
We divide the process into two phases by defining (inductively) the following two deterministic times
\begin{align*}
    t_1:=& \min \left\{t\geq 0: \sum^{t_1-1}_{t=0}\log\left(1 + \delta_t^{[1,n/2-1]} \right) \geq \frac{3 \log n}{2\xi}\right\}\\
    t_2:= & \min \left\{t \geq t_1: \sum^{t_2-1}_{t=t_1} \log\left(1 - \delta_t^{[n/2,n-1]}\right)\leq -\frac{2 \log n}{\tilde{\gamma}}\right\},
\end{align*}
Recall that $\xi :=10^{-30}> 0$ (as in \cref{cor:AzumaGrowing}) and $\tilde{\gamma} = \tilde{\gamma}(\Cshrink)>1/2$ is the constant from \cref{cor:shrinkingphase}. Assuming that $t_1 \leq t_2 < \infty$, we apply \cref{cor:AzumaGrowing} for the first phase (rounds $[0,t_1)$) and \cref{cor:shrinkingphase} for the second phase (rounds $[t_1, t_2)$). By \cref{cor:AzumaGrowing} for $A=1,B=n/2$ and $\nu_t^{[A,B]} = \delta_t^{[1,n/2-1]}$ (which trivially satisfies the requirement of \cref{cor:AzumaGrowing} on $\nu_t^{[A,B]}$), we get that
\begin{equation*}
    \Pro{|I_{t_1}|<n/2} = o(1).
\end{equation*}
In the same way, by \cref{cor:shrinkingphase} for $C=n/2, D=3/4$ and $\nu_t^{[C,D]} = \delta_t^{[n/2,n-1]}$, 
\begin{equation*}
    \Pr{t_1}{|U_{t_2}| > 0 \, \mid \, |U_{t_1}| \leq n/2} = o(1).
\end{equation*}
Thus by the Union Bound, 
\[
 \Pro{|I_{t_2}|=n} \geq 1-o(1).
\]
To prove the theorem statement, it remains to prove that $t_2\leq T$. 

Note that since $\mathcal{P}$ is $\Cgrow$-growing process, by \cref{def:processes} (\BEG)   we have $ \delta_t^{[1,n/2-1]} \leq \Cgrow$ for any $t\geq 0$. Therefore, by minimality in the definition of $t_1$,
\begin{equation}\label{eq:up1}
 \sum_{t=0}^{t_1-1} \log\left(1 + \delta_t^{[1,n/2-1]}\right) \leq \frac{3 \log n}{2\xi} + \log (1+\Cgrow) \leq  \frac{3 \log n}{2\xi} + \log 2 \leq  \frac{5 \log n}{3\xi}.
\end{equation}
Similarly,  since $\mathcal{P}$ is a $\Cshrink$-shrinking process, by \cref{def:processes} (\BEGT), $\delta_t^{[n/2,n-1]} \leq \Cshrink$. Moreover, as $\Cshrink\leq 1-n^{-\kappa}$ by hypothesis, and as $\tilde{\gamma}\geq 1/2$, we have 
\begin{equation*} 
    \sum^{t_2-1}_{t=t_1} \log\left(1 - \delta_t^{[n/2,n-1]}\right) \geq -\frac{2 \log n}{\tilde{\gamma}} + \log\left(1 - \Cshrink\right) \geq -4 \log n - \kappa \log n = - (4 + \kappa)\log n.
\end{equation*}
Since $1+z \leq \frac{1}{1-z} $ for $z \in [0,1)$, we get that 
\begin{equation}\label{eq:up2}
\sum^{t_2-1}_{t=t_1} \log\left(1 + \delta_t^{[n/2,n-1]}\right)  \leq - \sum^{t_2-1}_{t=t_1} \log\left(1 - \delta_t^{[n/2,n-1]}\right) \leq  (4 + \kappa)\log n.
\end{equation}Since $\delta_t^{[1,n-1]} = \min \left(\delta_t^{[1,n/2-1]}, \delta_t^{[n/2,n-1]}\right)$ by \eqref{eq:deltparam},  by combining \cref{eq:up1} and \cref{eq:up2}, we obtain,
\begin{align*}
\sum^{t_2-1}_{t=0} \log\left(1 + \delta_t^{[1,n-1]}\right) \leq  \sum_{t=0}^{t_1-1} \log\left(1 + \delta_t^{[1,\frac{n}{2}-1]}\right)  + \sum^{t_2-1}_{t=t_1} \log\left(1 + \delta_t^{[\frac{n}{2},n-1]}\right) \leq \left(\tfrac{5 }{3\xi}+ 4 + \kappa\right)\log n <  \left(\tfrac{2 }{\xi}+ \kappa\right) \log n, 
\end{align*}
thus $t_2\leq T$ since $\xi := 10^{-30}$.  
\end{proof}
The next result applies to \push and \pull.

\begin{restatable}{theorem}{generalstrong}\label{thm:general_strong}
	Let $(G_t)_{t\geq 0}$ be a sequence of regular $n$-vertex strong expander graphs, with largest non-trivial eigenvalues $(\lambda_t)_{t\geq 0}$ and let $\lambda := \sup_{t\geq 0} \lambda_t$. Consider the \push or \pull model and let $q(t)$ be an arbitrary credibility function such that for,
	\begin{equation}
		\epsilon := 1 - \max_{t \geq \frac{1}{2 \log(2)} \cdot \log(n)} q(t), \label{eq:defepsilon}
	\end{equation}
	we have that $\epsilon \geq \frac{1}{\log n}$. Let $\mathcal{P} \in \{\push,\pull\}$, and assume that $T_{\mathcal{P}}$ and $q(t)$ satisfy, 
	\begin{equation}
		\sum^{T_{\mathcal{P}}}_{t=0} \log\left(1 + q(t)\right) \geq \frac{1}{\gamma_{\mathcal{P}}} \cdot \frac{\log n + 7 \left(\log n\right)^{2/3}}{ \left(1 - (1-\xi) \cdot \left(\log n\right)^{-\xi}\right)^{2}} \label{eq:mainprecond4.7},
	\end{equation}
	where $\xi:=10^{-30} $, $\gamma_{\pullindex}:= 1-\lambda$, and $\gamma_{\pushindex}:= 1 - 7 \sqrt{\lambda + 1/\log n}$. Then,
	\[
	\Pro{ |I_{T_{\mathcal{P}}}| \geq n \cdot \left(1 - \exp(- \sqrt{\log n})\right) 
	} \geq 1-o(1).
	\]\end{restatable}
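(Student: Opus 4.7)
The plan is to separate the analysis into a growing sub-phase (until $|I_t|\geq n/2$) and a shrinking sub-phase (until $|U_t|\leq n e^{-\sqrt{\log n}}$), applying \cref{cor:AzumaGrowing} and \cref{cor:shrinkingphase} respectively, and then showing that both complete within $T_{\mathcal{P}}$ rounds with probability $1-o(1)$ via a union bound.

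For the growing sub-phase I would apply \cref{cor:AzumaGrowing} with $A=\log n$, $B=n/2$, and $\Cgrow=1$ (justified by \cref{lem:protocol_growth}), using the lower bound $\nu_t^{[A,B]}=q(t)\gamma_{\mathcal{P}}$. To verify that $\delta_t(I)\geq q(t)\gamma_{\mathcal{P}}$ for all $I$ with $|I|\in[A,B]$, I would combine \cref{lem:PPP-P}\eqref{itm:Pull} with the (strong) expander mixing bound $\varphi(I)\geq 1-\lambda$ for \pull, and use \cref{lem:expushSTRONG} with $\beta=\lambda+|I|/n\leq \lambda+1/\log n$ for \push (which is valid while $|I|\leq n/\log n$; the short remaining range is addressed below). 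Employing the elementary inequality $\log(1+\gamma q)\geq \gamma\log(1+q)$ for $\gamma\in[0,1]$, the Corollary's stopping criterion \eqref{eq:cor_prod_pre_grow} then reduces to the precondition \eqref{eq:mainprecond4.7}, where the extra $7(\log n)^{2/3}/\gamma_{\mathcal{P}}$ term beyond the bare $(\log(B/A)+(\log(B/A)+\log 2+1)^{2/3})/\gamma_{\mathcal{P}}$ comfortably absorbs the cost of a short initial warm-up taking $|I_t|$ from $1$ to $\log n$ (handled by a simple direct argument as only $O(\log\log n)$ doublings are needed) and the auxiliary \push range discussed below.

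For the shrinking sub-phase, the assumption $\epsilon\geq 1/\log n$ and the definition of $\epsilon$ give $q(t)\leq 1-\epsilon$ for all $t\geq \tfrac{\log n}{2\log 2}$. Since each round of \push or \pull at most doubles $|I_t|$, the growing sub-phase takes at least $\log(n/2)/\log 2\geq \tfrac{\log n}{2\log 2}$ rounds, so the entire shrinking sub-phase lies in the regime $q(t)\leq 1-\epsilon$; by \cref{lem:protocol_growth} the process is then a $(1-\epsilon)$-shrinking process. I would apply \cref{cor:shrinkingphase} with $C=n/2$, $D=n e^{-\sqrt{\log n}}$, and $\nu_t^{[C,D]}$ coming from \cref{lem:expushUT}\eqref{itm:push1} for \push or \cref{lem:PPP-P}\eqref{itm:Pull} for \pull. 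Since $\log(C/D)=\sqrt{\log n}-\log 2$ and $-\log(1-\Cshrink)\leq \log\log n$, the required additional contribution to $\sum\log(1+q(t))$ for this sub-phase is $O(\sqrt{\log n})$, easily fitting inside the slack.

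The main obstacle is the intermediate \push range $|I_t|\in[n/\log n,n/2]$, where \cref{lem:expushSTRONG} fails to yield $q(t)\gamma_{\mathcal{P}}$. My plan is to replace it there by \cref{lem:expushUT}\eqref{itm:push1}, which gives $\delta_t\geq (|U_t|/|I_t|)(1-e^{-q(t)})\varphi(I_t)\geq (1-e^{-q(t)})(1-\lambda)$; this is only weaker than $q(t)\gamma_{\mathcal{P}}$ by a bounded factor and, since only $O(\log\log n)$ ``doublings'' remain in this range, the cumulative extra contribution to $\sum\log(1+q(t))$ is $O(\log\log n)$, again absorbed by the $(\log n)^{2/3}/\gamma_{\mathcal{P}}$ slack. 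Concluding by a union bound over the two Corollary applications then yields $|I_{T_{\mathcal{P}}}|\geq n(1-e^{-\sqrt{\log n}})$ with probability $1-o(1)$.
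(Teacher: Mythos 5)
Your overall architecture (growing phase via \cref{cor:AzumaGrowing}, shrinking phase via \cref{cor:shrinkingphase}, lower-order phases absorbed into the $7(\log n)^{2/3}$ slack) is the same as the paper's, which splits the growing phase into three sub-phases $[1,\log n]$, $[\log n, n/\log n]$, $[n/\log n, n/2]$ followed by one shrinking phase down to $n e^{-\sqrt{\log n}}$. However, there are two concrete errors in your write-up. First, for \pull you justify $\nu_t^{[A,B]}=q(t)(1-\lambda)$ on all of $[\log n, n/2]$ via ``$\varphi(I)\geq 1-\lambda$'', but \cref{lem:setconductancebound} only gives $\varphi(I)\geq(1-\lambda)\left(1-\tfrac{|I|}{n}\right)$, which degrades to about $(1-\lambda)/2$ as $|I|$ approaches $n/2$; so the single-shot application for \pull with $B=n/2$ is invalid as stated. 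You correctly identify and patch the analogous problem for \push on $[n/\log n,n/2]$, and exactly the same patch (a third sub-phase whose cost is $O(\log\log n)$ and is absorbed by the slack, as in the paper's Phase 3) is needed for \pull as well --- but you must actually perform that split.

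Second, your justification that the shrinking phase begins only after round $\frac{1}{2\log 2}\log n$ (so that $q(t)\leq 1-\epsilon$ throughout it) relies on ``each round of \push or \pull at most doubles $|I_t|$''. This is deterministically true for \push (since $|\Delta_{t+1}|\leq|I_t|$) but false for \pull, where in a single round every uninformed vertex can simultaneously pull the rumor. The correct argument, which the paper uses, is purely about the deterministic stopping times: each summand $\log(1+\nu_t)$ is at most $\log 2$, and the stopping criterion forces the partial sums to reach roughly $\tfrac{1}{2}\log n$ before the main growing sub-phase ends, so at least $\frac{\log n}{2\log 2}$ rounds must elapse. Finally, the warm-up from $|I_0|=1$ to $\log n$ cannot be dispatched by ``a simple direct argument\dots as only $O(\log\log n)$ doublings are needed'': the per-round growth is random and can be zero, so a concentration statement is required; the natural one is another application of \cref{cor:AzumaGrowing} with $A=1$, $B=\log n$ (the paper's Phase 1), whose failure probability $\exp\left(-C_2(\log\log n)^{1/3}\right)$ is $o(1)$. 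With these three repairs your proof coincides with the paper's.
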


Matching previous works \cite{daknama_panagiotou_reisser_2021,doerr2017randomized}, for \pull and fixed $q(t) \in (0,1)$ our result implies that in $(1+o(1)) \cdot \frac{\log n}{\log(1+q)}$ rounds the majority of the vertices get informed. The same result also holds for \push. However, it is important to note that in the results above we do not consider the time to inform \emph{all} $n$ vertices, see \cref{sec:fixed_credibility} for more results on this model. 

\begin{proof}Let us first give the proof of \pull here. Let $D:=n\cdot e^{-\sqrt{\log n}}$. Recall that by \cref{lem:protocol_growth} \eqref{itm:PROTGROWTH_pull}, \pull is a $\Cgrow$-growing process, for $\Cgrow:=1$. Moreover, definition of $\epsilon$ and \cref{lem:protocol_growth} \eqref{itm:PROTGROWTH_pull}, for all rounds $t \geq t_3 \geq \frac{1}{2 \cdot \log 2}\cdot \log n$, \pull is a $\Cshrink$-shrinking process for $\Cshrink:= 1 - \epsilon$. We inductively define the following time steps, where $\xi:=10^{-30} > 0$ is the constant defined in \cref{cor:AzumaGrowing}.
	\begin{align*}
		t_1 &:= \min \left\{t \geq 0 : \sum^{t_1-1}_{t=0} \log \left(1 + q(t) \cdot \phi_{\log n}\right) \geq \frac{\log \log n +  \left(\log \log n  + \log\left(1 + \Cgrow \right) + 1\right)^{2/3}}{\xi^2}\right\},\\
		t_2 &:= \min \left\{t \geq t_1 : \sum^{t_2-1}_{t=t_1} \log \left(1 + q(t) \cdot \phi_{\frac{n}{\log n}}\right) \geq \frac{\log \left(\frac{n}{(\log n)^2}\right) + \left(\log \left(\frac{n}{\left(\log n\right)^2}\right) + \log(1 + \Cgrow) + 1\right)^{2/3}}{\left(1 - \left(1 - \xi\right)\cdot (\log n)^{-\xi}\right)^2}\right\},\\
		t_3 &:= \min \left\{t \geq t_2: \sum^{t_3-1}_{t=t_2} \log \left(1 + q(t) \cdot \phi_{n/2}\right) \geq \frac{\log \log n + \left(\log \log n + \log\left(1 + \Cgrow\right) + 1\right)^{2/3}}{\left(1 - \left(1 - \xi\right)\cdot \left(n/\log n\right)^{-\xi}\right)^{2}} \right\},\\
		t_4 &:= \min \left\{t \geq t_3: \sum^{t_4 -1}_{t=t_3} \log\left(1 - q(t)\cdot \phi_{n/2}\right) \leq - \frac{\log \left(\frac{n}{2 \cdot D}\right) + \left(\log\left(\frac{n}{2 \cdot D}\right) - \log\left(1 - \Cshrink\right) + 1\right)^{2/3}}{1 - \min \left(\frac{1}{2\left(1 - \Cshrink\right) \cdot D}, \frac{1}{2}\right)}\right\},
	\end{align*}
	We first verify that $t_3 \geq \frac{1}{2 \cdot \log 2} \cdot \log n$. This, by the precondition of the theorem, implies that for any $t \geq t_3$, we have $q(t) \leq 1- \epsilon$. By adding up the sums in the definition of $ t_2$, and recalling that $t_2\leq t_3$, it follows that,
	\begin{align*}
		\frac{\log n}{2} & \leq  \frac{\log \left(\frac{n}{\log n}\right) + \left(\log \left(\frac{n}{\log n}\right) + \log(1 + \Cgrow) + 1\right)^{2/3}}{\left(1 - \left(1 - \xi\right)\cdot (\log n)^{-\xi}\right)^{2}} \leq \sum^{t_2-1}_{t=t_1}\log\left(1 + q(t) \cdot \phi_{n/\log n}\right) \leq t_3 \cdot \log(1+1 \cdot 1).
	\end{align*}
	Hence, rearranging implies that $t_3 \geq \frac{1}{2 \cdot \log 2} \cdot \log n$. If we let $t_0:=0$ then we can define Phase $i$ to be the interval $(t_{i-1},t_i]$ for $i\in [4]$. 
Since $t_1\leq t_2\leq t_3\leq t_4< \infty$, the preconditions of \cref{cor:AzumaGrowing} are satisfied in Phases $1-3$, and the preconditions of \cref{cor:shrinkingphase} are satisfied in Phase $4$.

	For Phase $1$, we let $A=1, B=\log n$ and $\nu_t^{[A,B]}= q(t) \cdot \phi_{\log n}$. We note that by \cref{lem:PPP-P} \eqref{itm:Pull}, $\delta_t = q(t) \cdot \varphi(I_t)$. Therefore, our choice of $\nu_t^{[A,B]}$ satisfies the requirement of \cref{cor:AzumaGrowing}. Thus, 
	\[\Pro{|I_{t_2}| \geq \log n} \geq 1- o(1).\]
	Similarly, in Phase 2 we let $A=\log n, B=\frac{n}{\log n}$ and $\nu_t^{[A,B]} = q(t) \cdot \phi_{\frac{n}{\log n}}$ to get that 
	\[\Pr{t_1}{\Condtwo{|I_{t_2}| \geq \frac{n}{\log n}}{|I_{t_1}| \geq \log n}} \geq 1-o(1),\]
	and in Phase 3, $A = \frac{n}{\log n}, B= n/2$ and $\nu_t^{[A,B]} = q(t) \cdot \phi_t$, resulting in  
	\[\Pr{t_2}{\Condtwo{|I_{t_3}|\geq \frac{n}{2}}{|I_{t_2}|\geq \frac{n}{\log n}}}\geq 1-o(1).\] 
	Lastly, in Phase 4, we apply \cref{cor:shrinkingphase} with $C=n/2$, $D=n \cdot \exp(- \sqrt{\log n})$ and $\nu_t^{[C,D]}= q(t) \cdot \phi_{n/2}$ (where again $\nu_t^{[C,D]}$ satisfies the requirement of \cref{cor:shrinkingphase} by \cref{lem:PPP-P} \eqref{itm:Pull}). Therefore,
	\[\Pr{t_3}{\Condtwo{|U_{t_4}|\leq n \cdot \exp(- \sqrt{\log n})}{|I_{t_3}|\geq n/2}} \geq 1-o(1).\]
	By taking the union bound we obtain, 
	\[
	\Pro{|U_{t_4}| \geq  n \cdot \left(1-\exp(- \sqrt{\log n})\right)}\geq 1 - o(1). 
	\]
	Since $t_1\leq t_2\leq t_3\leq t_4$ by definition, it remains to show that if the main precondition of the theorem (\cref{eq:mainprecond4.7}) is satisfied, then  $ t_4\leq T_{\pull} $ holds. Then, since $I_{t_4} \subseteq I_{T_{\pull}}$, the proof would be complete. In the remainder of the proof for \pull, we will thus establish that $t_4 \leq T_{\pull}$ indeed holds. First, since $t_1$ and $t_2$ are chosen minimally, we know that for large $n$,
	\begin{align}
		\sum^{t_1-2}_{t=0} \log \left(1 + q(t) \cdot \phi_{\log n}\right) < \frac{\log \log n +  \left(\log \log n  + \log\left(1 + \Cgrow \right) + 1\right)^{2/3}}{\xi^2}\leq (\log n)^{2/3},\label{eq:t1_est}
	\end{align}
	and, since $\Cgrow \leq 1$, we have  
	\begin{align}
		\sum^{t_2-2}_{t=t_1} \log \left(1 + q(t) \cdot \phi_{n/\log n}\right) &<  \frac{\log \left(\frac{n}{\left(\log n\right)^2}\right) + \left(\log \left(\frac{n}{\left(\log n\right)^2}\right) + \log(1 + \Cgrow) + 1\right)^{2/3}}{\left(1 -(1 - \xi)\cdot (\log n)^{-\xi}\right)^2}\notag \\ &\leq \frac{\log n +  \left(\log n\right)^{2/3}}{\left(1 - (1 - \xi)\cdot (\log n)^{-\xi}\right)^2}\label{eq:t2_est}.
	\end{align}
	Therefore, since $\phi_{\log n} \geq \phi_{n/\log n}$ and $ \log\left(1 + q(t_1-1) \cdot \phi_{\log n} \right), \log\left(1 + q(t_2-1)\cdot \phi_{n/\log n}  \right)\leq \log 2$,
	\begin{equation*}
	\sum^{t_2-1}_{t=0} \log \left(1 + q(t) \cdot \phi_{n/\log n}\right)
		\leq \sum^{t_1-2}_{t=0} \log \left(1 + q(t) \cdot \phi_{\log n}\right) +  \sum^{t_2-2}_{t=t_1} \log \left(1 + q(t) \cdot \phi_{n/\log n}\right) + 2\log\left(2 \right).\end{equation*} Now inserting the bounds on these sums from \cref{eq:t1_est} and \cref{eq:t2_est} gives 	\begin{equation*}	\sum^{t_2-1}_{t=0} \log \left(1 + q(t) \cdot \phi_{n/\log n}\right) \leq \frac{\log n + 3 \left(\log n\right)^{2/3}}{\left(1 - (1 - \xi)\cdot (\log n)^{-\xi}\right)^2}:= M,
	\end{equation*}Now, since $\phi_{n/\log n}\geq \left(1 - \lambda\right)\left(1 - \frac{1}{\log n}\right)$ (by \cref{lem:setconductancebound}) and using \cref{clm:log_claim},  we obtain,
	\begin{equation*}
		(1 - \lambda) \left(1 - \frac{1}{\log n}\right)\cdot \sum^{t_2-1}_{t=0}\log(1 + q(t)) \leq M
	\end{equation*}
Now by rearranging and exponentiating both sides, we have
	\begin{align}
		\prod^{t_2-1}_{t=0}(1 + q(t)) \leq \exp\left(\left((1 - \lambda)\left(1 - \frac{1}{\log n}\right)\right)^{-1} M\right) \leq \exp\left(\frac{1}{1 - \lambda}\cdot \frac{\log n + 4 \left(\log n\right)^{2/3}}{\left(1 - (1 - \xi)\cdot (\log n)^{-\xi}\right)^2}  \right).\label{eq:phase124.7}
	\end{align}
Similarly to the above, as $t_3$ is also chosen minimally, 
	\begin{align}
		\sum_{t=t_2}^{t_3-1} \log \left(1 + q(t) \cdot \phi_{n/2}\right) \leq \frac{\log \log n + \left(\log \log n + \log\left(1 + \Cgrow\right) + 1\right)^{2/3}}{\left(1 - (1 - \xi)\cdot  (n/\log n)^{-\xi}\right)^{2}} +\log 2 \leq (\log n)^{2/3}. \label{eq:t3two}
	\end{align}
We can now lower bound the right-hand side as follows,
	\begin{equation}
		\sum^{t_3-1}_{t=t_2} \log \left(1 + q(t) \cdot \phi_{n/2}\right)  \stackrel{(a)}{\geq} \sum^{t_3-1}_{t=t_2} \log \left(1 + q(t) \cdot \frac{1-\lambda}{2}\right) \stackrel{(b)}{\geq} \frac{1-\lambda}{2}\cdot  \sum^{t_3-1}_{t=t_2} \log\left(1 + q(t)\right), \label{eq:twotwo}
	\end{equation}
	where $(a)$ uses $\phi_{n/2}\geq \frac{1}{2}(1 - \lambda)$ from \cref{lem:setconductancebound}, and $(b)$ follows by \cref{clm:log_claim}. Combining \cref{eq:t3two} and \cref{eq:twotwo} gives us $\sum^{t_3-1}_{t=t_2}\log \left(1 + q(t)\right) \leq \frac{2}{1-\lambda} \cdot (\log n)^{2/3}$. Exponentiating, we get
	\begin{equation}
		\prod^{t_3-1}_{t=t_2}(1 + q(t)) \leq \exp\left( \frac{2}{1-\lambda} \cdot (\log n)^{2/3}\right).\label{eq:phase34.7}
	\end{equation}
	Now, combining \cref{eq:phase124.7} and \cref{eq:phase34.7},
	\begin{align}
	 \prod_{t=0}^{t_3-1} (1 + q(t))  &\leq \exp\left(\frac{1}{1 - \lambda}\cdot \frac{\log n + 4 \left(\log n\right)^{2/3}}{\left(1 - (1 - \xi)\cdot (\log n)^{-\xi}\right)^2}  \right)\cdot \exp\left( \frac{2}{1-\lambda} \cdot (\log n)^{2/3}\right)\notag \\  &\leq \exp\left(\frac{1}{1 - \lambda}\cdot \frac{\log n + 6 \left(\log n\right)^{2/3}}{\left(1 - (1 - \xi)\cdot (\log n)^{-\xi}\right)^2}  \right). \label{eq:finalphase1234.7}
	\end{align}
Lastly, we consider Phase $4$. In this phase, since $t_3 \geq \frac{1}{2 \log 2}\cdot \log n$ and $\epsilon \geq \frac{1}{\log n}$, we have $\Cshrink \leq 1 - \frac{1}{\log n}$. Recall that $D:= n \cdot \exp\left(-\sqrt{\log n}\right)$, hence $1 - \min \left(\frac{1}{2\left(1 - \Cshrink\right) \cdot D}, 1/2\right) \geq  1 -1/\sqrt{n}$. Thus, similarly to as we did for $t_1$-$t_3$, by minimality of $t_4$,
	\begin{equation}
		\sum^{t_4 -2}_{t=t_3}\log \left(1 - q(t) \cdot \phi_{n/2}\right) \geq - \frac{\log \left(\frac{n}{2 \cdot D}\right) + \left(\log\left(\frac{n}{2 \cdot D}\right) - \log \left(1 - \Cshrink\right) + 1\right)^{2/3}}{1 - \min \left(\frac{1}{2\left(1 - \Cshrink\right) \cdot D}, 1/2\right)} \geq -\frac{(\log n)^{2/3}}{5} . \label{eq:t44.7pull}
	\end{equation}
	Thus we have, since $1-q(t) \geq \eps \geq 1/\log n$ we have   
	\begin{equation}
		\sum^{t_4 -1}_{t=t_3}\log \left(1 - q(t) \cdot \phi_{n/2}\right)     \geq   -\frac{(\log n)^{2/3}}{5} + \log\left(1 - q(t_4-1) \cdot \phi_{n/2}\right) \geq   -\frac{(\log n)^{2/3}}{4}.   \label{eq:phase4sum4.7}
	\end{equation}
 Using \cref{lem:setconductancebound} to give $\phi_{n/2}\geq \frac{1}{2}\cdot \left(1 - \lambda\right)$ in $(a)$, and then using the fact that $-2x\leq \log(1-x)\leq -x$ when $x\in [0,1/2]$, noting that $q(t)(1-\lambda)/2\in [0,1/2]$, we get
	\begin{equation}
		\sum_{t=t_3}^{t_4-1}\log(1 - q(t) \cdot \phi_{n/2}) \overset{(a)}{\leq} \sum_{t=t_3}^{t_4-1}\log\left(1 - q(t) \cdot \frac{1 - \lambda}{2} \right)\leq \sum_{t=t_3}^{t_4-1} \frac{1 - \lambda}{2}\cdot q(t)  \leq \frac{1-\lambda}{4} \sum_{t=t_3}^{t_4-1}\log(1-q(t)). \label{eq:phase4sum4.72}
	\end{equation}
	Combining \cref{eq:phase4sum4.7} and \cref{eq:phase4sum4.72} gives $\sum_{t=t_3}^{t_4-1}-\log(1-q(t)) \leq \frac{(\log n)^{2/3}}{1- \lambda}  $. Exponentiating, we obtain $\prod^{t_4-1}_{t=t_3}\left(1 - q(t)\right)^{-1} \leq \exp\left(\frac{(\log n)^{2/3}}{1- \lambda} \right) .$
	Then, since $1+z \leq \frac{1}{1-z}$ for $z \in [0,1)$, we conclude that
	\begin{align}   \prod_{t=t_3}^{t_4-1}\left(1 + q(t)\right) \leq \prod^{t_4-1}_{t=t_3}\left(1 - q(t)\right)^{-1} \leq \exp\left(\frac{(\log n)^{2/3}}{1- \lambda} \right).\label{eq:finalt44.7}
	\end{align}
	Combining \cref{eq:finalphase1234.7} and \cref{eq:finalt44.7}, we obtain, 
	\begin{align*}
		\prod_{t=0}^{t_4-1}\left(1 + q(t)\right) &\leq    \exp\left(\frac{1}{1 - \lambda}\cdot \frac{\log n + 7 \left(\log n\right)^{2/3}}{\left(1 - (1 - \xi)\cdot (\log n)^{-\xi}\right)^2}  \right).
	\end{align*}
If we define $\gamma_{\pull} := (1-\lambda)$, then this implies that $t_4 \leq T_{\pull}$ by the theorem statement, concluding the proof. 
		
		The proof for \push is very similar, however the constant $\gamma_{\pull} $ is slightly different. This is since in Phase $2$ we have to use the bound $\delta_t \geq q(t)\cdot \left(1 - 7 \sqrt{\lambda + \frac{1}{\log n}}\right)$ from \cref{lem:expushSTRONG} \eqref{itm:Push} for \push, rather than the bound $\delta_t = q(t) \cdot \varphi(I_t)$ from \cref{lem:PPP-P} \eqref{itm:Pull} for \pull. Other than that, very little changes in the proof.\end{proof}

\subsection{Power-Law Credibility}\label{sec:applicationPowerLaw}
In this part we consider a natural credibility function with a polynomial decay. 
\begin{definition}[Power-law credibility]
Let $\alpha \in (0,\infty)$ be any constant . Then, the power-law credibility function is defined for any round $t \geq 0$ as
\begin{equation*}
    q_{\alpha}(t):= (t+1)^{-\alpha}.
\end{equation*}
In particular, in the first round the credibility function is $1$.
\end{definition}

We first observe that if $\alpha > 1$, we only inform a constant number of vertices in expectation.

\begin{restatable}{proposition}{multiplicativeupperpower}\label{pro:powerlaw} Let $(G_t)_{t\geq 0}$ be any sequence of regular graphs, and consider a Growing Process such that $\EX{t}{ \frac{|\Delta_{t+1}|}{|I_t|}  }\leq \Cgrow \cdot q(t)$ for all $t\geq 0$ . Then, for any constant $\alpha > 1$, there is a constant $\kappa=\kappa(\alpha) > 0$, such that for any $T \geq 0$,
    \begin{equation*}
        \E{ |I_T| } \leq \kappa.
    \end{equation*}
\end{restatable}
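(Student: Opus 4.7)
The plan is to iterate the one-step expectation bound into a product, take logarithms, and exploit the convergence of $\sum_{t\geq 1} t^{-\alpha}$ for $\alpha>1$.

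First, by the tower property and the assumed growth bound, for any $t\geq 0$,
\[
\E{|I_{t+1}|} = \E{|I_t|} + \E{|\Delta_{t+1}|} = \E{|I_t|} + \E{|I_t|\cdot \EX{t}{|\Delta_{t+1}|/|I_t|}} \leq \E{|I_t|}\cdot \left(1+\Cgrow\cdot q_\alpha(t)\right),
\]
using that $|I_t|>0$ deterministically by (\Mono). Iterating from $t=0$ to $T-1$ and using $|I_0|=1$,
\[
\E{|I_T|} \leq \prod_{t=0}^{T-1}\left(1+\Cgrow\cdot (t+1)^{-\alpha}\right).
\]

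Next I would take logarithms and apply $\log(1+x)\leq x$ for $x\geq 0$:
\[
\log \E{|I_T|} \leq \sum_{t=0}^{T-1} \log\!\left(1+\Cgrow\cdot (t+1)^{-\alpha}\right) \leq \Cgrow \cdot \sum_{t=0}^{T-1} (t+1)^{-\alpha} \leq \Cgrow \cdot \sum_{s=1}^{\infty} s^{-\alpha}.
\]
Since $\alpha>1$, the last sum is the Riemann zeta value $\zeta(\alpha)<\infty$, which is a constant depending only on $\alpha$. Exponentiating,
\[
\E{|I_T|}\leq \exp\left(\Cgrow\cdot \zeta(\alpha)\right) =: \kappa(\alpha),
\]
which is the desired bound; note $T$ disappears because the tail of the power series is summable.

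The argument is essentially routine, so no step is a genuine obstacle. The only mild subtlety is ensuring the iteration goes through even though $|I_t|$ is random: this is handled by pulling $|I_t|$ out of the conditional expectation in the first display (which is legitimate since $|I_t|$ is $\mathfrak{F}^t$-measurable). If one wished to sharpen $\kappa$ or obtain a tail bound rather than just a mean bound, one could instead argue via the supermartingale $Y_T := |I_T|/\prod_{t=0}^{T-1}(1+\Cgrow q_\alpha(t))$ and the optional stopping theorem, exactly as in the proof of \cref{thm:general_lower}; but for the stated expectation bound this is unnecessary.
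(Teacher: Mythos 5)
Your proposal is correct and follows essentially the same route as the paper: iterate the one-step bound $\EX{t}{|I_{t+1}|}\leq(1+\Cgrow\cdot q(t))\cdot|I_t|$ via the tower property, bound the resulting product by $\exp\bigl(\Cgrow\sum_{t\geq 0}q(t)\bigr)$, and use that $\sum_{t\geq 1}t^{-\alpha}<\infty$ for $\alpha>1$. The paper's proof is just a more compressed version of the same argument (using $1+x\leq e^x$ directly instead of taking logarithms), and your explicit handling of the conditioning step is a welcome clarification rather than a deviation.
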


The condition $\EX{t}{ \frac{|\Delta_{t+1}|}{|I_t|}  }\leq \Cgrow \cdot q(t)$ is a refinement of $\mathcal P_2$ in \Cref{def:processes}, and is satisfied by the \pull, \push, \pp processes as shown in \Cref{lem:PPP-P} by choosing $\Cgrow$ as 1, 1, and 2, respectively.
\begin{proof} 
Recall that $|I_{t+1}| = \left(1+\frac{|\Delta_{t+1}|}{|I_{t}|}\right) \cdot |I_{t}|$. Then,
\begin{align*}
    \EX{t}{|I_{t+1}|} \leq (1+\Cgrow \cdot q(t)) \cdot |I_{t}| \leq e^{\Cgrow \cdot q(t)}\cdot |I_{t}|
\end{align*}
and recursively we have $\E{|I_{T}|} \leq e^{ \Cgrow \sum_{t=0}^{T-1} q(t)}$. Since $q(t) = (t+1)^{-\alpha}$ with $\alpha>1$, it holds that $\sum_{t=0}^{\infty} q(t) < \infty$. By choosing $\kappa =e^{ \Cgrow \sum_{t=0}^{\infty} q(t)}$ we obtain the desired result.
\end{proof}

The next result considers the regime $\alpha \leq 1$, and proves that after a sufficiently long time, the rumor reaches all $n$ vertices. In particular, when $\alpha=1$, the spreading time becomes polynomial in $n$ (even if $(G_t)_{t \geq 0}$ was a sequence of expander graphs). 
\begin{restatable}{theorem}{multiplicativelowerpower}\label{thm:multiplicative_lower_power}
Let $(G_t)_{t\geq 0}$ be any sequence of regular $n$-vertex graphs, and consider a process $\mathcal{P}$ that is both a $\Cgrow$-growing process and a $\Cshrink$-shrinking process, where $\Cshrink<1$ is constant, with a power law credibility function. Then, for any constant $\alpha < 1$, there are constants $0 < \kappa_1:=\kappa_1(\alpha) < \kappa_2:=\kappa_2(\alpha)$
such that for any $T_1 \leq  \kappa_1 \cdot  (\frac{1}{\Psigrow} \cdot \log n)^{1/(1-\alpha)}$, $T_2 \geq \kappa_2 \cdot (\frac{1}{\Phigrow} \cdot \log n)^{1/(1-\alpha)}$ and any $\eta > 0$ we have,
\begin{itemize}
    \item [$(i)$] $\Pro{ |I_{T_1}| < n^{1/2 + \eta}} \geq 1-n^{-\eta}$,
    \item [$(ii)$] $\Pro{ |I_{T_2}| = n} \geq 1-o(1)$.
\end{itemize}Further, if $\alpha=1$, then there are constants $0 < \kappa_1 < \kappa_2$, such that
for any $T_1 \leq \left(\frac{1}{\Psigrow} \cdot n \right)^{\kappa_1}$ and $T_2 \geq \left(\frac{1}{\Phigrow} \cdot n \right)^{\kappa_2}$,
\begin{itemize}
    \item [$(iii)$] $\Pro{ |I_{T_1}| < n^{1/2 + \eta}} \geq 1-n^{-\eta}$,
    \item [$(iv)$] $\Pro{ |I_{T_2}| = n} \geq 1-o(1)$.
\end{itemize}
\end{restatable}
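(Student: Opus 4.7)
The plan is to obtain the upper bounds $(i)$ and $(iii)$ directly from \cref{thm:general_lower}, and the lower bounds $(ii)$ and $(iv)$ from \cref{thm:general}. In both directions, the power-law form $q(t)=(t+1)^{-\alpha}$ reduces the abstract sums in the hypotheses of those theorems to integrals (or harmonic sums) that are trivial to invert.

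For statements $(i)$ and $(iii)$, I would apply \cref{thm:general_lower} with $\rho := n^{-1/2}$. Since $\Psigrow(t) \le \Psigrow$ and $\log(1+x)\le x$, the precondition of that theorem is implied by
\[
\Psigrow\cdot\sum_{t=0}^{T-1} (t+1)^{-\alpha} \;\le\; \tfrac{1}{2}\log n.
\]
For $\alpha<1$, $\sum_{t=0}^{T-1}(t+1)^{-\alpha}\le \tfrac{T^{1-\alpha}}{1-\alpha}+1$, which inverts to $T\le \kappa_1(\alpha)\cdot(\Psigrow^{-1}\log n)^{1/(1-\alpha)}$ for some constant $\kappa_1(\alpha)>0$. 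For $\alpha=1$, $\sum_{t=0}^{T-1}(t+1)^{-1}\le 1+\log T$, which inverts to $T\le c\cdot n^{\Theta(1/\Psigrow)}$, and this is at least of the form $(n/\Psigrow)^{\kappa_1}$ for some constant $\kappa_1$. The stated tail bound $\PrO[|I_{T_1}|\ge n^{1/2+\eta}]\le n^{-\eta}$ is then exactly the Markov step already inside \cref{thm:general_lower}.

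For statements $(ii)$ and $(iv)$, I would invoke \cref{thm:general} with the given constant $\Cshrink<1$ and $\kappa$ any constant (the condition $\Cshrink\le 1-n^{-\kappa}$ is satisfied for large $n$). From \eqref{eq:phigrowtrivial}, $\delta_t^{[1,n-1]} \ge \Phigrow\cdot q(t)=\Phigrow(t+1)^{-\alpha}$, and since $\log(1+x)\ge x/2$ for $x\in[0,1]$ (and the earlier terms where $\Phigrow(t+1)^{-\alpha}>1$ contribute at most a constant, since $\Phigrow\le\Cgrow$ is bounded and there are only finitely many such $t$), we obtain
\[
\sum_{t=0}^{T-1}\log\bigl(1+\delta_t^{[1,n-1]}\bigr) \;\ge\; \tfrac{\Phigrow}{2}\sum_{t=0}^{T-1}(t+1)^{-\alpha} - O(1).
\]
Demanding the right-hand side be at least $(2/\xi+\kappa)\log n$ gives, by the same integral estimates as above, $T_2\ge \kappa_2(\alpha)\cdot(\Phigrow^{-1}\log n)^{1/(1-\alpha)}$ when $\alpha<1$, and $T_2\ge (n/\Phigrow)^{\kappa_2}$ when $\alpha=1$. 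Then \cref{thm:general} concludes $\PrO[|I_{T_2}|=n]=1-o(1)$.

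The main technical annoyance, rather than obstacle, is the prefix of rounds for which $\Phigrow\cdot q(t)$ may exceed $1$ (and so $\log(1+x)\ge x/2$ is not immediately useful); this affects only $O(\Phigrow^{1/\alpha})$ initial rounds and each such term contributes $O(\log(1+\Cgrow))=O(1)$, so the discrepancy is absorbed into the additive slack $(2/\xi+\kappa)\log n$ already built into \cref{thm:general} by adjusting $\kappa_2$ to a slightly larger constant. The same remark handles the truncation errors between the discrete harmonic/polynomial sums and their continuous counterparts.
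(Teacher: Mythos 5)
Your proposal is correct and follows essentially the same route as the paper: parts $(i)$ and $(iii)$ via \cref{thm:general_lower} with $\rho=n^{-1/2}$ and the harmonic/polynomial sum estimate (\cref{clm:generalharmonic}), and parts $(ii)$ and $(iv)$ via \cref{thm:general} after lower-bounding $\sum_t\log(1+\Phigrow(t)q(t))$ by a constant multiple of $\Phigrow\sum_t(t+1)^{-\alpha}$ (the paper uses $1+x\ge e^{x/4}$ where you use $\log(1+x)\ge x/2$, an immaterial difference). Your explicit handling of the initial rounds where $\Phigrow\cdot q(t)$ may exceed $1$ is a minor point the paper glosses over, and your treatment of it is sound.
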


\begin{proof}We begin with the proof of $(i)$. Observe that

\begin{equation*} 
 \prod_{t=0}^{T_1-1} \left(1 + \Psigrow(t) \cdot q(t) \right)  \leq \exp\left( \Psigrow  \cdot \sum_{t=1}^{T_1} t^{-\alpha} \right) \stackrel{(a)}{\leq} \exp\left( \frac{1}{1-\alpha} \cdot \Psigrow \cdot  T_1^{1-\alpha} \right) \stackrel{(b)}{=} \exp\left(\frac{1}{1-\alpha} \cdot \kappa_1 \log n\right) \stackrel{(c)}{\leq} \sqrt{n},
 \end{equation*}
 where $(a)$ holds by \cref{clm:generalharmonic} $(b)$ holds by the definition of $T_1$ and $(c)$ holds  if $\kappa_1 \leq (\frac{1 - \alpha}{2})^{\frac{1}{1-\alpha}}$. The result then follows from \cref{thm:general_lower}, where we take $\rho = 1/\sqrt{n}$. 
 
 \smallskip

\noindent\textit{Proof of $(ii)$}: We seek to apply \cref{thm:general}.  Since $1 + x \geq e^{x/4}$ for $0 \leq x \leq 1$, and by \cref{clm:generalharmonic}, we have
\begin{equation*} 
 \prod_{t=0}^{T_2-1} \left(1 + \Phigrow(t) \cdot q(t) \right) \geq \exp\left( \frac{1}{4} \cdot \Phigrow \cdot \sum_{t=1}^{T_2} t^{-\alpha} \right) \geq  \exp\left(\frac{1}{4}\cdot \frac{1}{1-\alpha}\cdot \Phigrow \cdot \left(T_2^{1-\alpha} -1\right)\right). \end{equation*} 
Now, by the definition of $T_2$ and if $\kappa_2 \geq (\tfrac{16(1-\alpha)}{\Phigrow \xi})^{\frac{1}{1-\alpha}}$, where $\xi =10^{-30}$, then we have  \begin{equation*}
\prod_{t=0}^{T_2-1} \left(1 + \Phigrow(t) \cdot q(t) \right)   \geq \exp\left(\frac{1}{4}\cdot \frac{1}{1-\alpha}\cdot \Phigrow \cdot\left(\tfrac{16(1-\alpha)}{\Phigrow \xi}\log n - 1 \right)\right)\geq \exp\left(\left(\frac{2}{\xi} + 1\right)\cdot \log n\right) ,
\end{equation*}
 Hence, taking logarithms, we satisfy the precondition of \cref{thm:general}.

The proofs of statements $(iii)$ and $(iv)$ for $\alpha=1$ are analogous, noting that $\sum_{t=1}^{T} t^{-1} = \ln(T) + \Theta(1)$.\end{proof}

\subsection{Additive Credibility}\label{sec:additive_models}

\begin{definition}[Additive credibility]
    Let $\alpha \in (0,1)$ . Then, the additive credibility function is defined for any round $t \geq 0$ as
    \begin{equation*}
        q_{\alpha}(t)  = q(t) := (1 - t \cdot \alpha)^{+},
    \end{equation*}
    where $z^{+}=\max(z,0)$.
   In particular, in the first round (when $t=0$) the credibility function is $1$. 
\end{definition}
In comparison to the power-law credibility function, the additive credibility function has a time-independent decrease. As we will see below, the interesting regime (for expanders) is when $\alpha=\Theta(1/\log n)$. That means, unlike the power-law-credibility, the additive credibility function remains close to $1$ for a significant number of rounds. However, after $O(\log^2 n)$ steps, the credibility becomes polynomially small; much smaller than any power-law credibility at this point.

Let us consider the additive credibility function in the \push and \pull model for regular graphs. 
We also observe that if we let $T = 1/\alpha$, $I_{T}$ is the maximal set of informed vertices in every execution, as $q(t)=0$ for $t \geq T$. We start by proving an upper bound on $I_{T}$ for $T=1/\alpha$, followed by a lower bound. We remark that, due to the specific nature of $q(t)$, we can use Stirling's approximation to determine a rather precise threshold for the parameter $\alpha$.

\begin{restatable}{theorem}{additive}\label{thm:additive}
  Let $(G_t)_{t\geq 0}$ be a sequence of regular $n$-vertex strong expander graphs, and consider the \push or \pull protocol with an additive credibility function. Let $\mathcal{P} \in \{\push, \pull\}$. 
  \begin{itemize}
      \item [$(i)$]
  Let $\alpha \geq  \frac{\log \left(\frac{4}{e}\right)}{\log n + \log \zeta}$, where $\frac{1}{n} < \zeta < \frac{1}{\sqrt{2}\cdot 2}$ . Then, for any $T := 1/\alpha$ and for any $\eta > 0$ (not necessarily constant),  
    \begin{equation*}
        \Pro{ |I_T| \leq \sqrt{2}\cdot \zeta \cdot n^{1 + \eta}} \geq 1 - n^{-\eta}.
    \end{equation*}
\item [$(ii)$]
Further, let $\alpha \leq \frac{ \log\left( \frac{4}{e} \right)}{\log\left(2 \sqrt{2} \cdot \exp\left(\frac{1}{\gamma_{\mathcal{P}}}\cdot \frac{\log n + 7(\log n)^{2/3}}{\left(1 - (1-\xi)\cdot (\log n)^{-\xi}\right)^2}\right)\right)}$, for $\gamma_{\mathcal{P}}$ as in \cref{thm:general_strong}. Then, for $T:=1/\alpha$,
\[
 \Pro{ |I_T| \geq n \cdot \left( 1 - \exp(- \sqrt{\log n}) \right) } \geq 1-o(1).
\]
\end{itemize}
\end{restatable}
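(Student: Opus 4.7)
The backbone of both parts is an exact evaluation of
\[
S(\alpha) := \sum_{t=0}^{T-1}\log\bigl(1+q(t)\bigr) = \sum_{t=0}^{T-1}\log\bigl(2-t\alpha\bigr),\qquad T := 1/\alpha.
\]
Factoring $\alpha$ out of each term gives $\prod_{t=0}^{T-1}(1+q(t)) = \alpha^{T}\cdot (2T)!/T!$ (treating $1/\alpha$ as an integer; the rounding is absorbed into the slack). Applying Stirling's formula with Robbins's explicit error bounds $\sqrt{2\pi n}(n/e)^n e^{1/(12n+1)}<n!<\sqrt{2\pi n}(n/e)^n e^{1/(12n)}$, the leading powers of $T$ cancel against $\alpha^T = T^{-T}$, and because the residual correction $e^{1/(24T)-1/(12T+1)}$ is less than $1$ for $T\geq 1$, one obtains the very clean one-sided bound
\[
\prod_{t=0}^{T-1}(1+q(t)) \leq \sqrt{2}\cdot (4/e)^{1/\alpha},
\]
with a matching lower bound $\geq \sqrt{2}\cdot (4/e)^{1/\alpha}\cdot e^{-\alpha/12}$.

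For part (i), \cref{lem:PPP-P} gives $\delta_t \leq q(t)\varphi(I_t) \leq q(t)$ for both \push and \pull on any regular graph, hence $\Psigrow(t)\leq 1$ and $\sum_{t=0}^{T-1}\log(1+\Psigrow(t)q(t)) \leq S(\alpha) \leq \log\sqrt{2}+\log(4/e)/\alpha$. The hypothesis $\alpha \geq \log(4/e)/(\log n + \log \zeta)$ rearranges to $\log(4/e)/\alpha\leq \log n + \log \zeta$, so the bound above is at most $\log(\sqrt{2}\zeta n)$, which is exactly the precondition of \cref{thm:general_lower} with $\rho:=\sqrt{2}\zeta \in (0,1)$. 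That theorem yields $\E{|I_T|}\leq \sqrt{2}\zeta n$, and Markov's inequality delivers the stated tail bound.

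For part (ii), we apply \cref{thm:general_strong}, whose main precondition reads $S(\alpha)\geq R$ with $R := \gamma_{\mathcal{P}}^{-1}\cdot(\log n + 7(\log n)^{2/3})/(1-(1-\xi)(\log n)^{-\xi})^2$. The hypothesis $\alpha \leq \log(4/e)/\log(2\sqrt{2}\cdot e^R)$ rearranges to $\log(4/e)/\alpha \geq \log(2\sqrt{2})+R$, which combined with our Stirling lower bound gives
\[
S(\alpha) \geq \log\sqrt{2} + \log(2\sqrt{2}) + R - \alpha/12 = R + \log 4 - \alpha/12 \geq R
\]
for $n$ large (using $\alpha = O(1/\log n)\ll 12\log 4$). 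The auxiliary precondition $\epsilon\geq 1/\log n$, with $\epsilon = 1-\max_{t\geq \log(n)/(2\log 2)}q(t) = \alpha\cdot\lceil\log(n)/(2\log 2)\rceil$, reduces to $\alpha \geq (2\log 2)/(\log n)^2$, which is much weaker than the theorem's upper bound on $\alpha$ (of order $1/\log n$) and thus automatically satisfied. \cref{thm:general_strong} then yields the claimed concentration of $|I_T|$.

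The only nontrivial technical step is the Stirling evaluation; once it is carried out with Robbins's explicit error bounds, the constants $\sqrt{2}$ in (i) and $2\sqrt{2}$ in (ii) drop out directly. The extra slack of $\log 4$ in (ii), i.e.\ the gap between $\log(2\sqrt{2})$ and $-\log\sqrt{2}$, is precisely what absorbs the vanishing Robbins correction, while (i) needs no slack because the relevant direction of the Stirling bound is clean and one-sided.
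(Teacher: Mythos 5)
Your overall route is the paper's: evaluate $\prod_{t=0}^{1/\alpha-1}(2-t\alpha)$ via Stirling with explicit error terms (your direct computation $\alpha^T(2T)!/T!$ reproduces, and in fact slightly sharpens, the paper's \cref{clm:HelperStirling,clm:HelperStirlingLB}), then feed the resulting bound on $\sum_t\log(1+q(t))$ into \cref{thm:general_lower} for part~$(i)$ and \cref{thm:general_strong} for part~$(ii)$. Part~$(i)$ is correct as written.

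In part~$(ii)$ there is a genuine gap in how you dispose of the precondition $\epsilon\geq 1/\log n$ of \cref{thm:general_strong}. You correctly reduce it to $\alpha\geq (2\log 2)/(\log n)^2$, but then assert it is ``automatically satisfied'' because this required lower bound is much smaller than the theorem's upper bound on $\alpha$. That is backwards: the hypothesis of part~$(ii)$ only constrains $\alpha$ from \emph{above}, so $\alpha$ may be arbitrarily small (e.g.\ $\alpha=1/n$), in which case $q(t)$ stays essentially $1$ well past round $\frac{1}{2\log 2}\log n$, $\epsilon<1/\log n$, and \cref{thm:general_strong} does not apply. The paper closes exactly this hole with a monotone coupling: one may assume $\alpha$ equals the threshold value $\alpha^*=\Theta(1/\log n)$, since decreasing $\alpha$ only increases $q(t)$ pointwise (and increases $T=1/\alpha$), so the informed set for smaller $\alpha$ stochastically dominates that for $\alpha^*$; at $\alpha=\alpha^*$ one then gets $\epsilon\geq \log(4/e)/(4\log 2)>1/10$. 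Your argument needs this coupling step (or an equivalent restriction of $\alpha$ to the critical value); with it inserted, the rest of your part~$(ii)$ — the Stirling lower bound absorbing the $2\sqrt{2}$ constant and the verification of \cref{eq:mainprecond4.7} — goes through.
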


\begin{proof}
We begin with Item $(i)$, where we seek to apply \cref{thm:general_lower}. We note that by \cref{lem:PPP-P} \eqref{itm:Push}  and  \cref{lem:PPP-P} \eqref{itm:Pull}, for both \push and \pull, $\delta_t \leq q(t) \cdot \varphi(I_t)$. Thus, for \push and \pull, 
\begin{equation}
    \Psigrow(t) := \max_{\substack{I \subseteq V \colon \\ 1 \leq |I| \leq  n-1}} \frac{\delta_t(I)}{q(t)} \leq \max_{\substack{I \subseteq V \colon \\ 1 \leq |I| \leq  n-1}} \frac{q(t) \cdot \varphi(I)}{q(t)} \leq 1. \label{eq:pp-psigrow}
\end{equation}
Hence, by using \cref{eq:pp-psigrow} in $(a)$, \cref{clm:HelperStirling} in $(b)$,  and the definition of $\alpha$ in $(c)$, we have  
\begin{align*}
 \prod^{T-1}_{t=0}\left(1 +  \Psigrow(t) \cdot q(t)\right)  &\stackrel{(a)}{\leq} \prod^{T-1}_{t=0}(1 + q(t)) \leq \prod^{1/\alpha-1}_{t=0}(1 + (1 - t\cdot \alpha))  = \prod_{t=0}^{1/\alpha - 1} \left(2 - t \cdot \alpha \right) \stackrel{(b)}{\leq} \sqrt{2}\cdot \left(\frac{4}{e}\right)^{1/\alpha}  \stackrel{(c)}{\leq} \sqrt{2} \cdot \zeta n.
\end{align*}
Taking logarithms yields,
\[
 \sum^{T-1}_{t=0} \log(1 + \Psigrow(t) \cdot q(t)) \leq \log\left( \sqrt{2}\cdot \zeta\cdot n\right).
\]
By \cref{thm:general_lower},
we obtain
$ \Ex{ |I_{T}|} \leq \sqrt{2}\cdot \zeta \cdot n$ as well as the tail bound by Markov's inequality. 

\medskip 

\noindent\textit{Proof of Item $(ii)$}: For the second statement, we seek to apply \cref{thm:general_strong}. By a simple coupling argument, we may assume that $\alpha = \log\left( \frac{4}{e} \right)/\log\left(2 \sqrt{2} \cdot \exp\left(\frac{1}{\gamma_{\mathcal{P}}}\cdot \frac{\log n + 7(\log n)^{2/3}}{\left(1 - (1-\xi)\cdot (\log n)^{-\xi}\right)^2}\right)\right)$, since by making $\alpha$ smaller, $q(t)$ increases and therefore only more vertices get informed. Recall that as in \cref{thm:general_strong},
\begin{equation*}
		\epsilon := 1 - \max_{t \geq \frac{1}{2 \log(2)} \cdot \log(n)} q(t).  
\end{equation*}
To satisfy the precondition on $q(t)$ of \cref{thm:general_strong}, we need that $\epsilon \geq \frac{1}{\log n}$ for all $t\geq \frac{1}{2 \log 2}$. By definition of the additive credibility function (and the fact it is non-increasing in $t \geq 0$), it holds for any $t \geq \frac{1}{2 \log 2} \cdot \log n$ that
\[
 \epsilon = 1 - q\left( \frac{1}{2 \log 2} \cdot \log n \right) \geq 
 1 - (1 - \frac{1}{2 \log 2} \cdot \log n \cdot \alpha).
\]
Since the choice of $\alpha$ implies that $\alpha \geq \frac{\log(4/e)}{\frac{1}{2} \cdot \log n}$, it follows that
\[
 \epsilon \geq \frac{\log(4/e)}{4 \log 2} > \frac{1}{10}. 
\]
 
Therefore the precondition on $q(t)$ of \cref{thm:general_strong} holds. Furthermore, by using \cref{clm:HelperStirlingLB} in $(a)$,
\[
  \prod_{t=0}^{T-1} (1+ q(t)) = \prod^{1/\alpha -1}_{t=0}(1 +  q(t))  = 
 \prod^{1/\alpha -1}_{t=0}\left(2 - t\cdot \alpha\right)  \stackrel{(a)}{\geq} \frac{1}{\sqrt{2}}\left(\frac{4}{e}\right)^{1/\alpha}\cdot e^{-1/2\cdot \alpha}\]
   Now, using in $(b)$ that $e^{-1/2 \cdot \alpha} \geq \frac{1}{2}$ for our choice of $\alpha=O(\frac{1}{\log n})$, and in $(c)$ again our specific choice of $\alpha$ gives 
   \begin{equation*}
  \prod_{t=0}^{T-1} (1+ q(t))    \stackrel{(b)}{\geq} \frac{1}{\sqrt{2}}\left(\frac{4}{e}\right)^{1/\alpha}\cdot \frac{1}{2} \stackrel{(c)}{\geq} \exp\left( \frac{1}{\gamma_{\mathcal{P}}}\cdot \frac{\log n + 7\left(\log n\right)^{2/3}}{\left(1 - (1 - \xi)\cdot (\log)^{-\xi}\right)^2} \right),
\end{equation*}
Taking logarithms yields,
\[
 \sum_{t=0}^{T-1} \log(1+q(t)) \geq  \frac{1}{\gamma_{\mathcal{P}}}\cdot \frac{\log n + 7\left(\log n\right)^{2/3}}{\left(1 - (1 - \xi)\cdot (\log)^{-\xi}\right)^2}, 
\]
and hence also the precondition~\cref{eq:mainprecond4.7} in \cref{thm:general_strong} holds. Therefore, statement $(ii)$ follows by \cref{thm:general_strong}.
\end{proof}

\subsection{Multiplicative Credibility}\label{sec:applicationMultiplicative}

\begin{definition}[Multiplicative credibility]\label{def:multcred}
Let $\alpha \in (0,1)$ . Then, the multiplicative credibility function is defined for any round $t \geq 0$ as
\begin{equation*}
    q_{\alpha}(t) := (1-\alpha)^{t}.
\end{equation*}
In particular, in the first round the credibility function is $1$. 
\end{definition}
The next result is the multiplicative analogue of \cref{thm:additive}.

\begin{restatable}{theorem}{multiplicative}\label{thm:multiplicative}
Let $(G_t)_{t\geq 0}$ be any sequence of regular $n$-vertex strong expander graphs, and consider the \push or \pull protocol with a multiplicative credibility function . Then, there are constants $\kappa_1 \leq \frac{1}{2}$ and $\kappa_2 \geq  \frac{1}{8}$, such that the following holds. 
\begin{itemize}
\item[$(i)$]
If $\alpha \geq \frac{\kappa_1}{\log n}$, then for any $T \geq 1$, $\EX{t}{|I_{T}|} \leq \sqrt{n}$, and hence for any $\eta > 0$ (not necessarily constant),
    \begin{equation*}
        \Pro{ |I_T| \leq    n^{1/2 + \eta}} \geq 1 - n^{-\eta}.
    \end{equation*}

\item[$(ii)$] Further, if $\alpha \leq \frac{\kappa_2}{\log n}$, then, for any $T\geq 4 \log n$, 
\[
 \Pro{ |I_T| \geq n \cdot \left(1 - \exp(- (\log n)^{1/2} ) \right) } \geq 1-o(1).
\]
\end{itemize}
\end{restatable}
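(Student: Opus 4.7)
I will mirror the proof of the additive analogue \cref{thm:additive}, applying \cref{thm:general_lower} for the upper bound in part $(i)$ and \cref{thm:general_strong} for the lower bound in part $(ii)$. The only new ingredients are the elementary estimates on the sums associated with the multiplicative credibility $q_{\alpha}(t)=(1-\alpha)^t$.

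For part $(i)$, I would take $\rho=n^{-1/2}$ in \cref{thm:general_lower}. By \cref{lem:PPP-P}\,\eqref{itm:Push}--\eqref{itm:Pull}, $\Psigrow(t)\le 1$ for both \push and \pull, so it suffices to show that $\sum_{t=0}^{T-1}\log(1+(1-\alpha)^t)\le \tfrac{1}{2}\log n$ uniformly in $T\ge 1$. Using $\log(1+x)\le x$ together with the geometric-series identity,
\[
\sum_{t=0}^{\infty}\log\bigl(1+(1-\alpha)^t\bigr)\;\le\;\sum_{t=0}^{\infty}(1-\alpha)^t\;=\;\frac{1}{\alpha}.
\]
The hypothesis $\alpha\ge \kappa_1/\log n$ then gives $1/\alpha\le \log n/\kappa_1$, and a suitable choice of the constant $\kappa_1$ reduces this to $\tfrac{1}{2}\log n$. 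Markov's inequality then delivers the tail bound exactly as in \cref{thm:additive}\,$(i)$.

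For part $(ii)$, I invoke \cref{thm:general_strong}, which has two preconditions to verify. First, $\epsilon:=1-\max_{t\ge (\log n)/(2\log 2)}q(t)\ge 1/\log n$: since $q(t)\le e^{-\alpha t}$, for $\alpha\le \kappa_2/\log n$ and $t\ge (\log n)/(2\log 2)$ we obtain $q(t)\le \exp(-\kappa_2/(2\log 2))$, which is bounded away from $1$ by a constant, so $\epsilon$ is a positive constant. Second, that $\sum_{t=0}^{T-1}\log(1+q(t))$ is large enough. Using $(1-\alpha)^t\ge e^{-2\alpha t}$ for small $\alpha$, for all $t\le T\le 4\log n$ we have $q(t)\ge \exp(-8\kappa_2)$, hence $\log(1+q(t))\ge c$ for a positive constant $c=c(\kappa_2)$; summing over $t=0,\dots,T-1$ with $T\ge 4\log n$ yields a total of at least $4c\log n$. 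Choosing $\kappa_2\ge \tfrac{1}{8}$ ensures this sum exceeds the threshold $(1/\gamma_{\mathcal{P}})\cdot(\log n + O((\log n)^{2/3}))$ in \eqref{eq:mainprecond4.7}, using that $\gamma_{\mathcal{P}}=1-o(1)$ for strong expanders. The statement then follows directly from \cref{thm:general_strong}.

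The main delicate point is matching the explicit constants $\kappa_1,\kappa_2$ in the statement with the bounds coming out of these geometric-series estimates, together with the sub-leading $(\log n)^{2/3}$ corrections in \cref{thm:general_strong}; otherwise the argument is a direct translation of the additive case.
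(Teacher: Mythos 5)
Your plan mirrors the paper's proof exactly: \cref{thm:general_lower} with $\Psigrow(t)\le 1$ and the geometric sum $\sum_t(1-\alpha)^t=1/\alpha$ for part $(i)$, and \cref{thm:general_strong} with a Bernoulli-type estimate on $(1-\alpha)^t$ for part $(ii)$; the paper packages the numerical estimates in \cref{clm:nightmare}. The substantive gap is in your verification of the $\epsilon$ precondition in $(ii)$. You argue that $\alpha\le\kappa_2/\log n$ together with $t\ge(\log n)/(2\log 2)$ gives $q(t)\le e^{-\alpha t}\le\exp(-\kappa_2/(2\log 2))$, but the second inequality is reversed: to bound $q(t)$ away from $1$ you need a \emph{lower} bound on $\alpha t$, whereas the hypothesis $\alpha\le\kappa_2/\log n$ only yields an upper bound. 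Indeed, if $\alpha\to 0$ then $q(t)\to 1$ uniformly on any bounded horizon and $\epsilon\to 0$, so the precondition $\epsilon\ge 1/\log n$ of \cref{thm:general_strong} fails outright. The correct repair --- which the paper also leaves implicit for the multiplicative case, although it is stated for the additive case in \cref{thm:additive}$(ii)$ --- is a monotone coupling: decreasing $\alpha$ pointwise increases $q(t)$ and hence (by the usual coupling for \push and \pull) can only increase $|I_T|$ stochastically, so one may assume without loss of generality that $\alpha=\kappa_2/\log n$, after which $\epsilon=1-(1-\kappa_2/\log n)^{(\log n)/(2\log 2)}\to 1-\exp(-\kappa_2/(2\log 2))$, a positive constant.

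Two smaller arithmetic points. In $(i)$, your chain $1/\alpha\le\log n/\kappa_1\le\tfrac12\log n$ requires $\kappa_1\ge 2$, which is inconsistent with the stated $\kappa_1\le\tfrac12$; the paper's own \cref{clm:nightmare}$(i)$ has the identical inconsistency, since $\exp(\log n/c_1)$ with $c_1=\tfrac12$ equals $n^2$, not $\sqrt n$, so this deserves explicit flagging rather than deferral to ``a suitable choice''. In $(ii)$, the phrase ``choosing $\kappa_2\ge\tfrac18$'' has the monotonicity backwards: your lower bound $q(t)\ge e^{-8\kappa_2}$ degrades as $\kappa_2$ grows, so the estimate should fix $\kappa_2:=\tfrac18$ (as the paper does), and the theorem statement's ``$\kappa_2\ge\tfrac18$'' is just the claim that this constant can be pushed at least that high.
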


\begin{remark}
We  believe that with a more refined analysis it would be also possible to show that $\kappa_2 \geq (1-o(1)) \cdot \kappa_1$, but for the sake of simplicity and space we only prove this weaker dichotomy here.
\end{remark}

\begin{proof} We note again that by \cref{lem:PPP-P}, $\delta_t \leq q(t) \cdot \varphi(I_t)$. Hence, similarly as in the proof of \cref{thm:additive}
\begin{equation}
\Psigrow(t) := \max_{\substack{I \subseteq V \colon \\ 1 \leq |I| \leq  n-1}} \frac{\delta_t(I)}{q(t)} \leq \max_{\substack{I \subseteq V \colon \\ 1 \leq |I| \leq  n-1}} \frac{q(t) \cdot \varphi(I)}{q(t)} \leq 1. \label{eq:pp-psigrow2}
\end{equation}
\noindent\textit{Proof of $(i)$}: Moreover, we set $\kappa_1:= \frac{1}{2}$, then for any $T \geq 0$,
\begin{align*}
 \prod_{t=0}^{T-1}\left(1 + \Psigrow(t) \cdot q(t)\right) &\stackrel{(a)}{\leq} \prod_{t=0}^{T-1} \left(1 + q(t) \right)\leq  \prod_{t=0}^{\infty} \left(1 + \left(1 - \alpha \right)^t \right)\stackrel{(b)}{\leq} \prod_{t=0}^{\infty} \left(1 + \left(1 - \frac{1/2}{\log n} \right)^t \right) \stackrel{(c)}{\leq} \sqrt{n},
\end{align*}
where $(a)$ holds by \cref{eq:pp-psigrow2}, $(b)$ by our bound on $\alpha$, and $(c)$ by \cref{clm:nightmare}$(i)$. Taking logarithms, the statement holds by \cref{thm:general_lower}. 

\medskip

\noindent\textit{Proof of $(ii)$}: For the second statement, we set $\kappa_2:= \frac{1}{8}$, and then for $T:=4 \log n$,
\begin{equation}\label{eq:addlower}
\prod_{t=0}^{4 \log n-1} (1+q(t) ) = \prod_{t=0}^{4 \log n -1}\left(1 + (1 - \alpha)^t\right) \stackrel{(a)}{\geq} \prod_{t=0}^{4 \log n -1}\left(1 + \left(1 - \frac{1/8}{\log n}\right)^t\right)\stackrel{(b)}{\geq} n^{3/2},    
\end{equation}
where $(a)$ follows for our value of $\alpha$ and $(b)$ by \cref{clm:nightmare} $(ii)$.

Taking the logarithm of \eqref{eq:addlower}, we see that the first precondition of \cref{thm:general_strong} is met for $T=4 \log n$. Also, by definition of the multiplicative credibility function, we have for any constant $c_2 > 0$ another constant $\epsilon=\epsilon(c_2) > 0$ such that for any $t \geq c_2 \log n$, $q(t) \leq 1- \epsilon$, which is the second precondition . Therefore, \cref{thm:general_strong} applies, completing the proof.
\end{proof}

\subsection{Fixed Credibility}\label{sec:fixed_credibility}
Here, we consider $q(t) = q$ to be constant over time (however, $q(t)$ may depend on $n$). This model was studied in previous works \cite{daknama_panagiotou_reisser_2021,doerr2017randomized} on complete graphs and strong expanders \eqref{eq:strongexpander}, respectively (under the guise of ``robustness'').
Here we provide upper bounds for the spreading time of the \push, \pull and \pp model on regular strong expander graphs, using our framework. As the analysis between the protocols are very similar, we will only give details in the case of \push here. 
\begin{theorem}[{cf.~\cite{daknama_panagiotou_reisser_2021}}]\label{thm:fixed_credibility}
Let $(G_t)_{t\geq 0}$ be any sequence of regular $n$-vertex strong expander graphs. Let the credibility function $q(t) = q$ be constant in $(0,1-\epsilon]$ for some constant $\epsilon > 0$ and define the following times
\begin{itemize}\itemsep1pt
    \item $T_{\push} := (1+o(1)) \cdot \left( \frac{1}{\log( 1+q)} +   \frac{1}{q}\right)\cdot \log n$,
    \item $T_{\pull} := (1+o(1)) \cdot \left( \frac{1}{\log( 1+q)} - \frac{1}{\log(1-q )}\right)\cdot \log n$,
    \item $T_{\pp} := (1+o(1)) \cdot \left(\frac{1}{\log( 1+2q)}+ \frac{1}{q - \log( 1-q )}\right)\cdot \log n$.
\end{itemize}Then for each $\mathcal{P}\in \{\push, \pull, \pp\}$ we have 
\[
\Pro{ |I_{T_{\mathcal{P}}}| = n } \geq 1-o(1).
\]
\end{theorem}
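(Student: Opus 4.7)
The plan is to mimic the four-phase strategy used in the proof of \cref{thm:general_strong} but with $q(t)\equiv q$ constant, and to replace the final near-complete target with a full-completion target by pushing the shrinking phase down to $|U_t|\leq 3/4$ (which forces $|U_t|=0$ since $|U_t|\in\mathbb{Z}_{\geq 0}$). For each $\mathcal{P}\in\{\push,\pull,\pp\}$, I would partition the execution into the following deterministic time windows, chosen so that in each window both the relevant growth/shrinking rate and the conductance bound $\phi_{|I_t|}$ (or $\phi_{|U_t|}$) are essentially constant up to lower-order terms:
\begin{itemize}
\item Growing sub-phase G1: $|I_t|\in[1,\log n]$, short ($O(\log\log n)$ time).
\item Growing sub-phase G2: $|I_t|\in[\log n,n/\log n]$, \emph{dominant growing phase}.
\item Growing sub-phase G3: $|I_t|\in[n/\log n,n/2]$, short ($O(\log\log n)$ time).
\item Shrinking sub-phase S1: $|U_t|\in[n/\log n,n/2]$, short ($O(\log\log n)$ time).
\item Shrinking sub-phase S2: $|U_t|\in[3/4,n/\log n]$, \emph{dominant shrinking phase}.
\end{itemize}
In each sub-phase I would invoke \cref{cor:AzumaGrowing} or \cref{cor:shrinkingphase} with the deterministic lower bound $\nu_t^{[A,B]}$ obtained by combining (i) the protocol-specific growth/shrinking bound from \cref{lem:PPP-P,lem:expushSTRONG,lem:expushUT,lem:LBexPPT} and (ii) the strong-expander conductance estimate $\phi_k\geq(1-\lambda)(1-k/n)$, which is $1-o(1)$ in the dominant sub-phases G2 and S2.

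For the rates, I would use the following substitutions in each protocol. In the dominant growing sub-phase G2: for \pull, \cref{lem:PPP-P}\eqref{itm:Pull} gives $\delta_t\geq q\cdot(1-o(1))$; for \push, \cref{lem:expushSTRONG} gives $\delta_t\geq q(1-7\sqrt{\lambda+|I_t|/n})=q(1-o(1))$; and for \pp, \cref{lem:expushSTRONG} gives $\delta_t\geq 2q(1-o(1))$. Plugging these into the stopping criterion \eqref{eq:cor_prod_pre_grow} with $A=\log n$, $B=n/\log n$ yields a sub-phase length of
\[
(1+o(1))\cdot\frac{\log n}{\log(1+q)}\quad\text{for \push and \pull,}\qquad (1+o(1))\cdot\frac{\log n}{\log(1+2q)}\quad\text{for \pp.}
\]
In the dominant shrinking sub-phase S2, with $\varphi(I_t)=1-o(1)$: for \pull, $\delta_t\geq q(1-o(1))$, giving length $(1+o(1))\log n/(-\log(1-q))$; for \push, \cref{lem:expushUT}\eqref{itm:push1} gives $\delta_t\geq(1-e^{-q})(1-o(1))$, yielding length $(1+o(1))\log n/q$ since $-\log(e^{-q})=q$; for \pp, \cref{lem:LBexPPT}\eqref{itm:ppPPP1} gives $\delta_t\geq(1-e^{-q}(1-q))(1-o(1))$, yielding length $(1+o(1))\log n/(q-\log(1-q))$. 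The four short sub-phases G1, G3, S1, together with the extra $O((\log n)^{2/3})$ error terms arising in \eqref{eq:cor_prod_pre_grow} and \eqref{eq:cor_product_precondition_LB_phaseshrinking}, contribute only a $(1+o(1))$ factor. Summing the two dominant lengths then gives exactly the stated $T_{\push}$, $T_{\pull}$, $T_{\pp}$.

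Finally, each of the five applications of \cref{cor:AzumaGrowing} or \cref{cor:shrinkingphase} fails with probability at most $\exp(-C_2\,\Omega((\log\log n)^{1/3}))$ or $\exp(-C_2\,\Omega((\log n)^{1/3}))$, so a union bound yields the claim with probability $1-o(1)$. Note that $\mathcal{P}$ is $\Cshrink$-shrinking with $\Cshrink=1-\epsilon$ (resp.\ $1-\epsilon^2$ for \pp) by \cref{lem:protocol_growth}, so \cref{cor:shrinkingphase} is applicable in sub-phases S1 and S2.

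The main obstacle will be the careful bookkeeping needed in G3 and S1 (where $|I_t|$ is close to $n/2$ and the conductance bound degrades to a constant strictly less than $1$): here one must verify that the length of these sub-phases is only $O(\log\log n/\log(1+q))$, which is absorbed into the $(1+o(1))$ factor. This is where \cref{lem:expushSTRONG} loses its tightness for \push/\pp, but since only $O(\log\log n)$ steps are required to grow by a factor of $\log n$ even with a worst-case constant bound on $\delta_t$, this does not affect the leading term. A secondary subtlety is ensuring that the assumption $q\leq 1-\epsilon$ in \cref{lem:protocol_growth} allows \pull and \pp to apply the shrinking machinery all the way down to $|U_t|\leq 3/4$; this holds because $\epsilon>0$ is constant, so $\Cshrink$ is a constant strictly less than $1$, and the bound \eqref{eq:cor_product_precondition_LB_phaseshrinking} remains finite.
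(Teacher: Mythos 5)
Your overall strategy — decompose into growing and shrinking sub-phases, feed protocol-specific deterministic lower bounds on $\delta_t$ into \cref{cor:AzumaGrowing} and \cref{cor:shrinkingphase}, and argue that only the two ``dominant'' sub-phases contribute to the leading term — is exactly the paper's. The paper uses six phases (three growing, three shrinking), and your growing phases G1, G2, G3 and shrinking phase S1 match the paper's Phases 1--4 exactly. The difference is that you merge the paper's Phase~5 ($|U_t|\in[\log n, n/\log n]$) and Phase~6 ($|U_t|\in[3/4,\log n]$) into a single sub-phase S2 with $D=3/4$, and this merge breaks the bound.

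The problem is the factor $\tilde\gamma = 1 - \min\left( \frac{1}{2(1-\Cshrink)D},\ \frac12\right)$ in the stopping criterion \eqref{eq:cor_product_precondition_LB_phaseshrinking}. When $D = 3/4$ and $1-\Cshrink = \epsilon$ is a constant in $(0,1]$, we have $\frac{1}{2(1-\Cshrink)\cdot 3/4} = \frac{2}{3\epsilon} \geq \frac12$, so $\tilde\gamma = \frac12$. Applying \cref{cor:shrinkingphase} on your S2 with $C = n/\log n$, $D=3/4$ therefore requires $-\sum \log(1-\nu_t^{[C,D]}) \geq 2\,(\log(C/D) + \cdots)$, i.e.\ roughly $\frac{2\log n}{-\log(1-\delta)}$ rounds — a factor of $2$ too many for the claimed $(1+o(1))$ bound. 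The paper avoids this precisely by choosing $D=\log n$ for the dominant shrinking phase, which makes $\tilde\gamma = 1 - \frac{1}{2(1-\Cshrink)\log n} = 1-o(1)$; it then uses a short Phase~6 from $\log n$ down to $3/4$ where $\tilde\gamma = \frac12$ but $\log(C/D) = O(\log\log n)$, so the factor-of-$2$ loss is harmless. Your ``secondary subtlety'' paragraph notices the $\Cshrink$ issue but not this one. The fix is simply to un-merge: split S2 into a dominant piece with $D \to \infty$ (say $D=\log n$) and a short tail with $D=3/4$, which recovers the paper's proof exactly. All your other ingredients — the conductance estimates via \cref{lem:setconductancebound}, the rate lower bounds from \cref{lem:expushSTRONG,lem:expushUT,lem:LBexPPT}, the translation of rates into times via $\log(1+\delta)$ or $-\log(1-\delta)$, and absorbing the short phases and the $(\log n)^{2/3}$ error terms into $(1+o(1))$ — are correct.
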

We note that the corresponding result \cite[Theorem 1.2]{daknama_panagiotou_reisser_2021} in the original paper is stated only for static graphs, however it is likely that the methods in that paper would also extend to dynamic graphs. 
    
\begin{proof} Here, we only provide a proof of the \push model. We note that if $q \leq 1 - \epsilon$, by \cref{lem:protocol_growth} \eqref{itm:PROTGROWTH_push} \push is a $\Cshrink$-shrinking process for $\Cshrink=1 -\epsilon$. We will again follow the approach outlined at the beginning of this section, where we break up the analysis into phases. An overview of the running times of these phases is given in \cref{tab:RunningTimes}, also for the \pull and \pp processes. 
\renewcommand{\arraystretch}{2}
\begin{table} 
\begin{center}
\begin{tabular}{|c|c|c|c|c|}
\hline

    Phase  & Start/finish sizes             & \push & \pull & \pp          \\ \hline\hline
1 &$A=1, B=\log n$          & $\frac{\log \log n}{\log\left(1+q \right)}$ &$\frac{\log \log n}{\log(1 + q)}$ & $\frac{\log \log n}{\log\left(1+2q\right)}$           \\ \hline
2& $A=\log n, B=\frac{n}{\log n}$          & \cellcolor{Greenish} $\frac{\log n}{\log(1 + q)}$ &\cellcolor{Greenish}  $\frac{\log n}{\log\left(1 + q\right)}$ & \cellcolor{Greenish}  $\frac{\log n}{\log(1 + 2q)}$ \\ \hline
3& $A=\frac{n}{\log n}, B=\frac{n}{2}$          & $\frac{\log \log n}{\log (1 + q)}$ & $\frac{\log\log n}{\log(1 + q)}$&$\frac{\log \log n}{\log (1 + 2q)}$ \\ \hline
4& $C=n/2$, $D=\frac{n}{\log n}$          & $\frac{1}{q}  \log \log n$ & $\frac{\log \log n}{-\log\left(1-q\right)}$&$\frac{\log \log n}{q-\log\left(1- q\right)}$\\ \hline
5& $C=\frac{n}{\log n}, D=\log n$          & \cellcolor{Greenish} $\frac{1}{q}\log n$ & \cellcolor{Greenish} $\frac{\log n}{-\log\left(1 - q\right)}$& \cellcolor{Greenish} $ \frac{\log n}{q-\log\left( 1 - q\right)}$ \\ \hline
6& $C=\log n, D=\frac{3}{4}$         & $\frac{1}{q}\log\log n$ & $\frac{\log \log n}{-\log\left( 1 - q\right)}$&$\frac{\log\log n}{q - \log\left(1 - q\right)}$\\ \hline
\end{tabular}
\caption{Runtimes for \push, \pull and \pp for different phases obtained by \cref{thm:AzumaGrowing} (row 1,2,3) and \cref{thm:shrinkingphase} (row 4,5,6), all bounds hold w.h.p.. The upper bounds contained within cells shaded in \hlgreenish{~Green~} hold up to a multiplicative $(1+o(1))$ factor and it is these bound which contribute to the to total run time, all other bounds hold up to a multiplicative constant and are negligible.  Our result for \pull holds only when $q$ is bounded away from $1$, the remaining cases where $q$ is equal (or tending to) $1$ are covered in \cite{doerr2017randomized,daknama_panagiotou_reisser_2021}.}
\label{tab:RunningTimes}
\end{center}
\end{table}

\paragraph{Phase 1:}
Recall that  $\xi=10^{-30}$, as defined in \cref{cor:AzumaGrowing}, and define the time 
\begin{equation*}
    t_1 := \min \left\{s \geq 0 \colon \sum^{s-1}_{t=0} \log\left(1 + q \cdot \left(1 - \frac{q}{2}\right)\cdot\frac{1}{2} \right) \geq \frac{ \log\log n + o(\log\log n)}{\xi^2}\right\}.
\end{equation*}
We aim to apply \cref{cor:AzumaGrowing} with $A=1$ and $B=\log n$. We note that by \cref{lem:PPP-P} \eqref{itm:Push}
\begin{equation*}
\delta_t \geq \delta_t^{[1,\log n]}\geq q \cdot \left(1 - \frac{q}{2}\right)\cdot \phi_{\log n}.
\end{equation*}
Moreover, since $(G_t)_{t\geq 0}$ is a strong expander and we can assume $n$ is large, by \cref{lem:setconductancebound}
\begin{equation}\label{eq:condbound}
\phi_{\log n} \geq (1 - \lambda)\cdot \left(1 - \frac{\log n}{n}\right)
\geq \frac{1}{2}.
\end{equation}
Hence, setting $\nu_t^{[A,B]} = q \cdot \left(1 - \frac{q}{2}\right) \cdot  \frac{1}{2} $, by \cref{cor:AzumaGrowing} we get that,
\begin{equation*}
    \Pro{ |I_{t_1}| \geq \log n  ~\Big|~ |I_{0}| \geq 1 } \geq 1 - o(1).
\end{equation*} 
Let us now solve for $t_1$ to get an upper bound on how long Phase 1 takes.
We note that by minimality of $t_1$,
\begin{align*}
     (t_1 -1) \cdot \log \left(1 + q \cdot \left(1 - \frac{q}{2}\right)\cdot \frac{1}{2}\right) &= \sum_{t=0}^{t_1-1}\log\left(1 + q \cdot \left(1 - \frac{q}{2}\right)\cdot \frac{1}{2}\right)  
   \\ 
    &\leq \frac{\log \log n + o(\log\log n)}{\xi^2} + \log\left(1 + q \cdot \left(1 - \frac{q}{2}\right)\cdot \frac{1}{2} \right).
\end{align*}
Rearranging, we get that,
\begin{align*}
t_1 &\leq 2 + \frac{\log \log n + o(\log\log n)}{\xi^2 \cdot \log\left(1 + q \cdot \left(1 - \frac{q}{2}\right)\cdot \frac{1}{2} \right)}\\ 
&\stackrel{(a)}{\leq} 2 + \frac{\log \log n + o(\log \log n)}{\xi^2\cdot \left(1 - \frac{q}{2}\right)\cdot \frac{1}{2} \cdot \log \left(1 + q \right)}\\
&\stackrel{(b)}{=} O\left(\frac{\log \log n}{\log\left(1+q \right)}\right),
\end{align*}
where $(a)$ follows from \cref{clm:log_claim} and $(b)$ holds since we assume $G_t$ to be a (strong) expander.

\paragraph{Phase 2:}
Let 
\begin{equation*}
    t_2 := \min \left\{s \geq t_1 \colon \sum^{s-1}_{t=t_1} \log\left(1 + q \cdot \left(1 - 7\sqrt{\lambda + \frac{1}{\log n}}\right) \right) \geq \frac{ \log n - 2 \log \log n + o(\log n)}{\left(1 - (1 - \xi)\cdot \log n^{-\xi}\right)^2}\right\}.
\end{equation*}
Again, we aim to apply \cref{cor:AzumaGrowing}, this time with $A=\log n$ and $B = \frac{n}{\log n}$. By \cref{lem:expushSTRONG}
\begin{align*}
   \delta_t \geq \delta^{[\log n, \frac{\log n}{n}]}_t \geq  q \cdot \left(1 - 7\sqrt{\lambda + \frac{1}{\log n}}\right).\label{eq:spectralestdeltat} 
\end{align*}
Thus by \cref{cor:AzumaGrowing} for $\nu^{[A,B]}_t = q \cdot \left(1 - 7\sqrt{\lambda + \frac{1}{\log n}}\right)$,
\begin{equation*}
    \Pr{t_1}{ |I_{t_2}| \geq \frac{n}{\log n}  ~\Big|~ |I_{t_1}| \geq \log n } \geq 1 - o(1).
\end{equation*} 
Similarly as in Phase 1, solving for $t_2 - t_1$, we obtain,
\begin{align*}
    t_2 - t_1 &\leq  2 + \frac{ \log n - 2 \log \log n + o(\log n)}{\left(1 - (1 - \xi)\cdot \log n^{-\xi}\right)^2\cdot \log \left(1 + q \cdot \left(1 - 7\sqrt{\lambda + \frac{1}{\log n}}\right) \right)}\\  
    &\stackrel{(a)}{\leq} 2+ \frac{ \log n - 2 \log \log n + o(\log n)}{\left(1 - (1 - \xi)\cdot \log n^{-\xi}\right)^2\cdot\left(1 - 7\sqrt{\lambda + \frac{1}{\log n}}\right)\log(1 + q)} \\
    &\stackrel{(b)}{=} (1+o(1)) \cdot \frac{\log n}{\log(1 + q)},
\end{align*}
where again $(a)$ follows from \cref{clm:log_claim} and $(b)$ follows from $G_t$ being a strong expander (note that here, unlike as in Phase $1$ and $3$, we do need that $G_t$ is a \emph{strong} expander (\cref{eq:strongexpander})).
 \paragraph{Phase 3:}
Let 
\begin{equation*}
    t_3 := \min \left\{s \geq t_2 \colon \sum^{s-1}_{t=t_2} \log\left(1 +  q \cdot \left(1 - \frac{q}{2}\right)\cdot \frac{1}{2}\cdot (1 - \lambda)\right) \geq \frac{\log \log n - \log \log 2  + o(\log \log n)}{\left(1 - (1 - \xi)\cdot \left(\frac{n}{\log n}\right)^{-\xi}\right)^2}\right\}.
\end{equation*}
We want to apply \cref{cor:AzumaGrowing} with $A=\frac{n}{\log n}, B=n/2$. For $\frac{n}{\log n} \leq |I_t| \leq \frac{n}{2}$ it holds that,
\begin{equation*}
    \delta_t \geq \delta_t^{[\frac{n}{\log n}, n/2]} \stackrel{(a)}{\geq} q\cdot \left(1 - \frac{q}{2}\right)\cdot \varphi(I_t) \stackrel{(b)}{\geq} q\cdot \left(1 - \frac{q}{2}\right)\cdot \frac{1}{2}\left(1-\lambda\right),
\end{equation*}
where $(a)$ holds by  \cref{lem:PPP-P} \eqref{itm:Push} and $(b)$ by \cref{lem:setconductancebound}.
Thus, by \cref{cor:AzumaGrowing} with $\nu^{[A,B]}_t = q\cdot \left(1 - \frac{q}{2}\right)\cdot \frac{1}{2}\left(1-\lambda\right)$, we get that
\begin{equation*}
    \Pr{t_2}{ |I_{t_3}| \geq \frac{n}{2}  ~\Big|~ |I_{t_2}| \geq \frac{n}{\log n} } \geq 1 - o(1).
\end{equation*} 
Solving for $t_3-t_2$ results in,
\begin{align*}
    t_3 - t_2 &\leq 2 + \frac{\log \log n - \log \log 2  + o(\log \log n)}{\left(1 - (1 - \xi)\cdot \left(\frac{n}{\log n}\right)^{-\xi}\right)^2\cdot \log \left(1 + q \cdot \left(1 - \frac{q}{2}\right)\cdot \frac{1}{2} \cdot \left(1 - \lambda \right) \right)} \\ 
    &\stackrel{(a)}{\leq} 2+  \frac{\log \log n - \log \log 2  + o(\log \log n)}{\left(1 - (1 - \xi)\cdot \left(\frac{n}{\log n}\right)^{-\xi}\right)^2\cdot \frac{1}{2} \cdot \left(1 - \lambda \right)\cdot \left(1 - \frac{q}{2}\right)\cdot \log \left(1 + q\right)}  \\
    &\stackrel{(b)}{=} O\left(\frac{\log \log n}{\log (1 + q)}\right),
\end{align*}
where again $(a)$ follows form \cref{clm:log_claim} and $(b)$ follows from $G_t$ being a (strong) expander.

\paragraph{Phase 4: }
Let
\begin{equation*}
     t_4 := \min \left\{s \geq t_3 \colon \sum^{s-1}_{t=t_3}\log\left(1 - \left(1 - e^{-q}\right) \cdot \frac{1}{2}\cdot \left(1 - \lambda \right)\right) \leq -\frac{\log \log n - \log \log 2  +o(\log \log n)}{\left(1 -\frac{\log n}{2\cdot \left(1 - \Cshrink\right)\cdot n}\right)}\right\}.
\end{equation*}
For this Phase we aim to apply \cref{cor:shrinkingphase} with $C = n/2$ and $D=\frac{n}{\log n}$. Note that $t_4$ is a deterministic upper bound on the stopping time $\tau$ from \cref{cor:shrinkingphase}, and that for $D= \frac{n}{\log n}$,$\gamma = 1 - \min \left( \frac{1}{2\left(1 - \Cshrink\right)\cdot \frac{n}{\log n}}, \frac{1}{2}\right) = 1 - \frac{\log n}{2\left(1 - \Cshrink\right)\cdot n}$. By \cref{lem:expushUT} and \cref{lem:setconductancebound}
\begin{align*}
    \delta_t \geq \delta_t^{[\frac{n}{\log n},n/2]} \geq \Bigl(1 -  e^{-q(t)}\Bigr) \cdot \varphi(I_t) \geq \Bigl(1 -  e^{-q(t)}\Bigr) \cdot \frac{1}{2}\cdot \left(1 - \lambda\right).
\end{align*}
Thus, letting $\nu^{[C,D]}_t = \Bigl(1 -  e^{-q(t)}\Bigr) \cdot \frac{1}{2}\cdot \left(1 - \lambda\right)$, by \cref{cor:shrinkingphase}
\begin{equation*}
    \Pr{t_3}{\Condtwo{|U_{t_4}|\leq \frac{n}{\log n}}{|U_{t_3}|\leq \frac{n}{2}}}\geq 1 - o(1).
\end{equation*}
We note that by minimality of $t_4$,
\begin{align*}
    \sum_{t=t_3}^{t_4-1} \log\left(1 - \left(1 - e^{-q}\right) \cdot \frac{1}{2}\cdot \left(1 - \lambda \right)\right) &= \left(t_4 - t_3 -1\right)\cdot \log\left(1 - \left(1 - e^{-q}\right) \cdot \frac{1}{2}\cdot \left(1 - \lambda \right)\right)\\
    &\geq -\frac{\log \log n - \log \log 2  +o(\log \log n)}{\left(1 -\frac{\log n}{2\cdot \left(1 - \Cshrink\right)\cdot n}\right)} + \log \left(1 - \left(1 - e^{-q}\right) \cdot \frac{1}{2}\cdot \left(1 - \lambda \right)\right)
\end{align*}

Solving for $t_4 - t_3$, we get that,
\begin{align*}
    t_4 - t_3 &\leq 2 -\frac{\log \log n   - \log\log 2 + o(\log \log n)}{\left(1 -\frac{\log n}{2\cdot \left(1 - \Cshrink\right)\cdot n}\right)\cdot \log\left(1 - \left(1 - e^{-q}\right) \cdot \frac{1}{2}\cdot \left(1 - \lambda \right)\right)} \\
    &\stackrel{(a)}{\leq} 2+ \frac{\log \log n   - \log \log 2 + o(\log \log n)   }{\left(1 -\frac{\log n}{2\cdot \left(1 - \Cshrink\right)\cdot n}\right)\cdot  \left(1 - e^{-q}\right) \cdot \frac{1}{2}\cdot \left(1 - \lambda \right)} \\
    &\stackrel{(b)}{\leq} 2+ \frac{\log \log n  -\log \log 2 + o(\log\log n)}{\left(1 -\frac{\log n}{2\cdot \left(1 - \Cshrink\right)\cdot n}\right)\cdot q \cdot  \left(1 - \lambda \right)} \\
    &\stackrel{(c)}{=} O\left(\frac{\log\log n}{q}\right),
\end{align*}
where in $(a)$ we used the bound $-\log(1-x)\geq x$ for any $x\in (0,1)$, in $(b)$ we have used $ e^{-x} \leq  1 - x/2$ which holds for all  $x \in  [0, 1.59]$ and finally in $(c)$ the fact that $1 - \Cshrink = \epsilon > 0$ for $\epsilon$ constant, and that $G_t$ is a strong expander. 

 \paragraph{Phase 5:} Let 
\begin{equation*}
	t_5 := \left\{s \geq t_4 \colon \sum^{s-1}_{t=t_4}  \log \left(1 - \left(1 -  e^{-q}\right)\cdot  (1 - \lambda)\cdot \left(1 - \frac{1}{\log n} \right)\right) \leq -\frac{\log n - 2 \log\log n + o(\log n)}{1 -  \frac{1}{2\left(1 - \Cshrink\right) \log n}}\right\}.
\end{equation*}
By \cref{lem:expushUT} and \cref{lem:setconductancebound}
\begin{equation*}
    \delta_t \geq \delta_t^{[\log n,\frac{n}{\log}]} \geq \left(1 -  e^{-q}\right) \cdot \left(1 - \lambda\right) \cdot \left(1 - \frac{1}{\log n}\right)
\end{equation*}
By \cref{cor:shrinkingphase} with $C=\frac{n}{\log n}, D=\log n$ and $\nu_t^{[C,D]} = \left(1 -  e^{-q(t)}\right) \cdot \left(1 - \lambda\right) \cdot \left(1 - \frac{1}{\log n}\right)$, we get that
\begin{equation*}
		\Pr{t_4}{\Condtwo{|U_{t_5}|\leq \log n}{|U_{t_4}|\leq \frac{n}{\log n}}}\geq 1 - o(1).
\end{equation*}
Again, solving for $t_5-t_4$ we obtain that,

\begin{equation}\label{eq:phase5}
	t_5 - t_4 \leq 2  -\frac{\log n - 2 \log\log n + o(\log n) }{(1 -  \frac{1}{2 \left(1 - \Cshrink\right)\log n})\cdot \log \left(1 -  \left(1 -  e^{-q(t)}\right)\cdot (1 - \lambda)\cdot \left(1 - \frac{1}{\log n} \right)\right)}. \end{equation}Now, using \cref{clm:log_claim_new} in $(a)$, and in $(b)$ using the fact that $G_t$ is a strong expander, we have 
	\begin{align*}
		\log \left(1 - (1-e^{-q}) \cdot (1 - \lambda)\cdot \left(1 - \frac{1}{\log n} \right)\right)  
		&= \log \left(1 -  \left(1 -  \left(\lambda  + \frac{1}{\log n} -\frac{\lambda }{\log n} \right)\right) \cdot (1-e^{-q})  \right)\\
		&\stackrel{(a)}{\leq} \left(1- \frac{\lambda  + \frac{1}{\log n} -\frac{\lambda }{\log n}}{e^{-q}}\right) \log \left(1 -(1-e^{-q}) \right)\\
		&\stackrel{(b)}{=} -  (1-o(1))\cdot q. \end{align*} Inserting this into \cref{eq:phase5} gives us
	\begin{equation*}
		t_5 - t_4\leq 2 + \frac{\log n - 2 \log\log n + o(\log n)}{(1 -  \frac{1}{2 \left(1 - \Cshrink\right) \log n})\cdot  q  } \stackrel{(a)}{=} (1+o(1))\cdot \frac{\log n  }{  q   }, 
	\end{equation*}   
	where in $(a)$ we have used $1 - \Cshrink = \epsilon > 0$ for $\epsilon$ constant.

\paragraph{Phase 6:}
Let 
\begin{align*}
    t_6 := \min \biggl\{s \geq t_5 \colon & \sum^{s-1}_{t=t_5}\log\left(1 - \left(1 - e^{-q}\right) \cdot\frac{1}{2}\right) \leq -2 \left({\log\log n - \log\left(\frac{3}{4}\right)+ o(\log \log n)}\right)\biggr\}.
\end{align*}
By \cref{lem:expushUT} and \cref{lem:setconductancebound}, and as $(G_t)_{t\geq 0}$ is a strong expander, we have
\begin{equation*}
    \delta_t \geq \delta_t^{[3/4,\log n]} \geq \left(1 -  e^{-q(t)}\right) \cdot (1 - \lambda) \cdot \left(1 - \frac{\log n}{n}\right)\geq \left(1 -  e^{-q(t)}\right) \cdot \frac{1}{2}.
\end{equation*}
By \cref{cor:shrinkingphase} with $C=\log n, D=3/4$ and $\nu_t^{[C,D]} =  \left(1 -  e^{-q}\right) \cdot \frac{1}{2}$, 
\begin{equation*}
		\Pr{t_5}{\Condtwo{|U_{t_6}|\leq 3/4}{|U_{t_5}|\leq \log n}}\geq 1 - o(1).
\end{equation*}
Solving for $t_6 - t_5$, we get that
\begin{align*}
    t_6-t_5 &\leq 2 -2\cdot \frac{\log\log n - \log\left(\frac{3}{4}\right)+ o(\log \log n)}{\log\left(1 - \left(1 - e^{-q}\right) \cdot\frac{1}{2}\right)}\\
    & \stackrel{(a)}{\leq} 2 +4\cdot \frac{\log\log n+ o(\log\log n)}{ \left(1 - e^{-q}\right)  }\\
    & \stackrel{(b)}{\leq} 2 + 8 \cdot \frac{\log\log n  + o(\log\log n)}{ q }\\
    &= O\left(\frac{\log \log n}{q}\right),
\end{align*}
where, similarly to Phase 4, in $(a)$ we used $-\log(1-x)\geq x$ for any $x < 1$, and in $(b)$ we have used $ e^{-x} \leq  1 - x/2$ which holds for all  $x \in  [0, 1.59]$.

To conclude, we see that the dominant contributions to the running time come from Phases $2$ and $5$, thus, the total running time is bounded from above by $(1+o(1))\cdot\left(\frac{1}{\log (1+q)} + \frac{1}{q} \right)\cdot \log n$. 
\end{proof}

\section{Conclusions}

In this work, we presented a general framework for analyzing spreading processes with a time-dependent credibility function. The key idea is to link the spreading progress to an aggregate sum of growth (or shrinking) factors over consecutive rounds. In that way, our approach generalizes various previous works that were based on estimating the worst-case growth across all sets via the conductance of the graph. We also obtained several dichotomy results in terms of the number of vertices that get informed, both for general and more concrete credibility functions~(see \cref{sec:applications}).

In terms of open problems, a natural direction is to generalize our main technical results from regular graphs to arbitrary graphs, which we believe to be doable. Another avenue for future research is to allow more complex interactions between the credibility function $q(t)$ and the evolving set of informed vertices $I_t$, which could more accurately model an external influence on the network (e.g., fact-checkers). Lastly, one could consider more general spreading processes including other epidemic models (e.g., SIR model or independent cascade model), majority dynamics or variants of the voter model, in which informed vertices may also become uninformed in future steps.

\appendix

\section{Tools}\label{sec:AppendixD}

\subsection{Auxiliary Probabilistic Tools}
For convenience, we add several well known claims.

\begin{lemma}\label{lem:variance_basic}
Let $Y$ be any random variable, and $Y_1,Y_2$ be two independent copies with the same distribution as $Y$. Then,
\[
    \Var{Y} = \frac{1}{2} \cdot \Ex{ (Y_1-Y_2)^2 }.
     \]
\end{lemma}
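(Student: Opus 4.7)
The plan is to expand the square $(Y_1 - Y_2)^2$ and exploit the fact that $Y_1$ and $Y_2$ are i.i.d.\ copies of $Y$. First I would write
\[
\Ex{(Y_1-Y_2)^2} = \Ex{Y_1^2} - 2\Ex{Y_1 Y_2} + \Ex{Y_2^2}.
\]
Since $Y_1$ and $Y_2$ each have the same distribution as $Y$, the first and last terms both equal $\Ex{Y^2}$. For the middle term, independence of $Y_1$ and $Y_2$ gives $\Ex{Y_1 Y_2} = \Ex{Y_1}\Ex{Y_2} = (\Ex{Y})^2$.

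Combining these observations yields
\[
\Ex{(Y_1 - Y_2)^2} = 2\Ex{Y^2} - 2(\Ex{Y})^2 = 2 \Var{Y},
\]
and dividing by $2$ gives the claim. There is no real obstacle here; the only subtlety is invoking independence (rather than merely identical distributions) at the cross-term, and noting that finiteness of $\Var{Y}$ ensures all expectations above are well defined.
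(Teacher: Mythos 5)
Your proof is correct and is essentially the same elementary computation as the paper's: the paper expands $(Y_1-Y_2)^2$ after inserting $\pm\Ex{Y}$ so that the cross term vanishes by independence, while you expand directly and evaluate $\Ex{Y_1Y_2}=(\Ex{Y})^2$; these are trivially equivalent. No issues.
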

\begin{proof}
   If $Y_1,Y_2 \sim Y$ are independent samples from the same distribution $Y$, then
\begin{align*}
  \Ex{ (Y_1-Y_2)^2 } &= \Ex{ (Y_1 - \Ex{Y} + \Ex{Y} - Y_2) ^2 } \\
  &= \Ex{ (Y_1 - \Ex{Y} )^2 } + 2 \cdot \Ex{ (Y_1 - \Ex{Y}) \cdot (\Ex{Y} - Y_2) } + \Ex{ (Y_2 - \Ex{Y} )^2 } \\
  &= 2 \cdot \Var{Y}.\qedhere
\end{align*} 
\end{proof}

\begin{lemma}\label{lem:variance}
Let $Z$ be a non-negative random variable. Then,
\[
 \Var{ \log(1+Z) } \leq  \Var{Z}.
\]
\end{lemma}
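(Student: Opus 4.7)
The plan is to reduce the inequality to the 1-Lipschitz property of the map $z \mapsto \log(1+z)$ on $[0,\infty)$, combined with the identity from \cref{lem:variance_basic}.

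First I would let $Z_1, Z_2$ be two independent copies of $Z$ and apply \cref{lem:variance_basic} to both sides, so that the target inequality
\[
\Var{\log(1+Z)} \leq \Var{Z}
\]
becomes equivalent to the pointwise-in-expectation bound
\[
\tfrac{1}{2}\Ex{(\log(1+Z_1) - \log(1+Z_2))^2} \leq \tfrac{1}{2}\Ex{(Z_1 - Z_2)^2}.
\]
The key observation is that the function $f(z) := \log(1+z)$ satisfies $f'(z) = 1/(1+z) \in (0,1]$ for all $z \geq 0$, so $f$ is 1-Lipschitz on $[0,\infty)$. Therefore, since $Z \geq 0$ (and hence $Z_1, Z_2 \geq 0$ almost surely), by the mean value theorem
\[
|\log(1+Z_1) - \log(1+Z_2)| \leq |Z_1 - Z_2|
\]
holds pointwise. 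Squaring this inequality and taking expectations yields the desired bound, and applying \cref{lem:variance_basic} on both sides finishes the proof.

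I don't anticipate any real obstacle here; the only thing to be careful about is ensuring the Lipschitz constant of $\log(1+z)$ is indeed bounded by $1$ on the whole support of $Z$, which is guaranteed by the non-negativity hypothesis on $Z$. Note that the statement would fail without this hypothesis, since for $z$ close to $-1$ the derivative of $\log(1+z)$ blows up.
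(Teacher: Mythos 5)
Your proof is correct and follows essentially the same route as the paper: both reduce the variance inequality to the identity of \cref{lem:variance_basic} applied to two independent copies, and both establish the pointwise bound $|\log(1+Z_1)-\log(1+Z_2)|\leq|Z_1-Z_2|$ via the mean value theorem using that the derivative $1/(1+z)$ lies in $(0,1]$ for $z\geq 0$. Nothing further is needed.
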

\begin{proof}
Let $Y$ be any random variable, and $Y_1,Y_2$ be two independent copies with the same distribution as $Y$. Then, by \cref{lem:variance_basic}, $2 \cdot \Var{Y} = \Ex{ (Y_1-Y_2)^2 } $. If $x, y \geq 0$ then $|\log(1+x) - \log(1+y)| \leq |x - y|$. To see the previous inequality, assuming  that $x>y$, the mean value theorem yields $\log(1+x)-\log(1+y) = \frac{1}{1+z} (x-y)$, where $z$ lies between $y$ and $x$, then clearly $z \geq 0$, so $0<\frac{1}{1+z}\leq 1$, proving the inequality. Now
\begin{align*}
 2 \cdot \Var{ \log(1+Z) } &= \Ex{ \left( \log(1+Z_1) - (\log(1+Z_2) \right)^2 }  \leq \Ex{ \left( Z_1 - Z_2 \right)^2 }  = 2 \cdot \Var{Z}.\qedhere
\end{align*}
\end{proof}

\begin{lemma}\label{lem:variance_two}
Let $Z$ be a non-negative random variable . Then, for any $a \in \mathbb{R}$,
\[
 \Var{ \max\{Z,a\} } \leq \Var{Z},
\qquad \text{as well as}
\qquad
 \Var{ \min(Z,a) } \leq \Var{Z}.
\]
\end{lemma}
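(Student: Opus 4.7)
The plan is to prove both inequalities in parallel by mimicking the short argument used for Lemma A.2. The key observation is that the truncation maps $x\mapsto \max(x,a)$ and $x\mapsto \min(x,a)$ are both $1$-Lipschitz (i.e., non-expansive) on $\mathbb{R}$, so composing the random variable $Z$ with them can only contract pairwise differences.

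First I would state and verify the Lipschitz property: for any $a\in\mathbb{R}$ and any $x,y\in\mathbb{R}$,
\[
|\max(x,a)-\max(y,a)|\leq |x-y|, \qquad |\min(x,a)-\min(y,a)|\leq |x-y|.
\]
A brief case distinction establishes this (WLOG $x\geq y$; the three cases $y\geq a$, $x\leq a$, and $y<a\leq x$ each give a direct inequality, the last using that $\max(x,a)-\max(y,a)=x-a\leq x-y$ since $a>y$). The analogous argument works for $\min$, or one may simply use the identity $\min(x,a)=-\max(-x,-a)$.

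Then I would apply Lemma A.1 (the identity $2\Var{Y}=\mathbb{E}[(Y_1-Y_2)^2]$ for iid copies $Y_1,Y_2\sim Y$). Letting $Z_1,Z_2$ be independent copies of $Z$, this yields
\[
2\Var{\max(Z,a)}=\mathbb{E}\!\left[\bigl(\max(Z_1,a)-\max(Z_2,a)\bigr)^2\right]\leq \mathbb{E}\!\left[(Z_1-Z_2)^2\right]=2\Var{Z},
\]
and identically for $\min$. Dividing by $2$ gives both claimed bounds.

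There is essentially no obstacle here: the proof is a direct analogue of Lemma A.2, with the Lipschitz constant $1$ of $\log(1+\cdot)$ on $[0,\infty)$ replaced by the Lipschitz constant $1$ of the truncation operations (which in fact holds on all of $\mathbb{R}$, so the non-negativity hypothesis on $Z$ is not even needed). The only mildly subtle point worth noting is that we do not require any integrability assumption beyond $Z\in L^2$, which is implicitly assumed whenever $\Var{Z}$ appears on the right-hand side; if $\Var{Z}=\infty$ the inequalities hold trivially.
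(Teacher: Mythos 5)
Your proof is correct and follows essentially the same route as the paper: the paper also reduces to the identity $2\Var{Y}=\mathbb{E}[(Y_1-Y_2)^2]$ from Lemma A.1 and observes that $\max(\cdot,a)$ and $\min(\cdot,a)$ are $1$-Lipschitz, exactly as in the proof of Lemma A.2. Your additional remarks (the case check of the Lipschitz bound, the observation that non-negativity of $Z$ is not needed, and the $L^2$ caveat) are accurate refinements of the same argument.
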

\begin{proof}
This is identical to the proof of \cref{lem:variance} by noting that $|\max\{Z,a\}-\max\{Z',a\}|\leq |Z-Z'|$ and $|\min(Z,a)-\min(Z',a)|\leq |Z-Z'|$.
\end{proof}

\begin{lemma}[Jensen's Inequality]\label{lem:jensen}
    If $f$ is a convex function, then
    \begin{equation*}
        \Ex{f(X)} \geq f\left(\Ex{X}\right).
    \end{equation*}
    If $f$ is a concave function, then
    \begin{equation*}
        \Ex{f(X)} \leq f\left(\Ex{X}\right).
    \end{equation*}
\end{lemma}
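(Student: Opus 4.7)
The plan is to prove the convex case via the supporting line characterization of convexity, and then deduce the concave case by applying the convex case to $-f$. The key observation is that convexity implies that at every interior point $x_0$ of the domain, there exists a subgradient $b \in \mathbb{R}$ such that the affine function $\ell(x) := f(x_0) + b \cdot (x - x_0)$ satisfies $\ell(x) \leq f(x)$ for all $x$ in the domain. This can be verified by taking $b$ to be any value between the left and right one-sided derivatives of $f$ at $x_0$ (both of which exist for convex $f$), and then using the monotonicity of difference quotients of a convex function.

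Given this supporting line, the proof of the first inequality is a one-line computation: choose $x_0 := \Ex{X}$, so that $f(X) \geq f(\Ex{X}) + b \cdot (X - \Ex{X})$ holds pointwise. Taking expectations on both sides, the second term vanishes by linearity, yielding
\[
  \Ex{f(X)} \geq f(\Ex{X}) + b \cdot \bigl( \Ex{X} - \Ex{X} \bigr) = f(\Ex{X}).
\]
For the concave case, I would note that $-f$ is convex, apply the first part to get $\Ex{-f(X)} \geq -f(\Ex{X})$, and then multiply by $-1$ to reverse the inequality.

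The main (really only) obstacle is being careful about technical hypotheses: one implicitly needs $X$ and $f(X)$ to be integrable (so that $\Ex{X}$ and $\Ex{f(X)}$ are well-defined), and one needs $\Ex{X}$ to lie in the (interior of the) domain of $f$ so that a supporting line exists there. In the applications in this paper these conditions are trivially satisfied ($X$ is bounded and $f$ is defined on all of $\mathbb{R}$ or a suitable interval), so I would simply state the lemma with the standard implicit assumption that the expectations exist and $\Ex{X}$ lies in the domain of $f$, and refer to a textbook such as \cite{williams1991probability} or \cite{alon2016probabilistic} for the existence of the supporting line at any interior point of the domain of a convex function.
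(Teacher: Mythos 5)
Your proof is correct: the supporting-line (subgradient) argument is the standard proof of Jensen's inequality, and deducing the concave case by applying the convex case to $-f$ is exactly right. Note that the paper itself gives no proof of this lemma at all --- it is stated in the appendix as a known auxiliary fact --- so there is nothing to compare against; your argument, including the care you take with integrability and with $\Ex{X}$ lying in the interior of the domain, is complete and would suffice if a proof were required.
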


\begin{theorem}[Optional Stopping Theorem {\cite[Theorem~10.10]{williams1991probability}}]\label{lem:OST}	Let $(X_t)_{t\geq 0}$ be a discrete-time supermartingale and $\tau \in \mathbb{N}_0\cup \{\infty\}$ be a stopping time, both with respect to the same filtration  $(\mathcal{F}_t)_{t\geq 0}$ . Assume that one of the following three conditions holds:
	
	\begin{itemize}
		\item There exists a constant $c$ such that almost surely $\tau \leq c$.
		\item $\E{\tau } < \infty$  and there exists a constant $c$ such that $\mathbb{E}\bigl[|X_{t+1}-X_t|\,\big\vert\,{\mathcal F}_t\bigr]\leq c$ almost surely on the event $\{\tau > t\}$ for all $t\in \mathbb{N}$.
		\item There exists a constant $c$ such that $|X_{t\wedge \tau}|\leq c$ almost surely for all $t \in  \mathbb {N}_0$. 
	\end{itemize}  
	Then $X_\tau$ is an almost surely well defined random variable and $\mathbb{E}[X_{\tau}]\leq \mathbb{E}[X_0]$. \end{theorem}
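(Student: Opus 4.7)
The plan is to give the standard three-step proof of the optional stopping theorem: first reduce to the stopped process, then obtain a finite-time supermartingale inequality, and finally pass to the limit under each of the three sufficient conditions.

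\textbf{Step 1 (Stopped process is a supermartingale).} Define the stopped process $X^{\tau}_t := X_{t \wedge \tau}$. Because $\tau$ is a stopping time, $\{\tau \geq t+1\} = \{\tau \leq t\}^c \in \mathcal{F}_t$. One then writes
\[
X^{\tau}_{t+1} - X^{\tau}_t = \mathds{1}_{\{\tau \geq t+1\}} (X_{t+1} - X_t),
\]
so $X^{\tau}_t$ is $\mathcal{F}_t$-adapted and, using the $\mathcal{F}_t$-measurability of the indicator together with the supermartingale property of $X_t$,
\[
\mathbf{E}\bigl[ X^{\tau}_{t+1} - X^{\tau}_t \,\big|\, \mathcal{F}_t \bigr] = \mathds{1}_{\{\tau \geq t+1\}} \cdot \mathbf{E}\bigl[ X_{t+1} - X_t \,\big|\, \mathcal{F}_t \bigr] \leq 0.
\]
Hence $(X^{\tau}_t)_{t \geq 0}$ is itself a supermartingale. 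Iterating the supermartingale property yields the finite-time inequality
\[
\mathbf{E}[X_{t \wedge \tau}] \leq \mathbf{E}[X_0] \qquad \text{for every } t \in \mathbb{N}_0.
\]

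\textbf{Step 2 (Passing to the limit).} The goal now is to show $\mathbf{E}[X_{t \wedge \tau}] \to \mathbf{E}[X_\tau]$ as $t \to \infty$; combined with Step~1, this gives the theorem. I would handle each of the three hypotheses separately.

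\emph{Case (i):} If $\tau \leq c$ almost surely, then $X_{c \wedge \tau} = X_\tau$, and Step~1 applied at $t=c$ gives the conclusion immediately.

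\emph{Case (ii):} Under $\mathbf{E}[\tau] < \infty$ and $\mathbf{E}[|X_{t+1}-X_t| \mid \mathcal{F}_t] \leq c$ on $\{\tau > t\}$, the telescoping identity
\[
X_{t \wedge \tau} - X_0 = \sum_{s=0}^{\infty} \mathds{1}_{\{s < t \wedge \tau\}} (X_{s+1} - X_s)
\]
gives the domination
\[
|X_{t \wedge \tau} - X_0| \leq Y := \sum_{s=0}^{\infty} \mathds{1}_{\{s < \tau\}} |X_{s+1} - X_s|.
\]
Taking expectations and using the tower property together with $\{s < \tau\} \in \mathcal{F}_s$ gives $\mathbf{E}[Y] \leq c \cdot \mathbf{E}[\tau] < \infty$. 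Since $\mathbf{E}[\tau] < \infty$ forces $\tau < \infty$ almost surely, $X_{t \wedge \tau} \to X_\tau$ pointwise, and the dominated convergence theorem yields $\mathbf{E}[X_{t \wedge \tau}] \to \mathbf{E}[X_\tau]$.

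\emph{Case (iii):} If $|X_{t \wedge \tau}| \leq c$ almost surely, then $(X^{\tau}_t)$ is a bounded supermartingale, hence converges almost surely (by Doob's forward convergence theorem applied to $(-X^{\tau}_t)$, a bounded submartingale, or equivalently by the martingale convergence theorem). On $\{\tau < \infty\}$ the limit coincides with $X_\tau$; on $\{\tau = \infty\}$ one \emph{defines} $X_\tau$ to be this almost-sure limit. The bounded convergence theorem then gives $\mathbf{E}[X_{t \wedge \tau}] \to \mathbf{E}[X_\tau]$.

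\textbf{Main obstacle.} Step~1 is routine algebra, and cases (i) and (iii) are essentially immediate given the standard convergence theorems for bounded martingales. The real work lies in case (ii): one must find the correct integrable dominator for the stopped process, and the natural choice $\sum_{s<\tau} |X_{s+1}-X_s|$ is only controlled by combining the conditional increment bound with a careful Fubini-style rearrangement using $\{s < \tau\} \in \mathcal{F}_s$ to produce the bound $c \cdot \mathbf{E}[\tau]$. Since this is Theorem~10.10 of Williams, I would cite that source rather than reproduce every measurability detail.
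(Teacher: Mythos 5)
Your proof is correct and is the standard argument: the paper itself gives no proof of this statement, merely citing Williams (Theorem~10.10), and your three-step argument (stopped process is a supermartingale; then bounded-$\tau$, dominated convergence via $c\cdot\mathbf{E}[\tau]$, and bounded convergence plus Doob's convergence theorem for the three respective hypotheses) is exactly the textbook proof from that source. The one point worth making explicit in case (ii) is that the dominating variable for $|X_{t\wedge\tau}|$ is $|X_0|+Y$, whose integrability uses that $X_0$ is integrable by the definition of a supermartingale — but this is a cosmetic omission, not a gap.
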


\subsection{Concentration Inequalities}
Here, we provide several concentration inequalities; stating with several versions of the Chernoff bound.

\begin{lemma}[Chernoff bound, cf.~\cite{dubhashi2009concentration}]
\label{lem:chernoff}
Suppose that $X_1,\ldots,X_n$ are independent Bernoulli random variables and let $X:=\sum_{i=1}^n X_i$ . Then, for any $0\leq \eta\leq 1$,
\begin{itemize}
\item [(i)]$\Pro{X\leq \left(1-\eta\right)\Ex{X}} \leq \exp\left({-\frac{\eta^2\Ex{X}}{2}}\right)$,
\item [(ii)]$\Pro{X \geq \left(1+\eta\right)\Ex{X}}\leq \exp\left({-\frac{\eta^2\Ex{X}}{3}}\right)$.
\end{itemize}
More generally, for any $\eta > 0$,
\begin{itemize}
    \item[(iii)] $\Pro{ X \geq (1+\eta) \Ex{X}} \leq \exp\left( - \frac{ \eta^2 \Ex{X}}{2+\eta} \right).$
    \item[(iv)] $\Pro{X \geq (1 + \eta) \Ex{X}} \leq \left(\frac{e^{\eta}}{(1 + \eta)^{1+\eta}}\right)^{\Ex{X}}  \leq \left(\frac{e}{1 + \eta}\right)^{(1+\eta) \cdot \Ex{X}} $
\end{itemize}
\end{lemma}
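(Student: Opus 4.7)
The plan is to establish all four inequalities via the standard exponential moment (Chernoff--Cram\'er) method, optimized and then simplified in different ways. Let $\mu := \E[X]$ and write $p_i := \E[X_i]$, so $\mu = \sum_i p_i$. The starting point is Markov's inequality applied to $e^{\lambda X}$ for $\lambda > 0$:
\[
 \Pro{X \geq a} \leq e^{-\lambda a}\cdot \E[e^{\lambda X}] = e^{-\lambda a}\prod_{i=1}^n \E[e^{\lambda X_i}],
\]
where the factorization uses independence. The first key step is the pointwise bound $\E[e^{\lambda X_i}] = 1 - p_i + p_i e^{\lambda} \leq \exp\bigl(p_i(e^{\lambda}-1)\bigr)$ (just $1+x \leq e^x$), yielding the master bound
\[
 \E[e^{\lambda X}] \leq \exp\bigl(\mu(e^{\lambda}-1)\bigr).
\]
A symmetric derivation with $e^{-\lambda X}$ gives the analogous upper bound for the lower tail.

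For the upper tail statements (ii), (iii), (iv) with $a = (1+\eta)\mu$, I would choose $\lambda = \log(1+\eta)$, which is the minimizer of the Chernoff--Cram\'er optimization problem; plugging in yields
\[
 \Pro{X \geq (1+\eta)\mu} \leq \left(\frac{e^{\eta}}{(1+\eta)^{1+\eta}}\right)^{\mu},
\]
which is exactly (iv)'s first inequality, and its second inequality follows from $e^{\eta}/(1+\eta)^{1+\eta} \leq \left(e/(1+\eta)\right)^{1+\eta}$ (equivalently $e^{\eta} \leq e^{1+\eta}/(1+\eta)^0$, which is immediate). For (ii) and (iii), I would take logarithms of the right-hand side of (iv) and apply elementary real-analytic inequalities: the bound $\log(1+\eta)\cdot(1+\eta) - \eta \geq \eta^2/(2+\eta)$ valid for all $\eta > 0$ yields (iii), and the tighter bound $\log(1+\eta)\cdot(1+\eta) - \eta \geq \eta^2/3$ valid on $\eta \in [0,1]$ yields (ii). For the lower tail (i), I would similarly use $\lambda = -\log(1-\eta)$ (for $\eta \in [0,1)$) to get $\Pro{X \leq (1-\eta)\mu} \leq \bigl(e^{-\eta}/(1-\eta)^{1-\eta}\bigr)^{\mu}$, then bound $\log(1-\eta)\cdot(1-\eta) + \eta \geq \eta^2/2$ on $[0,1]$.

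There is no real obstacle here; the argument is entirely routine and standard. The only technical care is the scalar inequality $(1+\eta)\log(1+\eta) - \eta \geq \eta^2/(2+\eta)$ for the crucial bound (iii), which is proved by checking that the difference is zero at $\eta=0$ and has non-negative derivative for $\eta \geq 0$ (an elementary calculus exercise). Since the paper cites \cite{dubhashi2009concentration}, the intended proof is precisely the textbook one, and one could alternatively just quote the result; in our setting, the cleanest presentation is to prove (iv) in full and then extract (ii), (iii) from it by the scalar estimates, and prove (i) by the symmetric MGF argument.
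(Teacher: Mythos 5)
Your proof is correct and is exactly the standard exponential-moment argument that the cited reference \cite{dubhashi2009concentration} uses; the paper itself gives no proof and simply quotes the result. All the scalar inequalities you invoke (in particular $(1+\eta)\log(1+\eta)-\eta \geq \eta^2/(2+\eta)$ for $\eta>0$ and $(1-\eta)\log(1-\eta)+\eta\geq \eta^2/2$ on $[0,1]$) do hold, so nothing further is needed.
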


Not that these Chernoff bounds are valid as long as for all subsets $S \subseteq \{1,\ldots,n\}$, $\Pro{ \cap_{i \in S} X_i = 1} \leq \prod_{i \in S} \Pro{ X_i=1}$, as shown by Panconesi and Srinivasan~\cite{PS97}.

\begin{lemma}[{\cite[Theorem~6.5]{CL07}}]\label{lem:azuma_variance}
Let $(Z_i)_{i\geq 0}$ be a discrete-time martingale associated with a filter $\mathcal{F}$ satisfying
\begin{enumerate}
 \item $\Var{Z_i ~\Big|~ \mathcal{F}_{i-1} } \leq \sigma_i^2$ for all $1 \leq i \leq n$;
 \item $Z_{i-1}-Z_{i} \leq M$ for $1 \leq i \leq n$.
\end{enumerate}
Then for any $h \geq 0$, 
\begin{align*}
 \Pro{ Z_n - \Ex{Z_n} \leq - h } \leq \exp\left(- \frac{ h^2 }{ 2 
\cdot \left( \sum_{i=1}^n \sigma_i^2 + M h / 3 \right) } \right).
\end{align*}
\end{lemma}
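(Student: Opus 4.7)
I would use the standard exponential-moment (Chernoff) method adapted to martingales via the tower property, closing with a Bernstein/Bennett-style moment generating function bound. Setting $W_i := -(Z_i - Z_{i-1})$, the martingale hypothesis gives $\EX{i-1}{W_i} = 0$, the one-sided increment bound gives $W_i \leq M$, and the variance hypothesis gives $\EX{i-1}{W_i^2} \leq \sigma_i^2$. Taking $Z_0$ deterministic without loss of generality (one can otherwise condition on $\mathcal{F}_0$ and reintegrate), we have $\sum_{i=1}^n W_i = Z_0 - Z_n = \Ex{Z_n} - Z_n$, so it suffices to prove
\[
\Pro{\textstyle \sum_{i=1}^n W_i \geq h} \;\leq\; \exp\!\left(-\frac{h^2}{2(V + Mh/3)}\right), \qquad V := \sum_{i=1}^n \sigma_i^2.
\]

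The heart of the argument is the following one-sided MGF bound, which I would prove as a sub-claim: for any random variable $W$ with $\Ex{W}=0$, $W \leq M$, $\Ex{W^2}\leq \sigma^2$, and any $\lambda \in [0, 3/M)$,
\[
\Ex{e^{\lambda W}} \;\leq\; \exp\!\left(\frac{\lambda^2 \sigma^2 / 2}{1 - \lambda M/3}\right).
\]
To establish this, I would observe that $\phi(x) := (e^x - 1 - x)/x^2 = \sum_{k \geq 0} x^k / (k+2)!$ is non-decreasing on all of $\mathbb{R}$, most cleanly via the representation $\phi(x) = \int_0^1 (1-s)\, e^{sx}\, ds$, which is manifestly increasing in $x$. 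Hence $W \leq M$ gives the pointwise bound $e^{\lambda W} \leq 1 + \lambda W + \lambda^2 W^2 \phi(\lambda M)$; taking expectation and using $\Ex{W}=0$ yields $\Ex{e^{\lambda W}} \leq 1 + \lambda^2 \sigma^2 \phi(\lambda M)$. The elementary inequality $(k+2)! \geq 2 \cdot 3^k$ (valid for all $k \geq 0$) then gives $\phi(x) \leq 1/(2(1 - x/3))$ on $[0, 3)$, and $1+u \leq e^u$ completes the sub-claim.

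Applying this bound conditionally on $\mathcal{F}_{i-1}$ to $W_i$ and iterating via the tower property,
\[
\Ex{e^{\lambda \sum_{i=1}^n W_i}} \;=\; \Ex{e^{\lambda \sum_{i=1}^{n-1} W_i} \cdot \EX{n-1}{e^{\lambda W_n}}} \;\leq\; \Ex{e^{\lambda \sum_{i=1}^{n-1} W_i}} \cdot \exp\!\left(\frac{\lambda^2 \sigma_n^2 / 2}{1 - \lambda M/3}\right),
\]
so by induction $\Ex{e^{\lambda \sum_i W_i}} \leq \exp(\lambda^2 V / (2(1 - \lambda M/3)))$. Markov's inequality then yields $\Pro{\textstyle \sum_i W_i \geq h} \leq \exp(-\lambda h + \lambda^2 V/(2(1 - \lambda M/3)))$ for every $\lambda \in [0, 3/M)$. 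Optimizing with $\lambda := h/(V + Mh/3) \in (0, 3/M)$ makes $1 - \lambda M/3 = 3V/(3V + Mh)$, and a short algebraic simplification collapses the exponent to exactly $-h^2/(2(V + Mh/3))$.

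The main obstacle I anticipate is the sub-claim itself: monotonicity of $\phi$ on all of $\mathbb{R}$ (rather than just $[0,\infty)$) is essential, because only the one-sided bound $W \leq M$ is assumed -- a two-sided bound $|W|\leq M$ would permit a simpler tail-control argument via $|W|^k \leq M^{k-2} W^2$, but here we genuinely need pointwise domination of $e^{\lambda W}$ using only the upper bound on $W$. The integral representation of $\phi$ sidesteps any ad hoc derivative calculations. Everything else -- the reduction through the tower property, and the final single-variable optimization in $\lambda$ -- is routine.
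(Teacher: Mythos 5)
Your proof is correct. The paper does not prove this lemma itself but imports it verbatim from \cite{CL07}; your argument is exactly the standard Bernstein--Freedman exponential-moment route used in that reference (the one-sided conditional MGF bound via the monotone function $(e^x-1-x)/x^2$ and the estimate $(k+2)!\geq 2\cdot 3^k$, the tower-property iteration, and the choice $\lambda=h/(V+Mh/3)$, all of which check out), so there is nothing substantive to compare.
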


\begin{lemma}[{\cite[Theorem~6.3]{CL07}}]\label{lem:azuma_variance_UT}
Let $(Z_i)_{i\geq 0}$ be a discrete-time martingale associated with a filter $\mathcal{F}$ satisfying
\begin{enumerate}
 \item $\Var{Z_i ~\Big|~ \mathcal{F}_{i-1} } \leq \sigma_i^2$ for all $1 \leq i \leq n$;
 \item $Z_{i}-Z_{i-1} \leq M$ for $1 \leq i \leq n$.
\end{enumerate}
Then for any $h \geq 0$, 
\begin{align*}
 \Pro{ Z_n - \Ex{Z_n} \geq h } \leq \exp\left(- \frac{ h^2 }{ 2 
\cdot \left( \sum_{i=1}^n \left(\sigma_i^2\right) + M h / 3 \right) } \right).
\end{align*}
\end{lemma}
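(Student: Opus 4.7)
The plan is the standard exponential-moment method (a Bernstein/Freedman-type argument) for martingales with one-sided increments. Write $X_i := Z_i - Z_{i-1}$ for the martingale differences, so $\Ex{X_i \mid \mathcal{F}_{i-1}} = 0$, $X_i \leq M$ almost surely, and $\Ex{X_i^2 \mid \mathcal{F}_{i-1}} \leq \sigma_i^2$. Since $Z_n - \Ex{Z_n} = \sum_{i=1}^n X_i =: S_n$, the goal is to bound $\Pro{S_n \geq h}$ via the exponential Markov inequality $\Pro{S_n \geq h} \leq e^{-\lambda h}\,\Ex{e^{\lambda S_n}}$ and then optimize over $\lambda > 0$.

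The only technical input is a conditional MGF estimate: for every $\lambda \in (0, 3/M)$ and every $i$,
\[
\Ex{ e^{\lambda X_i} \mid \mathcal{F}_{i-1} } \leq \exp\!\left( \frac{\lambda^2 \sigma_i^2 / 2}{1 - \lambda M / 3} \right).
\]
This follows from the elementary calculus fact $e^y - 1 - y \leq \frac{y^2/2}{1 - y/3}$, valid for all $y < 3$, which is obtained by comparing the Taylor series $\sum_{k\geq 2} y^k/k!$ term-by-term with the geometric series $\frac{y^2}{2}\sum_{k\geq 0}(y/3)^k$ using $k! \geq 2\cdot 3^{k-2}$ for $k \geq 2$. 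Substituting $y = \lambda X_i \leq \lambda M < 3$, taking conditional expectations, and invoking $\Ex{X_i \mid \mathcal{F}_{i-1}} = 0$, $\Ex{X_i^2 \mid \mathcal{F}_{i-1}} \leq \sigma_i^2$, and $1 + z \leq e^z$ yields the stated bound. (Note that for $X_i < 0$ one has $y = \lambda X_i < 0 < 3$ automatically, so no separate case is needed.)

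Iterating this estimate via the tower property controls the full MGF: with $V := \sum_{i=1}^n \sigma_i^2$, we obtain
\[
\Ex{ e^{\lambda S_n}} \leq \exp\!\left( \frac{\lambda^2 V}{2(1 - \lambda M/3)} \right),
\]
so Markov's inequality yields $\Pro{S_n \geq h} \leq \exp\!\left(-\lambda h + \frac{\lambda^2 V}{2(1 - \lambda M/3)} \right)$ for every $\lambda \in (0, 3/M)$. Finally, choose $\lambda := h/(V + Mh/3)$, which lies in $(0, 3/M)$ and satisfies $1 - \lambda M/3 = V/(V + Mh/3)$; the exponent then collapses to $-\frac{h^2}{V+Mh/3} + \frac{h^2}{2(V+Mh/3)} = -\frac{h^2}{2(V+Mh/3)}$, which is precisely the claimed bound. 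There is no real obstacle here: all the substance sits in the single-step MGF inequality (where the only subtlety is tracking the admissible range $\lambda M < 3$), and the rest is a mechanical optimization in $\lambda$.
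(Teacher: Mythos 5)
The paper does not actually prove this lemma: it is imported verbatim as Theorem~6.3 of Chung and Lu's survey \cite{CL07}, so there is no in-paper argument to compare against. Your proposal supplies the standard proof of this Bernstein--Freedman-type inequality, and it is essentially the argument behind the cited result: bound the conditional moment generating function of each increment, chain the bounds via the tower property, apply the exponential Markov inequality, and optimize over $\lambda$. The reduction to $\mathbf{E}[X_i^2\mid\mathcal{F}_{i-1}]\le\sigma_i^2$, the use of the one-sided bound $X_i\le M$ precisely to replace $1-\lambda X_i/3$ by the smaller positive quantity $1-\lambda M/3$, the choice $\lambda=h/(V+Mh/3)\in(0,3/M)$, and the algebra collapsing the exponent to $-h^2/\bigl(2(V+Mh/3)\bigr)$ are all correct.

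One step needs repair. The inequality $e^y-1-y\le \frac{y^2/2}{1-y/3}$ is indeed valid for all $y<3$, but your justification --- termwise comparison of $\sum_{k\ge2}y^k/k!$ with $\frac{y^2}{2}\sum_{k\ge0}(y/3)^k$ using $k!\ge2\cdot3^{k-2}$ --- only works for $y\in[0,3)$: for $y<0$ the odd-order terms are negative and the termwise comparison reverses, and for $y\le-3$ the geometric series does not even converge to $\frac{1}{1-y/3}$. Your remark that no separate case is needed when $X_i<0$ addresses the admissible range $y<3$ but not the validity of the series argument itself. The fix is elementary: for $u>0$ set $\phi(u):=\tfrac{u^2}{2}-(e^{-u}-1+u)\bigl(1+\tfrac{u}{3}\bigr)$; then $\phi(0)=\phi'(0)=0$ and $\phi''(u)=\tfrac{1}{3}\bigl(1-e^{-u}(1+u)\bigr)\ge0$ since $1+u\le e^{u}$, so $\phi\ge0$, which is exactly the claimed inequality at $y=-u$. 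With that case supplied, the proof is complete.
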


\subsection{Variations of the expander mixing lemma}

\begin{lemma}[Expander mixing lemma - weak version \cite{alon2016probabilistic} Corollary 9.2.5]\label{lem:weakexpandermixinglemma}
\label{expander mixing lemma}
Let $G=(V,E)$ be a $d$-regular graph and let $S,T \subseteq V$ be sets of vertices. For $M:= D^{-1/2}AD^{-1/2}$ the normalized adjacency matrix of $G$ and $1 = \lambda_1 \geq \lambda_2 \geq \dots \geq \lambda_n \geq -1$ the eigenvalues of $M$ and $\lambda := \max\{|\lambda_2|,\lambda_3|,\dots,|\lambda_n|\}$. Then, 
\begin{equation*}
    \left|e(S,T) - \frac{|S||T|d}{n}\right| \leq \lambda \cdot d \sqrt{|S|\cdot |T|}.
\end{equation*}
\end{lemma}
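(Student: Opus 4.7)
Since $G$ is $d$-regular we have $D=dI$, so $M=D^{-1/2}AD^{-1/2}=A/d$ and the eigenvalues of $A$ are exactly $d\lambda_1,\dots,d\lambda_n$, with the all-ones vector $\mathbf{1}$ being an eigenvector for the top eigenvalue $d$. The plan is to express $e(S,T)$ as the bilinear form $\mathbf{1}_S^{\!\top} A\, \mathbf{1}_T$ and split each indicator vector into its component along $\mathbf{1}$ and its component orthogonal to $\mathbf{1}$, controlling the orthogonal part via $\lambda$.

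Concretely, first I would write $\mathbf{1}_S = \alpha \mathbf{1} + f$ with $\alpha = |S|/n$ and $f \perp \mathbf{1}$, and similarly $\mathbf{1}_T = \beta \mathbf{1} + g$ with $\beta = |T|/n$ and $g \perp \mathbf{1}$. Expanding,
\[
e(S,T) \;=\; \mathbf{1}_S^{\!\top} A\, \mathbf{1}_T \;=\; \alpha\beta\,\mathbf{1}^{\!\top}A\mathbf{1} \;+\; \alpha\,\mathbf{1}^{\!\top}A g \;+\; \beta\, f^{\!\top}A\mathbf{1} \;+\; f^{\!\top}A g.
\]
Since $A\mathbf{1} = d\mathbf{1}$ and $f,g \perp \mathbf{1}$, the two cross terms vanish, and $\alpha\beta\,\mathbf{1}^{\!\top}A\mathbf{1} = \alpha\beta\cdot nd = |S|\,|T|\,d/n$. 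This isolates the error term $f^{\!\top} A g$.

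Second, I would bound $|f^{\!\top} A g|$. Because $A$ is symmetric and $f,g$ both lie in the orthogonal complement of the top eigenvector $\mathbf{1}$, the operator norm of $A$ restricted to this subspace is at most $d\lambda$ (by definition of $\lambda$ as the second-largest absolute eigenvalue of $M=A/d$). Hence by Cauchy--Schwarz,
\[
\bigl| f^{\!\top} A g \bigr| \;\leq\; d\lambda\, \|f\|\cdot \|g\|.
\]
Finally, by Pythagoras, $\|f\|^2 = \|\mathbf{1}_S\|^2 - \alpha^2 n = |S| - |S|^2/n \leq |S|$, and similarly $\|g\|^2 \leq |T|$. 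Combining gives $|e(S,T) - |S|\,|T|\,d/n| \leq \lambda d \sqrt{|S|\,|T|}$, as required.

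There is no real obstacle here; the only point requiring minor care is justifying that the restriction of $A$ to $\mathbf{1}^\perp$ has operator norm $d\lambda$, which follows from the spectral theorem applied to the symmetric matrix $A$ (whose eigenvalues away from the trivial one are $d\lambda_2,\dots,d\lambda_n$, all bounded in absolute value by $d\lambda$).
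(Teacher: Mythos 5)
Your proof is correct and is the standard spectral argument for the expander mixing lemma: decompose $\mathbf{1}_S$ and $\mathbf{1}_T$ along $\mathbf{1}$ and its orthogonal complement, observe the cross terms vanish since $\mathbf{1}$ is an eigenvector of $A$, and bound the residual bilinear form by the operator norm of $A$ on $\mathbf{1}^\perp$, which is $d\lambda$. The paper itself does not reprove this lemma but cites Alon and Spencer~\cite{alon2016probabilistic} (Corollary 9.2.5), and your argument is essentially the one given there.
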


\begin{lemma}[Expander mixing lemma - strong version \cite{chung1997spectral}]\label{lem:strongexpandermixinglemma}
    Let $G = (V,E)$ be a $d$-regular graph and let $S,T \subseteq V$ be sets of vertices. For $M:= D^{-1/2}AD^{-1/2}$ the normalized adjacency matrix of $G$ and $1 = \lambda_1 \geq \lambda_2 \geq \dots \geq \lambda_n \geq -1$ the eigenvalues of $M$ and $\lambda := \max\{|\lambda_2|,\lambda_3|,\dots,|\lambda_n|\}$. Then, it holds that,
 
    \begin{equation*}
        \left|e(S,T) - \frac{d}{n}\cdot |S|\cdot |T| \right| \leq \lambda \cdot \frac{d}{n}\sqrt{|S|\cdot |V\setminus S|\cdot |T|\cdot |V\setminus T|}.
    \end{equation*}
\end{lemma}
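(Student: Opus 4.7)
The plan is to prove the strong expander mixing lemma via spectral decomposition of the indicator vectors $\mathbf{1}_S, \mathbf{1}_T \in \mathbb{R}^n$ against the orthonormal eigenbasis of the normalized adjacency matrix $M = D^{-1/2}AD^{-1/2}$. Since $G$ is $d$-regular we have $D = dI$, so $M = A/d$ and the all-ones vector $\mathbf{1}$ is an eigenvector with eigenvalue $\lambda_1 = 1$. The identity to exploit is
\[
 e(S,T) = \mathbf{1}_S^\top A \, \mathbf{1}_T = d \cdot \mathbf{1}_S^\top M \, \mathbf{1}_T,
\]
so the task reduces to estimating the quadratic form $\mathbf{1}_S^\top M \mathbf{1}_T$.

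The key step is the decomposition $\mathbf{1}_S = \alpha_S \cdot \tfrac{\mathbf{1}}{\sqrt{n}} + s^\perp$ (and analogously for $T$), where $s^\perp$ is orthogonal to $\mathbf{1}$. A direct inner product with $\mathbf{1}/\sqrt{n}$ gives $\alpha_S = |S|/\sqrt{n}$, and Pythagoras yields $\|s^\perp\|^2 = |S| - |S|^2/n = |S| \cdot |V\setminus S| / n$; the analogue holds for $t^\perp$. Plugging this decomposition into the quadratic form and using $M \mathbf{1} = \mathbf{1}$ splits the expression cleanly into a ``main term'' $\tfrac{|S| |T|}{n}$ (coming from the parallel-parallel component) and an ``error term'' $(s^\perp)^\top M \, t^\perp$ (the perpendicular-perpendicular component). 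The cross terms vanish because $M$ preserves the orthogonal decomposition, i.e.\ $M$ maps $\mathbf{1}^\perp$ into itself.

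To bound the error term, I would use that on the orthogonal complement of $\mathbf{1}$, the operator norm of $M$ is exactly $\lambda := \max(|\lambda_2|,\ldots,|\lambda_n|)$, so by Cauchy--Schwarz
\[
 |(s^\perp)^\top M \, t^\perp| \leq \lambda \cdot \|s^\perp\| \cdot \|t^\perp\| = \lambda \cdot \frac{1}{n} \sqrt{|S|\cdot |V\setminus S|\cdot |T|\cdot |V\setminus T|}.
\]
Multiplying by $d$ gives the claimed inequality. The main (and only genuine) obstacle is verifying the operator-norm bound on $\mathbf{1}^\perp$: one needs the spectral theorem for the symmetric matrix $M$ to guarantee an orthonormal eigenbasis $\{v_1,\ldots,v_n\}$ with $v_1 = \mathbf{1}/\sqrt{n}$, so that expanding $s^\perp = \sum_{i\geq 2} \beta_i v_i$ and $t^\perp = \sum_{i\geq 2} \gamma_i v_i$ yields $(s^\perp)^\top M \, t^\perp = \sum_{i\geq 2} \lambda_i \beta_i \gamma_i$, to which Cauchy--Schwarz applies with constant $\lambda$. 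Everything else is a routine computation of norms of projections.
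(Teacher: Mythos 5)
The paper does not prove this lemma at all -- it is imported verbatim from Chung's book via the citation \cite{chung1997spectral} -- so there is no in-paper argument to compare against. Your proof is correct and is the standard spectral argument: the decomposition of $\mathbf{1}_S,\mathbf{1}_T$ along $\mathbf{1}/\sqrt{n}$ and its orthogonal complement, the vanishing of the cross terms because $M$ preserves $\mathbf{1}^\perp$, the norm computations $\|s^\perp\|^2 = |S|\cdot|V\setminus S|/n$, and the operator-norm bound $\lambda$ on $\mathbf{1}^\perp$ via the spectral theorem all check out, and assembling them gives exactly the stated inequality. The only caveat worth recording is the edge-counting convention: with the paper's definition $e(S,T) = |\{\{u,w\}\in E : u\in S,\, w\in T\}|$, the identity $e(S,T) = \mathbf{1}_S^\top A\,\mathbf{1}_T$ holds only when $S\cap T=\emptyset$ (the quadratic form counts edges inside $S\cap T$ twice); this is harmless here since every application in the paper takes disjoint sets, but a fully rigorous write-up should either assume disjointness or adopt the directed-pair count.
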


\begin{corollary}\label{cor:exmixingcor}
    Let $G=(V,E)$ be a $d$-regular graph. Then, for any subset $S \subseteq V$,
    \begin{equation*}
       (1 - \lambda) \cdot \frac{d}{n}\cdot |S| \cdot |V\setminus S|\leq e(S,V\setminus S) \leq (1 + \lambda) \frac{d}{n}\cdot |S|\cdot |V\setminus S|.
    \end{equation*}
\end{corollary}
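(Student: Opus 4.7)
The plan is to derive Corollary~\ref{cor:exmixingcor} as a direct specialization of the strong expander mixing lemma (\cref{lem:strongexpandermixinglemma}). Specifically, the idea is to apply that lemma with the two sets chosen to be $S$ and its complement $T := V \setminus S$, and then observe that the right-hand side collapses into a clean form.

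Concretely, under the substitution $T = V \setminus S$, we have $V \setminus T = S$, so the quantity inside the square root becomes
\[
|S| \cdot |V \setminus S| \cdot |T| \cdot |V \setminus T| \;=\; |S| \cdot |V \setminus S| \cdot |V \setminus S| \cdot |S| \;=\; \bigl(|S| \cdot |V \setminus S|\bigr)^{2}.
\]
Taking the square root yields $|S| \cdot |V \setminus S|$, so \cref{lem:strongexpandermixinglemma} specializes to
\[
\left| e(S, V \setminus S) - \frac{d}{n} \cdot |S| \cdot |V \setminus S| \right| \;\leq\; \lambda \cdot \frac{d}{n} \cdot |S| \cdot |V \setminus S|.
\]

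Finally, unfolding the absolute value into two one-sided inequalities and rearranging gives the claimed double inequality
\[
(1 - \lambda) \cdot \frac{d}{n} \cdot |S| \cdot |V \setminus S| \;\leq\; e(S, V \setminus S) \;\leq\; (1 + \lambda) \cdot \frac{d}{n} \cdot |S| \cdot |V \setminus S|.
\]
There is essentially no obstacle here: the only content of the argument is the collapse of the geometric-mean term $\sqrt{|S||V\setminus S||T||V\setminus T|}$ under the choice $T = V\setminus S$, which is a trivial algebraic simplification. Edge cases $S = \emptyset$ or $S = V$ are handled automatically, since both sides of the stated inequality are zero.
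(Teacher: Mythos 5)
Your proof is correct and follows exactly the paper's argument: specialize \cref{lem:strongexpandermixinglemma} to $T = V\setminus S$, note that the square-root term collapses to $|S|\cdot|V\setminus S|$, and unfold the absolute value. Nothing further is needed.
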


\begin{proof}
    By choosing $T=V\setminus S$ in \cref{lem:strongexpandermixinglemma}, we obtain
    \begin{align*}  
    \left|e(S,V\setminus S) - \frac{d}{n}\cdot |S|\cdot |V\setminus S|\right| &\leq \lambda \cdot \frac{d}{n} \cdot |S|\cdot |V\setminus S|, 
    \end{align*}which gives the claimed bounds.
\end{proof}

\begin{lemma}\label{lem:setconductancebound}
    Let $M:= D^{-1/2}AD^{-1/2}$ be the normalized adjacency matrix of $G$, $1 = \lambda_1 \geq \lambda_2 \geq \dots \geq \lambda_n \geq -1$ the eigenvalues of $M$ and $\lambda := \max\{|\lambda_2|,\lambda_3|,\dots,|\lambda_n|\}$. Let $S \subseteq V$ such that $1 \leq |S|\leq n/2$. Then, the following holds,
    \begin{equation*}
        \varphi(S) \geq (1 - \lambda)\cdot \left(1 - \frac{|S|}{n}\right).
    \end{equation*}
\end{lemma}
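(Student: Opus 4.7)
The plan is to derive this conductance bound by directly applying the strong expander mixing lemma corollary (\cref{cor:exmixingcor}), which is already stated as a tool. The argument is essentially a short two-line computation once the right inputs are in place.

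First, I would unpack the definition of $\varphi(S)$. Since the graph is $d$-regular, we have $\vol(S) = d|S|$ and $\vol(V \setminus S) = d(n-|S|)$. The hypothesis $|S| \leq n/2$ implies $|S| \leq n - |S|$, so the minimum in the denominator of the conductance is $\vol(S) = d|S|$. Hence
\[
\varphi(S) = \frac{e(S, V \setminus S)}{d \cdot |S|}.
\]

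Next, I would lower-bound the numerator using \cref{cor:exmixingcor}, which gives
\[
e(S, V \setminus S) \geq (1-\lambda) \cdot \frac{d}{n} \cdot |S| \cdot |V \setminus S| = (1-\lambda) \cdot \frac{d}{n} \cdot |S| \cdot (n - |S|).
\]
Substituting this into the expression for $\varphi(S)$ and cancelling the factor $d \cdot |S|$ yields
\[
\varphi(S) \geq \frac{(1-\lambda) \cdot \frac{d}{n} \cdot |S| \cdot (n-|S|)}{d \cdot |S|} = (1-\lambda) \cdot \frac{n - |S|}{n} = (1-\lambda)\left(1 - \frac{|S|}{n}\right),
\]
which is the desired inequality.

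There is no real obstacle here — the work is already done in \cref{cor:exmixingcor} (which itself follows from \cref{lem:strongexpandermixinglemma} by setting $T = V \setminus S$). The only small point to be careful about is identifying the correct side of the minimum in the definition of $\varphi(S)$, which is handled by the assumption $|S| \leq n/2$. Note also that $|S| \geq 1$ is needed only to ensure we are not dividing by zero when writing $\varphi(S)$.
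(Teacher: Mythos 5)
Your proof is correct and follows essentially the same route as the paper's: both lower-bound $e(S,V\setminus S)$ via the strong expander mixing lemma (the paper invokes \cref{lem:strongexpandermixinglemma} directly, which is exactly the content of \cref{cor:exmixingcor}), use $|S|\leq n/2$ to identify $\vol{S}$ as the minimum in the denominator, and cancel. No issues.
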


\begin{proof}
    By \cref{lem:strongexpandermixinglemma}, we note that,
    \begin{align*}
        e(S,V\setminus S) &\geq  (1 - \lambda) \cdot \frac{d}{n}\cdot |S| \cdot |V\setminus S| \geq (1 - \lambda) \cdot d\cdot |S| \cdot \left(1 - \frac{|S|}{n}\right) = (1 - \lambda) \cdot \vol{S} \cdot \left(1 - \frac{|S|}{n}\right).
    \end{align*}
    Therefore, since $|S|\leq n/2$ and thus $\vol{S}\leq \vol{V\setminus S}$,
    \begin{equation*}
       \varphi(S) = \frac{e(S,V\setminus S)}{\min\left(\vol{S},\vol{V \setminus S}\right)} = \frac{e(S,V\setminus S)}{\vol{S}} \geq (1 - \lambda) \cdot \left(1 - \frac{|S|}{n}\right).
\qedhere    \end{equation*}
\end{proof}
 
\subsection{Simple Technical Claims}
	\begin{claim}\label{clm:tidyup} For all $x\geq 1$ and $\kappa\in (0,1/1000)$ we have, 
		\[
 \min \left( 64 \cdot x^{-1/4}, 1 - \kappa \right)  \leq (1-\kappa^2)\cdot x^{-\kappa^2}.
	\]
	\end{claim}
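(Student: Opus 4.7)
Both sides of the claimed inequality are non-increasing in $x$, so the plan is to reduce verification to a single point via monotonicity and then do a direct estimate there. Let $x^{*} := \left(\tfrac{64}{1-\kappa}\right)^{4}$ be the unique $x \geq 1$ at which the two branches of the minimum agree, i.e.\ $64 \cdot x^{-1/4} = 1-\kappa$. I will split into the two corresponding regimes.

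In the regime $x \in [1, x^{*}]$, the left-hand side equals the constant $1-\kappa$, while the right-hand side $(1-\kappa^{2})\cdot x^{-\kappa^{2}}$ is strictly decreasing in $x$. Hence, on this interval, the inequality is tightest at $x = x^{*}$. In the regime $x \geq x^{*}$, the left-hand side is $64\cdot x^{-1/4}$ and the right-hand side is $(1-\kappa^{2})\cdot x^{-\kappa^{2}}$; since $\tfrac{1}{4} > \kappa^{2}$ for $\kappa \in (0,1/1000)$, the left-hand side decays strictly faster than the right-hand side as $x$ grows, so the inequality is again tightest at $x = x^{*}$. Therefore it suffices to verify the claim at $x = x^{*}$, where it reduces (after using $1-\kappa^{2}=(1-\kappa)(1+\kappa)$ and dividing by $1-\kappa$) to
\[
 (x^{*})^{\kappa^{2}} \;=\; \left(\tfrac{64}{1-\kappa}\right)^{4\kappa^{2}} \;\leq\; 1+\kappa.
\]

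The final step is to verify this one-variable inequality by taking logarithms. Using $\log(1+\kappa) \geq \kappa/2$ for $\kappa \in (0,1/1000)$ and the crude bounds $\log 64 < 4.16$ and $-\log(1-\kappa) < 2\kappa \leq 2/1000$, the required inequality $4\kappa^{2}\bigl(\log 64 - \log(1-\kappa)\bigr) \leq \log(1+\kappa)$ becomes implied by $4\kappa^{2}\cdot 4.17 \leq \kappa/2$, which holds comfortably throughout $\kappa \in (0,1/1000)$. I do not expect any real obstacle here: the only thing to be slightly careful about is the monotonicity argument reducing both cases to the single point $x^{*}$, and the numerical slack in the constants $64$ and $\kappa < 10^{-3}$ is so generous that the final estimate is immediate.
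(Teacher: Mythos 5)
Your proof is correct and follows essentially the same route as the paper's: both split at the crossover point $x^{*}=(64/(1-\kappa))^{4}$, reduce each regime to that single point by monotonicity, and close with the endpoint inequality $\left(\tfrac{64}{1-\kappa}\right)^{4\kappa^{2}}\leq 1+\kappa$ via the elementary bounds $\log(1+\kappa)\gtrsim\kappa$ and $\kappa<10^{-3}$. The only cosmetic difference is that the paper first proves the bound with the auxiliary exponent $\tfrac{\log(1+\kappa)}{4\log(64/(1-\kappa))}$ (for which the endpoint is tight) and then checks this exponent is at least $\kappa^{2}$, which is the same computation you perform directly.
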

\begin{proof} Observe that for any $x< \left(\frac{64}{1-\kappa}\right)^4$ we have $1-\kappa < 64\cdot x^{-1/4}$, and 
	\[(1-\kappa^2)\cdot x^{-\frac{\log(1+\kappa)}{4 \log(\frac{64}{1-\kappa})}}>(1-\kappa^2)\cdot \left(\left(\frac{64}{1-\kappa}\right)^4\right)^{-\frac{\log(1+\kappa)}{4 \log(\frac{64}{1-\kappa})}}=(1-\kappa^2)\cdot \frac{1}{1+\kappa}= 1-\kappa.\] Conversely, for any $x\geq \left(\frac{64}{1-\kappa}\right)^4$ we have $   64\cdot x^{-1/4}\leq 1-\kappa$, and so we clam that also 
	\[64\cdot x^{-1/4} \leq (1-\kappa^2)\cdot x^{-\frac{\log(1+\kappa)}{4 \log(\frac{64}{1-\kappa})}}. \] 
	To see this, note that any such $x$ can be expressed as $c\cdot \left(\frac{64}{1-\kappa}\right)^4 $ for some $c\geq 1$. After this change of variables the inequality above reads 
	\[(1-\kappa)\cdot c^{-1/4} \leq (1-\kappa)\cdot c^{-\frac{\log(1+\kappa)}{4 \log(\frac{64}{1-\kappa})}},  \]which holds since $1/4 \geq \frac{\log(1+\kappa)}{4 \log(\frac{64}{1-\kappa})}$. The result follows by using the inequalities $\frac{x}{1+x}\leq \log (1+x)\leq x$ since \[ \frac{\log(1+\kappa)}{4 \log(\frac{64}{1-\kappa})} \geq \frac{\frac{\kappa}{1+\kappa}}{\frac{4\cdot 64}{1-\kappa}} \geq \kappa^2,  \]as $\kappa \in (0, 1/1000)$. 
\end{proof}
\begin{claim}\label{clm:log_claim}
For any $z \geq  0$ and any $a \in [0,1]$, the following holds:
\[
 \log \left(1+(1-a) \cdot z \right) \geq (1-a) \cdot \log \left(1+z \right).
\]
\end{claim}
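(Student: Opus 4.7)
The plan is to reduce the inequality to a standard concavity argument. Setting $b := 1 - a \in [0,1]$, the claim becomes $\log(1 + b z) \geq b \log(1+z)$ for all $b \in [0,1]$ and $z \geq 0$. I would then define $g: [0,1] \to \mathbb{R}$ by $g(b) := \log(1 + b z)$, and note that the second derivative $g''(b) = -z^2/(1+bz)^2 \leq 0$ for $z \geq 0$, so $g$ is concave on $[0,1]$.

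Given concavity, the value of $g$ at any interior point $b \in [0,1]$ lies above the chord joining $(0, g(0))$ and $(1, g(1))$. Since $g(0) = \log(1) = 0$ and $g(1) = \log(1+z)$, this chord is exactly $b \mapsto b \log(1+z)$. Therefore
\[
\log(1 + b z) \;=\; g(b) \;\geq\; b \cdot g(1) + (1-b) \cdot g(0) \;=\; b \log(1+z),
\]
which, after substituting back $b = 1-a$, gives the claim.

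The only subtlety is verifying that the two endpoints really do match (so that the chord bound yields the desired inequality rather than a shifted version), but both $g(0) = 0$ and the target right-hand side at $a=1$ (i.e.\ $b=0$) vanish, so there is no obstacle here. No auxiliary lemmas are required, and the proof is a couple of lines. Since the only technical ingredient is the concavity of $\log(1 + bz)$ in $b$, I do not foresee any real difficulty.
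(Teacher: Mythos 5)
Your proof is correct and rests on the same idea as the paper's: concavity of $a \mapsto \log(1+(1-a)z)$ together with the fact that both sides agree at the endpoints $a=0$ and $a=1$. The paper phrases this by showing the difference $f_z(a)$ has negative second derivative and vanishes at both endpoints, whereas you invoke the chord property of the concave function $g(b)=\log(1+bz)$ directly; your packaging is slightly more streamlined but not a genuinely different argument.
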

\begin{proof}To begin observe that if $z=0$ then the statement holds, for any $a\in [0,1]$. So from now on we can assume $z>0$. 	For any fixed $z > 0$, we consider the function 
\[
f_z(a) := \log \left(1+(1-a) \cdot z \right)
- (1-a) \cdot \log \left(1+z \right).
\]
Note that for $a=0$, we have
\[
 f_z(0) = \log(1+z) - \log(1+z) = 0,
\]
and for $a=1$, we have
\[
 f_z(1) = \log(1) - 0 \cdot \log(1+z) = 0.
\]
Taking the derivatives,
\[
 \frac{\partial f_z(a)}{\partial a} = \frac{z}{(a-1) z - 1} + \log(z+1),
\]
and
\[
\frac{\partial^2 f_z(a)}{\partial^2a} = \frac{-z^2}{((a-1)z-1)^2}.
\]
Thus we see that $\frac{\partial^2 f_z(a)}{\partial a} <0$ always, since $z>0$. Therefore, all stationary points must be local maxima and so $f_z(a) \geq 0$ for all $a \in [0,1]$, which yields the claim.
\end{proof}
  
\begin{claim}\label{clm:log_claim_new}
For any $x < 1$ and $a <1$,
\[
 \log \left(1-(1-a)\cdot x  \right) \leq \left(1 - \frac{a}{1-x} \right)\cdot \log \left(1 - x\right).
\]
\end{claim}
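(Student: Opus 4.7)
The plan is to reduce the two-variable inequality to a one-variable convexity fact by a substitution that turns every occurrence of $1/(1-x)$ into a single new variable. Specifically, set $s := x/(1-x)$, so that $1+s = 1/(1-x)$, hence $\log(1-x) = -\log(1+s)$ and $a/(1-x) = a(1+s)$. The key algebraic observation is the factorization
\[
 1 - (1-a)x \;=\; (1-x) + ax \;=\; (1-x)\left(1 + \tfrac{ax}{1-x}\right) \;=\; (1-x)(1 + as),
\]
so $\log(1-(1-a)x) = \log(1-x) + \log(1+as)$. Subtracting $\log(1-x)$ from both sides of the target inequality and using the formulas above shows that the claim is equivalent to
\[
 \log(1+as) \;\leq\; a(1+s)\log(1+s).
\]

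The next step is to establish this reduced inequality on the relevant domain (i.e.\ $s \geq 0$ and $a \in [0,1]$, which is the regime in which the claim is invoked, notably in the proof of \cref{lem:simple_reversed_jensen_two}). Define $g(s) := a(1+s)\log(1+s) - \log(1+as)$. A direct computation yields $g(0) = 0$ and
\[
 g'(s) = a\left[\log(1+s) + 1 - \tfrac{1}{1+as}\right], \qquad g''(s) = a\left[\tfrac{1}{1+s} + \tfrac{a}{(1+as)^2}\right],
\]
so that $g'(0) = 0$ and $g''(s) \geq 0$ whenever $a \geq 0$ and $s \geq 0$. Hence $g$ is convex on $[0,\infty)$ with a stationary point at $0$ of value zero, which forces $g(s) \geq 0$ for all $s \geq 0$. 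Reversing the substitution yields the claim.

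The only genuinely delicate aspect is the substitution/factorization step: once $\log(1-(1-a)x)$ is split as $\log(1-x) + \log(1+as)$, the remaining content is a single-variable second-derivative check, which is routine. I note in passing that the convexity argument relies on $a \geq 0$ (the factor $a$ in $g''$ must be nonnegative to make $g''\geq 0$); outside that regime the inequality can degrade, but the applications in this paper always supply a nonnegative $a$.
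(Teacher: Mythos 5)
Your proof is correct on the domain where the claim is actually used ($x\in[0,1)$ and $a\ge 0$), but it takes a genuinely different route from the paper's. The paper's argument is a two-line first-order estimate: concavity of $\log$ gives the tangent-line bound $\log(1-x+ax)\le\log(1-x)+\frac{ax}{1-x}$ at the point $1-x$, and then $x\le-\log(1-x)$ converts the linear correction into $-\frac{a}{1-x}\log(1-x)$. You instead substitute $s=x/(1-x)$, factor $1-(1-a)x=(1-x)(1+as)$, and reduce the claim to the one-variable inequality $\log(1+as)\le a(1+s)\log(1+s)$, which you settle by checking $g(0)=g'(0)=0$ and $g''\ge 0$; I verified the factorization, the identity $\log(1-x)=-\log(1+s)$, and the derivative computations, and the argument is sound. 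The trade-offs: the paper's proof is shorter and also covers negative $x$ (any $x<1$), whereas your substitution needs $s\ge 0$, i.e.\ $x\ge 0$. On the other hand, you are right to flag the sign condition on $a$: the paper's final step multiplies $x+\log(1-x)\le 0$ by $a/(1-x)$ and therefore also silently needs $a\ge 0$, and the claim as literally stated (``any $a<1$'') is in fact false in general (e.g.\ $x=a=-1$ gives $\log 3 > \frac{3}{2}\log 2$). Since every invocation in the paper (\cref{lem:simple_reversed_jensen_two} and Phase~5 of \cref{thm:fixed_credibility}) supplies $x\in[0,1)$ and $a\ge 0$, nothing downstream is affected by either restriction.
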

\begin{proof}
The derivative of $f(x)=\log(x)$ at point $1-x$ is $\frac{1}{1-x}$. Therefore, as $\log(.)$ is concave,
\[
  \log \left(1-x+ a \cdot x \right) \leq \log \left(1 - x\right) + \frac{a \cdot x}{1-x}.
\]
Further, since $\log(1-x) \leq -x$, we have $x \leq -\log(1-x)$ and thus
\[
 \log \left(1-x+ a \cdot x \right) \leq \log(1-x) \cdot \left( 1 - \frac{a}{1-x} \right). \qedhere
\]
\end{proof}

\begin{claim}\label{clm:generalharmonic}
	For any $0 < \alpha < 1$ and $T\geq 1$,  
	\[
	\frac{T^{1-\alpha}-1}{1-\alpha} \leq \sum_{k=1}^Tk^{-\alpha} \leq \frac{T^{1-\alpha}}{1-\alpha}.
	\]
\end{claim}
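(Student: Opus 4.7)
The plan is a standard integral comparison, exploiting that $f(x) = x^{-\alpha}$ is positive and strictly decreasing on $(0, \infty)$ when $\alpha \in (0,1)$. The antiderivative is $\int x^{-\alpha}\,dx = \frac{x^{1-\alpha}}{1-\alpha}$, which matches the form of both sides of the claimed inequality, so the task reduces to sandwiching the sum between two integrals and evaluating them.

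For the upper bound, since $x^{-\alpha} \geq k^{-\alpha}$ for every $x \in [k-1,k]$, I get $k^{-\alpha} \leq \int_{k-1}^{k} x^{-\alpha}\,dx$ for all $k \geq 1$ (the $k=1$ case uses that $\int_0^1 x^{-\alpha}\,dx = \frac{1}{1-\alpha} \geq 1 = 1^{-\alpha}$ as $\alpha \in (0,1)$). Summing over $k=1,\ldots,T$ telescopes the integrals into $\int_0^T x^{-\alpha}\,dx = \frac{T^{1-\alpha}}{1-\alpha}$, yielding the upper bound.

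For the lower bound, the same monotonicity gives $x^{-\alpha} \leq k^{-\alpha}$ for $x \in [k, k+1]$, hence $k^{-\alpha} \geq \int_{k}^{k+1} x^{-\alpha}\,dx$. Summing over $k=1,\ldots,T$ yields $\sum_{k=1}^T k^{-\alpha} \geq \int_1^{T+1} x^{-\alpha}\,dx = \frac{(T+1)^{1-\alpha} - 1}{1-\alpha} \geq \frac{T^{1-\alpha} - 1}{1-\alpha}$, where the last step uses $(T+1)^{1-\alpha} \geq T^{1-\alpha}$.

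There is no real obstacle here; the only minor subtlety is handling the integrability of $x^{-\alpha}$ down to $0$ in the upper bound (which is fine since $\alpha < 1$ makes $\int_0^1 x^{-\alpha}\,dx$ finite), and noting that the lower bound is actually slack by a factor involving $(T+1)^{1-\alpha}$ vs.\ $T^{1-\alpha}$, which is discarded for a cleaner statement.
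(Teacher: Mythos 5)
Your proof is correct and uses essentially the same integral-comparison argument as the paper; the only cosmetic difference is that you bound the sum above by the convergent improper integral $\int_0^T x^{-\alpha}\,dx$ directly, whereas the paper writes $1+\int_1^T x^{-\alpha}\,dx$ and absorbs the $1$ using $\tfrac{1}{1-\alpha}\geq 1$. Both endpoint treatments are valid and the substance is identical.
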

\begin{proof}Since the function $x^{1-\alpha}$ is decreasing on $[1,\infty)$ and  $0<1-\alpha <1$ we have 
	\[\sum_{k=1}^Tk^{-\alpha} \leq 1 + \int_{1}^Tx^{-\alpha}\,\mathrm{d}x = 1 +\left[ \frac{x^{1-\alpha}}{1-\alpha}\right]_{x=1}^T = 1 +  \frac{T^{1-\alpha}}{1-\alpha} -  \frac{1}{1-\alpha} \leq  \frac{T^{1-\alpha}}{1-\alpha}.\]The lower bound follows similarly as $\sum_{k=1}^Tk^{-\alpha} \geq  \int_{1}^Tx^{-\alpha}\,\mathrm{d}x$. 
\end{proof}

\begin{lemma}[Stirling's approximation upper and lower bound]\label{lem:stirling}
For every integer $n \geq 1$,
\begin{equation*}
    \sqrt{2\pi n}\left(\frac{n}{e} \right)^n e^{\frac{1}{12n + 1}} < n! < \sqrt{2\pi n}\left(\frac{n}{e} \right)^n e^{\frac{1}{12n}}.
\end{equation*}
\end{lemma}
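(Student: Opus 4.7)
The plan is to follow the classical approach due to Robbins, based on controlling the monotonicity of the sequence $d_n := \log(n!) - (n+\tfrac12)\log n + n$. Exponentiating, $e^{d_n} = n!/\bigl(\sqrt{n}(n/e)^n\bigr)$, so the two target inequalities are equivalent to
\[
\log\sqrt{2\pi} + \tfrac{1}{12n+1} < d_n < \log\sqrt{2\pi} + \tfrac{1}{12n}.
\]
The strategy is to sandwich the one-step decrement $d_n - d_{n+1}$ between two telescoping expressions.

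First I would compute $d_n - d_{n+1}$ explicitly. Writing $t := 1/(2n+1)$, the identity $\tfrac{n+1}{n} = \tfrac{1+t}{1-t}$ together with the odd power series $\log\tfrac{1+t}{1-t} = 2\sum_{k\geq 0}\tfrac{t^{2k+1}}{2k+1}$ gives, after a short simplification,
\[
d_n - d_{n+1} \;=\; (n+\tfrac12)\log\tfrac{n+1}{n} - 1 \;=\; \sum_{k\geq 1}\frac{1}{(2k+1)\,(2n+1)^{2k}}.
\]
In particular $d_n$ is strictly decreasing. Bounding the tail by a geometric series yields
\[
\tfrac{1}{3(2n+1)^{2}} \;<\; d_n - d_{n+1} \;<\; \frac{1}{3(2n+1)^{2}}\cdot\frac{1}{1 - (2n+1)^{-2}} \;=\; \frac{1}{12n(n+1)}.
\]

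The next step — the heart of the argument — is to compare $d_n - d_{n+1}$ to the two telescoping sequences $\tfrac{1}{12n} - \tfrac{1}{12(n+1)}$ and $\tfrac{1}{12n+1} - \tfrac{1}{12(n+1)+1}$. The upper estimate above is already exactly $\tfrac{1}{12n} - \tfrac{1}{12(n+1)}$, so $d_n - \tfrac{1}{12n}$ is strictly increasing. For the lower comparison I would verify by direct algebraic manipulation (clearing denominators and comparing coefficients of the resulting polynomial in $n$, or equivalently by a convexity estimate on the partial sums of the series above) that
\[
d_n - d_{n+1} \;>\; \tfrac{1}{12n+1} - \tfrac{1}{12(n+1)+1},
\]
so that $d_n - \tfrac{1}{12n+1}$ is strictly decreasing. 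Both sequences are monotone and bounded, hence convergent, with the same limit $L := \lim_{n\to\infty} d_n$.

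Finally I would identify $L = \log\sqrt{2\pi}$ via Wallis' product. Writing Wallis' identity as $\lim_{n\to\infty}\tfrac{2^{4n}(n!)^4}{((2n)!)^2(2n+1)} = \tfrac{\pi}{2}$ and substituting the asymptotic form $n! \sim e^L \sqrt{n}(n/e)^n$ gives, after cancellation, $e^{2L}/2 = \pi/2$, hence $L = \log\sqrt{2\pi}$. Combining monotonicity with the value of the limit yields
\[
d_n - \tfrac{1}{12n+1} \;>\; L \;>\; d_n - \tfrac{1}{12n},
\]
which is exactly the claim after exponentiation. The main obstacle I expect is the polynomial inequality establishing that $d_n - d_{n+1}$ exceeds $\tfrac{1}{12n+1} - \tfrac{1}{12(n+1)+1}$; this requires a careful (but elementary) comparison since the two quantities are asymptotically of the same order $\Theta(n^{-2})$, and one must retain enough terms of the $(2n+1)^{-2k}$ series to beat the telescoping sum rather than just match its leading order.
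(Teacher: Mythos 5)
The paper does not prove this lemma at all: it is quoted as a known result (Robbins' 1955 refinement of Stirling's formula), so there is no in-paper argument to compare against. Your proposal is the standard Robbins proof and is essentially correct: the series identity $d_n-d_{n+1}=\sum_{k\ge 1}\frac{1}{(2k+1)(2n+1)^{2k}}$, the geometric upper bound $\frac{1}{12n}-\frac{1}{12(n+1)}$, the two monotone telescoping comparisons, and the identification of the limit via Wallis all go through. Two small corrections to the sketch. First, the step you flag as the main obstacle is actually immediate: the first term of the series already beats the telescoping lower comparison, since $\frac{1}{3(2n+1)^2}>\frac{1}{12n+1}-\frac{1}{12n+13}=\frac{12}{(12n+1)(12n+13)}$ reduces to $(12n+1)(12n+13)>12(12n^2+12n+3)$, i.e.\ $24n>23$, which holds for all $n\ge 1$; no higher-order terms are needed. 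Second, your Wallis computation has a factor-of-two slip: substituting $n!\sim e^{L}\sqrt{n}(n/e)^n$ into $\frac{2^{4n}(n!)^4}{((2n)!)^2(2n+1)}$ yields $e^{2L}/4$, not $e^{2L}/2$, so the correct chain is $e^{2L}/4=\pi/2$, hence $e^{2L}=2\pi$ and $L=\log\sqrt{2\pi}$ (your stated intermediate equation would give $L=\log\sqrt{\pi}$, contradicting your own conclusion). With these repairs the argument is complete.
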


\begin{restatable}{claim}{HelperStirling}\label{clm:HelperStirling}
For any $\alpha > 0$,  
\begin{equation*}
    \prod_{i=0}^{1/\alpha-1} (2 - i \cdot \alpha) \leq \sqrt{2}\left(\frac{4}{e}\right)^{1/\alpha}.
\end{equation*}
\end{restatable}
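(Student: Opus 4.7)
The plan is to reindex the product so that it becomes a ratio of factorials, and then apply Stirling's approximation (Lemma \ref{lem:stirling}) in a form that is tight enough to recover the factor $\sqrt{2}\cdot (4/e)^{1/\alpha}$ on the right-hand side. Throughout I will treat $N := 1/\alpha$ as a positive integer, matching the convention used in Section \ref{sec:additive_models} where the product $\prod_{t=0}^{1/\alpha-1}$ appears.

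First I would pull out a factor $1/N$ from each term, writing $2 - i\alpha = (2N - i)/N$, so that
\[
\prod_{i=0}^{N-1}(2 - i\alpha) = \frac{1}{N^N}\prod_{i=0}^{N-1}(2N-i) = \frac{1}{N^N}\cdot \frac{(2N)!}{N!}.
\]
Thus the claim reduces to showing $(2N)!/(N!\,N^N) \leq \sqrt{2}\cdot (4/e)^{N}$.

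Next I would apply Stirling's approximation (Lemma \ref{lem:stirling}) to both factorials, using the upper bound on $(2N)!$ and the lower bound on $N!$:
\[
(2N)! \leq \sqrt{4\pi N}\left(\tfrac{2N}{e}\right)^{2N} e^{1/(24N)}, \qquad N! \geq \sqrt{2\pi N}\left(\tfrac{N}{e}\right)^N e^{1/(12N+1)}.
\]
Taking the ratio, the $\sqrt{2\pi N}$ factors cancel up to a $\sqrt{2}$, the powers of $e^{-N}$ and $e^{-2N}$ combine to $e^{-N}$, and the $N^{2N}/N^N = N^N$ piece cancels against the $1/N^N$ in front. What is left is
\[
\frac{(2N)!}{N!\,N^N} \leq \sqrt{2}\cdot \left(\tfrac{4}{e}\right)^N \cdot \exp\!\left(\tfrac{1}{24N} - \tfrac{1}{12N+1}\right).
\]

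The final step is to observe that the residual exponential factor is at most $1$: since $12N+1 \leq 24N$ for all $N\geq 1$, we have $1/(24N) - 1/(12N+1) \leq 0$. I expect the main (minor) obstacle to be precisely pinning down this last inequality with the right version of Stirling so that the constant in front really is $\sqrt{2}$ and not something slightly larger; using the two-sided form quoted as Lemma \ref{lem:stirling} makes the error terms sharp enough for this to go through without any slack. No further subtlety is anticipated.
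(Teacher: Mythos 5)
Your proposal is correct and follows essentially the same route as the paper: rewrite the product as $(2N)!/(N!\,N^N)$ with $N=1/\alpha$, apply the upper Stirling bound to $(2N)!$ and the lower bound to $N!$, and observe that the residual factor $\exp(1/(24N)-1/(12N+1))$ is at most $1$. (Both your argument and the paper's implicitly assume $1/\alpha$ is a positive integer, which is the convention in force where the claim is used.)
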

\begin{proof}
We have
    \begin{equation*}
   \prod^{1/\alpha}_{i=1}(2 - (i-1)\cdot \alpha) = \frac{\prod_{i=1}^{1/\alpha}(2/\alpha - (i-1))}{(1/\alpha)^{1/\alpha}} = \frac{(2/\alpha )!}{(1/\alpha)! (1/\alpha)^{1/\alpha}} \stackrel{(a)}{\leq}  \frac{\sqrt{2\pi \cdot 2/\alpha}\left(\frac{2/\alpha}{e}\right)^{2/\alpha}e^{\frac{1}{12\cdot 2/\alpha}}}{\sqrt{2\pi \cdot 1/\alpha}\left(\frac{1/\alpha}{e}\right)^{1/\alpha}e^{\frac{1}{12\cdot 1/\alpha + 1}} (1/\alpha)^{1/\alpha}},\end{equation*}where $(a)$ holds by applying the upper bound in \cref{lem:stirling} to the factorial in the numerator and the lower bound in \cref{lem:stirling} to the factorial in the denominator. With some factoring and cancelling we now have 
      \begin{equation*}
      \prod^{1/\alpha}_{i=1}(2 - (i-1)\cdot \alpha)    \leq \frac{\sqrt{2/\alpha}\cdot  2^{2/\alpha} \left(1/\alpha\right)^{2/\alpha}\left(1/e\right)^{2/\alpha}e^{\frac{1}{24/\alpha}}}{\sqrt{1/\alpha} \cdot (1/\alpha)^{1/\alpha}(1/\alpha)^{1/\alpha}\left(1/e\right)^{1/\alpha}e^{\frac{1}{12/\alpha + 1} }} = \sqrt{2} \left(\frac{4}{e}\right)^{1/\alpha}\cdot \frac{e^{1/(24/\alpha)}}{e^{1/(12/\alpha) +1}} \leq \sqrt{2} \left(\frac{4}{e}\right)^{1/\alpha},
    \end{equation*}
as claimed.
\end{proof}

The next result is the lower bound version of the previous claim.

\begin{restatable}{claim}{HelperStirlingLB}\label{clm:HelperStirlingLB}
 For any $\alpha > 0$,
\begin{equation*}
     \prod^{1/\alpha-1}_{i=0} (2 - i\cdot \alpha) \geq  \frac{1}{\sqrt{2}} \left(\frac{4}{e}\right)^{1/\alpha}\cdot e^{-\frac{1}{2}\alpha}.
\end{equation*}
\end{restatable}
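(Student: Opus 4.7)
The proof plan is to mirror the approach used for the upper-bound sibling \cref{clm:HelperStirling}, applying Stirling's approximation in the opposite direction. First I would perform the same algebraic manipulation, writing
\[
 \prod_{i=0}^{1/\alpha - 1}(2 - i\alpha) \;=\; \alpha^{1/\alpha} \prod_{i=1}^{1/\alpha}\!\left(\tfrac{2}{\alpha} - (i-1)\right) \;=\; \frac{(2/\alpha)!}{(1/\alpha)!\,(1/\alpha)^{1/\alpha}},
\]
so the task reduces to lower bounding a ratio of factorials.

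Next I would apply \cref{lem:stirling} with its \emph{lower} bound on the numerator $(2/\alpha)!$ and its \emph{upper} bound on the denominator $(1/\alpha)!$:
\[
 \frac{(2/\alpha)!}{(1/\alpha)!} \;\geq\; \frac{\sqrt{2\pi \cdot 2/\alpha}\,\bigl(\frac{2/\alpha}{e}\bigr)^{2/\alpha} \exp\!\bigl(\tfrac{1}{12\cdot 2/\alpha + 1}\bigr)}{\sqrt{2\pi/\alpha}\,\bigl(\frac{1/\alpha}{e}\bigr)^{1/\alpha}\exp\!\bigl(\tfrac{1}{12/\alpha}\bigr)}.
\]
Cancelling the $(1/\alpha)^{1/\alpha}$ factor and simplifying the $(1/e)^{2/\alpha-1/\alpha}=(1/e)^{1/\alpha}$ and $2^{2/\alpha}=4^{1/\alpha}$ pieces yields
\[
 \prod_{i=0}^{1/\alpha-1}(2-i\alpha) \;\geq\; \sqrt{2}\left(\frac{4}{e}\right)^{1/\alpha} \cdot \exp\!\left(\frac{\alpha}{24+\alpha} - \frac{\alpha}{12}\right).
\]

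Finally I would verify that the correction factor dominates the required $\frac{1}{\sqrt 2}\,e^{-\alpha/2}$, which after cancelling $\sqrt 2 \cdot \sqrt 2 = 2$ reduces to checking
\[
 \exp\!\left(\frac{\alpha}{24+\alpha} - \frac{\alpha}{12} + \frac{\alpha}{2}\right) \;\geq\; \frac{1}{2}.
\]
The exponent is manifestly positive (since $-\alpha/12 + \alpha/2 = 5\alpha/12 > 0$ and the first term is nonnegative), so the left-hand side exceeds $1 \geq 1/2$. This completes the argument. There is no real obstacle here; the only point requiring care is matching up the two Stirling error terms so that the residual exponential correction on the right is small enough, which the arithmetic above confirms with room to spare (in fact we obtain a bound stronger than the stated $e^{-\alpha/2}$).
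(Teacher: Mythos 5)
Your proof is correct and follows essentially the same route as the paper: rewrite the product as $\frac{(2/\alpha)!}{(1/\alpha)!\,(1/\alpha)^{1/\alpha}}$ and apply Stirling's lower bound to the numerator and upper bound to the denominator. Your index bookkeeping is in fact slightly cleaner than the paper's (which shifts to $(2/\alpha-1)!/(1/\alpha-1)!$ and thereby gives up a factor of $2$), so you obtain a leading constant $\sqrt{2}$ rather than $1/\sqrt{2}$, comfortably implying the stated bound.
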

\begin{proof}Shifting the indices by one gives us
\begin{equation*}
    \prod^{1/\alpha-1}_{i=0}(2 - i\cdot \alpha)  = 
      \prod^{1/\alpha}_{i=1}(2 - i\cdot \alpha)   =
    \frac{\prod_{i=1}^{1/\alpha}(2/\alpha - i)}{(1/\alpha)^{1/\alpha}} = \frac{(2/\alpha - 1)!}{(1/\alpha-1)! (1/\alpha)^{1/\alpha}}  = \frac{(2/\alpha)!(1/\alpha)}{(1/\alpha)!(2/\alpha)(1/\alpha)^{1/\alpha}}.\end{equation*}
        By applying the Stirling's inequalities  \cref{lem:stirling} to the factorials in the numerator denominator we have 
    \begin{equation*}
       \prod^{1/\alpha-1}_{i=0}(2 - i\cdot \alpha)     \geq   \frac{\sqrt{2\pi \cdot 2/\alpha}\left(\frac{2/\alpha}{e}\right)^{2/\alpha}e^{\frac{1}{12\cdot 2/\alpha + 1}}(1/\alpha)}{\sqrt{2\pi \cdot 1/\alpha}\left(\frac{1/\alpha}{e}\right)^{1/\alpha}e^{\frac{1}{12\cdot 1/\alpha}} (1/\alpha)^{1/\alpha}(2/\alpha)} = \frac{\sqrt{2/\alpha} (1/\alpha) 2^{2/\alpha} \left(1/\alpha\right)^{2/\alpha}\left(1/e\right)^{2/\alpha}e^{\frac{1}{24/\alpha + 1}}}{\sqrt{1/\alpha}(2/\alpha)(1/\alpha)^{1/\alpha}(1/\alpha)^{1/\alpha}\left(1/e\right)^{1/\alpha}e^{\frac{1}{12/\alpha} }}.\end{equation*}We can then cancel many of the terms to give \begin{align*}
       \prod^{1/\alpha-1}_{i=0}(2 - i\cdot \alpha)   \geq  \frac{1}{\sqrt{2}} \left(\frac{4}{e}\right)^{1/\alpha}\cdot e^{\frac{1}{24/\alpha + 1} - \frac{1}{12/\alpha}} \geq \frac{1}{\sqrt{2}} \left(\frac{4}{e}\right)^{1/\alpha}\cdot e^{\frac{1}{25}\alpha - \frac{1}{12}\alpha} \geq \frac{1}{\sqrt{2}} \left(\frac{4}{e}\right)^{1/\alpha}\cdot e^{-\frac{1}{2}\alpha},
\end{align*} as claimed. 
\end{proof}

\begin{claim}\label{clm:nightmare}
For the constants $c_1:= \frac{1}{2}$ and $c_2:=\frac{1}{8}$ we have,
\[
 \prod_{i=0}^{\infty} \left(1 + \left(1-\frac{c_1}{\log n} \right)^{i} \right) \leq \sqrt{n}, \qquad 
and \qquad  \prod_{i=0}^{4 \log n-1} \left(1 + \left(1-\frac{c_2}{\log n} \right)^{i} \right) \geq n^{3/2}.
\]
\end{claim}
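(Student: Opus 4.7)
I would prove both inequalities by first taking logarithms to convert the products into sums of the form $S(c):=\sum_i \log\bigl(1+(1-c/\log n)^i\bigr)$, then estimating the resulting sums through classical integral comparison techniques and direct monotonicity arguments respectively.

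\textbf{Part (i).} Set $r:=1-c_1/\log n$ and $\beta:=-\log r$, so that $\beta \geq c_1/\log n$ (via $-\log(1-x)\geq x$). After taking logarithms I want to upper bound
\[
S_\infty:=\sum_{t=0}^{\infty}\log(1+e^{-\beta t}).
\]
Since $t\mapsto \log(1+e^{-\beta t})$ is positive and strictly decreasing, I would use the integral comparison $S_\infty \leq \log 2 + \int_0^\infty \log(1+e^{-\beta t})\,dt$. The substitution $u=\beta t$ reduces the integral to $\beta^{-1}\int_0^\infty \log(1+e^{-u})\,du$, and the series expansion $\log(1+e^{-u})=\sum_{k\geq 1}(-1)^{k+1}e^{-ku}/k$ gives the classical identity
\[
\int_0^\infty \log(1+e^{-u})\,du = \sum_{k=1}^\infty \frac{(-1)^{k+1}}{k^2}=\frac{\pi^2}{12}.
\]
Combined with $1/\beta \leq (\log n)/c_1$, this yields $S_\infty \leq \log 2 + \frac{\pi^2}{12 c_1}\log n$, and exponentiating matches the desired upper bound on the product.

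\textbf{Part (ii).} Set $r:=1-c_2/\log n$. Since $i\mapsto r^i$ is decreasing, $r^i \geq r^{4\log n}$ for all $i\in [0,4\log n-1]$. Using $\log(1-x)\geq -x-x^2$ for $x\in(0,1/2]$ (or equivalently $(1-x)^{1/x}\geq e^{-1}(1-o(1))$ as $x\to 0$), I get
\[
r^{4\log n}=\left(1-\frac{c_2}{\log n}\right)^{4\log n}\geq e^{-4c_2(1+o(1))}= e^{-1/2}(1-o(1)).
\]
Consequently each term satisfies $\log(1+r^i)\geq (1-o(1))\log(1+e^{-1/2})$. A direct numerical check gives $4\log(1+e^{-1/2})\approx 4\cdot 0.4741 > 3/2$, so summing over the $4\log n$ indices yields
\[
\sum_{i=0}^{4\log n-1}\log(1+r^i)\geq 4\log n \cdot (1-o(1))\log(1+e^{-1/2})>\tfrac{3}{2}\log n
\]
for all sufficiently large $n$, which exponentiates to the required lower bound $n^{3/2}$.

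\textbf{Main obstacle.} The delicate point lies in part (i): the bound $S_\infty \lesssim (\pi^2/12)/\beta$ is asymptotically sharp, because the function $\log(1+r^t)$ transitions between a ``plateau'' at value $\log 2$ for $t\ll 1/\beta$ and an exponentially decaying tail for $t\gg 1/\beta$, and both regimes contribute comparably to the sum. Matching the resulting leading coefficient against the specific target $\tfrac{1}{2}\log n$ is therefore where the numerical verification must be carried out most carefully, and may require either tightening the integral bound with an explicit head/tail split or adjusting the constant $c_1$; by contrast, the lower bound in part (ii) has significant slack since the index range $[0,4\log n-1]$ stays inside the plateau region.
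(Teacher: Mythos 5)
Your argument for Part (ii) is correct and takes a genuinely different route from the paper. The paper applies Bernoulli's inequality $(1-x)^i \geq 1 - ix$ to get $r^i \geq 1 - 4c_2 = \tfrac12$ for all $i \leq 4\log n$, so each factor is $\geq \tfrac32$, and then verifies $(3/2)^4 > e^{3/2}$. You instead use the exponential approximation $(1-x)^{1/x} \geq e^{-1}(1-o(1))$ to get $r^i \gtrsim e^{-1/2}$ and hence $\log(1+r^i) \gtrsim \log(1+e^{-1/2}) \approx 0.474$; since $4 \times 0.474 > \tfrac32$, the sum exceeds $\tfrac32\log n$ for large $n$. Your lower bound on the base ($e^{-1/2} \approx 0.607$ versus $\tfrac12$) is slightly tighter, at the cost of carrying an $o(1)$ correction that the paper's purely elementary Bernoulli step avoids; both are valid.

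For Part (i), your integral-comparison estimate is sharper than the paper's one-liner $\log(1+x) \leq x$: the paper gets $S_\infty \leq \sum_{i\geq 0} r^i = \log n/c_1$ (coefficient $1$), whereas your $\int_0^\infty \log(1+e^{-u})\,du = \pi^2/12 \approx 0.822$ gives a smaller coefficient. However, your closing assertion that "exponentiating matches the desired upper bound on the product" is incorrect, and the unease you voice in the "Main obstacle" paragraph is precisely on point: with $c_1 = \tfrac12$, your bound reads $S_\infty \leq \log 2 + \tfrac{\pi^2}{6}\log n \approx 1.645\log n$, which is far larger than $\tfrac12 \log n$, and there is no way to fix this by sharpening the head-tail split, since the product is in fact $n^{\pi^2/(6c_1)+o(1)} = n^{\pi^2/3+o(1)}$, not $O(\sqrt{n})$. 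You should be aware that the paper's own proof of Part (i) contains the same arithmetic slip: it asserts $\exp(\log n/c_1) = \sqrt{n}$ "since $c_1 = \tfrac12$", but $\exp\bigl(\log n/(\tfrac12)\bigr) = n^2$, not $n^{1/2}$. The paper's elementary bound would need $c_1 = 2$ to give $\sqrt{n}$, while your sharper integral bound would need only $c_1 \geq \pi^2/6 \approx 1.645$; in either case, the constant $c_1 = \tfrac12$ appearing in the claim (and the corresponding constraint $\kappa_1 \leq \tfrac12$ in the downstream theorem on multiplicative credibility) appears to be misstated. So your method is sound, your self-flagged concern is justified, and the gap is in the claim itself rather than in your reasoning.
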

\begin{proof}
For the first statement,
\[
 \prod_{i=0}^{\infty} \left(1 + \left(1-\frac{c_1}{\log n} \right)^{i} \right) \leq \exp\left( \sum_{i=0}^{\infty} \left(1-\frac{c_1}{\log n} \right)^{i} \right) \leq \exp\left( \frac{\log n}{c_1} \right) = \sqrt{n},
\]
since $c_1=\frac{1}{2}$. For the second statement, recall $c_2:=\frac{1}{8}$, and note that for any $i \leq 4 \log n$,
\[
 1 + \left(1-\frac{c_2}{\log n} \right)^{i} 
 \geq 2 - 4c_2 \geq \frac{3}{2},
\]
where we have used Bernoulli's inequality. Therefore,
\[
 \prod_{i=0}^{4 \log n-1} \left(1 + \left(1-\frac{c_2}{\log n} \right)^{i} \right) \geq \left( \frac{3}{2} \right)^{4 \log n-1} \geq n^{3/2},
\]
as $(3/2)^4 > e^{3/2}$.
\end{proof}

\end{document}